\documentclass{lmcs}

\usepackage[english]{babel}

\PassOptionsToPackage{olditem,oldenum}{paralist}

\usepackage[alg,fig,nofnttls,noknwtls]{fmocdmac}

\usetikzlibrary{positioning}
\usepackage{cleveref}
\usepackage{todonotes}

\usepackage{array}
\usepackage{tabularx}

\newcolumntype{Y}{>{\raggedright\arraybackslash}X}

\AtEndPreamble {

}

\definecolor{revisionblue}{rgb}{0,0.361,0.686}

\theoremstyle{plain}
\newtheorem{theorem}[thm]{Theorem}
\newtheorem{lemma}[thm]{Lemma}
\newtheorem{proposition}[thm]{Proposition}
\newtheorem{corollary}[thm]{Corollary}
\newtheorem{claim}[thm]{Claim}

\theoremstyle{definition}
\newtheorem{problem}{Problem}
\newtheorem{example}{Example}
\newtheorem{definition}[thm]{Definition}

\theoremstyle{remark}

\theoremstyle{plain}

\crefname{definition}{Def.}{Defs.}
\crefname{problem}{Prob.}{Probs.}
\crefname{construction}{Con.}{Cons.}
\crefname{proposition}{Prop.}{Props.}
\crefname{lemma}{Lem.}{Lemm.}
\crefname{claim}{Clm.}{Clms.}
\crefname{theorem}{Thm.}{Thms.}
\crefname{corollary}{Cor.}{Cors.}
\crefname{conjecture}{Conj.}{Conjs.}
\crefname{remark}{Rem.}{Rems.}
\crefname{example}{Exm.}{Exms.}

\crefname{section}{Sec.}{Secs.}
\crefname{figure}{Fig.}{Figs.}

\cmdtxtabr{CQ}\cmdtxtabr{cq}
\cmdtxtabr{BCQ}\cmdtxtabr{bcq}
\cmdtxtabr{PF}\cmdtxtabr{pf}
\cmdtxtabr{BPF}\cmdtxtabr{bpf}
\cmdtxtabr{PFQ}\cmdtxtabr{pfq}
\cmdtxtabr{BPFQ}\cmdtxtabr{bpfq}
\cmdtxtabr{JoF}\cmdtxtabr{jof}
\cmdtxtabr{BJoF}\cmdtxtabr{bjof}
\cmdtxtabr{JoFQ}\cmdtxtabr{jofq}
\cmdtxtabr{BJoFQ}\cmdtxtabr{bjofq}
\cmdtxtabr{JU}\cmdtxtabr{ju}
\cmdtxtabr{BJU}\cmdtxtabr{bju}
\cmdtxtabr{JUQ}\cmdtxtabr{juq}
\cmdtxtabr{BJUQ}\cmdtxtabr{bjuq}
\cmdtxtabr{RJU}\cmdtxtabr{rju}
\cmdtxtabr{BRJU}\cmdtxtabr{brju}
\cmdtxtabr{RJUQ}\cmdtxtabr{rjuq}
\cmdtxtabr{BRJUQ}\cmdtxtabr{brjuq}
\cmdtxtabr{IJU}\cmdtxtabr{iju}
\cmdtxtabr{BIJU}\cmdtxtabr{biju}
\cmdtxtabr{IJUQ}\cmdtxtabr{ijuq}
\cmdtxtabr{BIJUQ}\cmdtxtabr{bijuq}

\cmdtxtabr{UCQ}\cmdtxtabr{ucq}

\cmdtxtabr{CP}
\cmdtxtabr{BCP}

\cmdtxtabr{MPI}
\cmdtxtabr{GMPI}
\cmdtxtabr{PIOM}

\cmdmthsetext{Exp}[Ex][e]
\seqoflet{expVec}{mthvec}
\newcommand{\expVec}{\eexpVec}

\cmdmthsetext{Unk}[Un][u]
\seqoflet{unkVec}{mthvec}
\newcommand{\unkVec}{\uunkVec}

\cmdmthsetext{Sol}[Sl][\xi]
\seqoflet{solVec}{mthvec}
\newcommand{\solVec}{\xisolVec}

\seqoflet{varVec}{mthvec}

\cmdmthsetext{Atm}[][a]
\cmdmthsetext{Qry}[Qr][q]
\cmdmthsetext{PrbTup}[PT][\vec*{t}]

\cmdmthsetext{DB}[DB][I]
\cmdmthset{IF}\cmdmthfun{if}
\cmdmthset{NIF}\cmdmthfun{nif}

\cmdtxtabr{poset}
\cmdtxtabr{cmsem}

\cmdtxtabr{MUC}
\cmdtxtabr{muc}

\DeclareRobustCommand{\bb}{\bSym\bSym}
\DeclareRobustCommand{\bs}{\bSym\sSym}
\DeclareRobustCommand{\bbinc}{\mthlrel{\sqsubseteq}[\bb]}
\DeclareRobustCommand{\bsinc}{\mthlrel{\sqsubseteq}[\bs]}

\cmdmthset{CP}
\cmdmthset{BCP}
\cmdmthset{JoFCP}[JoF\text{-}\,CP]
\cmdmthset{JoFBCP}[JoF\text{-}\,BCP]

\cmdmthfun{img}
\cmdmthfun{nimg}

\usrmth{comp}{}{argfun}
\usrmth{Comp}{}{argfun}

\usrmth{mce}{}{argfun}

\usrmth{Prof}{}{argfun}
\usrmth{Emb}{}{argfun}

\usrmth{subcls}{}{argfun}

\cmdmthset{Aut}
\cmdmthset{Iso}
\cmdmthsetext{Hom}[][h]

\usrmth{size}{}{argfun}

\usrmth{mgu}{}{argfun}
\usrmth{mgt}{}{argfun}
\usrmth{dmgu}{}{argfun}
\usrmth{dmgt}{}{argfun}

\cmdmthset{JOFQ}[JoFQ]\cmdmthset{BJOFQ}[BJoFQ]
\cmdmthset{CP}\cmdmthset{BCP}
\cmdmthset{JOFCP}[JoFCP]\cmdmthset{BJOFCP}[BJoFCP]

\usrmth{atm}{}{argfun}

\newcommandx{\MDI}
  {\txtabr[RmUpMd]{MDI}}

\newcommandx{\MRI}
  {\txtabr[RmUpMd]{MRI}}

\newcommand{\polsym}{P}
\newcommandx{\polSym}[3][1=, 2=, 3=]
  {\mthsym{\polsym#3}[#1][#2]}

\newcommand{\polelm}{P}
\newcommandx{\polElm}[3][1=, 2=, 3=]
  {\mthelm{\polelm#3}[#1][#2]}

\newcommand{\monsym}{M}
\newcommandx{\monSym}[3][1=, 2=, 3=]
  {\mthsym{\monsym#3}[#1][#2]}

\newcommand{\monelm}{M}
\newcommandx{\monElm}[3][1=, 2=, 3=]
  {\mthelm{\monelm#3}[#1][#2]}

\newcommand{\homFun}
  {\mthfun{h}}

\newcommand{\subFun}
  {\mthfun{\sigma}}

\newcommand{\adom}
  {\mthfun{adom}}

\newcommand{\bagFun}
  {\mthfun{\mu}}

\newcommand{\bodyFun}
  {\mthfun{body}}

\AfterEndPreamble{

\newcommand{\algnmpisol}{
\begin{algorithm}
  \caption{\label{alg:nmpisol} $\exists\forall$-Alternating $n$-MPI Solver.}
  \Function{$\sol(\MFun(\unkVec), \PFun(\unkVec), \PStr)$} {
    \nl $\phi \gets \text{facet complexity of the } \tuple* {\MFun(\unkVec)
        \!>\! \PFun[+](\unkVec)} {\PStr} [+]\text{-system}\hspace{-1.75em}$ \;
    \nl $\dsolVec \gets \ExsGuess \,\From \,\set{ \dsolVec \in \SetN[+][n] }{
        \size!{\dsolVec} \leq 6 \phi n^{3} }$
        \label{alg:nmpisol(exs:sol)} \;
    \nl $i \gets \AllGuess \,\From \,\set{ i \in \num{m} }{ \alphaElm[i] > 0 }$
        \label{alg:nmpisol(unv:sol)} \;
    \nl \lIf{$(\expVec[0] - \expVec[i])^{\intercal}\! \cdot \dsolVec \leq 0$} {
        \!\Return $\Ff$ }
    \nl $(\unkElm[1], \unkElm[2]) \gets \AllGuess \,\From \,\set{ (\unkElm[1],
        \unkElm[2]) \in \unkVec \times \unkVec }{ \unkElm[1] \vartriangleleft
        \unkElm[2] }$
        \label{alg:nmpisol(unv:str)} \;
    \nl \lIf{$\dsolElm[{\unkElm[1]}] \geq \dsolElm[{\unkElm[2]}] $} {
          \!\Return $\Ff$
        }
    \nl \Return $\Tt$ \;
  }
\end{algorithm}
}

}

\hypersetup {
  pdftitle  = {Attacking Diophantus: Special Cases of Bag Containment},
  pdfauthor = {G. Konstantinidis, X. Li, F. Mogavero}
}

\keywords{}

\begin{document}

  \title{Attacking Diophantus: \\ Special Cases of Bag Containment}

  \thanks{This work is based on~\cite{KM19} and~\cite{KM25}, which appeared in
  PODS'19 and ICDT'25, respectively.}

  \author[G.~Konstantinidis]
    {George Konstantinidis\lmcsorcid{0000-0002-3962-9303}}[a]
  \author[X.~Li]{Xinzhuo Li\lmcsorcid{0009-0006-9982-0296}}[a]
  \author[F.~Mogavero]
    {Fabio Mogavero\lmcsorcid{0000-0002-5140-5783}}[b]

  \address{University of Southampton, Southampton, United Kingdom}
  \email{g.konstantinidis@soton.ac.uk, xinzhuo.li@soton.ac.uk}
  \address{Universit\`a degli Studi di Napoli Federico II, Napoli, Italy}
  \email{fabio.mogavero@unina.it}



\begin{abstract}

\emph{Query containment} is a fundamental decision problem in database theory:
given two queries, determine whether, over all database instances, every answer
produced by the first is also produced by the second.
For \emph{conjunctive queries} under \emph{set semantics}, the problem has been
well understood since the 1970s through the classical homomorphism-based
characterisation.
Under \emph{bag semantics}, the interpretation of relational database underlying
real systems, containment instead becomes a quantitative comparison of answer
multiplicities.
Despite decades of work, the decidability of \emph{bag containment} for
conjunctive queries remains open.
This open frontier is also quite fragile: for classes only slightly more
expressive than conjunctive queries, bag containment is undecidable, with the
corresponding negative results ultimately relying on reductions from variants of
\emph{Hilbert's 10th problem}.

This work develops a unified framework for bag containment of conjunctive
queries that subsumes two previously studied decidable cases:
\emph{projection-free} and \emph{join-on-free} containee queries.
The framework yields decidability for a substantially broader class of queries,
called \emph{join-uniform queries}, while leaving the containing query
arbitrary.
This contrasts with techniques that obtain decidability by imposing restrictions
on the containing query.

The proposed approach identifies the tractable classes, based on the internal
\emph{unification structure} of the query whose multiplicities have to be
bounded.
Specifically, it reduces containment to a controlled Diophantine problem.
Starting from the containee query, one builds a canonical model generated by all
its possible \emph{unifications}, over which the relevant multiplicities admit a
finite arithmetic characterisation.
Containment is then proved equivalent to the non-existence of solutions of a
corresponding \emph{Diophantine inequality system}.
Although these problems are undecidable in general, we show that the systems
arising from join-uniform containment form a decidable subclass.
Thus, the standard source of undecidability for bag containment is turned into
the core of the decision procedure.

\end{abstract}


  \maketitle



\section{Introduction}
\label{sec:int}

The \emph{query containment problem} under \emph{bag-semantics} has persisted as
a long-standing open problem in database theory for more than thirty
years~\cite{CV93}.
It asks whether the answer of one query (\aka \emph{containee}) is always
contained in the answer of another query (\aka \emph{containing}), across all
possible database instances.
The problem has significant applications in several areas of computer science,
such as query optimisation~\cite{CM77,KA13,KNR16}, data
integration~\cite{LMSS95,KA11}, knowledge representation and
reasoning~\cite{BL04}, and data privacy~\cite{MG06,GH18,KHC21}.
The simplest and most-studied form of the problem concerns \emph{conjunctive
queries} ({\CQ}s), which constitute the core of every structured query language,
under \emph{set-semantics}, where duplicated tuples within database relations
and query answers are not considered.
Under bag-semantics, instead, relations and answers are \emph{bags}, \ie,
multisets, allowing for multiple copies of the same tuple, a possibility that
had already been incidentally considered in~\cite{IW91}.
This semantics constitutes the default behaviour on most relational database
management systems, motivating the identification and study of the associated
decision problems~\cite{CV93,IR95}.
Two variants will be relevant here.
Under \emph{bag-set semantics}, database relations are sets while query answers
retain their multiplicities; under \emph{bag-bag semantics}, multiplicities are
allowed already in the database relations.
We refer to the corresponding containment problems as \emph{bag-set} and
\emph{bag-bag containment}, respectively.
Despite numerous attempts and significant strides~\cite{Kol13}, the general
containment problem for {\CQ}s under bag-semantics remains open to date.
Recent work has also reinforced the relevance of the notion of multiplicity
beyond the classical containment problem.
One such direction concerns probabilistic query evaluation under bag-semantics,
where Boolean queries give rise to probability distributions over answer
multiplicities, leading to expectation and bounded-count probability
variants~\cite{GLS23}.
In parallel, a growing line of work on Datalog over semirings and provenance
studies recursive query evaluation over algebraic domains in which
multiplicities, weights, or annotations are first-class
objects~\cite{KNPSW22,ZDKRT24,FKR25}.
Since ordinary bag-semantics corresponds to the semiring of natural numbers,
this line provides a broader algorithmic context for quantitative query
semantics.
Finally, recent SQL-equivalence tools explicitly reason about bag-semantics,
symbolic or bounded equivalence, integrity constraints, and features of
practical SQL~\cite{ZANHW22,WPC24,HZWW24}.
All these developments make clear that reasoning about multiplicities is both
theoretically and operationally relevant.

The containment problem for {\CQ}s under set-semantics was shown to be
\NPTime-complete by Chandra and Merlin~\cite{CM77}, who proved its equivalence
to the \emph{homomorphism detection problem} between the two queries.
This, in turn, corresponds to the evaluation of the containing query on the
\emph{canonical instance} of the containee one.
In a series of articles~\cite{GM96,Coh06,Coh09,ADG10,Chi12,Chi14,Chi16},
starting with~\cite{CV93,IR95}, partial positive decidability and complexity
results were presented for the bag-semantics version of the problem.
These works, most of them relying on homomorphism manipulations, either offered
robust yet incomplete criteria for containment/non-containment, or focused on
providing complete decision procedures for significant subclasses of {\CQ}s.
By contrast, interesting extensions of the bag-containment problem were proven
to be undecidable.
Ioannidis and Ramakrishnan~\cite{IR95} first established the undecidability for
unions of {\CQ}s, by reducing from the validity problem of \emph{Diophantine
inequalities}, a question tightly related to \emph{Hilbert's 10th
Problem}~\cite{Dav73,Rob73,Mat93}.
By reducing from another variant of the same problem, Jayram~\etal~\cite{JKV06}
demonstrated undecidability for {\CQ}s with inequalities.
Marcinkowski and Orda~\cite{MO24} further refined the negative results by
showing undecidability of several generalisations of the problem.
Recently, Marcinkowski and Ostropolski-Nalewaja investigated the mixed \CQ/\UCQ
frontier under bag-semantics~\cite{MO25}, by introducing a technique that
produces a conjunctive query approximating the behaviour of a given union of
conjunctive queries.
They then use this technique to analyse containment questions in which one query
is a \CQ and the other one is a \UCQ.
Their results further emphasise how close the known undecidability mechanisms
for unions of conjunctive queries come to the still-open conjunctive-query case.

In 2011, Kopparty and Rossman~\cite{KR11} initiated a new line of research that
addresses the equivalent \emph{homomorphism domination problem} by means of more
sophisticated techniques grounded in \emph{information theory}.
They proved decidability for particular classes of {\CQ}s enjoying specific
graph-theoretic structures, known as \emph{chordal} and \emph{series-parallel}.
More recently, Khamis~\etal~\cite{KKNS20,KKNS21} expanded upon this idea,
proving that the bag-containment problem of arbitrary containee queries into
acyclic containing ones is equivalent to an unresolved question regarding the
validity of certain \emph{information inequalities}.
They also showed decidability for chordal containing queries having a
\emph{simple junction tree}.
This line of work is orthogonal to the Diophantine one followed here: it obtains
decidability by exploiting structural restrictions on the containing query,
whereas our approach restricts the containee and leaves the containing query
arbitrary.

In~\cite{KM19}, a natural fragment of {\CQ}s was studied, namely
\emph{projection-free queries} ({\PFQ}s), \ie, queries without existentially
quantified variables.
It was proved that the bag-containment of a \PFQ into an arbitrary \CQ can be
decided by a $\QAE$-alternating polynomial-time algorithm, placing the problem
in the second level of the polynomial hierarchy, more precisely in $\ULH[2][P]$,
while the problem is also \NPTimeH.
That work first reduced the containment problem to solving a special case of
Diophantine inequalities, known as \emph{monomial-polynomial inequalities}
(\MPI), and then presented a procedure for the latter, based on solving a linear
system.
Notably, this approach reversed the usual role of Diophantine inequalities in
bag-semantics: instead of serving as a source of undecidability, they became a
tool for proving decidability.
A subsequent step was taken in~\cite{KM25}, where \emph{join-on-free queries}
({\JoFQ}s), which allow existential variables while forbidding joins on them,
were considered for the containee query.
The proof retained the same general philosophy as~\cite{KM19}, but required a
more delicate homomorphism-counting analysis.
Canonical witnesses were replaced by \emph{multicanonical instances}, the notion
of atom images by the refined notion of \emph{net-images}, and the containment
problem was reduced to a constrained monomial-polynomial inequality.
The same kind of algorithmic approach, applied over an exponentially larger
search space, placed bag-set and bag-bag containment of {\JoFQ}s into arbitrary
{\CQ}s in \CoNExpTime.

The present article shows that these two cases are not isolated.
What makes the method work is not the specific projection-freeness, nor the
absence of existential joins, but the finiteness and order-theoretic regularity
of the unification closure generated by the containee query.
To this aim, we introduce and study the class of \emph{join-uniform queries}
({\JUQ}s), in which every atom of an existentially connected component contains
all existential variables of that component.
This class subsumes both {\PFQ}s and {\JoFQ}s, while allowing genuine joins on
existential variables.
We focus on bag-set containment of a boolean \JUQ into an arbitrary boolean \CQ.
Unlike for {\JoFQ}s, the standard copy-attribute reduction from bag-bag to
bag-set containment does not preserve join-uniformity.
Therefore, our result does not settle bag-bag containment for {\JUQ}s, whereas
the corresponding bag-bag results for {\PFQ}s and {\JoFQ}s follow by the
standard reduction.

The first part of our approach turns the database problem into a finite
arithmetic one.
We develop this reduction in four steps.
We first identify a finite family of structured instances that suffices for
testing non-containment, then derive an exact way of counting homomorphisms on
them, organise the resulting numerical parameters through the order induced by
the containee query, and finally obtain the desired monomial-polynomial
characterisation.

We start in~\cref{sec:commul;sub:mulins} from the main difficulty created by
joins and self-joins: different components of the containee may have overlapping
images in a database.
To describe all relevant overlaps finitely, we close its components under the
unifications that may arise within or across them.
For {\JUQ}s, the resulting \emph{minimal unification closure} is finite and
inherits a natural homomorphic order.
Its elements represent the finitely many query shapes that can occur when
components overlap.
Using these shapes, we build a controlled family of \emph{multicanonical
instances} and prove that it suffices to search for counterexamples to
containment within this family.

In~\cref{sec:commul;sub:cnthom}, we determine how to count the homomorphisms of
an arbitrary containing query into such instances.
The key point is that we do not need to remember the particular copy of a
closure shape in which every part of the query lands.
It is enough to record just the corresponding closure shapes and then count how
many copies of each shape are available.
This separates the combinatorial ways in which the containing query can be
mapped from the actual multiplicities occurring in the instance.
Summing over the finitely many possible mapping patterns yields an exact
expression for its multiplicity in terms of finitely many counting parameters.

In~\cref{sec:commul;sub:poschr}, we study the dependencies among these
parameters.
A copy associated with one closure element may also contribute homomorphisms
from other elements above it and the number of such contributions depends only
on the homomorphisms between the corresponding closure shapes.
The homomorphic order therefore comes equipped with numerical weights that
describe how local contributions accumulate along the closure.
This gives an invertible relation between the numbers of genuinely new copies
introduced at each level and the total homomorphism counts observed there.
The latter are particularly convenient because the multiplicity of the containee
is simply a monomial in these total counts.
At this stage, the relevant database multiplicities are reduced to a finite
collection of numerical counts satisfying the appropriate arithmetic relations.

Finally, in~\cref{sec:commul;sub:polchr}, we make this arithmetic description
explicit.
We group together the finitely many mapping patterns of the containing query
that contribute in the same way and express their total contribution as a
polynomial.
Using the relation between the two forms of counting, this polynomial can be
written in the same variables used by the monomial for the containee.
The values assigned to these variables cannot be arbitrary: after recovering
numbers of new copies at the different closure levels, those numbers must still
be non-negative integers.
We call assignments satisfying this consistency requirement \emph{natural}.
If, in addition, the counting parameters themselves are natural numbers, we call
such assignments \emph{Diophantine natural}.
The closure order, the weights controlling the interaction between its levels,
and the fixed contributions determined by the ground closure elements are
collected into a \emph{multiplicity poset}, whose arithmetic constraints express
precisely these requirements.
We thus obtain a finite monomial-polynomial inequality whose Diophantine natural
solutions correspond exactly to witnesses of non-containment.
The inverse passage from total homomorphism counts to the underlying numbers of
new copies may introduce negative coefficients into the polynomial, even though
its value still represents an actual homomorphism count on genuine counting
assignments.
The resulting polynomials nevertheless retain a crucial property: on Diophantine
natural assignments, the whole polynomial dominates every monomial having a
positive coefficient, once that coefficient is ignored.
We call this property \emph{strong non-negativeness}.
It is this additional structure, together with the arithmetic constraints on the
counting parameters, that the second part of our approach exploits.

The second part of our approach leaves the database interpretation behind and
asks when a generic constrained monomial-polynomial inequality has a Diophantine
natural solution (see~\cref{prb:nmpi}).
The central idea is to compare the rates at which its monomials grow.
We develop this idea in five steps.
The first three isolate the pure growth problem, the fourth restores the
additional arithmetic constraints, and the last turns the resulting
characterisation into a decision procedure.

We start in~\cref{sec:dphprb;sub:sol1gmpi} with the simplest case of an
inequality over a single unknown.
Here, the asymptotic comparison between the two sides is essentially determined
by their degrees: a monomial of larger degree eventually dominates one of
smaller degree.
This simple observation provides the basic growth principle used throughout the
analysis.

In~\cref{sec:dphprb;sub:nexppar1mpi}, we extend this idea to several unknowns.
We represent their values as powers of a common base and use the corresponding
exponents as parameters.
The growth rate of each monomial then becomes a linear expression in these
parameters.
Comparing the growth rates of the monomials therefore gives a homogeneous system
of linear inequalities, enriched with constraints expressing the order inherited
from the closure.

In~\cref{sec:dphprb;sub:incsolnmpi}, we show that this linearisation captures
exactly the relevant positive solutions whose coordinates increase strictly
along this order.
Solutions of the original inequality of this form yield feasible exponent
parameters, while solutions of the linear system can be turned into arbitrarily
large Diophantine solutions satisfying the same strict order.
Thus, once only growth is taken into account, the non-linear problem reduces to
linear feasibility.

In~\cref{sec:dphprb;sub:gensolnmpi}, we restore the arithmetic conditions that
were temporarily set aside.
In general, the values of the unknowns must correspond to valid counting
parameters, the polynomial may contain negative coefficients, and some
coordinates may be zero.
Strong non-negativeness allows us to reconcile the negative coefficients with
the previous linear characterisation, while preserving the additional counting
constraints.
Solutions with zero coordinates are handled by restricting the problem to the
part of the order on which the solution is positive.
This yields a linear characterisation of Diophantine natural solvability over an
appropriate restricted set of unknowns.
Unlike in the join-on-free setting, the argument works directly over an
arbitrary finite multiplicity poset and does not require a meet-semilattice.

Finally, in~\cref{sec:dphprb;sub:decprcnmpi}, we turn this characterisation into
an effective procedure.
Polynomial bounds on integer solutions of linear systems allow us to search for
a bounded witness and verify the corresponding constraints within
\QEA-alternating polynomial time.
This completes the solution of the abstract Diophantine problem independently of
the database construction from which it originated.

Combining the two parts yields a 2\CoNExpTime upper bound for bag-set
containment of {\JUQ}s into arbitrary {\CQ}s.
The same framework recovers the sharper \CoNExpTime upper bounds for bag-set and
bag-bag containment of {\JoFQ}s into arbitrary {\CQ}s, as well as the \ULH[2][P]
upper bound for bag-bag containment of {\PFQ}s.
All these problems are also \NPTimeH.

The article is organised as follows.
\Cref{sec:prl} introduces the required preliminaries, while
\cref{sec:spccasbagcon} presents the containment problems and the three classes
of containee queries considered here.
\Cref{sec:commul} develops the unification closure, multicanonical instances,
and the homomorphism-counting characterisation.
\Cref{sec:dphprb} isolates and solves the resulting Diophantine problem.
Finally, \cref{sec:deccom} turns this solution into the containment algorithms
and their complexity bounds.
We conclude with a discussion of the consequences and open directions.




\section{Preliminaries}
\label{sec:prl}



\subsection{Queries \& Homomorphisms}
\label{sec:prl;sub:qryhom}

We consider infinite sets of \emph{constants} $\ConSet$, \emph{values} $\ValSet$ and \emph{variables} $\VarSet$ with $\VarSet \subset \ValSet$. The set of terms is defined as $\TerSet = \ConSet \cup \ValSet$.
Intuitively, values which are not variables (i.e., elements of $\ValSet \setminus \VarSet$)  will be used in database instances in a way similar to labelled nulls~\cite{fagin2005data}: at times acting as constants, in the sense that they form tuples, but at times treated as variables in the sense that we do not care about their particular value and we will use homomorphisms that can map a value in $\ValSet$ to another term.

We use the usual notion of \emph{relation names}, and atoms of the form $\RRel(\vec*{\tElm})$ where $\RRel$ is a relation name and $\vec*{\tElm}$ is a tuple of terms.
%
\emph{Ground atoms} or
\emph{facts} atoms which do not contain variables. 
Relations are sets of facts, and so is a \emph{database} or an \emph{instance}.
For every syntactic expression $\sElm$, we write $\VarSet(\sElm)$, $\ValSet(\sElm)$,
$\ConSet(\sElm)$, and $\TerSet(\sElm)$ for the sets of variables, values, constants, and
terms occurring in $\sElm$ respectively.

Given two sets of atoms $\SSet[1]$ and $\SSet[2]$, a \emph{homomorphism} from
$\SSet[1]$ to $\SSet[2]$ is mapping $\hFun \colon \TerSet(\SSet[1]) \to
\TerSet(\SSet[2])$ such that: (i) for all $\conElm \in \ConSet \cap
\TerSet(\SSet[1])$ it holds that $\hFun(\conElm) = \conElm$, and (ii)
$\hFun(\SSet[1]) \subseteq \SSet[2]$.
A homomorphism from $\SSet[1]$ to $\SSet[2]$ is denoted $\SSet[2]
\preccurlyeq^{\hFun} \SSet[1]$, or simply $\SSet[2] \preccurlyeq \SSet[1]$ when
we do not care about $\hFun$.
When $\SSet[2] \preccurlyeq \SSet[1]$ but $\SSet[1] \not\preccurlyeq \SSet[2]$,
we write $\SSet[2] \prec \SSet[1]$.
%
%
%
%
%
We denote by $\HomSet(\SSet[1], \SSet[2])$ the set of all such homomorphisms.
For an atom set $\SSet$, an \emph{automorphism} of $\SSet$ is a homomorphism $\homFun \in
  \HomSet(\SSet, \SSet)$ such that
  $\homFun(\SSet) = \SSet$. Trivially, the  identity mapping $id_S$ is an automorphism. An automorphism is non-trivial if it is not the identity.

  An \emph{isomorphism} from $\SSet[1]$ to $\SSet[2]$ is a homomorphism $\hFun$ from $\SSet[1]$ to $\SSet[2]$ (i.e., $\SSet[2] \preccurlyeq^{\hFun} \SSet[1]$) such that $\hFun$ has an inverse $\hFun^{-1}$, which is a homomorphism from $\SSet[2]$ to $\SSet[1]$, that is, $\SSet[1] \preccurlyeq^{\hFun^{-1}} \SSet[2]$ (note that automorphisms are isomorphisms). Note that, it must be that $\hFun(\SSet[1])=\SSet[2]$ and $\hFun^{-1}(\SSet[2])=\SSet[1]$. Also, it must hold that $\ConSet(\SSet[1]) = \ConSet(\SSet[2]) = \CSet$. We will denote the existence of an isomorphism by writing $\SSet[1] \approx \SSet[2]$.

A \emph{conjunctive query} (\CQ) $\qryElm(\vec*{\varElm})$ is a first-order
formula of the form
\[
  \exists \vec*{\yElm_1}, ...,\vec*{\yElm_n}
  \bigwedge_{i=1}^{n} \RRel[i](\vec*{\cElm_i}, \vec*{\varElm_i}, \vec*{\yElm_i})\textit{, where }\vec*{\varElm_i}, \vec*{\yElm_i}\textit{, are variables, }\vec*{\cElm_i}\textit{ constants, and }  \vec*{\varElm} = \bigcup^{i=n}_{i=1} \vec*{\varElm_i},
\]
We also use the datalog notation
\[
  \qryElm(\vec*{\varElm}) \leftarrow
  \RRel[1](\vec*{\cElm_1}, \vec*{\varElm_1}, \vec*{\yElm_1}), \ldots,
  \RRel[n](\vec*{\cElm_n}, \vec*{\varElm_n}, \vec*{\yElm_n}).
\]
The set of atoms in $\qryElm$, called the \emph{body}, is denoted by $\bodyFun(\qryElm)$. The free variables $\vec*{\varElm}$ are also called \emph{distinguished}; the rest are \emph{existential}. When it is clear from context we will refer to the body of a query set as simply the ``query''. For example, we will talk about homomorphisms from a query to another set of atoms, or another query (meaning homomorphisms from the body set o atoms).
When the distinguished variables $\vec*{\varElm} = \emptyset$, the query is Boolean (\BCQ).
We can turn any non-Boolean query $
  \qryElm(\vec*{\varElm}) \leftarrow
  \RRel[1](\vec*{\cElm_1}, \vec*{\varElm_1}, \vec*{\yElm_1}), \ldots,
  \RRel[n](\vec*{\cElm_n}, \vec*{\varElm_n}, \vec*{\yElm_n})
$ to a BCQ by \emph{freezing} it, that is, by replacing the distinguished variables $\vec*{\varElm}$ with a tuple $\vec*{\hat{\varElm}}$ of ``fresh'' constants (not existing in the query):
 $ \qryElm() \leftarrow
  \RRel[1](\vec*{\cElm_1}, \vec*{\hat{\varElm_1}}, \vec*{\yElm_1}), \ldots,
  \RRel[n](\vec*{\cElm_n}, \vec*{\hat{\varElm_n}}, \vec*{\yElm_n}).$

%
%

A \emph{set database instance} $\dbElm$ is a set of facts.
The \emph{answer under set semantics} of a \CQ $\qryElm(\vec*{\varElm})$ over
$\dbElm$, denoted by $\qryElm(\dbElm)$, is the set of tuples
$\vec*{\conElm}$ such that there exists a homomorphism
$\hFun \in \HomSet(\bodyFun(\qryElm), \dbElm)$ with
$\hFun(\vec*{\varElm}) = \vec*{\conElm}$.

\begin{definition}[Components]
  \label{def:components}
  Let $\SSet$ be a finite set of atoms or facts. 
  The \emph{component graph} of $\SSet$ is the undirected graph whose
  vertices are the elements of $\SSet$ and in which two distinct vertices
  $\atmElm[1],\atmElm[2] \in \SSet$ are adjacent if they share a non-constant value, that is:
  \[
    \bigl(\TerSet(\atmElm[1]) \cap \TerSet(\atmElm[2])\bigr)
    \setminus \ConSet
    \neq
    \emptyset .
  \]
  The \emph{components} of $\SSet$ are the maximal subsets
  of $\SSet$ induced by the connected components of this graph.
  \end{definition}

  \begin{example}
\label{exm:0}
The components of the set $\{R_1(a,y,c), R_2(a,y,d), R_3(c), R_4(a,d)\}$ where $y$ is a variable and $a,c,d$ are constants, are three: $\{R_1(a,y,c), R_2(a,y,d)\}$ , $\{R_3(c)\}$ and $\{R_4(a,d)\}$.
\end{example}

The components of a query are the components of its body set of atoms with the caveat that we treat distinguished variables as constants, that is, we also disconnect components that share distinguished variables.

\begin{definition}[Components of a CQ]
  \label{def:componentsQuery}
  Let a CQ $\qryElm(\vec*{\varElm})$ with $\SSet$ its body set of atoms and $\vec*{\varElm}$ its tuple of distinguished variables. 
  The \emph{component graph} of $\qryElm$ is the undirected graph whose
  vertices are the elements of $\SSet$ and in which two distinct vertices
  $\atmElm[1],\atmElm[2] \in \SSet$ are adjacent if they share an existential variable, that is:
  \[
    \bigl(\TerSet(\atmElm[1]) \cap \TerSet(\atmElm[2])\bigr)
    \setminus (\ConSet \cup \vec*{\varElm})
    \neq
    \emptyset .
  \]
  The \emph{components} of $q$ are the maximal subsets
  of $\SSet$ induced by the connected components of this graph.
  \end{definition}

  For a \BCQ $\qryElm$, we write $\comp{\qryElm}$ for the components of
  $\bodyFun(\qryElm)$, 
  identifying each component
  with the subquery whose body is that component. 
  Given an instance $\dbElm$, we write $\comp{\dbElm}$ for the components of $\dbElm$; facts in component in $\dbElm$ are connected only through
  values in $\ValSet$ (in fact, in $\ValSet \setminus \VarSet$ since instances do not contain variables).

\begin{definition}[Unification]
\label{def:unif}
Given atoms $\atmElm[1]$, $\atmElm[2]$, 
a mapping
$\uFun \colon \TerSet(\atmElm[1]) \cup \TerSet(\atmElm[2]) \to
\TerSet(\atmElm[1]) \cup \TerSet(\atmElm[2])$ is a \emph{unifier} for
$\atmElm[1]$ and $\atmElm[2]$ if (i) for all $\conElm \in \ConSet$, $\uFun(\conElm) = \conElm$, and (ii)
$\uFun(\atmElm[1]) = \uFun(\atmElm[2])$.
The \emph{most general unifier} for $\atmElm[1]$ and $\atmElm[2]$ is denoted by
$\mgu{\atmElm[1], \atmElm[2]}$.
Whenever $\atmElm[1],\atmElm[2] \in \rqryElm$, we use the same notation for the
extension of this mapping to $\TerSet(\rqryElm)$ that fixes every term in
$\TerSet(\rqryElm) \setminus (\TerSet(\atmElm[1]) \cup \TerSet(\atmElm[2]))$.
We lift unification to sets of atoms and write $\mgu{\SSet}$ for the atom
obtained by unifying the atoms in $\SSet$, whenever the unification exists.

For atoms $\atmElm[i]\in\rqryElm[i]$ ($i=1,2$), take copies of the whole
queries with disjoint variable sets, keeping constants unchanged, and use
the same notation for the copies and their selected atoms. This also
applies when the two original queries are the same. We call the selected atoms
\emph{disjointly unifiable} if they are unifiable in these copies. In
this case, let $\uFun$ be a most general unifier of the selected atoms.
For $i=1,2$, let $\subFun[i]$ agree with $\uFun$ on the terms of
$\atmElm[i]$ and leave the other terms of $\rqryElm[i]$ unchanged.
We call $(\subFun[1],\subFun[2])$ a \emph{disjoint most general unifier}
and write $(\subFun[1],\subFun[2])\defeq\dmgu{\atmElm[1],\atmElm[2]}$.
In particular, $\subFun[1](\atmElm[1])=\subFun[2](\atmElm[2])$.
The resulting union $\subFun[1](\rqryElm[1])\cup\subFun[2](\rqryElm[2])$
is determined up to isomorphism.

We also use the standard fact that a
finite simultaneous unification of atoms can be realised, up to isomorphism, by
a sequence of pairwise most-general unifications.
\end{definition}

\subsection{Posets \& Convolutions}
\label{sec:prl;sub:poscon}


A \emph{partially ordered set} (\emph{\poset}, for short) is a structure $\PStr
\defeq \tuple {\PSet} {\preccurlyeq}$ consisting of a set $\PSet$ endowed with a
\emph{partial order} $\preccurlyeq \;\subseteq \PSet \times \PSet$, \ie, a
binary relation that is reflexive, antisymmetric, and transitive.
The symbol $\prec$ denotes the \emph{strict order} induced by $\preccurlyeq$,
while $\vartriangleleft$ is the corresponding \emph{successor relation}, \ie,
$\sElm \vartriangleleft \pElm$ \iff $\sElm \prec \pElm$ and there is no $\rElm
\in \PSet$ such that $\sElm \prec \rElm \prec \pElm$, for all $\pElm, \sElm \in
\PSet$.
The poset $\PStr$ is said to be \emph{finite} if its underlying set $\PSet$ is
finite.
It is \emph{locally finite} if every \emph{interval} $\numcc{\sElm}{\pElm}
\defeq \set{ \rElm \in \PSet }{ \sElm \preccurlyeq \rElm \preccurlyeq \pElm }$
is finite, where $\pElm, \sElm \in \PSet$.
Clearly, every finite poset is also locally finite.

A function $\nFun \colon \PSet \to \SetR$ defined on the elements of $\PStr$ is
\emph{increasing} (\resp, \emph{non-decreasing}) if $\nFun(\rElm) <
\nFun(\pElm)$ (\resp, $\nFun(\rElm) \leq \nFun(\pElm)$), for all $\pElm, \rElm
\in \PSet$ with $\rElm \vartriangleleft \pElm$.

Let $\fFun, \mFun \colon {\preccurlyeq} \to \SetR$ be two functions defined on
the set of all comparable pairs of elements in $\PStr$.
If $\PStr$ is locally finite, the \emph{convolution product} $(\mFun * \fFun)
\colon {\preccurlyeq} \to \SetR$ of these two functions is definable, for every
$\pElm, \sElm \in \PSet$ with $\sElm \preccurlyeq \pElm$, by
\[
  (\mFun * \fFun)(\sElm, \pElm)
\defeq
  \sum_{\sElm \preccurlyeq \rElm \preccurlyeq \pElm} \mFun(\sElm, \rElm)
  \fFun(\rElm, \pElm).
\]
The \emph{Kronecker delta}, namely, the function $\deltaFun \colon
{\preccurlyeq} \to \SetR$, where $\deltaFun(\pElm, \pElm) \defeq 1$ and
$\deltaFun(\sElm, \pElm) \defeq 0$, for all $\pElm, \sElm \in \PSet$ with $\sElm
\prec \pElm$, is the identity element \wrt this operation, \ie, $(\deltaFun *
\fFun) = (\fFun * \deltaFun) = \fFun$.
In addition, every function $\mFun$ such that $\mFun(\pElm, \pElm) \neq 0$, for
all $\pElm \in \PSet$, admits a unique \emph{convolution inverse}, \ie, a
function $\mFun^{-1} \colon {\preccurlyeq} \to \SetR$ satisfying $(\mFun *
\mFun^{-1}) = (\mFun^{-1} * \mFun) = \deltaFun$.
This inverse can be computed recursively, for all $\pElm, \sElm \in \PSet$ with
$\sElm \prec \pElm$, by setting
\[
  \mFun^{-1}(\pElm, \pElm)
\defeq
  \frac{1}{\mFun(\pElm, \pElm)}
\quad\text{ and }\quad
  \mFun^{-1}(\sElm, \pElm)
\defeq
  - \frac{1}{\mFun(\pElm, \pElm)} \sum_{\sElm \preccurlyeq \rElm \prec \pElm}
  \mFun^{-1}(\sElm, \rElm) \, \mFun(\rElm, \pElm).
\]
Clearly, if $\eFun \defeq \mFun * \fFun$ and $\mFun$ admits a convolution
inverse, then $\fFun = \mFun^{-1} * \eFun$, since the convolution product is
associative.
Moreover, if $\mFun(\sElm, \pElm), \fFun(\sElm, \pElm) \in \SetN$, for all
$\pElm, \sElm \in \PSet$ with $\sElm \preccurlyeq \pElm$, then also
$\eFun(\sElm, \pElm) \in \SetN$, for all $\pElm, \sElm \in \PSet$ with $\sElm
\preccurlyeq \pElm$.
Observe that, if $\mFun(\sElm, \pElm) \equiv 0 \pmod{\mFun(\sElm, \sElm)}$, for
all $\pElm, \sElm \in \PSet$ with $\sElm \preccurlyeq \pElm$, it follows that
$\mFun(\pElm, \pElm) \mFun^{-1}(\sElm, \pElm) \in \SetZ$, for all $\pElm, \sElm
\in \PSet$ with $\sElm \preccurlyeq \pElm$.

Assume $\PStr$ to be a finite poset.
Given $\nFun \colon \PSet \to \SetR$ and $\mFun \colon {\preccurlyeq} \to
\SetR$, consider the \emph{convolution action} $(\mFun \star \nFun) \colon \PSet
\to \SetR$ of $\mFun$ on $\nFun$ defined, for every $\pElm \in \PSet$, by
\[
  (\mFun \star \nFun)(\pElm)
\defeq
  \sum_{\sElm \preccurlyeq \pElm} \mFun(\sElm, \pElm) \, \nFun(\sElm).
\]
Clearly, $\nFun = \deltaFun \star \nFun$.
Moreover, if $\gFun \defeq \mFun \star \nFun$ and $\mFun$ admits a convolution
inverse, then $\nFun$ can be recovered from $\gFun$ by $\nFun = \mFun^{-1} \star
\gFun$.
Indeed, for every $\pElm \in \PSet$, it holds that
\[
  (\mFun^{-1} \star \gFun)(\pElm)
=
  \sum_{\sElm \preccurlyeq \pElm} \mFun^{-1}(\sElm, \pElm) \gFun(\sElm)
=
  \sum_{\sElm \preccurlyeq \pElm} \mFun^{-1}(\sElm, \pElm) \sum_{\tElm
  \preccurlyeq \sElm} \mFun(\tElm, \sElm) \nFun(\tElm)
=
  \sum_{\tElm \preccurlyeq \sElm \preccurlyeq \pElm} \mFun^{-1}(\sElm,
  \pElm) \mFun(\tElm, \sElm) \nFun(\tElm)
\]
and, by exchanging the order of summation, one obtains
\[
  \sum_{\tElm \preccurlyeq \pElm} \! \left( \sum_{\tElm \preccurlyeq \sElm
  \preccurlyeq \pElm} \mFun(\tElm, \sElm) \mFun^{-1}(\sElm, \pElm) \right)
  \! \nFun(\tElm)
=
  \sum_{\tElm \preccurlyeq \pElm} (\mFun * \mFun^{-1})(\tElm, \pElm)
  \nFun(\tElm)
=
  \sum_{\tElm \preccurlyeq \pElm} \deltaFun(\tElm, \pElm) \, \nFun(\tElm)
=
  \nFun(\pElm).
\]
When $\mFun(\sElm, \rElm) \leq \mFun(\sElm, \pElm)$ and $\mFun(\sElm, \rElm),
\nFun(\sElm) \geq 0$ (\resp,  $\mFun(\sElm, \rElm), \nFun(\sElm) > 0$), for all
$\pElm, \rElm, \sElm \in \PSet$ with $\sElm \preccurlyeq \rElm \preccurlyeq
\pElm$, the result $\gFun$ of their convolution action is non-decreasing (\resp,
increasing).
Indeed, if $\rElm \vartriangleleft \pElm$, it holds that
\[
  \gFun(\rElm)
=
  \sum_{\sElm \preccurlyeq \rElm} \mFun(\sElm, \rElm) \nFun(\sElm)
\leq
  \sum_{\sElm \preccurlyeq \rElm} \mFun(\sElm, \pElm) \nFun(\sElm)
\leq (\resp, <)
  \sum_{\sElm \preccurlyeq \pElm} \mFun(\sElm, \pElm) \nFun(\sElm)
=
  \gFun(\pElm) \,.
\]
Finally, under the assumption that $\mFun(\sElm, \pElm) > 0$ and $\mFun(\pElm,
\pElm) \mFun^{-1}(\sElm, \pElm) \in \SetZ$, for all $\pElm, \sElm \in \PSet$
with $\sElm \preccurlyeq \pElm$, which, as we shall see, is a property enjoyed
by our context, one has $\nFun(\pElm) \in \SetN$, for all $\pElm \in \PSet$,
\iff the following conditions hold true, for all $\pElm \in \PSet$, where
$\lFun(\pElm) \defeq - \sum_{\sElm \prec \pElm} \mFun(\pElm, \pElm)
\mFun^{-1}(\sElm, \pElm) \gFun(\sElm)$:
\begin{inparaenum}[(1)]
\item
  $\gFun(\pElm) \equiv \lFun(\pElm) \pmod{\mFun(\pElm, \pElm)}$;
\item
  $\gFun(\pElm) \geq \lFun(\pElm)$.
\end{inparaenum}
Intuitively, the first condition guarantees the divisibility required for
$\nFun(\pElm)$ to be integral, whereas the second one ensures that the resulting
integer is non-negative.

For the general theory of incidence algebras, convolution inverses, and M\"obius
inversion on locally finite posets, we refer to~\cite{Rot64,Sta11}.





\section{Special Cases of Bag Containment}
\label{sec:spccasbagcon}
In this section we  introduce the bag-containment problems studied throughout the paper and the query classes that form the focus of our investigation. We first formalise the notions of bag-set and bag-bag containment, then revisit the previously known tractable classes of projection-free and join-on-free queries before introducing the more general class of join-uniform queries. The section concludes with representative examples that illustrate the progression from the earlier special cases to the broader framework developed in the remainder of the paper.



\subsection{The Bag-Set and Bag-Bag Containment Problems}
\label{sec:spccasbagcon;sub:cntprb}
A bag or \emph{multiplicity} over a set $\SSet$ is function $\bagFun \colon
\SSet \to \SetN$. Given an instance $\dbElm$, a \emph{bag} over $\dbElm$ is a function $\bagFun \colon \dbElm \to \SetN$ that, intuitively, attaches a multiplicity number or ``count'' on the database facts in $\dbElm$.
For a \CQ
\[
  \qryElm(\vec*{\varElm}) =
  \exists \vec*{\yElm} \bigwedge_{i=1}^{n}
  \RRel[i](\vec*{\conElm_i},\vec*{\varElm_i}, \vec*{\yElm_i}),
\]
the \emph{answer under bag-bag semantics of $\qryElm$ over $(\dbElm,\bagFun)$}
is the bag $\qryElm_{bb}^{\dbElm,\bagFun} \colon \qryElm(\dbElm) \to \SetN$
defined as follows.
For every $\vec*{\conElm} \in \qryElm(\dbElm)$, let
\[
  \mathcal{H}_{\vec*{\conElm}}
  \defeq
  \set{ \hFun \in \HomSet(\bodyFun(\qryElm), \dbElm) }{
    \hFun(\vec*{\varElm}) = \vec*{\conElm} }.
\]
Then
\[
  \qryElm_{bb}^{\dbElm,\bagFun}(\vec*{\conElm})
  \defeq
  \sum_{\hFun \in \mathcal{H}_{\vec*{\conElm}}}
  \prod_{i=1}^{n}
  \bagFun\!\bigl(\hFun(\RRel[i](\vec*{\conElm[i]},\vec*{\varElm[i]}, \vec*{\yElm[i]}))\bigr).
\]

The \emph{answer under bag-set semantics} of $\qryElm$ over $\dbElm$, denoted
by $\qryElm_{bs}^{\dbElm}$, is defined in the same way with
$\bagFun(\atmElm) = 1$ for every $\atmElm \in \dbElm$.

Given two {\CQ}s $\qryElm$ and $\pElm$, we say that:
\begin{itemize}[\textbullet]
\item
  $\qryElm$ is \emph{set contained} in $\pElm$, written
  $\qryElm \sqsubseteq_{s} \pElm$, if
  $\qryElm(\dbElm) \subseteq \pElm(\dbElm)$ for every set instance $\dbElm$;
\item
  $\qryElm$ is \emph{bag-bag contained} in $\pElm$, written
  $\qryElm \sqsubseteq_{bb} \pElm$, if
  $\qryElm_{bb}^{\dbElm,\bagFun}(\vec*{\conElm}) \leq
  \pElm_{bb}^{\dbElm,\bagFun}(\vec*{\conElm})$ for every set instance $\dbElm$,
  every bag $\bagFun$ over $\dbElm$, and every answer tuple $\vec*{\conElm}$;
\item
  $\qryElm$ is \emph{bag-set contained} in $\pElm$, written
  $\qryElm \sqsubseteq_{bs} \pElm$, if
  $\qryElm_{bs}^{\dbElm}(\vec*{\conElm}) \leq
  \pElm_{bs}^{\dbElm}(\vec*{\conElm})$ for every set instance $\dbElm$ and
  every answer tuple $\vec*{\conElm}$.
\end{itemize}

\noindent Recall that bag-bag containment implies bag-set containment, which implies set-containment~\cite{CV93}.

\begin{problem}[Containment Problem]
\label{prb:conprb}
  Given two classes $\QCls$ and $\PCls$ of {\CQ}s defined on the same vector of
  free variables $\vec*{\varElm}$, the containment problem under bag-set
  semantics $\CP[\bs](\QCls, \PCls)$ (\resp, bag-bag semantics $\CP[\bb](\QCls,
  \PCls)$) asks, for any pair $(\qryElm(\vec*{\varElm}),
  \pqryElm(\vec*{\varElm})) \in (\QCls, \PCls)$, to decide whether
  $\qryElm(\vec*{\varElm}) \bsinc \pqryElm(\vec*{\varElm})$ (\resp,
  $\qryElm(\vec*{\varElm}) \bbinc \pqryElm(\vec*{\varElm})$) holds true.
\end{problem}

In this paper we embark on a sequence of positive, decidability, results for certain instances the of bag containment problem. Our method is inspired
by the converse techniques of those often exploited in the proofs of undecidability which commonly reduce a diophantine problem into a containment one;  here, we typically describe an exponential reduction of our containment problem to
the solution of a Diophantine inequality system of a certain structure, and then proceed to prove decidability for the latter. In this quest, our containment problems follow a similar restriction pattern: we always consider the containing query an unrestricted general $\CQ$, while restricting the containee query. The class of containee queries that we consider in this paper captures results in \cite{KM19} and \cite{KM25} as we explain below. 




\subsection{A Tale of Three Queries}
\label{sec:spccasbagcon;sub:talthrqry}

The approach of \cite{KM19} was the first work to
use Diophantine inequalities as a positive tool, providing
a decidability result for a major class of containee queries and in doing so it showed decidability of an interesting restriction of the Diophantine problem. The class of queries studied in~\cite{KM19} is \emph{projection-free-queries}, that is, the class of queries that don't have projections (existential variables).

\begin{definition}[Projection-Free Query]
\label{def:pfq}
  A \CQ $\qryElm(\vec*{\varElm})$ with free variables $\vec*{\varElm} \subset
  \VarSet$ is \emph{projection-free} (\PFQ) if it is of the form $\bigwedge_{i =
  1}^{n} \RRel[i](\vec*{\conElm}[i], \vec*{\varElm}[i])$, where
  $\vec*{\conElm}[i] \subseteq \ConSet$ and $\vec*{\varElm}[i] \subseteq
  \vec*{\varElm}$, for all $1 \leq i \leq n$.
\end{definition}

\begin{example}
\label{exm:1}
    As an example consider the $\PFQ$
    $\qryElm[1](\vec*{\varElm[1]}, \vec*{\varElm[2]}) \leftarrow \RRel(\vec*{\varElm[1]}, \vec*{\varElm[2]}), \RRel(\vec*{\conElm[1]}, \vec*{\varElm[2]}), \RRel(\vec*{\varElm[1]}, \vec*{\conElm[2]})$.
\end{example}

Deciding bag-bag and bag-set containment of a $\PFQ$ into an arbitrary $\CQ$ is in $\ULH[2][P]$~\cite{KM19}.
Following along this direction, the approach in~\cite{KM25} considered a natural extension of $\PFQ$ by allowing existential variables in the containee query which however can not be joined terms. This resulted in the class of \emph{join-on-free} queries.

\begin{definition}[Join-on-Free Query]
\label{def:jofqry}
  A \CQ $\qryElm(\vec*{\varElm})$ with free variables $\vec*{\varElm} \subset
  \VarSet$ is \emph{join-on-free} (\JoFQ) if it is equivalently writable in the
  form $\bigwedge_{i = 1}^{n} \exists \vec*{\yvarElm}[i]
  \RRel[i](\vec*{\conElm}[i], \vec*{\varElm}[i], \vec*{\yvarElm}[i]
  )$, where
  $\vec*{\conElm}[i] \subseteq \ConSet$ and $\vec*{\varElm}[i] \subseteq
  \vec*{\varElm}$, for all $1 \leq i \leq n$.
\end{definition}

This class contains queries for which bag-containment has been discussed, but
remained unsolved, such as the the classic example from~\cite{CV93}:
\begin{example}\cite{CV93}
\label{exm:2}
%
Decide whether $\qryElm[2] \bbinc \qryElm[3]$ and $\qryElm[2] \bsinc \qryElm[3]$ for the following queries 
  $\qryElm[2](\varElm, \zvarElm)
\leftarrow
  \PRel(\varElm), \QRel(\uvarElm, \varElm), \QRel(\vvarElm, \zvarElm),
  \RRel(\zvarElm)$ and 
  $\qryElm[3](\varElm, \zvarElm)
\leftarrow
  \PRel(\varElm), \QRel(\uvarElm, \yvarElm), \QRel(\vvarElm, \yvarElm),
  \RRel(\zvarElm)$.
\end{example}

In this example, $\qryElm[2]$ is a \JoFQ since all the join terms appear in the head, while $\qryElm[3]$ is an arbitrary $\CQ$. A decision procedure for this problem is presented in~\cite{KM25} where the problem is proved to be in $\CoNExpTime$.

The idea for $\JoFQ$s came from an observation in the process of reducing the problem of bag-bag containment to bag-set containment. When faced with a bag-bag decision problem on two $\CQ$s one can reduce this to a bag-set decision problem by transforming the queries to include a new ``copy'' or ``count'' attribute in each atom~\cite{}. This attribute is an existential variable unique to each atom, and all existential variables added in this way can not be join terms (so changing a \PFQ or a \JoFQ query in this way results in a \JoFQ query). In fact, while the approaches in~\cite{KM19} and ~\cite{KM25} are similar in spirit, the former focuses on deciding bag-bag containment of $\PFQ$ (bag-bag containment always implies bag-set containment in $\CQ$s), while the latter first reduces $\JoFQ$ bag-bag containment to $\JoFQ$ bag-set containment and then focuses on the latter (since the class of $\JoFQ$ is closed under the bag-bag to bag-set reduction). The following example shows, as proven in~\cite{KM25}, that reducing $\PFQ$ bag-bag containment to bag-set containment results in obtaining a $\JoFQ$ bag-set containment problem. Solving bag-set containment for $\JoFQ$s essentially solves both problems on $\PFQ$s and $\JoFQ$s.


\begin{example}
\label{exm:3}
Consider the projection-free query
$\qryElm[4](x,y)\leftarrow R(x,y)$.  Consider the bag-bag containment problem of $\qryElm[4] \bbinc \qryElm[5]$ with
$\qryElm[5](x,y)\leftarrow R(x,y),R(x,z)$. Note that $\qryElm[4]$ is $\PFQ$ while $\qryElm[5]$ is $\JoFQ$.  Assume an instance $\dbElm$ with multiplicity function $\bagFun$, such that $m=\bagFun(R(a,b))$. The multiplicity of answer tuple  $(a,b)$ in $\qryElm[4]$ will be $m$, while for $\qryElm[5]$ it will be $m^2$.  
The classic reduction to transform bag-bag to bag-set containment transforms the query by extending the arities of each atom: $\qryElm[4]$ becomes $\qryElm^{\prime}_4(x,y) \leftarrow R(x,y,w)$ while $\qryElm[5]$ becomes $\qryElm^{\prime}_5(x,y) \leftarrow R(x,y,w),R(x,z,w^{\prime})$. It holds that $\qryElm[4](x,y) \bbinc \qryElm[5]$ iff $\qryElm^{\prime}_4 \bsinc \qryElm^{\prime}_5$.  Note that both $\qryElm^{\prime}_4$ and $\qryElm^{\prime}_5$ are $\JoFQ$.
\end{example}

In this paper, we present a unifying framework for the aforementioned results and extend the class of containee queries further: we allow existential joins under the condition that if two atoms share an existential join they need to share all existential variables between them; we call these \emph{join-uniform queries} and this class subsumes the classes of $\PFQ$ and $\JoFQ$. 

\begin{definition}[Join-Uniform Query]
\label{def:juq}
  A \CQ $\qryElm(\vec*{\varElm})$ with free variables $\vec*{\varElm} \subset
  \VarSet$ is \emph{join-uniform} (\JUQ) if it is equivalently writable in the
  form $\bigwedge_{i = 1}^{n} \exists \vec*{\yvarElm}[i] \bigwedge_{j =
  1}^{m_{i}} \RRel[i, j](\vec*{\conElm}[i, j], \vec*{\varElm}[i, j],
  \vec*{\yvarElm}[i])$, where $\vec*{\conElm}[i, j] \subseteq \ConSet$ and
  $\vec*{\varElm}[i, j] \subseteq \vec*{\varElm}$, for all $1 \leq i \leq n$ and
  $1 \leq j \leq m_{i}$.
\end{definition}

The $\JUQ$ definition essentially breaks the query into existentially connected components; each atom in a component needs to have all existential variables in that component. This class includes $\JoFQ$: $\qryElm[2]$ in Example~\ref{exm:2} has four components $\comp{\qryElm[2]}$=$\{\{\PRel(\varElm)\}$,$\{\QRel(\uvarElm, \varElm)\},\{ \QRel(\vvarElm, \zvarElm)\}$, $\{\RRel(\zvarElm)\}\}$ and, trivially,
every (atomic) component contains all existential variables of the component. 
Indeed, the components of any $\PFQ$ and $\JoFQ$ 
query are singletons, since components are defined by decomposing the query on its distinguished variables; every atom in a $\PFQ$ (which has only distinguished variables), and every atom in a $\JoFQ$ (where only distinguished variables are shared across atoms) are their own components. Hence both of these classes are trivially Join-Uniform Queries. However, $\JoFQ$s are much more powerful and can express interesting examples with differently looking components, such as the following.

\begin{example}
\label{exm:4}
Fix constants $c,d,e$ and let $\qryElm[6]$ have the four connected
components
\[
\begin{array}{rcl}
 C &:=& \{R(x,c),\ R(c,x)\},\\[1mm]
 D &:=& \{R(y,d),\ R(d,y)\},\\[1mm]
 W &:=& \{S(u,v,z),\ S(v,u,z)\},\\[1mm]
 U &:=& \{S(w,t,e),\ S(t,w,e)\}.
\end{array}
\]
Thus, consider the \JUQ query with the above components: 
$\qryElm[6]()$ $\leftarrow$ $R(x,c)$, $R(c,x)$, $R(y,d)$, $R(d,y)$, $S(u,v,z)$, $S(v,u,z)$, $S(w,t,e)$, $S(t,w,e)$. Abusing notation, we might write the query as $\qryElm[6]() \leftarrow C,D,W,U$.  The \JUQ condition is that every existential component variable
appears in every atom (in some position) of its component.

A containing query, under bag-bag and bag-set semantics, is $\qryElm[7] \leftarrow
C,D,W,U,R(s,c)$.  Intuitively, whenever $\qryElm[6]$ has a homomorphism
onto an instance, the component $C$ maps onto a fact of the form
$R(\alpha,c)$ with some multiplicity, so the atom $R(s,c)$ has at least one homomorphism.  

For a non-containment example consider the set
\[
\begin{array}{rcl}
 U_1 &:=& \{S(w_1,t_1,e),\ S(t_1,w_1,e)\},\\[1mm]
 U_2 &:=& \{S(w_2,t_2,e),\ S(t_2,w_2,e)\},
\end{array}
\]
and define
$p_1 \leftarrow U_1,U_2,R(c,d)$.  We will see an instance that provides a proof that $\qryElm[6] \not \bsinc p_1$ in Example~\ref{exm:netimgcnt}.


\end{example}

In the rest of this paper, we investigate bag containment for the class of boolean $\JUQ$s (or $\BJUQ$s). We focus on the bag-set containment problem which seems technically less complicated. Note, however, that in this case solving the bag-set containment problem does not necessarily solve the bag-bag containment: indeed the trivial reduction from bag-bag containment for $\JUQ$s, in the spirit of the one in example~\ref{exm:3}, does not result in bag-set of $\JUQ$; the extra attribute added to each atom in the reduction takes the class of queries outside of $\JUQ$. 

\begin{problem}
  \label{prb:bjuqconprb}
   Given a pair of {\BCQ}s $(\qryElm, \pqryElm)$, where $\qryElm$ is a \JUQ,
  the \emph{join-uniform
  containment problem under bag-set semantics} is the problem of
  deciding whether $\qryElm \bsinc \pElm$ holds true.
\end{problem}





\section{Computing Multiplicities}
\label{sec:commul}



\subsection{Multicanonical Instances}
\label{sec:commul;sub:mulins}

A \BJUQ is essentially a cross product of different components with the only common terms across components being constants.
Due to this, we can show that its bag-set multiplicity, \ie, the number of
homomorphisms it has to an instance, can be computed as the product of the
number of homomorphisms each different component in the query has to that instance.

A \BJUQ might still contain atoms of the same predicate, known as self-joins\footnote{Note, containment on
self-join-free queries is decidable~\cite{ADG10,IR95}}. Thus, two components in a \BJUQ might have overlapping images. Note that in this case there is at least two atoms, one in each corresponding component (or even in the same component), that are unifiable.
Still, we can compute a $\BJUQ$ multiplicity through a monomial function whose
parameters correspond to the number of tuples in the image of a component.
However, such simple closed expressions are not directly derivable for the
arbitrary {\BCQ}s $\pqryElm$ used as containing queries in \cref{prb:bjuqconprb}.
To address this, we need a notion of \emph{canonical instance}
(see~\cite{AHV95, KM19, KM25}) that precisely reflects the structure of the containee query.
In particular, in this context, the canonical instance has to reflect the closure of the containee
query under unification of its atoms, per the following definition.

\begin{definition}[Minimal Unification Closure]
\label{def:muc}
  A \emph{minimal unification closure} (\emph{\muc}) of a \BCQ $\qryElm$,
  denoted by $\denot{\qryElm} \subseteq \BCQ$, is a least set of {\BCQ}s modulo
  isomorphism for which the following three constraints are satisfied:
  \begin{enumerate}[1)]
  \item\label{def:muc(bas)}
    \textbf{(decomposition)}
    every component $\rqryElm \in \comp{\qryElm}$ of $\qryElm$ admits a \BCQ
    $\der{\rqryElm} \in \denot{\qryElm}$ such that $\der{\rqryElm} \approx
    \rqryElm$;
  \item\label{def:muc(fac)}
    \textbf{(factorisation)}
    for all {\BCQ}s $\rqryElm \in \denot{\qryElm}$ and unifiable atoms
    $\atmElm[1], \atmElm[2] \in \rqryElm$, there exists a \BCQ $\der{\rqryElm}
    \in \denot{\qryElm}$ such that $\der{\rqryElm} \approx \subFun(\rqryElm)$,
    where $\subFun \defeq \mgu{\atmElm[1], \atmElm[2]}$;
  \item\label{def:muc(mrg)}
    \textbf{(merging)}
    for all {\BCQ}s $\rqryElm[1], \rqryElm[2] \in \denot{\qryElm}$ and
    disjointly unifiable atoms $\atmElm[1] \in \rqryElm[1]$ and $\atmElm[2] \in
    \rqryElm[2]$, there exists a \BCQ $\der{\rqryElm} \in \denot{\qryElm}$ such
    that $\der{\rqryElm} \approx \subFun[1](\rqryElm[1]) \cup
    \subFun[2](\rqryElm[2])$, where $(\subFun[1], \subFun[2]) \defeq
    \dmgu{\atmElm[1], \atmElm[2]}$.
  \end{enumerate}
  $\denot{\qryElm}$ is partitionable into the sets $\denot{\qryElm}[\bot]$ and
  $\denot{\qryElm}[\top]$ of ground and non-ground {\BCQ}s, respectively.
\end{definition}

Due to the minimality (up to isomorphism) of the closure, for every $\BJUQ$ $\rqryElm$ there might be at most one query in the closure that is isomorphic to $\rqryElm$; we denote this as $\denot{\rqryElm}[\qryElm]$.
%
\begin{proposition}
\label{prp:muc}
  For all {\BCQ}s $\qryElm$ and $\rqryElm$, there is at most one \BCQ
  $\der{\rqryElm} \in \denot{\qryElm}$ such that $\der{\rqryElm} \approx
  \rqryElm$.
  Moreover, if $\rqryElm \in \comp{\qryElm}$, such a query $\der{\rqryElm}$,
  denoted by $\denot{\rqryElm}[\qryElm]$, surely exists.
  Finally, for every instance $\dbElm$, it holds that $\card{\HomSet(\qryElm,
  \dbElm)} = \prod_{\rqryElm \in \comp{\qryElm}}
  \card!{\HomSet(\denot{\rqryElm}[\qryElm], \dbElm)}$.
\end{proposition}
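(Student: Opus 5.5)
The proposition makes three claims, and I would prove them in the order stated.

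\emph{Uniqueness.} I would argue from the minimality in \cref{def:muc}. Suppose $\denot{\qryElm}$ contained two distinct {\BCQ}s $\der{\rqryElm}_1 \neq \der{\rqryElm}_2$, both isomorphic to $\rqryElm$. Isomorphism is an equivalence relation, so $\der{\rqryElm}_1 \approx \der{\rqryElm}_2$. Next I would remove $\der{\rqryElm}_2$ from the set and check that the three closure constraints still hold. Each constraint only requires that some member isomorphic to a given {\BCQ} exists, so any requirement previously witnessed by $\der{\rqryElm}_2$ is now witnessed by $\der{\rqryElm}_1$. The constraints that used $\der{\rqryElm}_2$ as a premise (factorisation on its atoms, merging with it) produce {\BCQ}s isomorphic to those produced from $\der{\rqryElm}_1$. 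This holds because unification commutes with renaming of non-constant terms, and \cref{def:unif} says the disjoint union is determined up to isomorphism. The reduced set therefore still satisfies all constraints, contradicting the fact that $\denot{\qryElm}$ is a least set modulo isomorphism. This closure-invariance check is the one step that needs care: I must verify that $\mgu{\cdot}$ and $\dmgu{\cdot}$ are isomorphism-equivariant, meaning that an isomorphism $\hFun$ which fixes constants sends $\mgu{\atmElm[1],\atmElm[2]}(\rqryElm)$ to a query isomorphic to $\mgu{\hFun(\atmElm[1]),\hFun(\atmElm[2])}(\hFun(\rqryElm))$.

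\emph{Existence.} This follows immediately from the decomposition constraint of \cref{def:muc}. Together with uniqueness, it makes the notation $\denot{\rqryElm}[\qryElm]$ well defined.

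\emph{Product formula.} Let $\comp{\qryElm} = \{\rqryElm[1],\ldots,\rqryElm[k]\}$. By \cref{def:componentsQuery}, applied to a Boolean query, any two distinct components share no variables and share only constants. A homomorphism $\hFun$ from $\bodyFun(\qryElm)$ to $\dbElm$ is a map on $\TerSet(\qryElm)$ that fixes constants. Since $\TerSet(\qryElm) \setminus \ConSet$ is the disjoint union of the sets $\VarSet(\rqryElm[i])$, the restriction map
\[
  \hFun \mapsto (\hFun|_{\TerSet(\rqryElm[1])},\ldots,\hFun|_{\TerSet(\rqryElm[k])})
\]
is a bijection from $\HomSet(\qryElm,\dbElm)$ onto $\prod_i \HomSet(\rqryElm[i],\dbElm)$. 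It is injective because the components cover every variable. It is surjective because compatible restrictions agree on the shared terms, which are only constants and are fixed by every homomorphism; their union is then a well-defined map, and it sends each atom into $\dbElm$.

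Finally, for each $i$, fix an isomorphism $\gFun$ from $\denot{\rqryElm[i]}[\qryElm]$ to $\rqryElm[i]$. Precomposition $\hFun \mapsto \hFun\circ\gFun$ is a bijection between $\HomSet(\rqryElm[i],\dbElm)$ and $\HomSet(\denot{\rqryElm[i]}[\qryElm],\dbElm)$, with inverse given by precomposition with $\gFun^{-1}$; constants are preserved because isomorphisms fix them. Taking cardinalities of the product bijection then gives the stated formula.
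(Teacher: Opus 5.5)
Your proposal is correct and follows the route the paper indicates. The paper gives no separate proof: it attributes uniqueness to the minimality built into \cref{def:muc}, existence to the decomposition rule, and the product formula to the fact that distinct components of a Boolean query share only constants. One simplification: the isomorphism-equivariance of unification that you flag as the delicate step is not actually needed, because once the duplicate representative is removed it is no longer a premise of any constraint, and the remaining isomorphic copy simply takes over its role as a witness.
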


The intuition behind Def.~\ref{def:muc} is to capture the structure of the query images of $\qryElm$ on an arbitrary instance. Every homomorphism from a component of $\qryElm$ could be to a component in the instance that is isomorphic to the query component, or to a more ``specific'' component. 
Intuitively, unifying exhaustively the components of $\qryElm$ will give us possible structures of the images. Indeed, the first constraint of Def.~\ref{def:muc} demands that the components of $\qryElm$ are themselves (up to isomorphism) elements of $\denot{\qryElm}$ (as queries), the second constraint unifies atoms in a query in $\denot{\qryElm}$ in order to create more ``specific'' queries, and the last constraint considers exhaustive unification across components. Note that, in this process one might come up with ground queries, i.e., queries with no variables. In the case of atomic ground queries these are all single components; otherwise, a ground query with two or more atoms will be a cross-product of components. For technical reasons, \cref{def:muc} treats the two query classes (non-ground vs ground) differently placing them in $\denot{\qryElm}[\top]$ and
  $\denot{\qryElm}[\bot]$ respectively (as we will see below if $\qryElm$ is a \JUQ so are elements of $\denot{\qryElm}[\top]$).
Using \cref{def:muc} our objective is to create a canonical instance where all the images of the unified queries in \denot{\qryElm} are disjoint and one can compute the multiplicity of a query onto this instance as a polynomial over these disjoint layers of images.

\begin{example}[Minimal unification closures]
\label{exm:muc}


Consider the \JUQ query $\qryElm[6]$ of \cref{exm:4}.  The minimal unification closure $\denot{\qryElm[6]}$ has the following nine elements:

\begin{center}
\begin{tabularx}{\linewidth}{
  @{}
  >{\centering\arraybackslash}c |
  >{\centering\arraybackslash}c |
  >{\centering\arraybackslash}c |
  Y
  @{}
}
name & body & ground & origin \\ \hline

$C$   & $\{R(x,c),R(c,x)\}$       & no
      & (decomposition) initial component \\

$D$   & $\{R(y,d),R(d,y)\}$       & no
      & (decomposition) initial component \\

$C_0$ & $\{R(c,c)\}$              & yes
      & (factorisation) constraint 2 on $C$ \\

$D_0$ & $\{R(d,d)\}$              & yes
      & (factorisation) constraint 2 on $D$ \\

$E$   & $\{R(c,d),R(d,c)\}$       & yes
      & (merging) constraint 3 on $C$ and $D$ \\

$W$   & $\{S(u,v,z),S(v,u,z)\}$   & no
      & (decomposition) initial component \\

$U$   & $\{S(w,t,e),S(t,w,e)\}$   & no
      & (decomposition) initial component; also (merging)
        constraint 3 on $W$ and $U$ \\

$W_0$ & $\{S(t,t,z)\}$            & no
      & (factorisation) constraint 2 on $W$ \\

$U_0$ & $\{S(w,w,e)\}$            & no
      & (factorisation) constraint 2 on $U$
\end{tabularx}
\end{center}
\end{example}

The origin column shows how an element is obtained in $\denot{\qryElm}$, via another query $\rqryElm$ (or two queries) in $\denot{\qryElm}$ and one of the constrains in Def.~\ref{def:muc}. Note that, there could be more than one ways to obtain the same query in $\denot{\qryElm}$. Nevertheless, as we prove below, it is always true that for the query (or queries) $\rqryElm$ in $\denot{\qryElm}$, on which we apply the constraints of Def.~\ref{def:muc} obtaining a new query $\der{\rqryElm}$, it holds that $\der{\rqryElm} \preccurlyeq \rqryElm$, i.e., there is a homomorphism from $\rqryElm$ to $\der{\rqryElm}$.
For the above example the homomorphic relations are $C_0\preccurlyeq C$, $E\preccurlyeq C$, $D_0\preccurlyeq D$,
$E\preccurlyeq D$, $U_0\preccurlyeq U$, $U_0\preccurlyeq W_0$, $U\preccurlyeq W$, and $W_0\preccurlyeq W$.
%
This order is easier to read in the following figure where lower nodes are more specific, and the direction of the edges shows the direction of homomorphisms:

\[
\begin{tikzpicture}[baseline=(current bounding box.center),node distance=11mm]
\node (C) {$C$};
\node[right=26mm of C] (D) {$D$};
\node[below=of C] (C0) {$C_0$};
\node[below=of D] (D0) {$D_0$};
\node[below right=3mm and 8mm of C] (E) {$E$};

\draw[<-] (C0)--(C);
\draw[<-] (E)--(C);
\draw[<-] (E)--(D);
\draw[<-] (D0)--(D);

\node[right=46mm of D] (W) {$W$};
\node[below left=of W] (U) {$U$};
\node[below right=of W] (W0) {$W_0$};
\node[below right=of U] (U0) {$U_0$};

\draw[<-] (U)--(W);
\draw[<-] (W0)--(W);
\draw[<-] (U0)--(U);
\draw[<-] (U0)--(W0);
\end{tikzpicture}
\]

Because a $\BJUQ$ has a number of different components it can happen that there are two elements in $\denot{\qryElm}$, e.g., $\rqryElm[1]$, $\rqryElm[2]$, that are \emph{homomorphically equivalent} that is, it holds both that $\rqryElm[1] \preccurlyeq \rqryElm[2]$ and that $\rqryElm[2] \preccurlyeq \rqryElm[1]$. Note that, as the proof of the next theorem shows, two $\BJUQ$s 
are homomorphically equivalent whenever they are isomorphic (e.g., $\rqryElm[1] \approx \rqryElm[2]$).
In any case, as the next theorem states, we can always arrange the elements of $\denot{\qryElm}$ into a partial order using $\preccurlyeq$ (which is always either a  $\prec$ or a $\approx$). At the same time the theorem proves that for two queries in $\denot{\qryElm}$, if $\rqryElm[1] \preccurlyeq^{\hFun} \rqryElm[2]$ then homomorphism $\hFun$ must be a \emph{variable-onto} homomorphism which is a sufficient criterion for bag-set containment~\cite{CV93}.

\begin{theorem}[\BJUQ-\MUC Characterisation]
\label{thm:juqmucchr}
  For every \BJUQ $\qryElm$, the structure $\PStr = \tuple {\denot{\qryElm}}
  {\preccurlyeq}$ is a finite \poset.
  Moreover, every \BCQ in $\denot{\qryElm}[\top]$ is connected and join-uniform.
  Finally, $\rqryElm[1] \preccurlyeq \rqryElm[2]$ \iff $\rqryElm[1] \bsinc
  \rqryElm[2]$, for all {\BCQ}s $\rqryElm[1], \rqryElm[2] \in \denot{\qryElm}$.
\end{theorem}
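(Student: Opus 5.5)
I would first establish the structural invariant that every element of $\denot{\qryElm}[\top]$ is connected and join-uniform. I would proceed by induction on the derivation of elements through the three constraints of \cref{def:muc}.

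\textbf{Decomposition.} Components of a \BJUQ are connected by definition. Each atom of a component contains all the existential variables of that component, and since the query is Boolean every variable is existential.

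\textbf{Factorisation.} Applying $\subFun = \mgu{\atmElm[1], \atmElm[2]}$ to a connected join-uniform $\rqryElm$ maps variables to variables or constants. If every atom contains the whole variable set $V$, then every atom of $\subFun(\rqryElm)$ contains the whole set $\subFun(V) \cap \VarSet$. The image is therefore join-uniform and, if non-ground, connected, because all its atoms share these variables.

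\textbf{Merging.} The selected atoms $\atmElm[1] \in \rqryElm[1]$ and $\atmElm[2] \in \rqryElm[2]$ contain all variables of their respective queries. Hence $\subFun[1]$ and $\subFun[2]$ act on all variables, and the unified atom contains $\subFun[1](V_1) \cup \subFun[2](V_2)$. Every atom of $\rqryElm[i]$ contains $V_i$, and its image under $\subFun[i]$ contains $\subFun[i](V_i)$. The crucial point is that $\subFun[1](V_1)\cap\VarSet = \subFun[2](V_2)\cap\VarSet$, because both equal the variable set of the common unified atom. So the union is again join-uniform and connected.

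\textbf{Finiteness.} This follows from the same invariant. Each element of $\denot{\qryElm}[\top]$ is a connected \JUQ whose atom count is bounded by the number of atoms it can have over a variable set of size at most $\max_i \card{V_i}$. This bound holds because the unified atom's variables come from a single atom of each side. The relation names and constants are those of $\qryElm$, and every atom contains all (at most $k$) variables, so up to isomorphism there are only finitely many such queries. Ground elements are sets of ground atoms over finitely many constants, hence also finitely many. By minimality (modulo isomorphism, \cref{prp:muc}) the closure is finite.

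\textbf{Partial order.} Reflexivity and transitivity of $\preccurlyeq$ are immediate. For antisymmetry I would show that, for connected \JUQ{}s, mutual homomorphisms imply isomorphism. Take $\hFun \colon \rqryElm[2]\to\rqryElm[1]$ and $\gFun$ back. Every atom of $\rqryElm[2]$ contains all variables, so $\hFun$ restricted to variables is determined by the image of a single atom, whose variable set is exactly the variable set of $\rqryElm[1]$. Hence $\hFun$ is onto the variables, and mutual ontoness gives a bijection on variables. Composing, $\gFun\circ\hFun$ is a variable-permutation endomorphism, which is injective on atoms, so both maps are injective on atoms and hence bijective. For ground elements the homomorphism is the identity on terms, so mutual homomorphism means equal atom sets. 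A ground query cannot map to a non-ground one, and any homomorphism from a non-ground element into a ground one is handled by the last part. Minimality then makes isomorphic elements identical.

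\textbf{Containment characterisation.} Bag-set containment implies set containment, which by Chandra–Merlin gives a homomorphism. Conversely, by the argument above any homomorphism $\hFun\colon \rqryElm[2]\to\rqryElm[1]$ between closure elements is variable-onto, and variable-onto homomorphisms are a sufficient criterion for bag-set containment~\cite{CV93}. The ground cases are trivial, since $\rqryElm[1]$ ground has at most one homomorphism into any instance.

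The main obstacle is the merging case. There one must check that the variables of $\rqryElm[1]$ not occurring in $\atmElm[1]$ do not exist, which holds precisely by join-uniformity. One must also handle partially grounded merges, and verify the bound on atom counts for finiteness carefully.
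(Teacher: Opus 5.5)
Your proposal is correct and follows essentially the same route as the paper. Both arguments prove, by induction over decomposition, factorisation and merging, the invariant that every variable of a closure element occurs in each of its atoms, and then use it to obtain connectedness and join-uniformity, finiteness from the bounded number of variables, variable-ontoness of homomorphisms (hence antisymmetry via minimality), and the bag-set characterisation via the criterion of \cite{CV93} together with Chandra--Merlin; your merging step, which identifies both sides' variable sets with that of the unified atom, and your explicit atom-injectivity argument for antisymmetry are somewhat more careful versions of the corresponding steps in the paper.
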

\begin{proof}
  We first provide an invariant property of the closure that is used throughout
  the proof.
  Since $\denot{\qryElm}$ is the least set closed under the rules
  of~\cref{def:muc}, every element of $\denot{\qryElm}$ is obtained, up to
  $\approx$-equivalence, by starting from a component of $\qryElm$ and
  repeatedly applying either factorisation (Item~\ref{def:muc(fac)} of
  \cref{def:muc}, \ie, an intra-query unification, or merging
  (Item~\ref{def:muc(mrg)} of \cref{def:muc}, \ie, an inter-query disjoint
  unification.
  Along this construction, no new constants are introduced.   Moreover, every
 element $\rqryElm$ obtained in this way enjoys the
  property that all its variables $\varElm \in \var{\rqryElm}$ occur in each one
  of its atoms $\atmElm \in \rqryElm$.
  We now prove this by induction on the decreasing number of operations of
  factorisation and merging.
  \begin{enumerate}
  \item
    All non-ground initial components $\rqryElm \in \comp{\qryElm}$ enjoy the
    property due to the \BJUQ hypothesis on $\qryElm$.
  \item
    For a non-ground factorisation $\der{\rqryElm} = \subFun(\rqryElm)$ of a
    query $\rqryElm \in \denot{\qryElm}$, where $\atmElm[1], \atmElm[2] \in
    \rqryElm$ and $\subFun \defeq \mgu{\atmElm[1], \atmElm[2]}$, consider an
    arbitrary variable $\varElm \in \var{\der{\rqryElm}}$ and choose a variable
    $\yvarElm \in \var{\rqryElm}$ with $\subFun(\yvarElm) = \varElm$, whose
    existence is ensured by the fact that constants are necessarily preserved by
    the unification process.
    Clearly, by the inductive hypothesis on $\rqryElm$, the variable $\yvarElm$
    occurs in every atom of $\rqryElm$, thus, $\varElm$ occurs in every atom of
    $\der{\rqryElm}$, as well.
  \item
    For a non-ground merging $\subFun[1](\rqryElm[1]) \cup
    \subFun[2](\rqryElm[2])$ of two queries $\rqryElm[1], \rqryElm[2] \in
    \denot{\qryElm}$, where $\atmElm[1] \in \rqryElm[1]$, $\atmElm[2] \in
    \rqryElm[2]$, and $(\subFun[1], \subFun[2]) \defeq \dmgu{\atmElm[1],
    \atmElm[2]}$, consider an arbitrary variable $\varElm \in
    \var{\der{\rqryElm}}$.
    \Wlogx, suppose that $\varElm \in \var{\subFun[1](\rqryElm[1])}$.
    Then, there exists a variable $\yvarElm \in \var{\rqryElm[1]}$ with
    $\subFun[1](\yvarElm) = \varElm$, for the same reasons of the previous case.
    By the inductive hypothesis on $\rqryElm[1]$, the variable $\yvarElm$ occurs
    in every atom of $\rqryElm[1]$ and, in particular, in the selected atom
    $\atmElm[1]$.
    Thus, $\varElm$ occurs in every atom of $\subFun[1](\rqryElm[1])$ and, in
    particular, in the unified atom $\subFun[1](\atmElm[1])$, as well.
    Due to the fact that $\subFun[1](\atmElm[1]) = \subFun[2](\atmElm[2])$, the
    same variable $\varElm$ occurs in the unified copy of $\atmElm[2]$.
    Hence, there must be a variable $\zvarElm \in \var{\rqryElm[2]}$ with
    $\subFun[2](\zvarElm) = \varElm$.
    By the inductive hypothesis this time on $\rqryElm[2]$, the variable
    $\zvarElm$ occurs in every atom of $\rqryElm[2]$, thus, $\varElm$ occurs in
    every atom of $\subFun[2](\rqryElm[2])$, as well.
    Summing up, $\varElm$ occurs in every atom of $\der{\rqryElm}$.
  \end{enumerate}

  The invariant implies that each homomorphism between two elements of \denot{\qryElm} is surjective on the variables of its codomain, i.e., a variable-onto mapping~\cite{CV93}.
  The invariant also implies finiteness of \denot{\qryElm}.  No induction step introduces relation symbols or constants, and every non-ground element has all its variables in each atom.  Hence the number of variables of a non-ground element is bounded by the maximum arity of the relation symbols occurring in $\qryElm$.  Up to
  $\approx$, only finitely many non-ground atoms, and therefore only
  finitely many non-ground {\BCQ}s, can satisfy this invariant.  The ground
  elements are finite in number for the same reason: their constants belong to
  the finite set $\con{\qryElm}$, and their relation symbols already occur in
  $\qryElm$.  By \cref{def:muc}, the closure contains at most one representative
  of each $\approx$-class.  Thus $\denot{\qryElm}$ is finite.

 To prove a partial ordered set one needs to prove reflexivity, antisymmetry and transitivity of the relation $\preccurlyeq$. Reflexivity of $\preccurlyeq$ follows from the identity homomorphism, and
  transitivity follows by composition of 
  homomorphisms.  For antisymmetry, let
  $\rqryElm[1],\rqryElm[2]\in\denot{\qryElm}$ and assume
  $\rqryElm[1]\preccurlyeq^{\hFun}\rqryElm[2]$ and
  $\rqryElm[2]\preccurlyeq^{\hFun^{\prime}}\rqryElm[1]$ (we need to prove that  $\rqryElm[1]=\rqryElm[2]$).  
  If either $\rqryElm[1]$,$\rqryElm[2]$ is ground then its homomorphism to the other one is the identity, hence $\rqryElm[1] = \rqryElm[2]$.
%
%
  For non-ground queries since both $\hFun$ and $\hFun^{\prime}$ are variable-onto, it must be that $\card{\VarSet(\rqryElm[1])}\leq\card{\VarSet(\rqryElm[2])}$ and
  $\card{\VarSet(\rqryElm[2])}\leq\card{\VarSet(\rqryElm[1])}$, hence the two
  variable sets have the same cardinality.  The two homomorphisms are
  therefore bijections, and two-way bijective homomorphisms imply an isomorphism, that is  $\rqryElm[1] \approx \rqryElm[2]$.
  %
  %
%
  However, by \cref{def:muc} there cannot be two isomorphic queries in $\denot{\qryElm}$ that are different, therefore it must be that $\rqryElm[1]=\rqryElm[2]$, and this proves $\tuple{\denot{\qryElm}}{\preccurlyeq}$ is a finite \poset.

  The second assertion follows from the same invariant.  If
  $\rqryElm\in\denot{\qryElm}$ is non-ground, then all its atoms share every
  variable of $\rqryElm$, and thus they belong to one component.  Moreover,
  this normal form is precisely the Boolean join-uniform form for a connected
  Boolean query.  Hence every non-ground element of $\denot{\qryElm}$ is a
  connected \BJUQ.

 On the the equivalence with bag-set containment, let
  $\rqryElm[1]\preccurlyeq^{\hFun}\rqryElm[2]$. Since $\hFun$ is a variable-onto homomorphism then $\rqryElm[1]\bsinc \rqryElm[2]$ follows~\cite{CV93}. For the other direction if
  $\rqryElm[1]\bsinc \rqryElm[2]$, it is trivially true (due to set inclusion) that there is a homomorphism $\hFun$ from $\rqryElm[2]$ to $\rqryElm[1]$ thus $\rqryElm[1]\preccurlyeq^{\hFun}\rqryElm[2]$.


\end{proof}

Two queries in $\denot{\qryElm}$ may have overlapping homomorphic images
even when neither is below the other. To record these overlaps, we define
the image of a query as the set of its homomorphic images, each of which
is a set of facts. The classical set query image~\cite{AHV95} instead
takes the union of these sets.

\begin{definition}[Image]
\label{def:img}
  For each \BCQ $\qryElm$ and instance $\dbElm$, the \emph{image} of $\qryElm$
  on $\dbElm$ is the set of sets of atoms defined as follows: $\imgFun(\qryElm,
  \dbElm) \defeq \set{ \homFun(\qryElm) }{ \homFun \in \HomSet(\qryElm, \dbElm)
  }$.
  Each element of $\imgFun(\qryElm, \dbElm)$ is called an \emph{image element}.
\end{definition}

\begin{example}
\label{exm:img}
Consider again the query $\qryElm[6]()$ $\leftarrow$ $R(x,c)$, $R(c,x)$, $R(y,d)$, $R(d,y)$, $S(u,v,z)$, $S(v,u,z)$, $S(w,t,e)$, $S(t,w,e)$ of \cref{exm:4}, or its shorthand notation $\qryElm[6]() \leftarrow C,D,W,U$.
Let instance $\dbElm$ consisting of the union of the following sets of facts.
The non-ground copies are
\[
\begin{array}{rcll}
C_i&=&\{R(\alpha_i,c),R(c,\alpha_i)\},&i=1,2,\\
D_j&=&\{R(\beta_j,d),R(d,\beta_j)\},&j=1,2,3,\\
W_1&=&\{S(\rho,\sigma,\tau),S(\sigma,\rho,\tau)\},\\
U_i&=&\{S(\lambda_i,\mu_i,e),S(\mu_i,\lambda_i,e)\},&i=1,2,\\
W^0_1&=&\{S(\omega,\omega,\zeta)\},\\
U^0_1&=&\{S(\nu,\nu,e)\}.
\end{array}
\]
and the selected ground copy is $E^*=\{R(c,d),R(d,c)\}$. Here $\conElm,d,e$ are constants and the rest are values in $\ValSet \setminus \VarSet$.

The following table gives the image elements and the numbers of
homomorphisms.
\[
\begin{array}{c|c|c}
 r & \mathrm{img}(r,I) & |\mathrm{Hom}(r,I)|\\ \hline
C & C_1,C_2,E^*  & 3\\
D & D_1,D_2,D_3,E^*  & 4\\
E & E^*  & 1\\
C_0 & \varnothing &  0\\
D_0 & \varnothing &  0\\
U_0 & U^0_1 & 1\\
U & U_1,U_2,U^0_1 & 5\\
W_0 & W^0_1,U^0_1 & 2\\
W & W_1,U_1,U_2,W^0_1,U^0_1 & 8
\end{array}
\]
Note that $U$ has three image elements but five
homomorphisms, because each
$U_i$ admits two automorphisms, while $U^0_1$ admits one.  Similarly,
$W$ has five image elements but eight homomorphisms.
\end{example}

In the example above, $E^*$ is an image element of both $C$ and $D$,
although $C$ and $D$ are incomparable. The next lemma shows that if
image elements of two closure queries overlap, their union is an
image element of a closure query below both.

\begin{lemma}
\label{lem:img}
  Let $\qryElm$ be a \BJUQ and $\dbElm$ an instance.
  Then, for all {\BCQ}s $\rqryElm[1], \rqryElm[2] \in \denot{\qryElm}$ and image
  elements $\JSet[1] \in \imgFun(\rqryElm[1], \dbElm)$ and $\JSet[2] \in
  \imgFun(\rqryElm[2], \dbElm)$ satisfying $\JSet[1] \cap \JSet[2] \neq
  \emptyset$, there exists a \BCQ $\sqryElm \in \denot{\qryElm}$ such that
  $\sqryElm \preccurlyeq \rqryElm[1]$, $\sqryElm \preccurlyeq \rqryElm[2]$, and
  $\JSet[1] \cup \JSet[2] \in \imgFun(\sqryElm, \dbElm)$.
\end{lemma}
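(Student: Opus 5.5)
Fix homomorphisms $\homFun[1] \in \HomSet(\rqryElm[1], \dbElm)$ and $\homFun[2] \in \HomSet(\rqryElm[2], \dbElm)$ with $\homFun[1](\rqryElm[1]) = \JSet[1]$ and $\homFun[2](\rqryElm[2]) = \JSet[2]$, and pick a common fact $f \in \JSet[1] \cap \JSet[2]$. Choose atoms $\atmElm[1] \in \rqryElm[1]$ and $\atmElm[2] \in \rqryElm[2]$ with $\homFun[1](\atmElm[1]) = \homFun[2](\atmElm[2]) = f$. I will work with variable-disjoint copies of $\rqryElm[1]$ and $\rqryElm[2]$, so the disjoint union $\homFun[1] \cup \homFun[2]$ is a well-defined map into $\dbElm$. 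It fixes constants and sends $\atmElm[1]$ and $\atmElm[2]$ to the same fact. Hence it is a unifier of these atoms, so they are disjointly unifiable.

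Let $(\subFun[1], \subFun[2]) \defeq \dmgu{\atmElm[1], \atmElm[2]}$. By the merging rule (Item~\ref{def:muc(mrg)} of \cref{def:muc}), the closure contains some $\der{\rqryElm} \in \denot{\qryElm}$ with $\der{\rqryElm} \approx \subFun[1](\rqryElm[1]) \cup \subFun[2](\rqryElm[2])$. Since a most general unifier factors through every unifier, there is a map $g$ with $g \circ \subFun[i] = \homFun[i]$ on the terms of $\rqryElm[i]$. The terms outside the unified atoms are left unchanged by $\subFun[i]$, and on them $g$ is simply $\homFun[i]$; this is consistent because the copies are disjoint. So $g$ is a homomorphism from $\subFun[1](\rqryElm[1]) \cup \subFun[2](\rqryElm[2])$ to $\dbElm$ whose image is $\JSet[1] \cup \JSet[2]$. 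Composing with the isomorphism gives $\JSet[1] \cup \JSet[2] \in \imgFun(\der{\rqryElm}, \dbElm)$. Moreover, each $\subFun[i]$, composed with the isomorphism, is a homomorphism from $\rqryElm[i]$ into $\der{\rqryElm}$, so $\der{\rqryElm} \preccurlyeq \rqryElm[i]$. Taking $\sqryElm \defeq \der{\rqryElm}$ then proves the lemma.

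The main obstacle is that the merging rule only unifies a single pair of atoms, while the statement needs no more than that. I will therefore not try to identify further atoms that $\homFun[1], \homFun[2]$ happen to send to the same fact. The union $\JSet[1] \cup \JSet[2]$ is a set, so it is the image of $g$ regardless of coincidences. What requires care is the precise meaning of the dmgu, which is defined by restricting $\uFun$ to the terms of the selected atoms. By the invariant from the proof of \cref{thm:juqmucchr}, in the non-ground case every variable occurs in every atom, so $\subFun[i]$ is actually the full mgu on $\rqryElm[i]$. In the ground case $\subFun[i]$ is the identity. Either way the factorisation $g \circ \subFun[i] = \homFun[i]$ holds on all terms. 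If the atoms are ground they coincide with $f$, and the argument specialises directly.
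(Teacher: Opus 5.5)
Your proposal is correct and follows essentially the same route as the paper. Both arguments take the witnessing homomorphisms and a shared fact whose preimage atoms are disjointly unifiable, apply the merging rule to their dmgu, obtain a homomorphism of the merged query onto $J_1 \cup J_2$ by factoring $h_1 \cup h_2$ through the most general unifier, and witness $s \preccurlyeq r_i$ by composing the renamings, $\sigma_i$, and the isomorphism. Your extra appeal to the join-uniform invariant is harmless but not needed: $\sigma_i$ is the identity outside the selected atom, and your map $g$ is simply $h_i$ there.
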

\begin{proof}
  For $i\in\{1,2\}$, choose
  $\homFun[i]\in\HomSet(\rqryElm[i],\dbElm)$ with
  $\homFun[i](\rqryElm[i])=\JSet[i]$. Take renamed-apart copies of the
  two queries, and use the same notation for the copies and the
  corresponding homomorphisms. Choose a fact
  $\atmElm[0]\in\JSet[1]\cap\JSet[2]$ and atoms
  $\atmElm[i]\in\rqryElm[i]$ such that
  $\homFun[i](\atmElm[i])=\atmElm[0]$ for $i\in\{1,2\}$.
  The selected atoms are disjointly unifiable because they map to the
  same fact. Let
  $(\subFun[1],\subFun[2])\defeq\dmgu{\atmElm[1],\atmElm[2]}$.
  Extend each $\subFun[i]$ to the whole query $\rqryElm[i]$ by leaving
  its other variables unchanged, and put
  $\sqryElm[0]\defeq
  \subFun[1](\rqryElm[1])\cup\subFun[2](\rqryElm[2])$.

  Since both selected atoms map to $\atmElm[0]$, the two homomorphisms
  assign the same value to any terms identified by the most general
  unifier. We can therefore define $\homFun[0]$ on $\sqryElm[0]$ by
  $\homFun[0](\subFun[i](\tElm))=\homFun[i](\tElm)$ for each term
  $\tElm$ of $\rqryElm[i]$, $i\in\{1,2\}$.
  This gives $\homFun[0]\in\HomSet(\sqryElm[0],\dbElm)$, and
  $\homFun[0](\sqryElm[0])=\JSet[1]\cup\JSet[2]$.
  By Item~\ref{def:muc(mrg)} of \cref{def:muc}, there is
  $\sqryElm\in\denot{\qryElm}$ with $\sqryElm\approx\sqryElm[0]$.
  Composing the renamings, $\subFun[i]$, and an isomorphism from
  $\sqryElm[0]$ to $\sqryElm$ shows
  $\sqryElm\preccurlyeq\rqryElm[i]$ for both $i$.
  Composing $\homFun[0]$ with the inverse isomorphism gives a homomorphism from
  $\sqryElm$ to $\dbElm$ with image $\JSet[1]\cup\JSet[2]$, so
  $\JSet[1]\cup\JSet[2]\in\imgFun(\sqryElm,\dbElm)$.
\end{proof}

We now define net images by excluding an image element of a closure
query whenever it is contained in an image element of a strictly lower
query in $\denot{\qryElm}$.

\begin{definition}[Net Image]
\label{def:netimg}
  For each \BCQ $\qryElm$, instance $\dbElm$, and \BCQ $\rqryElm \in
  \denot{\qryElm}$, the \emph{net image} of $\rqryElm$ on $\dbElm$ \wrt
  $\qryElm$ is the set of sets of atoms defined as follows:
  $\nimgFun[\qryElm](\rqryElm, \dbElm) \defeq \set!{ \JSet \in \imgFun(\rqryElm,
  \dbElm) }{ \forall \sqryElm \in \denot{\qryElm}, \allowbreak \sqryElm \prec
  \rqryElm \ldotp \forall \KSet \in \imgFun(\sqryElm, \dbElm) \ldotp \JSet
  \not\subseteq \KSet }$.
  Each element of $\nimgFun(\qryElm, \dbElm)$ is called a \emph{net-image
  element}.
\end{definition}

\begin{example}[Net images]
\label{exm:netimg}
Net images remove images that are already included in a strictly lower closure
element.
%
%
In our running example $\qryElm[6]$ and the instance presented above in~\cref{exm:img}, the element $E^*$ is an image of all $C$, $D$ and $E$, but is
net only for $E$, since $E$ does not have a strictly lesser per $\prec$ query which (has an image element that) includes $E^{*}$ .  The element $U^0_1$ is an image of $U$, $W_0$, and $W$, but
is net only for $U_0$.  The element $W^0_1$ is an image of $W$ but is excluded
from $\mathrm{nimg}(W,I)$ because it is already an image of the lower element
$W_0\prec W$.  In summary, the next table extends the previous example with net-images.
\[
\begin{array}{c|c|c|c}
 r & \mathrm{img}(r,I) & \mathrm{nimg}(r,I) & |\mathrm{Hom}(r,I)|\\ \hline
C & C_1,C_2,E^* & C_1,C_2 & 3\\
D & D_1,D_2,D_3,E^* & D_1,D_2,D_3 & 4\\
E & E^* & E^* & 1\\
C_0 & \varnothing & \varnothing & 0\\
D_0 & \varnothing & \varnothing & 0\\
U_0 & U^0_1 & U^0_1 & 1\\
U & U_1,U_2,U^0_1 & U_1,U_2 & 5\\
W_0 & W^0_1,U^0_1 & W^0_1 & 2\\
W & W_1,U_1,U_2,W^0_1,U^0_1 & W_1 & 8
\end{array}
\]
\end{example}

It should be evident that the image of a query $\rqryElm$ in $\denot{\qryElm}$ is
completely partitionable into the net-images of all queries $\der{\rqryElm}$ that
are related to $\rqryElm$ by the homomorphic ordering $\preccurlyeq$, as
prescribed by the following theorem. The intuition behind the first four parts of the theorem is that: (a) any image element of a query in $\denot{\qryElm}$ is the (subset of the) net-image element of exactly one query in $\denot{\qryElm}$, (b) the net image elements are the same size as their queries, (c) two different net image elements do not intersect (they can not share the same fact), (d) if a query $\rqryElm$ has a homomorphism into a net-image element of another query $\sqryElm$, you can obtain the first homomorphism by composing a homomorphism that maps $\rqryElm$ to $\sqryElm$ with one that maps $\sqryElm$ to its net image element. All these properties are then useful to prove the last one: the number of homomorphisms of a query in $\denot{\qryElm}$, on an arbitrary instance, can be obtained by summing, for all smaller per $\preccurlyeq$ queries $\sqryElm$, the product of (i) the number of homomorphisms from $\rqryElm$ to $\sqryElm$, times  (ii) the number of net image elements for $\sqryElm$.

\begin{lemma}
\label{lem:netimg}
  Let $\qryElm$ be a \BJUQ and $\dbElm$ an instance.
  Then, the following properties hold true:
  \begin{enumerate}[a)]
  \item\label{lem:netimg(cov)}
    for all {\BCQ}s $\rqryElm \in \denot{\qryElm}$ and image elements $\JSet \in
    \imgFun(\rqryElm, \dbElm)$, there exists a \BCQ $\sqryElm \in
    \denot{\qryElm}$ and a net-image element $\KSet \in
    \nimgFun[\qryElm](\sqryElm, \dbElm)$ such that $\sqryElm \preccurlyeq
    \rqryElm$ and $\JSet \subseteq \KSet$;
  \item\label{lem:netimg(bij)}
    for all {\BCQ}s $\rqryElm \in \denot{\qryElm}$, net-image elements $\JSet
    \in \nimgFun[\qryElm](\rqryElm, \dbElm)$, and homomorphisms $\homFun \in
    \HomSet(\rqryElm,  \dbElm)$, with $\JSet = \homFun(\rqryElm)$, the induced
    function $\iotaFun \colon \atmElm \in \rqryElm \mapsto \homFun(\atmElm) \in
    \JSet$ is a bijection;
  \item\label{lem:netimg(dis)}
    for all {\BCQ}s $\rqryElm[1], \rqryElm[2] \in \denot{\qryElm}$ and net-image
    elements $\JSet[1] \in \nimgFun[\qryElm](\rqryElm[1], \dbElm)$ and $\JSet[2]
    \in \nimgFun[\qryElm](\rqryElm[2], \dbElm)$, it holds that $\rqryElm[1] =
    \rqryElm[2]$ and $\JSet[1] = \JSet[2]$, whenever $\JSet[1] \cap \JSet[2]
    \neq \emptyset$;
  \item\label{lem:netimg(dec)}
    for all {\BCQ}s $\rqryElm, \sqryElm \in \denot{\qryElm}$ and homomorphisms
    $\homFun[\rqryElm] \in \HomSet(\rqryElm, \dbElm)$ and $\homFun[\sqryElm] \in
    \HomSet(\sqryElm, \dbElm)$, with $\homFun[\rqryElm](\rqryElm) \subseteq
    \homFun[\sqryElm](\sqryElm) \in \nimgFun[\qryElm](\sqryElm, \dbElm)$, there
    exists a unique homomorphism $\homFun \in \HomSet(\rqryElm, \sqryElm)$ such
    that $\homFun[\rqryElm] = \homFun[\sqryElm] \cmp \homFun$;
  \item\label{lem:netimg(hom)}
    $\card{\HomSet(\rqryElm, \dbElm)} = \sum_{\sqryElm \preccurlyeq \rqryElm}
    \card{\HomSet(\rqryElm, \sqryElm)} \cdot \card{\nimgFun[\qryElm](\sqryElm,
    \dbElm)}$, for all {\BCQ}s $\rqryElm \in \denot{\qryElm}$.
  \end{enumerate}
\end{lemma}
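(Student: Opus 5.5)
I would prove the five items in order, each building on the previous ones, with \cref{lem:img} and the variable-onto invariant from the proof of \cref{thm:juqmucchr} as the main tools.

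\textbf{Item (a).} Use well-founded induction on the finite poset $\tuple{\denot{\qryElm}}{\preccurlyeq}$. Given $\JSet \in \imgFun(\rqryElm,\dbElm)$, either $\JSet$ is already net, and we take $\sqryElm=\rqryElm$, $\KSet=\JSet$. Or there is some $\sqryElm'\prec\rqryElm$ and some $\KSet'\in\imgFun(\sqryElm',\dbElm)$ with $\JSet\subseteq\KSet'$. In the latter case we apply the induction hypothesis to $\sqryElm'$ and $\KSet'$, and conclude by transitivity of both $\preccurlyeq$ and $\subseteq$.

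\textbf{Item (b).} Surjectivity of $\iotaFun$ is immediate. For injectivity, suppose $\homFun(\atmElm[1])=\homFun(\atmElm[2])$ with $\atmElm[1]\neq\atmElm[2]$. Then the two atoms are unifiable, and $\homFun$ factors through $\subFun=\mgu{\atmElm[1],\atmElm[2]}$. By factorisation (Item~\ref{def:muc(fac)}) there is $\der{\rqryElm}\approx\subFun(\rqryElm)$ in the closure, and it has an image element equal to $\JSet$. It remains to show $\der{\rqryElm}\prec\rqryElm$. We have $\der{\rqryElm}\preccurlyeq\rqryElm$, and $\der{\rqryElm}$ has strictly fewer atoms. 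If equivalence held, antisymmetry in the poset of \cref{thm:juqmucchr} would give $\der{\rqryElm}=\rqryElm$, contradicting the atom count. So $\JSet$ is not net, a contradiction.

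\textbf{Item (c).} Let the two net elements intersect. By \cref{lem:img} there is $\sqryElm\preccurlyeq\rqryElm[1],\rqryElm[2]$ with $\JSet[1]\cup\JSet[2]\in\imgFun(\sqryElm,\dbElm)$. Netness of $\JSet[1]$ forbids $\sqryElm\prec\rqryElm[1]$, so $\sqryElm=\rqryElm[1]$; symmetrically $\sqryElm=\rqryElm[2]$. Hence $\rqryElm[1]=\rqryElm[2]=\sqryElm$ and $\JSet[1]\cup\JSet[2]$ is an image of it. By (b), image elements of a fixed closure query that arise via a net image have exactly $\card{\rqryElm}$ facts. I then need that the union is also of size $\card{\rqryElm}$, and this is the delicate point. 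I would handle it by rerunning the argument of (b) on the homomorphism realising $\JSet[1]\cup\JSet[2]$. If that image had fewer than $\card{\rqryElm}$ facts, then $\JSet[1]$ would be contained in an image of a strictly lower factorisation, contradicting netness. So the union has exactly $\card{\rqryElm}$ facts, while $\JSet[1]$ and $\JSet[2]$ each already have $\card{\rqryElm}$ facts. This forces $\JSet[1]=\JSet[2]$.

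\textbf{Item (d).} By (b), $\homFun[\sqryElm]$ is injective on atoms onto its net image. I also want it injective on terms. Every non-ground closure element has all its variables in every atom, so two variables with equal images would give an atom-level collapse or a proper unifier whose image is still the same element; the latter would again contradict netness via factorisation. Hence $\homFun[\sqryElm]$ is a bijection between $\sqryElm$ and $\JSet$, both on atoms and on terms. Define $\homFun\defeq\homFun[\sqryElm]^{-1}\cmp\homFun[\rqryElm]$. This is well defined because $\homFun[\rqryElm](\rqryElm)\subseteq\JSet$, and it fixes constants. It is unique since $\homFun[\sqryElm]$ is injective.

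\textbf{Item (e).} Count $\HomSet(\rqryElm,\dbElm)$ by classifying each homomorphism according to the unique net element containing its image. By (a) such a net element $\KSet\in\nimgFun(\sqryElm,\dbElm)$ exists with $\sqryElm\preccurlyeq\rqryElm$. Since image elements of connected non-ground queries are connected (ground ones are trivial), the image lies within one net element, and (c) makes that element unique. For a fixed $\KSet$, (d) gives a bijection between the homomorphisms landing in $\KSet$ and $\HomSet(\rqryElm,\sqryElm)$, once a realising $\homFun[\sqryElm]$ is fixed. Summing over all $\sqryElm$ and all $\KSet$ yields the formula.

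I expect the main obstacle to be the injectivity-on-terms and size argument shared by (c) and (d), which needs a careful use of netness combined with factorisation.
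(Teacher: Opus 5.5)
Items (a), (b), (c) and (e) are fine. In (c), the size issue you call delicate is immediate: any image of $s$ has at most $|s|$ facts, so $J_1\subseteq J_1\cup J_2$ together with $|J_1|=|s|$ already forces $J_1=J_1\cup J_2$, and likewise for $J_2$.

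The genuine gap is in (d). You claim that netness plus factorisation makes $h_s$ injective on terms, so that $h\defeq h_s^{-1}\circ h_r$ is defined. This claim is false. The closure is closed only under unifications of pairs of atoms, not under arbitrary identifications of variables, so a net image element need not be an isomorphic copy of $s$. That is exactly why the first condition of \cref{def:mulcandbs} has to be imposed separately.

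Two counterexamples show this:
\begin{itemize}
\item Take the join-uniform query with the single component $\{R(x,y)\}$. Its closure is $\{\{R(x,y)\}\}$. In $I=\{R(\alpha,\alpha)\}$, the image $\{R(\alpha,\alpha)\}$ is net, since nothing lies strictly below. Yet $x$ and $y$ are both sent to $\alpha$, and there is neither an atom collapse nor a pair of atoms to unify, so no contradiction arises.
\item In the running example, $\{S(\alpha,\beta,\alpha),S(\beta,\alpha,\alpha)\}$ is a net image of $W$, because it is not contained in any image of $U$, $W_0$ or $U_0$. It is realised by a homomorphism identifying $u$ and $z$. Here $h_s^{-1}(\alpha)=\{u,z\}$, and choosing preimages pointwise can give $u,z\mapsto z$, which sends $S(u,v,z)$ outside $W$.
\end{itemize}
So your $h$ is not well defined. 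Your uniqueness argument also rests on the false claim; it can be repaired using the atom-injectivity from (b) together with the fact that every term occurs in some atom.

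The missing idea is that the required consistency comes from $r$, not from $s$ alone, and is obtained through the merging rule. By (b), let $\rho(A)$ be the unique atom of $s$ with $h_s(\rho(A))=h_r(A)$. Take renamed-apart copies and simultaneously unify $A$ with $\rho(A)$ for all $A\in r$; this is possible because $h_r$ and $h_s$ provide a common instance. The result $\sigma_1(r)\cup\sigma_2(s)=\sigma_2(s)$ is realised in the closure by one merging followed by factorisations, and it lies below $s$. By the universal property of the most general unifier, it has $J$ as an image. Netness of $J$ then forces this element to be $s$ itself, so $\sigma_2(s)\approx s$.

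A variable count shows $\sigma_2$ cannot identify terms of $s$, and it is injective on atoms since $h_s$ factors through it. Hence $\sigma_2$ is an isomorphism, and $h\defeq\sigma_2^{-1}\circ\sigma_1$ satisfies $h_s\circ h=h_r$. Your counting argument for (e) depends on (d), so it is complete only once (d) is established this way.
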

\begin{proof}
  We prove the five items in order.

  For Item~\ref{lem:netimg(cov)}, start with $\JSet[0]\defeq\JSet$ and
  $\sqryElm[0]\defeq\rqryElm$.
  If $\JSet[0]$ is already a net-image element of $\sqryElm[0]$, we are done.
  Otherwise, by \cref{def:netimg}, there are $\sqryElm[1]\prec \sqryElm[0]$ and
  $\JSet[1]\in\imgFun(\sqryElm[1],\dbElm)$ such that
  $\JSet[0]\subseteq\JSet[1]$.
  If $\JSet[1]$ is not net, repeat the argument.
  Since $\tuple{\denot{\qryElm}}{\preccurlyeq}$ is finite by
  \cref{thm:juqmucchr}, this descending process terminates at some
  $\sqryElm'\preccurlyeq\rqryElm$ and
  $\JSet'\in\nimgFun[\qryElm](\sqryElm',\dbElm)$ with $\JSet\subseteq\JSet'$.

  For Item~\ref{lem:netimg(bij)} of \cref{lem:netimg}, function $\iotaFun$
  intuitively maps atoms to atoms via using $\homFun$.
  Surjectivity of function $\iotaFun$ follows trivially from
  $\JSet=\homFun(\rqryElm)$.
  Suppose it is not injective.
  Then there are two atoms $\atmElm[1],\atmElm[2]\in\rqryElm$ with $\atmElm[1]
  \neq \atmElm[2]$ such that $\homFun(\atmElm[1])=\homFun(\atmElm[2])$.
  This means that (i) these two atoms are unifiable, (ii) $\homFun$ (or rather,
  the restriction of $\homFun$ on the terms of these atoms) is a unifier for
  these two atoms, and (iii) since $\mgu{\atmElm[1],\atmElm[2]}$ is more general
  there is a homomorphism $\homFun_m$ from
  $\mgu{\atmElm[1],\atmElm[2]}(\atmElm[1]) =
  \mgu{\atmElm[1],\atmElm[2]}(\atmElm[2]) $ to
  $\homFun(\atmElm[1])=\homFun(\atmElm[2])$.
  Let $\homFun[\sigma]$ be the extension of $\mgu{\atmElm[1],\atmElm[2]}$ that
  is the identity on all terms not in $\atmElm[1]$,$\atmElm[2]$.
  By Item~\ref{def:muc(fac)} of \cref{def:muc} there is
  $\der{\rqryElm}\in\denot{\qryElm}$, with $\der{\rqryElm}\preccurlyeq\rqryElm$
  and $\der{\rqryElm} \approx \homFun[\sigma](\rqryElm)$.
  Up to this isomorphism, we identify $\der{\rqryElm}$ with
  $\homFun[\sigma](\rqryElm)$ in what follows.
  Since $\atmElm[1] \neq \atmElm[2]$ and
  $\homFun[\sigma](\atmElm[1])=\homFun[\sigma](\atmElm[2])$, the query
  $\homFun[\sigma](\rqryElm)$ has strictly fewer atoms than $\rqryElm$. Hence,
  $\der{\rqryElm}\not\approx\rqryElm$ and, therefore,
  $\der{\rqryElm}\prec\rqryElm$.
  We claim that, there exists a homomorphism $\homFun_{\der{\rqryElm}}$ from
  $\der{\rqryElm}$ to $\JSet$, defined as follows.
  For all constants in $\der{\rqryElm}$, $\homFun_{\der{\rqryElm}}$ is the
  identity.
  For all terms $\tqryElm$ that are not in $\atmElm[1]$ or $\atmElm[2]$ (these
  are mapped to themselves by  $\homFun[\sigma]$, therefore they exist in
  $\der{\rqryElm}$), we define $\homFun_{\der{\rqryElm}}(\tqryElm) =
  \homFun(\tqryElm)$.
  For the rest of the terms $\tqryElm$, $\homFun_{\der{\rqryElm}}(\tqryElm) =
  \homFun_m(\tqryElm)$.
  Now consider an arbitrary atom $\batmElm$ in $\der{\rqryElm}$: if it is ground
  then obviously $\homFun_{\der{\rqryElm}}(\batmElm) = \homFun(\batmElm)$; if
  $\batmElm = \homFun_{\sigma}(\atmElm[1]) = \homFun_{\sigma}(\atmElm[2])$ then
  $\homFun_{\der{\rqryElm}}$ (which on this atom is actually $\homFun_m$) maps
  this to $\homFun(\atmElm[1])=\homFun(\atmElm[2])$.
  Clearly, for any other non-ground atom $\homFun_{\der{\rqryElm}}(\batmElm) =
  \homFun(\batmElm)$.
  Therefore, $\homFun_{\der{\rqryElm}}$ is a homomorphism from $\der{\rqryElm}$
  to $\JSet$ while $\der{\rqryElm}\prec\rqryElm$; thus $\JSet$ cannot be a
  net-image of $\rqryElm$ (contradiction).

  We now prove Item~\ref{lem:netimg(dis)} of \cref{lem:netimg}.  Let
  $\JSet[i]\in\nimgFun[\qryElm](\rqryElm[i],\dbElm)$ for $i\in\{1,2\}$, and
  assume $\JSet[1]\cap\JSet[2]\neq\emptyset$.  By \cref{lem:img}, there is
  $\rqryElm[0]\in\denot{\qryElm}$ such that
  $\rqryElm[0]\preccurlyeq\rqryElm[i]$ for both $i=1,2$, and
  $\JSet[1]\cup\JSet[2]\in\imgFun(\rqryElm[0],\dbElm)$.  Applying
  item~\ref{lem:netimg(cov)} of this lemma, to this image gives a
  net-image element $\JSet[0]$ of some
  $\rqryElm'\preccurlyeq\rqryElm[0]$ such that
  $\JSet[1]\cup\JSet[2]\subseteq\JSet[0]$.  Thus
  $\rqryElm'\preccurlyeq\rqryElm[i]$ and
  $\JSet[i]\subseteq\JSet[0]$ for each $i$.
  Since $\JSet[i]$ is net for $\rqryElm[i]$, this forces
  $\rqryElm'=\rqryElm[i]$; otherwise $\JSet[i]$ is contained in an image of a
  strictly smaller element $\rqryElm'$ which contradicts the net-image
  definition.
  Hence $\rqryElm[1]=\rqryElm[2]=\rqryElm'$.
  By item~\ref{lem:netimg(bij)} of this lemma,
  $\card{\JSet[0]}=\card{\rqryElm'}=\card{\JSet[i]}$.
  The inclusions $\JSet[i]\subseteq\JSet[0]$ are therefore equalities, and so
  $\JSet[1]=\JSet[2]$.

  We next prove Item~\ref{lem:netimg(dec)}.
  Let {\BCQ}s $\rqryElm,\sqryElm\in\denot{\qryElm}$ and homomorphisms
  $\homFun[\rqryElm]\in\HomSet(\rqryElm,\dbElm)$ and
  $\homFun[\sqryElm]\in\HomSet(\sqryElm,\dbElm)$, with
  $\homFun[\rqryElm](\rqryElm)\subseteq \homFun[\sqryElm](\sqryElm)\in
  \nimgFun[\qryElm](\sqryElm,\dbElm)$.
  By Item~\ref{lem:netimg(bij)}, $\homFun[\sqryElm]$ induces a bijection between
  the atoms of $\sqryElm$ and the facts of $\homFun[\sqryElm](\sqryElm)$.
  Hence, for every atom $\atmElm\in\rqryElm$, there is a unique atom
  $\rho(\atmElm)\in\sqryElm$ such that $\homFun[\rqryElm](\atmElm)=
  \homFun[\sqryElm](\rho(\atmElm))$.

  Consider renamed-apart copies of $\rqryElm$ and $\sqryElm$ and the finite
  family of atom equations $\atmElm=\rho(\atmElm)$, for all
  $\atmElm\in\rqryElm$.
  This family is simultaneously unifiable, since the equalities
  $\homFun[\rqryElm](\atmElm)= \homFun[\sqryElm](\rho(\atmElm))$, for all
  $\atmElm\in\rqryElm$, provide a common instance of all these equations.
  Let $\subFun$ be a most general simultaneous unifier and let $\subFun[1]$ and
  $\subFun[2]$ be its restrictions to the copies of $\rqryElm$ and $\sqryElm$,
  respectively.
  Put $\tqryElm\defeq \subFun[1](\rqryElm)\cup\subFun[2](\sqryElm)$.
  By construction, $\subFun[1](\atmElm)=\subFun[2](\rho(\atmElm))$ for every
  $\atmElm\in\rqryElm$, and therefore
  $\subFun[1](\rqryElm)\subseteq\subFun[2](\sqryElm)$.
  Hence, $\tqryElm=\subFun[2](\sqryElm)$.

  By the standard fact on simultaneous unification recalled in \cref{def:unif},
  this unification can be realised by one application of the merging rule of
  Item~\ref{def:muc(mrg)} of \cref{def:muc}, followed by finitely many
  applications of the factorisation rule of Item~\ref{def:muc(fac)}.
  Thus, there exists $\der{\sqryElm}\in\denot{\qryElm}$ such that
  $\der{\sqryElm}\approx\tqryElm$ and $\der{\sqryElm}\preccurlyeq\sqryElm$.

  Moreover, by the universal property of the most general unifier, there is a
  homomorphism $\homFun[0]\in\HomSet(\tqryElm,\dbElm)$ such that
  $\homFun[0]\cmp\subFun[1]=\homFun[\rqryElm]$ and
  $\homFun[0]\cmp\subFun[2]=\homFun[\sqryElm]$.
  In particular, $\homFun[0](\tqryElm)=\homFun[\sqryElm](\sqryElm)$.
  Therefore, $\homFun[\sqryElm](\sqryElm)$ is also an image element of
  $\der{\sqryElm}$, up to the isomorphism $\der{\sqryElm}\approx\tqryElm$.
  Since this image is net for $\sqryElm$ and
  $\der{\sqryElm}\preccurlyeq\sqryElm$, it cannot be that
  $\der{\sqryElm}\prec\sqryElm$.
  Hence, $\der{\sqryElm}=\sqryElm$, and consequently $\tqryElm\approx\sqryElm$.

  We claim that $\subFun[2]$ is an isomorphism from $\sqryElm$ to $\tqryElm$.
  It is surjective by $\tqryElm=\subFun[2](\sqryElm)$.
  It is also injective on atoms: if
  $\subFun[2](\atmElm[1])=\subFun[2](\atmElm[2])$, then
  $\homFun[\sqryElm](\atmElm[1])= \homFun[\sqryElm](\atmElm[2])$, and
  Item~\ref{lem:netimg(bij)} yields $\atmElm[1]=\atmElm[2]$.
  Finally, $\subFun[2]$ cannot identify two distinct terms of $\sqryElm$.
  Indeed, constants are fixed, while identifying two variables, or a variable
  with a constant, would strictly decrease the number of variables, contrary to
  $\tqryElm\approx\sqryElm$.
  Thus, $\subFun[2]$ is an isomorphism.

  We can therefore define $\homFun\defeq\subFun[2]^{-1}\cmp\subFun[1]$.
  This is a homomorphism from $\rqryElm$ to $\sqryElm$, and
  $\homFun[\sqryElm]\cmp\homFun = \homFun[0]\cmp\subFun[2]\cmp
  \subFun[2]^{-1}\cmp\subFun[1] = \homFun[0]\cmp\subFun[1] = \homFun[\rqryElm]$.

  It remains to prove uniqueness.
  Let $\homFun',\homFun''\in\HomSet(\rqryElm,\sqryElm)$ satisfy
  $\homFun[\rqryElm]=\homFun[\sqryElm]\cmp\homFun'
  =\homFun[\sqryElm]\cmp\homFun''$.
  For every atom $\atmElm\in\rqryElm$, we then have
  $\homFun[\sqryElm](\homFun'(\atmElm))= \homFun[\sqryElm](\homFun''(\atmElm))$.
  By Item~\ref{lem:netimg(bij)}, the induced map of $\homFun[\sqryElm]$ on atoms
  is injective, and therefore $\homFun'(\atmElm)=\homFun''(\atmElm)$.
  Since every term of $\rqryElm$ occurs in an atom and constants are fixed, this
  implies $\homFun'=\homFun''$.

  We finally prove Item~\ref{lem:netimg(hom)} of \cref{lem:netimg}.  Fix
  $\sqryElm\in\denot{\qryElm}$.  For every
  $\sqryElm'\preccurlyeq\sqryElm$ and every
  $\JSet'\in\nimgFun[\qryElm](\sqryElm',\dbElm)$, choose one homomorphism
  $\homFun_{\JSet'}\in\HomSet(\sqryElm',\dbElm)$ with
  $\homFun_{\JSet'}(\sqryElm')=\JSet'$.  By
  Item~\ref{lem:netimg(bij)} of \cref{lem:netimg}, $\homFun_{\JSet'}$ induces a
  bijection from the atoms of $\sqryElm'$ to the facts of $\JSet'$.  Hence
  composition with
  $\homFun_{\JSet'}$ maps every homomorphism
  $\homFun[0]\in\HomSet(\sqryElm,\sqryElm')$ to a homomorphism
  $\homFun_{\JSet'}\circ\homFun[0]\in\HomSet(\sqryElm,\dbElm)$ whose image is
  contained in $\JSet'$.  By Item~\ref{lem:netimg(dec)} of
  \cref{lem:netimg}, this composition map is injective and its image is exactly
  the set of homomorphisms from
  $\sqryElm$ to $\dbElm$ whose image is contained in $\JSet'$.
  Conversely, let $\homFun\in\HomSet(\sqryElm,\dbElm)$ and put
  $\JSet=\homFun(\sqryElm)$.  By Item~\ref{lem:netimg(cov)} of
  \cref{lem:netimg}, there is a net-image element $\JSet'$ of some
  $\sqryElm'\preccurlyeq\sqryElm$ containing $\JSet$.  By
  Item~\ref{lem:netimg(dec)} of \cref{lem:netimg}, there is a unique
  $\homFun[0]\in\HomSet(\sqryElm,\sqryElm')$ with
  $\homFun=\homFun_{\JSet'}\circ\homFun[0]$.  The strong disjointness property
  proved above gives uniqueness of $\sqryElm'$ and $\JSet'$.  Thus the
homomorphisms
  from $\sqryElm$ to $\dbElm$ are partitioned by the pairs
  $(\sqryElm',\JSet')$ with
  $\sqryElm'\preccurlyeq\sqryElm$ and
  $\JSet'\in\nimgFun[\qryElm](\sqryElm',\dbElm)$, and each pair contributes
  exactly $\card{\HomSet(\sqryElm,\sqryElm')}$ homomorphisms.  This is precisely
  the displayed equality.
\end{proof}


We are now able to define the notion of our multicanonical instance for a \BJUQ query. This will be an instance which reflects the structure of the query, and in particular the structure  of the unification closure \qryElm; intuitively, for every element of $\qryElm$ a multicanonical instance has a number of data components that are isomorphic to this element. We will later prove that we can ``simulate'' any multiplicity of a \BJUQ on an arbitrary instance by considering only multicanonical instances.

\begin{definition}[Multicanonical Instance]
\label{def:mulcandbs}
  A \emph{multicanonical instance} for a \BCQ $\qryElm$ is an instance $\dbElm$
  for which the following constraints are satisfied:
  \begin{enumerate}[1)]
  \item\label{def:mulcandbs(iso)}
    for each \BCQ $\rqryElm \in \denot{\qryElm}$ and net-image element $\JSet
    \in \nimgFun[\qryElm](\rqryElm, \dbElm)$, it holds that $\JSet \approx
    \rqryElm$;
  \item\label{def:mulcandbs(atm)}
    for all ground atoms $\atmElm \in \dbElm$, there is a \BCQ $\rqryElm \in
    \denot{\qryElm}$ with $\atmElm \in \rqryElm$ such that
    $\nimgFun[\qryElm](\rqryElm, \dbElm) \neq \emptyset$;
  \item\label{def:mulcandbs(cov)}
    for all non-ground components $\JSet \in \comp{\dbElm}$, there is a \BCQ
    $\rqryElm \in \denot{\qryElm}$ with $\JSet \in \nimgFun[\qryElm](\rqryElm,
    \dbElm)$.
  \end{enumerate}
\end{definition}

\begin{example}[Multicanonical instances]
\label{exm:mulcandbs}
The instance used for the \JoFQ table in \cref{exm:img} is multicanonical.  Each
net-image fact is isomorphic to the corresponding closure atom; the only ground
net-image facts are those of $E^{*}$; and different non-ground facts intersect only
at the query constants $c,d$.
%
\end{example}

%

In order to define multicanonical instances with a variable number of
net-images, we first need the notion of a ground selection for a query.
Intuitively, a ground selection consists exactly of the $\prec$-minimal ground
queries that can be constructed using only atoms from the selection itself.
More precisely, for a candidate set $\GSet$ of ground queries, we let
$\LambdaSet[\GSet]$ contain all ground closure elements whose atoms belong to
$\PhiSet[\GSet] \defeq \bigcup \GSet$; then $\GSet$ is a ground selection when
it consists exactly of the $\prec$-minimal elements of $\LambdaSet[\GSet]$.

\begin{definition}[Ground Selection]
\label{def:grnsel}
  A \emph{ground selection} for a \BJUQ $\qryElm$ is a set of ground {\BCQ}s
  $\GSet \subseteq \denot{\qryElm}[\bot]$ such that, for all {\BCQ}s $\rqryElm
  \in \denot{\qryElm}[\bot]$, it holds that $\rqryElm \in \GSet$ \iff $\rqryElm
  \in \LambdaSet[\GSet]$ and $\sqryElm \not\prec \rqryElm$, for every \BCQ
  $\sqryElm \in \LambdaSet[\GSet]$, where $\LambdaSet[\GSet] \defeq \set!{
  \sqryElm \in \denot{\qryElm}[\bot] }{ \sqryElm \subseteq \PhiSet[\GSet] }$ and
  $\PhiSet[\GSet] \defeq \bigcup \GSet$.
  Such a ground selection $\GSet$ is \emph{non-trivial} if
  $\denot{\rqryElm}[\qryElm] \in \LambdaSet[\GSet]$, for all $\rqryElm \in
  \comp[\bot]{\qryElm}$.
\end{definition}


\begin{example}[Two ground selections for the \JUQ example]
\label{exm:grnsel}
Consider our running case of \cref{exm:muc}.
The closure of $\qryElm[6]$ has three ground elements\footnote{Note that we often omit the heads of the queries -- elements of the closure -- in our examples.}: $C_0=\{R(c,c)\}$,
$D_0=\{R(d,d)\}$, and $E=\{R(c,d),R(d,c)\}$. Consider the set  $G_1=\{C_0,D_0\}$; this is a ground selection as there is no other query in $\LambdaSet[\GSet]$ other than $C_0,D_0$. If on the other hand, instead of $E$ our closure contained an element $E^{\prime} = \{ R(c,c), R(d,d) \}$ then $\{C_0,D_0\}$ could not be a ground selection as there is another query in the closure (query $E^{\prime}$) which consists solely of atoms in $\{C_0,D_0\}$ (so it ends up in $\LambdaSet[\{C_0,D_0\}]$) and for which both $E^{\prime} \prec C_0$ and $E^{\prime} \prec D_0$.
In this new scenario, if we omitted either $C_0$ or $D_0$ from the selection the latter would be a ground selection since even though we still have $E^{\prime} \prec C_0$ or  $E^{\prime} \prec D_0$ the query $E^{\prime}$ is not made anymore solely from atoms in the selection.
In the same modified example, another valid ground selection would be $\{E^{\prime}\}$; since here, although we have queries in the closure that are made of selection atoms (queries $C_0$ and $D_0$), $E^{\prime}$ does not map to any of them. For the original \cref{exm:muc}, $\GSet_2 = \{E\}$ is another ground selection since $\LambdaSet[\GSet]$ will contain just $E$.
\end{example}

We now prescribe the number of net-image elements for each closure element.
A ground element can have at most one, namely itself; those assigned $1$ must
form a ground selection.

\begin{definition}[Net-Image Counting]
\label{def:netimgcnt}
  A \emph{net-image counting} for a \BJUQ $\qryElm$ is a function $\nFun \colon
  \denot{\qryElm} \to \SetN$ for which $\nFun(\rqryElm) \leq 1$, for all {\BCQ}s
  $\rqryElm \in \denot{\qryElm}[\bot]$, and $\GSet[\nFun] \defeq \set!{ \rqryElm
  \in \denot{\qryElm}[\bot] }{ \nFun(\rqryElm) = 1 }$ is a ground selection.
  For such a function $\nFun$, the sets $\LambdaSet[\nFun] \defeq
  \LambdaSet[{\GSet[\nFun]}]$ and $\PhiSet[\nFun] \defeq
  \PhiSet[{\GSet[\nFun]}]$ are also introduced.
\end{definition}

\begin{example}[Two net-image countings]
\label{exm:netimgcnt}
The two ground selections above  $G_1=\{C_0,D_0\}$ and $G_1=\{E\}$ support the following net-image countings for the closure of \cref{exm:muc} :
\[
\begin{array}{c|ccccccccc}
 r & C_0&D_0&E&C&D&U_0&U&W_0&W\\ \hline
 n_1(r)&1&1&0&2&3&1&2&1&1\\
 n_2(r)&0&0&1&2&3&1&2&1&1
\end{array}
\]
The ground coordinates are $0$ or $1$.  Moreover $G_{n_1}=G_1$ and
$G_{n_2}=G_2$.  As we will see below we can create multicanonical instances with exactly these sizes of net-images. 
ground part.
\end{example}

The next lemma shows that for every instance we can define a net-image counting function whose values are the sizes of the net-images of the queries in the closure.

\begin{lemma}
\label{lem:arbins2netimgcnt}
  For every \BJUQ $\qryElm$ and instance $\dbElm$, the function $\nFun[\dbElm]
  \colon \denot{\qryElm} \to \SetN$ defined as $\nFun[\dbElm](\sqryElm) \defeq
  \card{\nimgFun[\qryElm](\sqryElm, \dbElm)}$, for each \BCQ $\sqryElm \in
  \denot{\qryElm}$, is a net-image counting for $\qryElm$.
\end{lemma}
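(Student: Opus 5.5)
We need to check two conditions: (i) $\nFun[\dbElm](\rqryElm)\leq 1$ for every ground $\rqryElm \in \denot{\qryElm}[\bot]$; and (ii) the set $\GSet \defeq \set{\rqryElm\in\denot{\qryElm}[\bot]}{\nFun[\dbElm](\rqryElm)=1}$ is a ground selection in the sense of \cref{def:grnsel}.

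For (i), observe that a ground query $\rqryElm$ has at most one homomorphism into $\dbElm$, namely the identity. This map exists exactly when $\rqryElm \subseteq \dbElm$. Hence $\imgFun(\rqryElm,\dbElm)\subseteq\{\rqryElm\}$, and therefore $\card{\nimgFun[\qryElm](\rqryElm,\dbElm)}\leq 1$. In particular, for ground $\rqryElm$ we have $\rqryElm\in\GSet$ iff $\rqryElm\subseteq\dbElm$ and $\rqryElm$ is not contained in any image element of some $\sqryElm\prec\rqryElm$.

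For (ii), fix a ground $\rqryElm$. We must show that $\rqryElm\in\GSet$ iff $\rqryElm\in\LambdaSet[\GSet]$ and no $\sqryElm\in\LambdaSet[\GSet]$ satisfies $\sqryElm\prec\rqryElm$. Two preliminary facts:
\begin{itemize}
\item $\PhiSet[\GSet]\subseteq\dbElm$, so every element of $\LambdaSet[\GSet]$ is a ground subset of $\dbElm$.
\item If $\sqryElm \preccurlyeq \rqryElm$ with $\rqryElm$ ground, then the homomorphism is the identity, so $\rqryElm \subseteq \sqryElm$. Consequently $\sqryElm$ is also ground, since every element of $\denot{\qryElm}[\top]$ has all of its variables in every atom by \cref{thm:juqmucchr}, and hence has no ground atoms.
\end{itemize}

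\emph{Forward direction.} Let $\rqryElm\in\GSet$. Then $\rqryElm\subseteq\PhiSet[\GSet]$, so $\rqryElm\in\LambdaSet[\GSet]$. Suppose some $\sqryElm\in\LambdaSet[\GSet]$ had $\sqryElm\prec\rqryElm$. By the second fact, $\rqryElm\subseteq\sqryElm$, and $\sqryElm\subseteq\dbElm$, so $\sqryElm$ is an image element of itself containing $\rqryElm$. This contradicts the netness of $\rqryElm$.

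\emph{Converse direction.} Let $\rqryElm\in\LambdaSet[\GSet]$ be $\prec$-minimal in $\LambdaSet[\GSet]$. Then $\rqryElm\subseteq\dbElm$, so $\rqryElm$ is an image element of itself. Apply \cref{lem:netimg}\ref{lem:netimg(cov)}: it gives $\sqryElm\preccurlyeq\rqryElm$ and a net-image element $\KSet$ of $\sqryElm$ with $\rqryElm\subseteq\KSet$. By the second fact, $\sqryElm$ is ground and $\KSet=\sqryElm$, so $\sqryElm\in\GSet$ and $\sqryElm\subseteq\PhiSet[\GSet]$, i.e.\ $\sqryElm\in\LambdaSet[\GSet]$. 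Minimality of $\rqryElm$ then forces $\sqryElm=\rqryElm$, so $\rqryElm$ is net and $\rqryElm\in\GSet$.

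The delicate point is the converse direction. There one must check that the ``witness below'' produced by \cref{lem:netimg}\ref{lem:netimg(cov)} is itself ground and lies inside $\PhiSet[\GSet]$. Both follow from the invariant that non-ground closure elements contain no ground atoms, which is why the argument passes through \cref{thm:juqmucchr}.
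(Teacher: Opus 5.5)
Your proof is correct and follows essentially the same route as the paper's. The bound on ground elements holds because the identity is the only possible homomorphism of a ground query; the forward direction contradicts netness; and the converse uses Item~\ref{lem:netimg(cov)} of \cref{lem:netimg} together with the fact that no non-ground closure element lies below a ground one. The only cosmetic difference is that the paper phrases minimality among ground elements through strict inclusion $\subsetneq$, which is equivalent to your direct use of $\prec$.
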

\begin{proof}
  Let $\sqryElm\in\denot{\qryElm}$ be ground.  Since homomorphisms fix
  constants, $\sqryElm$ has at most one image in a set instance, namely
  $\sqryElm$ itself.  Hence $\nFun_{\dbElm}(\sqryElm)\leq 1$.

  Let $\nFun[\dbElm]
  \colon \denot{\qryElm} \to \SetN$ defined as $\nFun[\dbElm](\sqryElm) \defeq
  \card{\nimgFun[\qryElm](\sqryElm, \dbElm)}$, for each \BCQ $\sqryElm \in
  \denot{\qryElm}$, and let $\GSet[{\nFun_{\dbElm}}]$ and $\LambdaSet[{\nFun_{\dbElm}}]$ be the sets
  defined in \cref{def:netimgcnt}.  We have to prove that
  $\GSet[{\nFun_{\dbElm}}]$ is a ground selection.
  Suppose first that $\nFun_{\dbElm}(\sqryElm)=1$.  Then
  $\sqryElm\in\nimgFun[\qryElm](\sqryElm,\dbElm)$, and in particular
  $\sqryElm\subseteq\bigcup\GSet[{\nFun_{\dbElm}}]$, so
  $\sqryElm\in\LambdaSet[{\nFun_{\dbElm}}]$.  If there were
  $\tqryElm\in\LambdaSet[{\nFun_{\dbElm}}]$ with
  $\sqryElm\subsetneq\tqryElm$, then $\tqryElm$ would be ground as well, since
  every non-ground element of $\denot{\qryElm}$ has a variable in every atom by
  \cref{thm:juqmucchr}.  Moreover, every atom of $\tqryElm$ belongs to
  $\bigcup\GSet[{\nFun_{\dbElm}}]$.  Such an atom is ground, hence it belongs
  to some ground $\sqryElm[0]\in\denot{\qryElm}$ with
  $\nFun_{\dbElm}(\sqryElm[0])=1$, and so it belongs to $\dbElm$.  Thus
  $\tqryElm$ has the ground image $\tqryElm$ in $\dbElm$.  Since
  $\sqryElm\subsetneq\tqryElm$, we have
  $\tqryElm\prec_{\qryElm}\sqryElm$, contradicting that $\sqryElm$ is a
  net-image element of itself.

  Conversely, suppose that $\sqryElm\in\LambdaSet[{\nFun_{\dbElm}}]$ and that
  there is no $\tqryElm\in\LambdaSet[{\nFun_{\dbElm}}]$ with
  $\sqryElm\subsetneq\tqryElm$.  As above, every atom of $\sqryElm$ belongs to
  $\dbElm$, so $\sqryElm\in\imgFun(\sqryElm,\dbElm)$.  By
  Item~\ref{lem:netimg(cov)} of \cref{lem:netimg}, there are
  $\tqryElm\preccurlyeq\sqryElm$ and
  $\JSet\in\nimgFun[\qryElm](\tqryElm,\dbElm)$ such that
  $\sqryElm\subseteq\JSet$.  Since $\sqryElm$ is ground, $\tqryElm$ is ground as
  well; otherwise a non-ground element would be below a ground one.  Hence
  $\JSet=\tqryElm$ and $\nFun_{\dbElm}(\tqryElm)=1$.  In particular,
  $\tqryElm\subseteq\bigcup\GSet[{\nFun_{\dbElm}}]$, so
  $\tqryElm\in\LambdaSet[{\nFun_{\dbElm}}]$.  From
  $\sqryElm\subseteq\tqryElm$ and the maximality of $\sqryElm$ inside
  $\LambdaSet[{\nFun_{\dbElm}}]$, we obtain $\tqryElm=\sqryElm$.  Therefore
  $\sqryElm\in\nimgFun[\qryElm](\sqryElm,\dbElm)$, and
  $\nFun_{\dbElm}(\sqryElm)=1$.

\end{proof}

The preceding lemma obtains a net-image counting from any instance.
Conversely, the next lemma constructs, for every net-image counting, a
multicanonical instance with exactly the prescribed number of net-image
elements for each closure element.


\begin{lemma}
\label{lem:netimgcnt2mulcan}
  For every \BJUQ $\qryElm$ and net-image counting $\nFun$ for $\qryElm$, there
  exists a multicanonical instance $\dbElm[\nFun]$ for $\qryElm$ such that
  $\card{\nimgFun[\qryElm](\rqryElm, \dbElm[\nFun])} = \nFun(\rqryElm)$, for
  each \BCQ $\rqryElm \in \denot{\qryElm}$.
\end{lemma}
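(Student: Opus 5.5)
The plan is to build $\dbElm[\nFun]$ explicitly as a disjoint union of fresh copies of closure elements, and then compute its net images directly. Put $\dbElm[\nFun] \defeq \PhiSet[\nFun] \cup \bigcup_{\rqryElm \in \denot{\qryElm}[\top]} \bigcup_{i=1}^{\nFun(\rqryElm)} \rqryElm^{(i)}$. Each $\rqryElm^{(i)}$ is a copy of the non-ground element $\rqryElm$ whose variables are replaced by values of $\ValSet \setminus \VarSet$. These values are pairwise distinct across all copies and never used elsewhere. By \cref{thm:juqmucchr}, every non-ground element is connected and has all its variables in every atom. So each copy is a single component of $\dbElm[\nFun]$, and distinct copies share only constants of $\con{\qryElm}$. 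Moreover, no fact of a copy is ground, so the non-ground copies are disjoint from each other and from the ground part $\PhiSet[\nFun]$.

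The main step is to determine $\nimgFun[\qryElm](\sqryElm, \dbElm[\nFun])$ for every $\sqryElm \in \denot{\qryElm}$. I will use three facts.
\begin{enumerate}[(i)]
\item Any homomorphic image of a non-ground element $\sqryElm$ lies either inside one copy $\rqryElm^{(i)}$, or entirely in $\PhiSet[\nFun]$. This holds because all atoms of $\sqryElm$ share every variable, so the images of those atoms share a common term. If that term is a fresh value, it pins all atoms into a single copy; if every variable is sent to constants, the image is ground and hence contained in $\PhiSet[\nFun]$.
\item Each copy $\rqryElm^{(i)}$ is itself an image element of $\rqryElm$, and it is net. Suppose instead that $\rqryElm^{(i)} \subseteq \KSet \in \imgFun(\sqryElm', \dbElm[\nFun])$ with $\sqryElm' \prec \rqryElm$. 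By (i) and fact disjointness, $\KSet \subseteq \rqryElm^{(i)}$, so $\KSet = \rqryElm^{(i)}$. This gives $\sqryElm' \preccurlyeq \rqryElm$ and, via the isomorphism, $\rqryElm \preccurlyeq \sqryElm'$, contradicting antisymmetry in \cref{thm:juqmucchr}.
\item An image of $\sqryElm$ inside a copy $\rqryElm^{(i)}$ with $\sqryElm \neq \rqryElm$ is not net. Such an image yields a homomorphism $\sqryElm \to \rqryElm$, so $\rqryElm \prec \sqryElm$. The image is then contained in $\rqryElm^{(i)}$, which is an image of the strictly smaller element $\rqryElm$.
\end{enumerate}
Similarly, an image of $\sqryElm$ onto $\rqryElm^{(i)}$ other than the canonical one is the same set, so it counts once. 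Hence non-ground net images of $\rqryElm$ are exactly the $\nFun(\rqryElm)$ copies, provided ground-valued images are excluded. Those ground-valued images are handled by the next step.

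The ground part is where I expect the real obstacle, and it relies on the ground-selection property of $\GSet[\nFun]$. Ground images inside $\PhiSet[\nFun]$ are exactly the elements of $\LambdaSet[\nFun]$, together with ground instances of non-ground elements landing in $\PhiSet[\nFun]$. I will show two things.
\begin{enumerate}[(a)]
\item A ground element $\rqryElm$ is a net image of itself \iff $\rqryElm \in \GSet[\nFun]$. The argument mirrors \cref{lem:arbins2netimgcnt} in reverse. Membership in $\GSet[\nFun]$ means $\prec$-minimality within $\LambdaSet[\nFun]$. Any strictly smaller ground element whose image contains $\rqryElm$ must itself lie in $\LambdaSet[\nFun]$, since its atoms are facts of $\PhiSet[\nFun]$.
\item A ground image $\JSet \subseteq \PhiSet[\nFun]$ of a non-ground $\sqryElm$ is never net. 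This is the delicate point. The homomorphism $\sqryElm \to \JSet$ can be factored through the factorisation and merging rules of \cref{def:muc}, as in the proof of \cref{lem:netimg}\ref{lem:netimg(dec)}. This produces a ground element $\tqryElm \prec \sqryElm$ with $\tqryElm = \JSet$.
\end{enumerate}
For $\nFun$ to be well-behaved one also needs $\JSet \in \LambdaSet[\nFun]$; this is immediate since $\JSet \subseteq \PhiSet[\nFun]$.

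Combining these facts gives $\card{\nimgFun[\qryElm](\rqryElm, \dbElm[\nFun])} = \nFun(\rqryElm)$ for every $\rqryElm$. The three conditions of \cref{def:mulcandbs} then follow directly.
\begin{itemize}
\item Condition~\ref{def:mulcandbs(iso)} holds because every net image is either a fresh isomorphic copy of its element or the ground element itself.
\item Condition~\ref{def:mulcandbs(atm)} holds because every ground fact lies in some $\rqryElm \in \GSet[\nFun]$, and such $\rqryElm$ has a nonempty net image by (a).
\item Condition~\ref{def:mulcandbs(cov)} holds because every non-ground component is some copy $\rqryElm^{(i)}$, which is a net image of $\rqryElm$ by (ii).
\end{itemize}
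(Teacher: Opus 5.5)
Your construction is the paper's, and steps (i), (ii), (iii) and (a) are sound. The justification of (b), which you rightly flag as delicate, rests on a false claim. Factoring a homomorphism $h\colon s\to J$ with $J\subseteq\Phi_{n}$ through the rules of \cref{def:muc} does not, in general, produce a ground closure element equal to $J$. Factorisation applied to $s$ alone generally cannot realise $h$, and merging with a selected ground block keeps that whole block. Indeed $J$ need not be a closure element at all, so your side remark $J\in\Lambda_{n}$ fails too.

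For instance, take $q()\leftarrow R(x,c),S(x),R(d,y),T(y)$, with components $A=\{R(x,c),S(x)\}$ and $B=\{R(d,y),T(y)\}$. Its closure is $\{A,B,F\}$, where $F=\{R(d,c),S(d),T(c)\}$ arises by merging $R(x,c)$ with $R(d,y)$. For the counting with $n(F)=1$ and $n(A)=n(B)=0$, the instance is $F$ itself. Then $x\mapsto d$ yields the ground image $\{R(d,c),S(d)\}$ of $A$, which is not a closure element and so is not in $\Lambda_{n}=\{F\}$.

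The conclusion of (b) is nevertheless true, and the repair is short. Pick $a\in s$ and $g\in G_{n}$ with $h(a)\in g$. By \cref{thm:juqmucchr}, $a$ contains every variable of $s$. Hence the disjoint most general unifier of $a$ and $h(a)$ grounds $s$ exactly as $h$ does, and merging yields a closure element $t\approx J\cup g$. This $t$ is ground and lies in $\Phi_{n}$, so it is its own image in $I_{n}$, and it contains $J$. Moreover $t\prec s$, since $h$ gives $t\preccurlyeq s$ and no homomorphism maps the ground $t$ into the non-ground $s$.

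Simpler still is the paper's route. By (a), $g$ is a net-image element of itself and it meets $J$, so \cref{lem:netimg}(c) forbids $J$ from being net for $s\neq g$. The paper uses \cref{lem:netimg}(c) for the whole converse inclusion. Once every constructed block is shown to be net (your (ii) and the ``if'' half of (a)), any net-image element contains a fact of some block and must coincide with it. That single step replaces (iii), (b), and the ``only if'' half of (a).

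It also covers a case you left out: images of $r$ that are proper subsets of a copy $r^{(i)}$, since you only treat images onto the copy. Such images are excluded by \cref{lem:netimg}(b), or because every endomorphism of a join-uniform closure element is an automorphism.
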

\begin{proof}

  Let $\GSet[\nFun]$ and $\LambdaSet[\nFun]$ be as in
  \cref{def:netimgcnt}.
  For every non-ground $\sqryElm\in\denot{\qryElm}$ and every
  $i\in\{1,\ldots,\nFun(\sqryElm)\}$, choose a fresh copy
  $\JSet^{*}_{\sqryElm,i}$ of $\sqryElm$, together with an isomorphism
  $\alpha_{\sqryElm,i}\colon\sqryElm\to\JSet^{*}_{\sqryElm,i}$ fixing the
  constants of $\con{\qryElm}$.  The fresh values are chosen so that different
  non-ground copies are disjoint outside $\con{\qryElm}$.  For every ground
  $\sqryElm$ with $\nFun(\sqryElm)=1$, put
  $\JSet^{*}_{\sqryElm,1}\defeq\sqryElm$ and let $\alpha_{\sqryElm,1}$ be the
  identity map.  Define
  \[
    \dbElm[\nFun]
    \defeq
    \bigcup_{\sqryElm\in\denot{\qryElm}[\top]}
    \bigcup_{1\leq i\leq\nFun(\sqryElm)}
    \JSet^{*}_{\sqryElm,i}
    \cup
    \bigcup_{\sqryElm\in\GSet[\nFun]}
    \sqryElm.
  \]

  We first prove that each constructed copy is a net-image element.  Let
  $\JSet^{*}_{\sqryElm,i}$ be one of the constructed copies.
  If $\sqryElm$ is non-ground, the isomorphism $\alpha_{\sqryElm,i}$ shows
  that $\JSet^{*}_{\sqryElm,i}$ is an image of $\sqryElm$.
  If it were contained in an image $\JSet'$ of some
  $\tqryElm\prec_{\qryElm}\sqryElm$, then $\JSet'$ would contain a fresh value
  outside $\con{\qryElm}$. Hence $\tqryElm$ is non-ground. By
  \cref{thm:juqmucchr}, every variable of $\tqryElm$ occurs in every atom,
  so this fresh value occurs in every fact of $\JSet'$.
  Thus $\JSet'$ lies in one non-ground component of $\dbElm[\nFun]$.
  The non-ground components of $\dbElm[\nFun]$ are
  exactly the fresh copies.  Thus
  $\JSet'\subseteq\JSet^{*}_{\sqryElm,i}$, and consequently
  $\JSet'=\JSet^{*}_{\sqryElm,i}$.  Composing a homomorphism from
  $\tqryElm$ onto $\JSet'$ with $\alpha_{\sqryElm,i}^{-1}$ gives
  $\sqryElm\preccurlyeq\tqryElm$, contradicting
  $\tqryElm\prec_{\qryElm}\sqryElm$.

  Assume now that $\sqryElm$ is ground.  Then $i=1$ and
  $\JSet^{*}_{\sqryElm,1}=\sqryElm$.  If this image were contained in an
  image of some $\tqryElm\prec_{\qryElm}\sqryElm$, then $\tqryElm$ must be
  ground as well, since a non-ground element cannot be strictly below a ground
  one.  Moreover, the image of $\tqryElm$ is just $\tqryElm$ itself, so
  $\sqryElm\subsetneq\tqryElm$.  Since non-ground copies contain fresh values
  in every fact, the ground facts of $\dbElm[\nFun]$ come only from selected
  ground copies.  Hence $\tqryElm\subseteq\bigcup\GSet[\nFun]$, \ie,
  $\tqryElm\in\LambdaSet[\nFun]$.  This contradicts the ground selection
  condition for $\GSet[\nFun]$.  Hence the ground copy is net.

  Conversely, let $\JSet[0]\in\nimgFun[\qryElm](\sqryElm,\dbElm[\nFun])$.
  Choose a fact $\atmElm\in\JSet[0]$.  This fact belongs to some constructed
  copy $\JSet^{*}_{\tqryElm,i}$, and we have just proved that this copy is a
  net-image element of $\tqryElm$.  By the disjointness of net-image elements,
  Item~\ref{lem:netimg(dis)} of \cref{lem:netimg}, we get
  $\sqryElm=\tqryElm$ and $\JSet[0]=\JSet^{*}_{\tqryElm,i}$.  Thus the
  net-image elements of each $\sqryElm$ are exactly the constructed copies
  $\JSet^{*}_{\sqryElm,i}$ with $1\leq i\leq\nFun(\sqryElm)$.  Hence
  $\card{\nimgFun[\qryElm](\sqryElm,\dbElm[\nFun])}=\nFun(\sqryElm)$.

  It remains to check that $\dbElm[\nFun]$ is multicanonical.  Item
  \ref{def:mulcandbs(iso)} of \cref{def:mulcandbs} follows from the description
  of the net-image elements above.  For Item~\ref{def:mulcandbs(atm)}, every
  ground atom of $\dbElm[\nFun]$ lies in one of the selected ground blocks,
  and that block is a net-image element.  Finally, every non-ground component of
  $\dbElm[\nFun]$ is one of the fresh non-ground copies: selected ground
  blocks contain no fresh values, and distinct fresh copies share only
  constants of $\con{\qryElm}$.  We have already shown
  that each such copy is a net-image element, so Item~\ref{def:mulcandbs(cov)}
  holds.

\end{proof}

For a \BJUQ $\qryElm$ and an arbitrary instance $\dbElm$, the preceding
lemmas produce a multicanonical instance $\dbElm[\qryElm][*]$ with the same
net-image counts. The next theorem shows that $\qryElm$ has the same number of
homomorphisms on both instances, while every \BCQ $\pqryElm$ with
$\con{\pqryElm}\subseteq\con{\qryElm}$ has at most as many homomorphisms on
$\dbElm[\qryElm][*]$ as on $\dbElm$.

\begin{theorem}[Instance Canonisation]
\label{thm:inscan}
  For all {\BJUQ}s $\qryElm$ and instances $\dbElm$, there exists a
  multicanonical instance $\dbElm[\qryElm][*]$ for which the following two
  properties hold true:
  \begin{enumerate}[a)]
  \item\label{thm:inscan(eql)}
    $\card{\HomSet(\qryElm, \dbElm)} = \card!{\HomSet(\qryElm,
    \dbElm[\qryElm][*])}$;
  \item\label{thm:inscan(inq)}
    $\card{\HomSet(\pqryElm, \dbElm)} \geq \card!{\HomSet(\pqryElm,
    \dbElm[\qryElm][*])}$, for all {\BCQ}s $\pqryElm$ such that $\con{\pqryElm}
    \subseteq \con{\qryElm}$.
  \end{enumerate}
\end{theorem}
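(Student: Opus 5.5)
Set $\nFun \defeq \nFun[\dbElm]$, which is a net-image counting by \cref{lem:arbins2netimgcnt}, and let $\dbElm[\qryElm][*] \defeq \dbElm[\nFun]$ be the multicanonical instance supplied by \cref{lem:netimgcnt2mulcan}. By construction,
\[
  \card!{\nimgFun[\qryElm](\rqryElm, \dbElm)} = \card!{\nimgFun[\qryElm](\rqryElm, \dbElm[\qryElm][*])}
  \quad\text{for every } \rqryElm \in \denot{\qryElm}.
\]

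\textbf{Item (a).} By \cref{prp:muc}, $\card{\HomSet(\qryElm, \cdot)}$ factors as $\prod_{\rqryElm \in \comp{\qryElm}} \card!{\HomSet(\denot{\rqryElm}[\qryElm], \cdot)}$. For each factor, Item~\ref{lem:netimg(hom)} of \cref{lem:netimg} writes $\card{\HomSet(\rqryElm, \cdot)}$ as $\sum_{\sqryElm \preccurlyeq \rqryElm} \card{\HomSet(\rqryElm, \sqryElm)} \cdot \card{\nimgFun[\qryElm](\sqryElm, \cdot)}$. The coefficients $\card{\HomSet(\rqryElm, \sqryElm)}$ do not depend on the instance, and the net-image counts agree on $\dbElm$ and $\dbElm[\qryElm][*]$. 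Hence every factor, and therefore the product, is the same on both instances.

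\textbf{Item (b).} I would build a homomorphism $\gFun \colon \dbElm[\qryElm][*] \to \dbElm$ that fixes $\con{\qryElm}$ and is \emph{injective on values}. The construction goes block by block.
\begin{itemize}
\item For each non-ground copy $\JSet^{*}_{\sqryElm,i}$, fix a bijection between the copies of $\sqryElm$ and the elements of $\nimgFun[\qryElm](\sqryElm, \dbElm)$; such a bijection exists because the counts agree. If $\JSet^{*}_{\sqryElm,i}$ is matched with the net-image element $\KSet = \homFun(\sqryElm)$, set $\gFun \defeq \homFun \cmp \alpha_{\sqryElm,i}^{-1}$ on its values.
\item Ground blocks contain only constants, and these are mapped to themselves.
\end{itemize}
Every fact of $\dbElm[\qryElm][*]$ lies in some block, and each block is mapped into $\dbElm$, so $\gFun$ is a homomorphism. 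Given this $\gFun$, composition $\hFun \mapsto \gFun \cmp \hFun$ sends $\HomSet(\pqryElm, \dbElm[\qryElm][*])$ into $\HomSet(\pqryElm, \dbElm)$. This map is injective as soon as $\gFun$ is injective on the terms of $\dbElm[\qryElm][*]$, and that gives the inequality.

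The hypothesis $\con{\pqryElm} \subseteq \con{\qryElm}$ is used when checking that the composites are legitimate homomorphisms. Any homomorphism of $\pqryElm$ into $\dbElm[\qryElm][*]$ fixes the constants of $\pqryElm$, which are constants of $\qryElm$, and $\gFun$ fixes those as well. The same hypothesis also prevents an extra constant of $\pqryElm$ from colliding with a fresh value.

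\textbf{Main obstacle: injectivity of $\gFun$.} Within a single block, $\homFun$ is injective on variables. By Item~\ref{lem:netimg(bij)} of \cref{lem:netimg} it is bijective on atoms. If $\homFun$ identified two variables, or sent a variable to a constant, then $\KSet$ would be an image of a strictly smaller closure element, obtained by factorisation, and this contradicts netness. The argument mirrors the one used in Item~\ref{lem:netimg(dec)} of \cref{lem:netimg}.

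Across blocks, I would use the following. Every fact of a non-ground net-image element contains every image value of that block, by the invariant of \cref{thm:juqmucchr}. Suppose two distinct net-image elements $\KSet[1], \KSet[2]$ shared a value $v$. Then I would merge the corresponding closure elements by identifying the variables mapped to $v$. This identification can be realised through disjoint unification of an atom of each, possibly after factorisation, and it yields a closure element below both whose image contains $\KSet[1] \cup \KSet[2]$. That image would make $\KSet[1]$ and $\KSet[2]$ non-net, as in \cref{lem:img} and Item~\ref{lem:netimg(dis)}. I expect this step to be the delicate one, because \cref{lem:img} is stated for shared facts rather than shared values.

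\textbf{Fallback if value-sharing cannot be excluded.} I would instead prove injectivity of composition directly. Two homomorphisms $\hFun, \hFun'$ of $\pqryElm$ with $\gFun \cmp \hFun = \gFun \cmp \hFun'$ agree at the level of facts, by the atom-level disjointness of Item~\ref{lem:netimg(dis)}. Each fact of $\dbElm$ lying in a net-image element comes from a unique block and a unique atom of that block, so the two homomorphisms agree on every fact. Since every term of $\pqryElm$ occurs in some atom, they also agree on every term, and hence $\hFun = \hFun'$.
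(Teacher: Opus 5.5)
Your item (a) is the paper's argument, and your fallback for item (b) is also the paper's route. The paper builds the same value map (called $\pi$ there) and proves that the induced map on facts is injective. Inside a block it uses Item~\ref{lem:netimg(bij)} of \cref{lem:netimg}; across blocks it uses Item~\ref{lem:netimg(dis)} together with the bijection between copies and net-image elements. It then concludes that $h\mapsto\pi\circ h$ is injective, because two distinct homomorphisms of $p$ already differ on the image of some atom. So the proof goes through, but only via the fallback. Your primary route, injectivity of $g$ on values, is false rather than merely delicate, and both of its supporting claims fail.

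Take $q=R(x,y)$. It is join-uniform, and its closure is just $\{R(x,y)\}$, since factorisation and self-merging return the same atom.
\begin{itemize}
\item \emph{Within a block.} For $I=\{R(a,a)\}$, the witness homomorphism identifies $x$ and $y$. Yet $\{R(a,a)\}$ is a net-image element, because there is no strictly lower closure element: identifying two variables is not a unification of two atoms. The analogy with Item~\ref{lem:netimg(dec)} also breaks, since there it is the most general unifier $\sigma_2$, not the witness homomorphism, that cannot identify terms. Thus $g$ sends both values of the block $\{R(u,v)\}$ of $I^{*}$ to $a$.
\item \emph{Across blocks.} For $I=\{R(a,b),R(b,c)\}$, the two net-image elements share the value $b$. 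No closure element has $\{R(a,b),R(b,c)\}$ as an image, because merging unifies whole atoms; this is exactly why \cref{lem:img} requires a shared fact rather than a shared value.
\end{itemize}

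In both cases $g$ is not injective on values but is injective on facts, and fact-injectivity is all you need. If $g\circ h=g\circ h'$, then $h(A)=h'(A)$ for every atom $A$ of $p$, and hence $h=h'$. State fact-injectivity of $g$ as the key step and drop the value-injectivity plan.
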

\begin{proof}
  Let $\nFun_{\dbElm}\colon\denot{\qryElm}\to\SetN$ be defined by
  \[
    \nFun_{\dbElm}(\sqryElm)
    \defeq
    \card{\nimgFun[\qryElm](\sqryElm,\dbElm)}.
  \]
  For each ground $\sqryElm\in\denot{\qryElm}$,
  $\nFun_{\dbElm}(\sqryElm)\leq 1$.
  Moreover, the proof of \cref{lem:arbins2netimgcnt} shows that
  $\GSet[\nFun_{\dbElm}]$ consists exactly of the $\prec$-minimal elements
  of $\LambdaSet[\nFun_{\dbElm}]$.  Applying the construction in the proof
  of \cref{lem:netimgcnt2mulcan} to $\nFun_{\dbElm}$, we obtain a
  multicanonical instance $\dbElm[\qryElm][*]$ such that
  \[
    \card{\nimgFun[\qryElm](\sqryElm,\dbElm[\qryElm][*])}
    =
    \nFun_{\dbElm}(\sqryElm)
    =
    \card{\nimgFun[\qryElm](\sqryElm,\dbElm)}
  \]
  for every $\sqryElm\in\denot{\qryElm}$.  Hence, for every
  $\sqryElm\in\denot{\qryElm}$, choose a bijection
  \[
    \Gamma_{\sqryElm}\colon
    \nimgFun[\qryElm](\sqryElm,\dbElm[\qryElm][*])
    \to
    \nimgFun[\qryElm](\sqryElm,\dbElm).
  \]

  For every $\sqryElm\in\denot{\qryElm}$ and
  $\KSet\in\nimgFun[\qryElm](\sqryElm,\dbElm[\qryElm][*])$, put
  $\JSet_{\KSet}\defeq\Gamma_{\sqryElm}(\KSet)$, and choose a homomorphism
  $\homFun_{\sqryElm,\KSet}\in\HomSet(\sqryElm,\dbElm)$ with
  $\homFun_{\sqryElm,\KSet}(\sqryElm)=\JSet_{\KSet}$.  Since
  $\dbElm[\qryElm][*]$ is multicanonical, choose an isomorphism
  $\alpha_{\sqryElm,\KSet}\colon\sqryElm\to\KSet$ fixing the constants of
  $\con{\qryElm}$.
  Let $\piFun$ map $\adom(\dbElm[\qryElm][*])$ to $\adom(\dbElm)$ and fix
  every constant of $\con{\qryElm}$ in its domain.  On each net-image element
  $\KSet\in\nimgFun[\qryElm](\sqryElm,\dbElm[\qryElm][*])$, define
  \[
    \piFun(\alpha_{\sqryElm,\KSet}(t))
    \defeq
    \homFun_{\sqryElm,\KSet}(t)
  \]
  for every term $t$ of $\sqryElm$.

  \begin{claim}
  \label{clm:inscan(mulcan)}
    The instance $\dbElm[\qryElm][*]$ is multicanonical for $\qryElm$.
    Moreover, there are a $\qryElm$-preserving mapping $\piFun \colon
    \adom(\dbElm[\qryElm][*]) \to \adom(\dbElm)$, i.e., a mapping every constant of $\qryElm$ in its
    domain, and an injection $\iotaFun
    \colon \dbElm[\qryElm][*] \to \dbElm$ such that $\iotaFun(\atmElm) =
    \piFun(\atmElm)$, for all $\atmElm \in \dbElm[\qryElm][*]$.
  \end{claim}
  %
  \begin{proof}
    The instance $\dbElm[\qryElm][*]$ is multicanonical by the construction
    above.
    Every element of the active domain of $\dbElm[\qryElm][*]$ occurs in a
    constructed copy.  The isomorphism for that copy takes a term of its
    closure query to this element, so the displayed formula assigns its
    $\piFun$-image.  To see that the definition is consistent, suppose an
    element $v$ of
    $\dbElm[\qryElm][*]$ can be read both as
    $\alpha_{\sqryElm[1],\KSet[1]}(x_1)$ and as
    $\alpha_{\sqryElm[2],\KSet[2]}(x_2)$.  If $v\in\con{\qryElm}$, both
    definitions fix $v$.  Otherwise, $v$ is a fresh value and occurs in facts of
    $\KSet[1]$ and $\KSet[2]$, and these facts lie in the same non-ground
    component $\CSet$ of $\dbElm[\qryElm][*]$.  By multicanonicity, $\CSet$ is a
    net-image element.  Since $\CSet$ intersects both $\KSet[1]$ and
    $\KSet[2]$, Item~\ref{lem:netimg(dis)} of \cref{lem:netimg} forces
    $\sqryElm[1]=\sqryElm[2]$ and $\KSet[1]=\KSet[2]$.  Since
    $\alpha_{\sqryElm[1],\KSet[1]}$ is injective, $x_1=x_2$, and the two
    definitions agree.  Hence
    $\piFun$ is $\qryElm$-preserving.
    Define
    $\iotaFun\colon \dbElm[\qryElm][*]\to\dbElm$ by
    $\iotaFun(\atmElm)=\piFun(\atmElm)$ for every fact
    $\atmElm\in\dbElm[\qryElm][*]$.  Each fact lies in a constructed copy
    $\KSet$, which is a net-image element by the construction in
    \cref{lem:netimgcnt2mulcan}, and $\piFun$ maps its facts into
    $\JSet_{\KSet}\subseteq\dbElm$.

    The map $\iotaFun$ is injective.  Inside one net-image element
    $\KSet\in\nimgFun[\qryElm](\sqryElm,\dbElm[\qryElm][*])$,
    $\alpha_{\sqryElm,\KSet}$ identifies its facts with the atoms of
    $\sqryElm$.  Item~\ref{lem:netimg(bij)} of \cref{lem:netimg} says that
    $\homFun_{\sqryElm,\KSet}$ maps those atoms to distinct facts of
    $\JSet_{\KSet}$.  Across two different net-image elements,
    injectivity follows from Item~\ref{lem:netimg(dis)} of \cref{lem:netimg}.
    Indeed, suppose that facts from
    $\KSet[i]\in\nimgFun[\qryElm](\sqryElm[i],\dbElm[\qryElm][*])$, for
    $i\in\{1,2\}$, have the same $\piFun$-image.  Then the original net-image
    elements $\JSet_{\KSet[1]}$ and $\JSet_{\KSet[2]}$ intersect, and therefore
    $\sqryElm[1]=\sqryElm[2]$ and
    $\JSet_{\KSet[1]}=\JSet_{\KSet[2]}$.  Since
    $\Gamma_{\sqryElm[1]}$ is a bijection, this implies
    $\KSet[1]=\KSet[2]$, and we are back to the injectivity inside one
    net-image element.
  \end{proof}

  We next compare the net-image elements of each closure query in the original
  and in the constructed instance.

  \begin{claim}
  \label{clm:inscan(eql)}
    For every $\sqryElm \in \denot{\qryElm}$, there exists a bijection
    $\gammaFun[\sqryElm] \colon \nimgFun[\qryElm](\sqryElm, \dbElm[\qryElm][*])
    \to \nimgFun[\qryElm](\sqryElm, \dbElm)$.
  \end{claim}
  \begin{proof}
    Take $\gammaFun[\sqryElm]\defeq\Gamma_{\sqryElm}$.

  \end{proof}

  The next claim gives the one-sided preservation needed for arbitrary
  containing queries.

  \begin{claim}
  \label{clm:inscan(inq)}
    For every \BCQ $\pqryElm$ such that
    $\con{\pqryElm} \subseteq \con{\qryElm}$, there exists an injection
    \[
      \thetaFun[\pqryElm] \colon
      \HomSet(\pqryElm, \dbElm[\qryElm][*]) \to
      \HomSet(\pqryElm, \dbElm).
    \]
  \end{claim}
  \begin{proof}
    Let $\pqryElm$ be a \BCQ such that
    $\con{\pqryElm} \subseteq \con{\qryElm}$.  For
    $\homFun\in\HomSet(\pqryElm,\dbElm[\qryElm][*])$, define
    \[
      \thetaFun[\pqryElm](\homFun)\defeq\piFun\circ\homFun.
    \]
    Since $\piFun$ maps every fact of $\dbElm[\qryElm][*]$ to a fact of
    $\dbElm$ and fixes every constant of $\con{\pqryElm}$, the composition
    is a homomorphism from $\pqryElm$ to $\dbElm$.
    We show that $\thetaFun[\pqryElm]$ is injective.  Let
    $\homFun[1],\homFun[2]\in\HomSet(\pqryElm,\dbElm[\qryElm][*])$ be distinct.
    Since every term in the domain of a homomorphism occurs in a body atom,
    there is an atom $\atmElm\in\pqryElm$ whose image under
    $\homFun[1]$ differs from its image under $\homFun[2]$.  By the injectivity
    of $\iotaFun$ from
    \cref{clm:inscan(mulcan)}, we have
    \[
      \piFun(\homFun[1](\atmElm))
      =
      \iotaFun(\homFun[1](\atmElm))
      \neq
      \iotaFun(\homFun[2](\atmElm))
      =
      \piFun(\homFun[2](\atmElm)).
    \]
    Hence $\piFun\circ\homFun[1]\neq\piFun\circ\homFun[2]$, and
    $\thetaFun[\pqryElm]$ is injective.
  \end{proof}

  We can now prove the two assertions of the theorem.  For
  Item~\ref{thm:inscan(eql)} of \cref{thm:inscan}, let
  $\sqryElm[0]\in\comp{\qryElm}$ and put
  $\der{\sqryElm}[0]\defeq\denot{\sqryElm[0]}[\qryElm]$.  By
  Item~\ref{lem:netimg(hom)} of \cref{lem:netimg}, the number of homomorphisms
  from
  $\der{\sqryElm}[0]$ to an instance is a weighted sum, over all
  $\sqryElm'\preccurlyeq\der{\sqryElm}[0]$, of the cardinalities of
  the net-image sets of $\sqryElm'$.  By \cref{clm:inscan(eql)}, those
  cardinalities are the same in $\dbElm$ and in $\dbElm[\qryElm][*]$.  Hence
  \[
    \card{\HomSet(\der{\sqryElm}[0],\dbElm)}
    =
    \card!{\HomSet(\der{\sqryElm}[0],\dbElm[\qryElm][*])}.
  \]
  Multiplying these equalities over all components of $\qryElm$ and using
  \cref{prp:muc} gives
  \[
    \card{\HomSet(\qryElm,\dbElm)}
    =
    \card!{\HomSet(\qryElm,\dbElm[\qryElm][*])}.
  \]

  Finally, Item~\ref{thm:inscan(inq)} of \cref{thm:inscan} follows immediately
  from \cref{clm:inscan(inq)}, since an injection
  $\HomSet(\pqryElm,\dbElm[\qryElm][*])\to\HomSet(\pqryElm,\dbElm)$ implies the
  displayed inequality on cardinalities.
\end{proof}

\cref{thm:inscan} essentially allows us to look for query containment  (or for counterexamples) only amongst multicanonical instances. This is stated in the next theorem.

\begin{theorem}[Multicanonical Characterisation]
\label{thm:mulcanchr}
  For all {\BJUQ}s $\qryElm$ and {\BCQ}s $\pqryElm$, the following statements
  are equivalent:
  \begin{enumerate}[a)]
  \item\label{thm:mulcanchr(arbdbs)}
    $\qryElm \not\bsinc \pqryElm$;
  \item\label{thm:mulcanchr(mulcandbs)}
    there exists a multicanonical instance $\dbElm[][*]$ for $\qryElm$ such that
    $\qryElm[\bs][{\dbElm[][*]}] \not\subseteq \pqryElm[\bs][{\dbElm[][*]}]$.
  \end{enumerate}
\end{theorem}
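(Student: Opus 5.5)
The plan is to prove the two implications separately. For (b)$\Rightarrow$(a) nothing needs to be done: a multicanonical instance $\dbElm[][*]$ is a set instance. So if the answer of $\qryElm$ on $\dbElm[][*]$ is not contained in that of $\pqryElm$, then $\dbElm[][*]$ itself witnesses $\qryElm \not\bsinc \pqryElm$. Since both queries are Boolean, the answer multiplicities are just $\card{\HomSet(\qryElm,\dbElm[][*])}$ and $\card{\HomSet(\pqryElm,\dbElm[][*])}$, and non-containment means the former strictly exceeds the latter.

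For (a)$\Rightarrow$(b), start from a set instance $\dbElm$ with $\card{\HomSet(\qryElm,\dbElm)} > \card{\HomSet(\pqryElm,\dbElm)}$ and apply \cref{thm:inscan} to obtain $\dbElm[\qryElm][*]$. Item~\ref{thm:inscan(eql)} keeps the left-hand side unchanged. Item~\ref{thm:inscan(inq)} can only decrease the right-hand side, giving
\[
\card!{\HomSet(\qryElm,\dbElm[\qryElm][*])} = \card{\HomSet(\qryElm,\dbElm)} > \card{\HomSet(\pqryElm,\dbElm)} \geq \card!{\HomSet(\pqryElm,\dbElm[\qryElm][*])},
\]
which is exactly (b) with $\dbElm[][*] \defeq \dbElm[\qryElm][*]$.

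The only obstacle is the side condition $\con{\pqryElm}\subseteq\con{\qryElm}$ in Item~\ref{thm:inscan(inq)}, which I will discharge by a case split.

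\textbf{Case 1: some constant $\conElm$ of $\pqryElm$ does not occur in $\qryElm$.} Then $\pqryElm$ has no homomorphism into any instance whose constants all come from $\con{\qryElm}$. Every multicanonical instance built in \cref{lem:netimgcnt2mulcan} is of this kind: it is made of fresh values and constants of $\qryElm$. So its $\pqryElm$-count is $0$, and it suffices to produce a multicanonical instance on which $\qryElm$ has at least one homomorphism. Here I would take the net-image counting that assigns $1$ to every non-ground closure element and selects the ground components of $\qryElm$, then verify that this is a ground selection. If it is not, I would instead take the counting $\nFun[\dbElm]$ of \cref{lem:arbins2netimgcnt} from the original witness. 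That counting is valid, and \cref{thm:inscan}\ref{thm:inscan(eql)} gives the same positive $\qryElm$-count (positive because (a) forces $\card{\HomSet(\qryElm,\dbElm)}>0$). The $\pqryElm$-count stays $0$ because $\conElm$ is absent.

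\textbf{Case 2: $\con{\pqryElm}\subseteq\con{\qryElm}$.} The chain of inequalities above applies directly.

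Checking that the Case~1 construction really yields a multicanonical instance with zero $\pqryElm$-homomorphisms is the only step that needs care, and using the canonised instance from \cref{thm:inscan} is the safest route for it.
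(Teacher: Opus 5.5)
Your proof is correct, and its skeleton matches the paper's: the trivial direction, the case split on whether $\con{\pqryElm}\subseteq\con{\qryElm}$, and Case~2 via \cref{thm:inscan}.

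The routes differ only in Case~1. The paper takes a single canonical copy of $\qryElm$, with fresh values chosen outside $\con{\pqryElm}\setminus\con{\qryElm}$, and asserts that it is multicanonical. It never uses the witness $\dbElm$, since in this case non-containment holds outright.

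Your robust branch instead reuses the canonised instance $\dbElm[\qryElm][*]$ of the witness. Item~\ref{thm:inscan(eql)} of \cref{thm:inscan} has no hypothesis on constants, so the $\qryElm$-count stays positive. The construction of \cref{lem:netimgcnt2mulcan} uses only constants of $\qryElm$ and fresh values, so the $\pqryElm$-count is $0$.

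What your route buys:
\begin{enumerate}
\item Multicanonicity comes for free from lemmas already proved. For the canonical copy one really has to check \cref{def:mulcandbs}: each component copy is a net image of its closure representative, non-isomorphic images are absorbed by lower elements, and ground atoms are covered. The paper leaves this as ``clearly''.
\item The case split becomes inessential. In Case~1, $\card{\HomSet(\pqryElm,\dbElm[\qryElm][*])}=0\leq\card{\HomSet(\pqryElm,\dbElm)}$ holds trivially, so your Case~2 chain goes through verbatim.
\end{enumerate}
The paper's route, in exchange, is witness-free and exhibits a concrete small instance.

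Two minor remarks.
\begin{enumerate}
\item \cref{thm:inscan} only asserts the existence of some multicanonical instance. You should say explicitly that you use its construction with fresh values chosen different from $\conElm$; this is the same caveat the paper makes.
\item Your first attempt is dispensable and can fail, because selecting the ground components of $\qryElm$ need not give a ground selection. Take components $\{T(x,a),R(x,x)\}$, $\{T(b,a)\}$ and $\{R(b,b)\}$. Merging yields the ground closure element $\{T(b,a),R(b,b)\}$. All its atoms are selected, and it lies strictly below both singletons, so the minimality condition of \cref{def:grnsel} fails.
\end{enumerate}
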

\begin{proof}
  The implication from~\ref{thm:mulcanchr(mulcandbs)}
  to~\ref{thm:mulcanchr(arbdbs)} is immediate, since multicanonical instances
  are particular instances.
  Therefore, we can just focus on the converse direction.
  Assume $\qryElm \not\bsinc \pqryElm$.
  Two cases may now arise.
  \begin{itemize}
  \item
    \textbf{[$\con{\pqryElm} \not\subseteq \con{\qryElm}$]:}
    Let $\conElm \in \con{\pqryElm} \setminus \con{\qryElm}$ be a constant not
    occurring in $\qryElm$.
    In addition, let $\dbElm[][*]$ be an instance obtained from a single
    canonical copy of $\qryElm$, choosing all fresh values outside
    $\con{\pqryElm} \setminus \con{\qryElm}$.
    Clearly, $\dbElm[][*]$ is multicanonical for $\qryElm$, the set
    $\HomSet(\qryElm, \dbElm[][*])$ is non-empty, and $\conElm \notin
    \adom(\dbElm[][*])$.
    Since every homomorphism from $\pqryElm$ must fix $\con{\pqryElm}$ and
    $\conElm$ occurs in $\pqryElm$, there are no homomorphisms from $\pqryElm$
    to $\dbElm[][*]$.
    Hence $\card{\HomSet(\qryElm, \dbElm[][*])} > \card{\HomSet(\pqryElm,
    \dbElm[][*])} = 0$, which proves $\dbElm[][*]$ to be the required
    multicanonical witness to bag-set non-inclusion.
  \item
    \textbf{[$\con{\pqryElm} \subseteq \con{\qryElm}$]:}
    Since both queries are Boolean, there is an instance $\dbElm$ such that
    $\card{\HomSet(\qryElm, \dbElm)} > \card{\HomSet(\pqryElm, \dbElm)}$.
    Now, apply~\cref{thm:inscan} to $\qryElm$ and $\dbElm$ and let $\dbElm[][*]
    \defeq \dbElm[\qryElm][*]$ be the resulting multicanonical instance.
    By Item~\ref{thm:inscan(eql)} of \cref{thm:inscan}, the number of
    homomorphisms of $\qryElm$ is preserved, \ie, $\card!{\HomSet(\qryElm,
    \dbElm[][*])} = \card{\HomSet(\qryElm, \dbElm)}$.
    By Item~\ref{thm:inscan(inq)} of the same theorem, the number of
    homomorphisms of $\pqryElm$ cannot increase, \ie, $\card!{\HomSet(\pqryElm,
    \dbElm[][*])} \leq \card{\HomSet(\pqryElm, \dbElm)}$.
    Therefore, $\card!{\HomSet(\qryElm, \dbElm[][*])} > \card!{\HomSet(\pqryElm,
    \dbElm[\qryElm][*])}$, which proves $\dbElm[][*]$ to be the required
    multicanonical witness to bag-set non-inclusion.
    \qedhere
  \end{itemize}
\end{proof}




\subsection{Counting Homomorphisms}
\label{sec:commul;sub:cnthom}

Our objective is to characterise the containment problem via Diophantine
inequalities and for this we would want to characterise the multiplicity of a
query on a multicanonical instance as a polynomial over the sizes of net-images.
In this section we will establish a counting formula for the multiplicity of an
arbitrary query on the multicanonical instance for a query $\qryElm$. First, the
next definition defines a query constructed as (actually isomorphic to) the
conjunction of all the non-ground elements of the closure (called the
multicanonical expansion), that can also be enhanced with all ground atoms of a
query selection.

\begin{definition}[Multicanonical Expansion]
\label{def:mce}
  A \emph{multicanonical expansion} of a \BCQ $\qryElm$ is a \BCQ
  $\mce{\qryElm}$ for which there exists a bijective function $\iotaFun \colon
  \denot{\qryElm}[\top] \to \comp{\mce{\qryElm}}$ such that $\rqryElm \approx
  \iotaFun(\rqryElm)$, for all {\BCQ}s $\rqryElm \in \denot{\qryElm}[\top]$.
  A \emph{multicanonical expansion} of $\qryElm$ \wrt a ground selection $\GSet$
  for $\qryElm$ is the \BCQ $\mce{\qryElm, \GSet} \defeq \mce{\qryElm} \cup
  \PhiSet[\GSet]$.
\end{definition}

\begin{example}[Multicanonical expansions]
\label{exm:mce}
The expansion separates the non-ground closure from the chosen ground part.  Let
$\bar C,\bar D,\bar U_0,\bar U,\bar W_0,\bar W$ be fresh renamed copies of the
six non-ground closure elements of \cref{exm:muc}.  Their union is the common
non-ground expansion $\mce{q_{\mathsf{6}}}$.  The two ground selections of
\cref{exm:grnsel} then give
$\mce{q_{\mathsf{6}},G_1}=\mce{q_{\mathsf{6}}}\cup\{R(c,c),R(d,d)\}$ and
$\mce{q_{\mathsf{6}},G_2}=\mce{q_{\mathsf{6}}}\cup\{R(c,d),R(d,c)\}$.
Thus the two expansions have the same non-ground facts and differ only on their
ground facts.
\end{example}

In a way reminiscent to set containment, in order to characterise the number of
homomorphisms a query $\pqryElm$ has on a multicanonical instance for $\qryElm$
we will use the number and structure of homomorphisms $\pqryElm$ has to the
multicanonical expansion of $\qryElm$. To define this further we need some
auxiliary notions below. In particular we need the notions of a subquery of a
query, a selection function that maps images (of subqueries) of components back
to their components, a component partitioning of a query that defines components
on the constants fixed by a given homomorphism and the notion of isomorphic
embeddings, that is, the isomoprhisms that map closure elements to their
net-image elements.

\begin{definition}
Given a $\BJUQ$ $\qryElm$ we define the following:
\begin{itemize}
\item
  A \emph{subquery set} is the set of all non-empty subsets of a query body:
  $\sub{\QSet} \defeq \set{ \emptyset \neq \sqryElm \subseteq \qryElm }{ \qryElm
  \in \QSet }$.
\item
  A \emph{selection function} maps a subquery of a component $\wqryElm$ of the
  multicanonical expansion to the query in the closure that corresponds to that
  component $\wqryElm$ (per \cref{def:mce}): $\gammaFun \colon
  \sub{\comp{\mce{\qryElm}}} \to \denot{\qryElm}[\top]$ such that
  $\gammaFun(\vqryElm) = \iotaFun^{-1}(\wqryElm)$, for all $\wqryElm \in
  \comp{\mce{\qryElm}}$ and $\emptyset \neq \vqryElm \subseteq \wqryElm$, where
  $\iotaFun$ is the bijection of \cref{def:mce}.
\item
  Given a homomorphism $\homFun$ from a query $\pqryElm$ to $\qryElm$ (or the
  $\mce{\qryElm}$), let $\der{\homFun}$ be the most general homomorphism that
  only grounds the variables that are ground by $\homFun$ and preserves
  everything else as it is; that is, $\der{\homFun}(\varElm) =
  \homFun(\varElm)$, if $\homFun(\varElm) \in \ConSet$, and
  $\der{\homFun}(\varElm) = \varElm$, otherwise.
  By grounding a set of terms of $\pqryElm$ we can consider the components
  $\comp!{\der{\homFun}(\pqryElm)}$.
  Then the \emph{$\homFun$-component set} of a query $\pqryElm$ is the set of
  subsets of $\pqryElm$ consisting of the full preimages of the components of
  $\der{\homFun}(\pqryElm)$: $\comp[][\homFun]{\pqryElm} \defeq \set!{ \set!{
  \atmElm \in \pqryElm }{ \der{\homFun}(\atmElm) \in \tqryElm } }{ \tqryElm \in
  \comp!{\der{\homFun}(\pqryElm)} }$.
  Intuitively, $\comp[][\homFun]{\pqryElm}$ partitions $\pqryElm$ into
  subqueries that, mapped through $\der{\homFun}$, share only constants.
\item
  Given a ground selection $\GSet$, we also set the ground and non-ground parts
  of an $\homFun$-component set respectively: $\comp[\bot][\homFun]{\pqryElm}
  \defeq \set{ \wqryElm \in \comp[][\homFun]{\pqryElm} }{ \homFun(\wqryElm)
  \subseteq \PhiSet[\GSet] }$ and $\comp[\top][\homFun]{\pqryElm} \defeq
  \comp[][\homFun]{\pqryElm} \setminus \comp[\bot][\homFun]{\pqryElm}$.
  Given a homomorphism $\homFun$ from $\pqryElm$ to the multicanonical expansion
  of $\qryElm$ \wrt $\GSet$, \ie, $\homFun \in \HomSet(\pqryElm, \mce{\qryElm,
  \GSet})$, observe that for all $\tqryElm \in \comp[\top][\homFun]{\pqryElm}$,
  $\homFun(\tqryElm) \in \sub{\comp{\mce{\qryElm}}}$.
\item
  An \emph{isomorphic embedding} of $\qryElm$ into an instance $\dbElm$ is an
  indexed sequence of isomorphisms $\alphaFun = \{ \alphaFun[\sqryElm][\JSet]
  \in \IsoSet(\iotaFun(\sqryElm), \JSet) \}_{\sqryElm \in
  \denot{\qryElm}[\top]}^{ \JSet \in \nimgFun[\qryElm](\sqryElm, \dbElm)}$, that
  is, all isomorphisms that map an element of the non-ground closure to a
  net-image element and vice-versa.
  Notice that given $\mce{\qryElm}$ and bijection $\iotaFun$ as in
  \cref{def:mce}, for all $\sqryElm \in \denot{\qryElm}[\top]$ and $\tqryElm
  \subseteq \iotaFun(\sqryElm)$ it holds that $\alphaFun[\sqryElm][\JSet] \in
  \HomSet(\tqryElm, \JSet)$.
\end{itemize}
\end{definition}

Let $\qryElm$ be a \BJUQ and $\dbElm$ an instance. By
\cref{lem:arbins2netimgcnt}, $\dbElm$ induces a net-image counting
$\nFun[\dbElm]$ for $\qryElm$, with ground selection
$\GSet[{\nFun[\dbElm]}]$ as in \cref{def:netimgcnt}. We write
$\GSet[\dbElm]$ for this ground selection when $\qryElm$ is understood.

Let $\pqryElm$ be a \BCQ.
A homomorphism $\homFun$ from $\pqryElm$ to
$\mce{\qryElm,\GSet[\dbElm]}$ records the closure element reached by each
non-ground $\homFun$-component. The same $\homFun$ can give rise to several
homomorphisms into $\dbElm$, which its lens groups together. An isomorphic
embedding $\alphaFun$ supplies an isomorphism from the corresponding expansion
copy to each net-image element of that closure element. The lens of $\homFun$
consists of the homomorphisms that keep the images of its ground
$\homFun$-components and, on each non-ground component, follow $\homFun$ with
the isomorphism supplied by $\alphaFun$ to a net-image element of the closure
element it reaches.

\begin{definition}[Lens]
\label{def:len}
  Given a \BJUQ $\qryElm$, a \BCQ $\pqryElm$, and an instance $\dbElm$, a
  \emph{lens} of a homomorphism $\homFun \in \HomSet(\pqryElm, \mce{\qryElm,
  \GSet[\dbElm]})$ over $\dbElm$ \wrt an isomorphic embedding $\alphaFun$ of
  $\qryElm$ into $\dbElm$ is the set $\LSet[\qryElm][\pqryElm](\dbElm, \homFun,
  \alphaFun)$ of all homomorphisms $\trn{\homFun} \in \HomSet(\pqryElm, \dbElm)$
  satisfying the following conditions:
  \begin{enumerate}[1)]
  \item\label{def:len(grn)}
    for all components $\wqryElm \in \comp[\bot][\homFun]{\pqryElm}$, it holds
    that $\trn{\homFun}(\wqryElm) = \homFun(\wqryElm)$;
  \item\label{def:len(nongrn)}
    for all components $\wqryElm \in \comp[\top][\homFun]{\pqryElm}$, there
    exists a net-image element $\JSet \in
    \nimgFun[\qryElm](\gammaFun(\homFun(\wqryElm)), \dbElm)$ such that
    $\trn{\homFun}(\varElm) =
    \alphaFun[{\gammaFun(\homFun(\wqryElm))}][\JSet](\homFun(\varElm))$, for all
    $\varElm \in \ter{\wqryElm}$.
  \end{enumerate}
\end{definition}

\begin{example}[A complete lens]
\label{exm:len}
Consider again $\qryElm[6]$ and $p_1$ of \cref{exm:4}.
The query $p_1$ has the ground atom $R(c,d)$ and two
$U$-components, on variables $(w_1,t_1)$ and $(w_2,t_2)$.
Let $\dbElm$ be the instance of \cref{exm:img}. Its selected ground copy is
$E^*=\{R(c,d),R(d,c)\}$, corresponding to the ground selection
$G_2=\{E\}$ with $E=\{R(c,d),R(d,c)\}$.
The relevant net-image elements of $\dbElm$ are the two $U$-copies
$U_i=\{S(\lambda_i,\mu_i,e),S(\mu_i,\lambda_i,e)\}$ for $i=1,2$
and the single $U_0$-copy $U^0_1=\{S(\nu,\nu,e)\}$.

Consider a homomorphism $h:p_1\to\mce{\qryElm[6],G_2}$ that
fixes $R(c,d)$, maps the first $U$-component isomorphically to the
copy $\bar U$ of $U$ in the expansion, and maps the second to the
copy $\bar U_0$ of $U_0$ by identifying $w_2$ and $t_2$.
Fix an isomorphic embedding $\alpha$ with isomorphisms
$\bar U\cong U_i$ for $i=1,2$ and $\bar U_0\cong U^0_1$.

In the lens $L^{p_1}_{\qryElm[6]}(\dbElm,h,\alpha)$, the ground atom
is fixed. The first $U$-component can use either of the two $U$
net-image elements of $\dbElm$, while the second must use $U^0_1$.
Thus the lens has two homomorphisms, which differ only in the $U$-copy
used by the first component.
\end{example}

The next theorem proves that lenses actually fully partition the set of homomorphisms from $\pqryElm$ to a multicanonical instance.

\begin{theorem}[Lens Partitioning]
\label{thm:lenpar}
  For each \BJUQ $\qryElm$, \BCQ $\pqryElm$, multicanonical instance $\dbElm$
  for $\qryElm$, and isomorphic embedding $\alphaFun$ of $\qryElm$ into
  $\dbElm$, the set $\set!{ \LSet[\qryElm][\pqryElm](\dbElm, \homFun, \alphaFun)
  }{ \homFun \in \HomSet(\pqryElm, \mce{\qryElm, \GSet[\dbElm]}) }$ of lenses is
  a partition of the set of homomorphisms $\HomSet(\pqryElm, \dbElm)$.
\end{theorem}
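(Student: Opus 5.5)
To prove that the lenses partition $\HomSet(\pqryElm, \dbElm)$, I will show three things in order: every lens is contained in $\HomSet(\pqryElm, \dbElm)$ (immediate from \cref{def:len}), every homomorphism $\trn{\homFun} \in \HomSet(\pqryElm, \dbElm)$ lies in some lens, and no $\trn{\homFun}$ lies in the lenses of two distinct $\homFun, \homFun' \in \HomSet(\pqryElm, \mce{\qryElm, \GSet[\dbElm]})$. Empty lenses, if they arise, are simply discarded, as is customary for partitions. Throughout, I use that $\dbElm$ is multicanonical. Its ground facts form exactly $\PhiSet[{\GSet[\dbElm]}]$, by Item~\ref{def:mulcandbs(atm)} together with \cref{lem:arbins2netimgcnt}. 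Each non-ground component of $\dbElm$ is a net-image element $\JSet$ of a unique non-ground $\sqryElm$ (Item~\ref{def:mulcandbs(cov)}, Item~\ref{lem:netimg(dis)} of \cref{lem:netimg}). Finally, $\alphaFun[\sqryElm][\JSet]$ is an isomorphism $\iotaFun(\sqryElm) \to \JSet$ fixing constants.

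\textbf{Covering.} Given $\trn{\homFun}$, I build $\homFun$ as follows. Consider the partial grounding induced by $\trn{\homFun}$: variables sent to constants are grounded, and the remaining variables are sent to fresh values. Decompose $\pqryElm$ into the preimages of the components of $\dbElm$. Each resulting block $\wqryElm$ maps either into ground facts, hence into $\PhiSet[{\GSet[\dbElm]}]$, or into a single non-ground component $\JSet \in \nimgFun[\qryElm](\sqryElm, \dbElm)$, because values in $\ValSet \setminus \VarSet$ link facts only within one component. On a ground block I set $\homFun \defeq \trn{\homFun}$. On a non-ground block I set $\homFun \defeq (\alphaFun[\sqryElm][\JSet])^{-1} \circ \trn{\homFun}$, which lands in $\iotaFun(\sqryElm) \subseteq \mce{\qryElm}$. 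Overlaps between blocks occur only on constants, which both maps fix, so $\homFun$ is well defined and is a homomorphism into $\mce{\qryElm, \GSet[\dbElm]}$.

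The key verification, and the step I expect to be the main obstacle, is that the $\homFun$-component set $\comp[][\homFun]{\pqryElm}$ coincides with the block decomposition I used, and that its ground/non-ground split matches. Ground-versus-constant behaviour is identical for $\homFun$ and $\trn{\homFun}$, since $\alphaFun$ is a bijection fixing constants. Hence $\der{\homFun} = \der{\trn{\homFun}}$, and the components of $\der{\homFun}(\pqryElm)$ are therefore the same. I still need to argue that each such component maps, under $\trn{\homFun}$, into exactly one data component. This holds because connectivity goes through non-constant terms, which land in non-constant values, and by \cref{thm:juqmucchr} every fact of a non-ground copy contains all of its values. Once this is established, Items~\ref{def:len(grn)} and~\ref{def:len(nongrn)} of \cref{def:len} hold with the witnessing $\JSet$, so $\trn{\homFun} \in \LSet[\qryElm][\pqryElm](\dbElm, \homFun, \alphaFun)$.

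\textbf{Disjointness.} Suppose $\trn{\homFun}$ lies in the lenses of both $\homFun$ and $\homFun'$. By the same grounding argument, both $\der{\homFun}$ and $\der{\homFun'}$ equal $\der{\trn{\homFun}}$, so the two component sets, and their ground and non-ground parts, coincide. On a ground component, Item~\ref{def:len(grn)} gives $\homFun = \trn{\homFun} = \homFun'$ termwise, because $\homFun$ is determined by the images of the atoms. On a non-ground component $\wqryElm$, the chosen net-image element $\JSet$ is forced, since it must be the data component containing $\trn{\homFun}(\wqryElm)$; by Item~\ref{lem:netimg(dis)} of \cref{lem:netimg} this also forces $\gammaFun(\homFun(\wqryElm)) = \gammaFun(\homFun'(\wqryElm))$. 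Injectivity of $\alphaFun[\sqryElm][\JSet]$ then yields $\homFun(\varElm) = \homFun'(\varElm)$ for every term of $\wqryElm$. Hence $\homFun = \homFun'$, which completes the argument.
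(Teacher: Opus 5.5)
Your proof is correct and follows essentially the paper's route. You recover $\homFun$ from $\trn{\homFun}$ by keeping the selected ground facts and pulling every other image back through the inverse of the fixed isomorphism $\alphaFun[\sqryElm][\JSet]$ of the unique net-image element it lands in. You then get disjointness from Item~\ref{lem:netimg(dis)} of \cref{lem:netimg} together with injectivity of these isomorphisms; the paper works atom by atom where you work block by block.

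One claim needs weakening: the $\homFun$-components do not in general coincide with the preimages of the components of $\dbElm$, they only refine them. For instance, the two $U$-components of $p_1$ in \cref{exm:4} (on $w_1,t_1$ and on $w_2,t_2$) can both be sent into the same copy $U_1$ of \cref{exm:img}, which gives one block but two $\homFun$-components. Refinement is exactly what your connectivity argument proves, and it is all that Items~\ref{def:len(grn)} and~\ref{def:len(nongrn)} of \cref{def:len} need.
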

\begin{proof}
  Every fact of $\dbElm$ belongs to a unique net-image element.
  A fact containing a value outside
  $\ConSet$ lies in a non-ground component of $\dbElm$, which is
  a net-image element by Item~\ref{def:mulcandbs(cov)} of
  \cref{def:mulcandbs}. If a fact contains only constants,
  Item~\ref{def:mulcandbs(atm)} gives an element
  $\sqryElm\in\denot{\qryElm}$ containing it and having a net-image
  element. This $\sqryElm$ is ground, since every atom of a non-ground
  element of $\denot{\qryElm}$ contains a variable by
  \cref{thm:juqmucchr}. By Item~\ref{def:mulcandbs(iso)}, its net-image
  element is $\sqryElm$ itself. Hence
  $\sqryElm\in\GSet[\dbElm]$, and the fact belongs to
  $\PhiSet[{\GSet[\dbElm]}]$. An isomorphism also maps every atom of a
  non-ground closure element to a fact containing a value outside
  $\ConSet$. Uniqueness in both cases follows from
  Item~\ref{lem:netimg(dis)} of \cref{lem:netimg}.

  Let $\trn{\homFun}\in\HomSet(\pqryElm,\dbElm)$. For each atom $A$ of
  $\pqryElm$, let $(s_A,J_A)$ be the unique pair such that
  $J_A\in\nimgFun[\qryElm](s_A,\dbElm)$ and
  $\trn{\homFun}(A)\in J_A$. If $s_A$ is ground, set
  $\homFun(A)=\trn{\homFun}(A)$. Otherwise, let $\homFun(A)$ be the
  unique atom of $\iotaFun(s_A)$ mapped to $\trn{\homFun}(A)$ by
  $\alphaFun[{s_A}][J_A]$.

  These atom images agree on shared variables. If a variable $\varElm$
  occurs in atoms $A$ and $B$ and $\trn{\homFun}(\varElm)$ is a constant,
  the inverse isomorphisms fix it, and the ground case leaves it
  unchanged. Otherwise, $\trn{\homFun}(A)$ and
  $\trn{\homFun}(B)$ share the value $\trn{\homFun}(\varElm)$ and lie in
  the same non-ground component of $\dbElm$. They therefore have the
  same pair $(s_A,J_A)=(s_B,J_B)$ and use the same inverse isomorphism.
  The constants of $\pqryElm$ are fixed as well. Thus these atom images
  define a homomorphism
  $\homFun\in\HomSet(\pqryElm,\mce{\qryElm,\GSet[\dbElm]})$.

  We check that $\trn{\homFun}$ belongs to the lens of $\homFun$.
  For $T\in\comp[\bot][\homFun]{\pqryElm}$, the construction gives
  $\trn{\homFun}(T)=\homFun(T)$, as required by
  Item~\ref{def:len(grn)} of \cref{def:len}. Now let
  $T\in\comp[\top][\homFun]{\pqryElm}$ and choose an atom $A\in T$.
  Atoms of $T$ are linked through variables that $\homFun$ does not
  map to constants. The isomorphisms and their inverses fix constants,
  so $\trn{\homFun}$ maps each such variable to a value outside
  $\ConSet$. The two facts at each link share that value. Following
  these links, all atoms of $T$ use the same non-ground component of
  $\dbElm$, and hence the same $J_A$. Their images under $\homFun$ lie in
  $\iotaFun(s_A)$, giving $\gammaFun(\homFun(T))=s_A$. For every
  variable $\varElm$ of $T$, the construction gives
  $\trn{\homFun}(\varElm)=
  \alphaFun[{s_A}][J_A](\homFun(\varElm))$. The equality also holds
  for the constants of $T$, which all these maps fix. This is
  Item~\ref{def:len(nongrn)} of \cref{def:len}. Hence the lenses cover
  $\HomSet(\pqryElm,\dbElm)$.

  Finally, suppose $\trn{\homFun}$ belongs to the lenses of
  $\homFun_1$ and $\homFun_2$. For each atom $A$ of $\pqryElm$, the
  fact $\trn{\homFun}(A)$ determines $(s_A,J_A)$. A ground net-image
  element contains only constants, while every fact of a non-ground
  net-image element contains a value outside $\ConSet$. Thus the fact
  also determines which case of \cref{def:len} applies. If $s_A$ is
  ground, the ground condition gives
  $\homFun_1(A)=\trn{\homFun}(A)=\homFun_2(A)$. Otherwise both use
  the non-ground condition. Its net-image element and closure query
  must be $(s_A,J_A)$ by uniqueness, so both use the fixed isomorphism
  $\alphaFun[{s_A}][J_A]$. Its inverse determines the same atom in
  $\iotaFun(s_A)$. Thus
  $\homFun_1(A)=\homFun_2(A)$ for every atom $A$, and
  $\homFun_1=\homFun_2$. The lenses are pairwise disjoint.
\end{proof}

By \cref{thm:lenpar}, counting homomorphisms into a multicanonical instance
amounts to summing the sizes of its lenses. The next theorem gives each lens
size as a product over the non-ground $\homFun$-components, with one factor
equal to the number of net-image elements of the closure element reached by
that component.

\begin{theorem}[Lens Size]
\label{thm:lensiz}
  For each \BJUQ $\qryElm$, \BCQ $\pqryElm$, multicanonical instance $\dbElm$
  for $\qryElm$, homomorphism $\homFun \in \HomSet(\pqryElm, \mce{\qryElm,
  \GSet[\dbElm]})$, and isomorphic embedding $\alphaFun$ of $\qryElm$ into
  $\dbElm$, it holds that $\card{\LSet[\qryElm][\pqryElm](\dbElm, \homFun,
  \alphaFun)} = \prod_{\wqryElm \in \comp[\top][\homFun]{\pqryElm}}
  \card{\nimgFun[\qryElm](\gammaFun(\homFun(\wqryElm)), \dbElm)}$.
\end{theorem}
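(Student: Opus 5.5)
The plan is to build an explicit bijection between the lens $\LSet[\qryElm][\pqryElm](\dbElm,\homFun,\alphaFun)$ and the product $\prod_{\wqryElm\in\comp[\top][\homFun]{\pqryElm}}\nimgFun[\qryElm](\gammaFun(\homFun(\wqryElm)),\dbElm)$. In one direction, a choice function $\JSet_{\bullet}$ assigning to each non-ground $\homFun$-component $\wqryElm$ a net-image element $\JSet_{\wqryElm}$ of $\gammaFun(\homFun(\wqryElm))$ should yield a mapping $\trn{\homFun}$. On terms of ground components, $\trn{\homFun}$ agrees with $\homFun$. On terms $\varElm$ of a non-ground component $\wqryElm$, it is $\alphaFun[{\gammaFun(\homFun(\wqryElm))}][{\JSet_{\wqryElm}}](\homFun(\varElm))$. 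Conversely, by Item~\ref{def:len(nongrn)} of \cref{def:len}, each lens element determines, for every non-ground component, some net-image element, and we map it to that tuple.

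First I will check that $\trn{\homFun}$ is well defined. The $\homFun$-components partition the atoms of $\pqryElm$, but a term may be shared by atoms lying in different components. By construction of $\der{\homFun}$, any such shared term is sent by $\homFun$ to a constant. Every $\alphaFun$ fixes constants, so all definitions agree on it. Ground components need $\homFun(\wqryElm)\subseteq\PhiSet[{\GSet[\dbElm]}]\subseteq\dbElm$; this inclusion holds because the ground net-image elements of a multicanonical instance are the selected ground queries, as argued in the proof of \cref{thm:lenpar}. For a non-ground component, $\homFun(\wqryElm)\subseteq\iotaFun(\gammaFun(\homFun(\wqryElm)))$, and $\alphaFun$ is an isomorphism onto $\JSet_{\wqryElm}\subseteq\dbElm$. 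Hence $\trn{\homFun}$ is a homomorphism, and it satisfies both conditions of \cref{def:len}. One point needs care here: each non-ground $\wqryElm$ must map inside a single component of $\mce{\qryElm}$ for $\gammaFun$ to apply. This holds because atoms of $\wqryElm$ are linked by variables not sent to constants, so their images share a variable of the expansion.

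Next I will show that the map from choices to lens elements is injective. This is the step where I expect the main obstacle, since two distinct choices must give distinct homomorphisms. Suppose $\JSet_{\wqryElm}\neq\JSet'_{\wqryElm}$ for some $\wqryElm$. Pick an atom $\atmElm\in\wqryElm$. Its image $\alphaFun[\sqryElm][\JSet](\homFun(\atmElm))$ lies in $\JSet_{\wqryElm}$, and the alternative image lies in $\JSet'_{\wqryElm}$. By Item~\ref{lem:netimg(dis)} of \cref{lem:netimg}, distinct net-image elements are disjoint, so these facts differ and the homomorphisms differ. Surjectivity onto the lens is immediate from Item~\ref{def:len(nongrn)}. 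To fix the assignment of a tuple to a lens element, I must show the net-image element witnessing that condition is unique. The same disjointness gives this: $\trn{\homFun}(\atmElm)$ lies in exactly one net-image element.

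Combining these steps, the lens is in bijection with the stated product of net-image sets. Taking cardinalities gives the formula of the theorem, with the empty product equal to $1$ when there are no non-ground components.
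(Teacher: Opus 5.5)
Your proposal is correct and follows essentially the same route as the paper. Both build an explicit bijection between lens elements and choices of one net-image element per non-ground $h$-component, get compatibility on shared variables because $h$ sends them to constants, and get uniqueness of the chosen elements from the disjointness in Item~\ref{lem:netimg(dis)} of \cref{lem:netimg}. One detail the paper states and you leave implicit: for surjectivity, Item~\ref{def:len(grn)} of \cref{def:len} also fixes a lens element pointwise on the ground $h$-components, because each of them has a single ground atom as its image.
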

\begin{proof}
  Let $\homFun$ be any homomorphism from $\pqryElm$ to
  $\mce{\qryElm,\GSet[\dbElm]}$.
  By \cref{def:len}, each homomorphism in the lens of $\homFun$ uses, for every
  $T\in\comp[\top][\homFun]{\pqryElm}$, a net-image element
  $J_T\in\nimgFun[\qryElm](\gammaFun(\homFun(T)),\dbElm)$.
  This element is unique: an atom of $T$ maps to a fact in $J_T$, and
  distinct net-image elements are disjoint by
  Item~\ref{lem:netimg(dis)} of \cref{lem:netimg}.
  A ground $\homFun$-component has a single ground atom as its image,
  so Item~\ref{def:len(grn)} of \cref{def:len} fixes the image of every
  atom in it.

  Conversely, choose one such $J_T$ for every non-ground
  $\homFun$-component $T$. On $T$, apply the fixed isomorphism
  $\alphaFun[{\gammaFun(\homFun(T))}][J_T]$ to the image under $\homFun$;
  on each ground component, use $\homFun$. A variable shared by two different
  $\homFun$-components is mapped by $\homFun$ to a constant, which the
  isomorphisms fix. The maps therefore agree on shared variables.
  The non-ground components map into their chosen net-image elements,
  and the ground components map to the selected ground facts of
  $\dbElm$. Thus they define a homomorphism in the lens.

  Since $\alphaFun$ is fixed, each choice of the $J_T$ gives exactly
  one homomorphism in the lens, and every homomorphism in the lens
  gives one choice for each $T$. Taking cardinalities gives the stated
  product.
\end{proof}

Bringing the two previous theorems together we can now get the multiplicity of $\pqryElm$ on a multicanonical instance by summing the lenses for all homomorphisms $\homFun$ that $\pqryElm$ has on $\mce{\qryElm,G_I}$.

\begin{theorem}[Homomorphism Counting]
\label{thm:homcnt}
  For each \BJUQ $\qryElm$, \BCQ $\pqryElm$, and multicanonical instance
  $\dbElm$ for $\qryElm$, it holds that $\card{\HomSet(\pqryElm, \dbElm)} \!=\!
  \sum_{\homFun \in \HomSet(\pqryElm, \mce{\qryElm, \GSet[\dbElm]})} \!
  \prod_{\wqryElm \in \comp[\top][\homFun]{\pqryElm}} \!
  \card{\nimgFun[\qryElm](\gammaFun(\homFun(\wqryElm)), \dbElm)}$.
\end{theorem}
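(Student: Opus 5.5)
The plan is to combine \cref{thm:lenpar} with \cref{thm:lensiz}, after checking that their common hypotheses can be met. Both theorems are stated relative to a fixed isomorphic embedding $\alphaFun$ of $\qryElm$ into $\dbElm$. The first step is therefore to show that such an embedding exists whenever $\dbElm$ is multicanonical. By Item~\ref{def:mulcandbs(iso)} of \cref{def:mulcandbs}, every net-image element $\JSet \in \nimgFun[\qryElm](\sqryElm, \dbElm)$ of a non-ground $\sqryElm \in \denot{\qryElm}[\top]$ satisfies $\JSet \approx \sqryElm$. Since $\iotaFun(\sqryElm) \approx \sqryElm$ by \cref{def:mce}, composing the two isomorphisms gives an element of $\IsoSet(\iotaFun(\sqryElm), \JSet)$. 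There are finitely many such pairs $(\sqryElm, \JSet)$, so we choose one isomorphism for each and collect these choices into $\alphaFun$. We also need the isomorphisms to fix the constants, which is built into the notion of isomorphism used in the paper.

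With $\alphaFun$ fixed, \cref{thm:lenpar} says that the lenses $\LSet[\qryElm][\pqryElm](\dbElm, \homFun, \alphaFun)$, for $\homFun \in \HomSet(\pqryElm, \mce{\qryElm, \GSet[\dbElm]})$, partition $\HomSet(\pqryElm, \dbElm)$. Some care is needed with the indexing, because a partition is a set of blocks, and two distinct $\homFun$ could in principle index the same lens, or an empty one. Disjointness of distinct lenses holds in the strong form proved in \cref{thm:lenpar}: any $\trn{\homFun}$ belonging to the lenses of $\homFun_1$ and $\homFun_2$ forces $\homFun_1 = \homFun_2$. Hence the indexed family is pairwise disjoint as a family. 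Lenses that happen to be empty contribute zero to the count, and \cref{thm:lensiz} reflects this, since the product then contains a factor $0$. We therefore obtain
\[
  \card{\HomSet(\pqryElm, \dbElm)} = \sum_{\homFun \in \HomSet(\pqryElm, \mce{\qryElm, \GSet[\dbElm]})} \card{\LSet[\qryElm][\pqryElm](\dbElm, \homFun, \alphaFun)}.
\]

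Finally, substituting the product formula of \cref{thm:lensiz} for each summand yields exactly the stated identity. The right-hand side does not depend on $\alphaFun$, which is consistent with the arbitrary choice made in the first step. The only step requiring attention is this existence and indexing bookkeeping, namely that the sum over the index set counts each homomorphism exactly once. The strong disjointness argument already given in the proof of \cref{thm:lenpar} settles this, so I expect no real obstacle.
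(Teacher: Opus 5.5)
Your proposal is correct and follows the paper's proof step for step. You obtain an isomorphic embedding $\alpha$ from Item~\ref{def:mulcandbs(iso)} of \cref{def:mulcandbs} together with $\iota(s)\approx s$. The lens partition of \cref{thm:lenpar} then writes $|\mathrm{Hom}(p,I)|$ as a sum of lens sizes, and \cref{thm:lensiz} evaluates each summand. Your extra bookkeeping on the indexing is a valid clarification of a point the paper leaves implicit: the strong disjointness established in the proof of \cref{thm:lenpar} makes the indexed family of lenses pairwise disjoint, and empty lenses contribute $0$.
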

\begin{proof}
  First observe that there exists at least one isomorphic embedding $\alphaFun$
  of $\qryElm$ into $\dbElm$.
  Indeed, being $\dbElm$ multicanonical, for every \BCQ $\rqryElm \in
  \denot{\qryElm}$ and net-image element $\JSet \in \nimgFun[\qryElm](\rqryElm,
  \dbElm)$, it holds true that $\JSet \approx \rqryElm$ by
  Item~\ref{def:mulcandbs(iso)} of \cref{def:mulcandbs}.
  If $\rqryElm$ is non-ground, then $\rqryElm \approx \iotaFun(\rqryElm)$, and
  by~\cref{def:mce} one may choose an arbitrary isomorphism
  $\alphaFun[\rqryElm][\JSet] \in \IsoSet(\iotaFun(\rqryElm), \JSet)$ from
  $\iotaFun(\rqryElm)$ to $\JSet$, for every such pair of \BCQ $\rqryElm$ and
  net-image element $\JSet$.
  Hence, an isomorphic embedding $\alphaFun$ of $\qryElm$ into $\dbElm$ surely
  exists.

  At this point, by~\cref{thm:lenpar}, each lens
  $\LSet[\qryElm][\pqryElm](\dbElm, \homFun, \alphaFun)$, induced by some
  homomorphism $\homFun
  \in \HomSet(\pqryElm, \mce{\qryElm, \GSet[\dbElm]})$, is a part in a partition
  of $\HomSet(\pqryElm,\dbElm)$.
  Thus, we immediately derive
  \[
    \card{\HomSet(\pqryElm, \dbElm)}
  =
    \sum_{\homFun \in \HomSet(\pqryElm, \mce{\qryElm, \GSet[\dbElm]})}
    \card{\LSet[\qryElm][\pqryElm](\dbElm, \homFun, \alphaFun)}.
  \]
  Now, by applying~\cref{thm:lensiz} to each summand, we obtain the required
  equality:
  \[
    \card{\HomSet(\pqryElm,\dbElm)}
  =
    \sum_{\homFun \in \HomSet(\pqryElm, \mce{\qryElm, \GSet[\dbElm]})}
    \prod_{\wqryElm \in \comp[\top][\homFun]{\pqryElm}}
    \card{\nimgFun[\qryElm](\gammaFun(\homFun(\wqryElm)), \dbElm)}.
    \qedhere
  \]
\end{proof}

For every \BJUQ $\qryElm$, \BCQ $\pqryElm$, ground selection $\GSet$ for
$\qryElm$, and function $\nFun \colon \denot{\qryElm}[\top] \to \SetN$, consider
the formula
\[
  \NFun[\qryElm][\pqryElm](\GSet, \nFun)
\defeq
  \sum_{\homFun \in \HomSet(\pqryElm, \mce{\qryElm, \GSet})}
  \prod_{\wqryElm \in \comp[\top][\homFun]{\pqryElm}}
  \nFun(\gammaFun(\homFun(\wqryElm))).
\]
As an immediate consequence of the previous theorem
and~\cref{lem:arbins2netimgcnt}, one can observe that the number of
homomorphisms from a \BCQ $\pqryElm$ to a multicanonical instance $\dbElm$ for a
\BJUQ $\qryElm$ can be computed by evaluating the formula
$\NFun[\qryElm][\pqryElm]$ on the ground selection $\GSet[\dbElm]$ and net-image
counting $\nFun[\dbElm]$.

\begin{example}[Computing the net-image polynomial]
\label{exm:nfun}
Consider again $\qryElm[6]$ and $p_1$ of \cref{exm:4}, and take the ground
selection $G_2=\{E\}$ of \cref{exm:grnsel}. Let $n$ be a net-image counting
with this ground selection. Every homomorphism fixes the ground atom
$R(c,d)$ of $p_1$.
Each $U$-component of $p_1$ has two homomorphisms to $\bar U$, corresponding
to the two automorphisms of $U$. It has one homomorphism to $\bar U_0$,
which identifies its two variables.
Thus each component contributes $2n(U)+n(U_0)$. The two components have
disjoint variables, so
\[
  \NFun[{\qryElm[6]}][p_1](G_2,n)=(2n(U)+n(U_0))^2.
\]
For the net-image counting $n_2$ of \cref{exm:netimgcnt}, we have
$n_2(U)=2$ and $n_2(U_0)=1$, giving $25$. This agrees with the instance
$\dbElm$ of \cref{exm:img}, where each $U$-component of $p_1$ has five
homomorphisms.
\end{example}

The following corollary follows.

\begin{corollary}[Homomorphism Counting]
\label{cor:homcnt}
  For each \BJUQ $\qryElm$, \BCQ $\pqryElm$ and multicanonical instance $\dbElm$
  for $\qryElm$, it holds that $\card{\HomSet(\pqryElm, \dbElm)} =
  \NFun[\qryElm][\pqryElm](\GSet[\dbElm], \nFun[\dbElm])$.
\end{corollary}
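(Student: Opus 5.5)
The statement is an immediate rewriting of \cref{thm:homcnt} in terms of the formula $\NFun[\qryElm][\pqryElm]$, so the plan is simply to match the two expressions term by term. First, by \cref{lem:arbins2netimgcnt}, the function $\nFun[\dbElm](\sqryElm) \defeq \card{\nimgFun[\qryElm](\sqryElm,\dbElm)}$ is a net-image counting for $\qryElm$. Its associated ground selection $\GSet[{\nFun[\dbElm]}]$ is by definition the set $\GSet[\dbElm]$ used in \cref{thm:homcnt}. Hence $\mce{\qryElm,\GSet[\dbElm]}$ is a well-defined multicanonical expansion with respect to a genuine ground selection, and $\NFun[\qryElm][\pqryElm](\GSet[\dbElm],\nFun[\dbElm])$ is meaningful. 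Strictly, $\NFun$ takes $\nFun \colon \denot{\qryElm}[\top] \to \SetN$, so we use the restriction of $\nFun[\dbElm]$ to the non-ground closure elements.

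Next, both sides sum over the same index set $\HomSet(\pqryElm,\mce{\qryElm,\GSet[\dbElm]})$. For each such $\homFun$, both take a product over the same family $\comp[\top][\homFun]{\pqryElm}$. For every $\wqryElm$ in this family, $\homFun(\wqryElm) \in \sub{\comp{\mce{\qryElm}}}$, as observed after the definition of $\homFun$-component sets, so $\gammaFun(\homFun(\wqryElm)) \in \denot{\qryElm}[\top]$. Thus the factor $\card{\nimgFun[\qryElm](\gammaFun(\homFun(\wqryElm)),\dbElm)}$ appearing in \cref{thm:homcnt} is literally $\nFun[\dbElm](\gammaFun(\homFun(\wqryElm)))$. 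Substituting this into the right-hand side of \cref{thm:homcnt} yields exactly $\NFun[\qryElm][\pqryElm](\GSet[\dbElm],\nFun[\dbElm])$, which concludes the proof.

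The only point needing care, which I expect to be the sole (minor) obstacle, is this well-definedness bookkeeping. One must confirm that the $\GSet$ fed into $\NFun$ coincides with the ground selection induced by $\dbElm$, and that $\gammaFun$ only ever lands in non-ground closure elements. The second point holds because every fact in $\homFun(\wqryElm)$ for $\wqryElm \in \comp[\top][\homFun]{\pqryElm}$ lies outside $\PhiSet[{\GSet[\dbElm]}]$, so that image falls inside a single non-ground component of $\mce{\qryElm}$. No further combinatorial argument is required, since all the counting work is already contained in \cref{thm:lenpar,thm:lensiz,thm:homcnt}.
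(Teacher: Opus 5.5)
Your proposal is correct and takes the same route as the paper, which treats the corollary as an immediate consequence of \cref{thm:homcnt} and \cref{lem:arbins2netimgcnt}. You substitute the instance-induced net-image counting, with its induced ground selection, into the counting formula, so that each net-image cardinality factor becomes the value of that counting at the closure element reached; your extra bookkeeping (restricting to the non-ground closure and checking that the selection function lands there) is sound and only makes explicit what the paper leaves implicit.
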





\subsection{Poset Characterisation}
\label{sec:commul;sub:poschr}

The counting formula of \cref{sec:commul;sub:cnthom}
expresses the multiplicity of an arbitrary BCQ in terms of net-image countings, and this will, intuitively,  construct a polynomial on unknown net-image sizes. We could use this polynomial characterisation to examine containment by comparing a polynomial for a containee query $\qryElm$ against a polynomial for the containing query $\pqryElm$. However, our diophantine approach in the next section is extending techniques from \cite{KM19} and \cite{KM25} where we compare a monomial that characterises $\qryElm$ against a polynomial for $\pqryElm$ - and per \cref{prp:muc}, in order to get a monomial we need to work with the total number of images of the closure elements.
The total homomorphism counts of the elements of the closure will give us the multiplicity for $\qryElm$ as the product of these counts for its components. 
To relate the net-images counts with total homomorphism counts, which are called \emph{gross-images} in \cite{KM25}, observe that a net-image copy of a closure element $\sqryElm$ may receive homomorphisms from any closure element $\rqryElm$ with  $\sqryElm \preccurlyeq r$. The contribution of each such copy depends only on r and s. We equip the closure poset of \cref{thm:juqmucchr} with weights recording these contributions and use convolution to recover the net-image counts from the total counts.

\begin{definition}[Multiplicity Weight]
\label{def:mulwgh}
  The \emph{multiplicity weight} of a \BJUQ $\qryElm$ is the function $\mFun
  \colon \!\!\preccurlyeq\; \to \SetN$ defined as $\mFun(\sqryElm, \rqryElm)
  \defeq \card{\HomSet(\rqryElm, \sqryElm)}$, for all {\BCQ}s $\rqryElm,
  \sqryElm \in \denot{\qryElm}$ with $\sqryElm \preccurlyeq \rqryElm$.
\end{definition}

The order records which closure elements can contribute to a given count, while the weight records how many homomorphisms each copy contributes. Notice the direction of the arguments: $\mFun(\sqryElm, \rqryElm)$ counts homomorphisms from $\rqryElm$ into the lower element $\sqryElm$. In particular, the diagonal weight $\mFun(\rqryElm, \rqryElm)$  need not be one. It counts the automorphisms of $\rqryElm$ , so a single net-image copy of $\rqryElm$  may already contribute several homomorphisms from $\rqryElm$.

\begin{example}[Multiplicity weights]
\label{exm:mulwgh}
In the closure of $\qryElm[6]$ from \cref{exm:muc}, every comparable pair
among $C_0,D_0,E,C,D$ has multiplicity weight $1$.

For $U_0,U,W_0,W$, the weights are:
\[
\begin{array}{c|cccc}
 m(s,r) & r=U_0&r=U&r=W_0&r=W\\ \hline
s=U_0&1&1&1&1\\
s=U& &2& &2\\
s=W_0& & &1&1\\
s=W& & & &2
\end{array}
\]
A blank entry means that the row element is not below the column element in
the closure order. The values $m(U,U)=m(W,W)=2$ count the two automorphisms
of $U$ and $W$, respectively; $m(U,W)=2$ counts the two homomorphisms from
$W$ to $U$.

The diagram shows the closure order with weights on its nodes and edges.
At a node $r$ the weight is $m(r,r)$; on an edge joining a lower element
$s$ to an upper element $r$, it is $m(s,r)$.
\[
\begin{tikzpicture}[baseline=(current bounding box.center),node distance=12mm]
\node (C) {$C\,(1)$};
\node[right=25mm of C] (D) {$D\,(1)$};
\node[below=of C] (C0) {$C_0\,(1)$};
\node[below=of D] (D0) {$D_0\,(1)$};
\node[below right=3mm and 7mm of C] (E) {$E\,(1)$};
\draw (C0)--node[left] {$1$}(C);
\draw (E)--node[above left] {$1$}(C);
\draw (E)--node[above right] {$1$}(D);
\draw (D0)--node[right] {$1$}(D);

\node[right=42mm of D] (W) {$W\,(2)$};
\node[below left=of W] (U) {$U\,(2)$};
\node[below right=of W] (W0) {$W_0\,(1)$};
\node[below right=of U] (U0) {$U_0\,(1)$};
\draw (U)--node[above left] {$2$}(W);
\draw (W0)--node[right] {$1$}(W);
\draw (U0)--node[left] {$1$}(U);
\draw (U0)--node[below right] {$1$}(W0);
\end{tikzpicture}
\]
The comparable pair $(U_0,W)$ has no edge in the diagram; its weight
$m(U_0,W)=1$ is included in the table.
\end{example}

The following lemma establishes the properties of these weights needed for the arithmetic characterisation. The first property, monotonicity, describes how a fixed lower element contributes to elements above it. The second, invertibility, allows us to pass between net-image counts and total homomorphism counts. Finally, the integrality property identifies a diagonal factor that clears the denominators of each recovered net-image count; this will be used in the next subsection to obtain polynomials with integer coefficients. Note that the last property is precisely the assumption considered at the end
of~\cref{sec:prl;sub:poscon}, when discussing the integrality and non-negativity
of the function $\nFun$.

\begin{lemma}
\label{lem:mulwgh}
  The multiplicity weight of a \BJUQ $\qryElm$ enjoys the following properties:
  \begin{enumerate}[a)]
  \item\label{lem:mulwgh(mon)}
    for all $\tqryElm, \sqryElm, \rqryElm \in \denot{\qryElm}$ with
    $\tqryElm \preccurlyeq \sqryElm \preccurlyeq \rqryElm$, it holds that
    $\mFun(\tqryElm, \sqryElm) \leq \mFun(\tqryElm, \rqryElm)$;
  \item\label{lem:mulwgh(inv)}
    $\mFun$ admits a convolution inverse $\mFun^{-1}$;
  \item\label{lem:mulwgh(int)}
    $\mFun(\rqryElm, \rqryElm) \cdot \mFun^{-1}(\sqryElm, \rqryElm) \in \SetZ$,
    for all {\BCQ}s $\rqryElm, \sqryElm \in \denot{\qryElm}$ with $\sqryElm
    \preccurlyeq \rqryElm$;
  \end{enumerate}
\end{lemma}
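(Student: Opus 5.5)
We prove the three items in order, using the structure of $\denot{\qryElm}$ from \cref{thm:juqmucchr} and the counting identities of \cref{lem:netimg}.

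\textbf{Monotonicity.} Let $\tqryElm \preccurlyeq \sqryElm \preccurlyeq \rqryElm$ and fix a homomorphism $\gFun \in \HomSet(\rqryElm, \sqryElm)$. Precomposition defines $\homFun \mapsto \homFun \cmp \gFun$ from $\HomSet(\sqryElm, \tqryElm)$ to $\HomSet(\rqryElm, \tqryElm)$. This map is injective when $\gFun$ is surjective on the terms of $\sqryElm$. Constants are fixed, and $\gFun$ is variable-onto by the invariant in the proof of \cref{thm:juqmucchr}. Two homomorphisms out of $\sqryElm$ that agree on all its terms coincide. Hence $\mFun(\tqryElm, \sqryElm) \leq \mFun(\tqryElm, \rqryElm)$.

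\textbf{Invertibility.} By the recursive formula in \cref{sec:prl;sub:poscon}, it suffices that $\mFun(\rqryElm, \rqryElm) \neq 0$. This holds because the identity lies in $\HomSet(\rqryElm, \rqryElm)$. The poset is finite by \cref{thm:juqmucchr}, so the convolution is defined.

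\textbf{Integrality.} By the observation in \cref{sec:prl;sub:poscon}, it suffices to show $\mFun(\sqryElm, \rqryElm) \equiv 0 \pmod{\mFun(\sqryElm, \sqryElm)}$ for all $\sqryElm \preccurlyeq \rqryElm$. Equivalently, $\card{\AutSet(\sqryElm)}$ divides $\card{\HomSet(\rqryElm, \sqryElm)}$.

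First, every endomorphism of $\sqryElm$ is an automorphism. It is variable-onto, hence bijective on the finite term set, and it is injective on atoms of the finite body, hence surjective. Therefore $\HomSet(\sqryElm, \sqryElm)$ is a group, and it acts on $\HomSet(\rqryElm, \sqryElm)$ by postcomposition.

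Next, this action is free. If $a \cmp \homFun = \homFun$ for an automorphism $a$, then $a$ fixes every term in the image of $\homFun$. That image contains all terms of $\sqryElm$, since $\homFun$ is variable-onto and constants are fixed. So $a$ is the identity.

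A free action partitions $\HomSet(\rqryElm, \sqryElm)$ into orbits, each of size $\card{\AutSet(\sqryElm)}$, which gives the divisibility.

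There is one subtle point. The paper's observation derives $\mFun(\pElm, \pElm)\mFun^{-1}(\sElm, \pElm) \in \SetZ$ from divisibility by the lower diagonal $\mFun(\sElm, \sElm)$. We will verify this by induction on the length of the interval using the recursion
\[
  \mFun^{-1}(\sElm, \pElm) = -\frac{1}{\mFun(\pElm, \pElm)} \sum_{\sElm \preccurlyeq \rElm \prec \pElm} \mFun^{-1}(\sElm, \rElm)\, \mFun(\rElm, \pElm).
\]
Multiplying by $\mFun(\pElm, \pElm)$, each summand equals $\mFun^{-1}(\sElm, \rElm)\,\mFun(\rElm, \pElm)$. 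By induction, $\mFun(\rElm, \rElm)\,\mFun^{-1}(\sElm, \rElm)$ is an integer, and by the divisibility just proved, $\mFun(\rElm, \rElm)$ divides $\mFun(\rElm, \pElm)$. So each summand is an integer, and the sum is integral.

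This inductive check, together with the freeness argument (which rests on variable-ontoness), is where we expect the main care to be needed. The rest is routine.
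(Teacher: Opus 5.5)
Your proposal is correct and follows the paper's proof essentially step by step. It uses injective precomposition with a variable-onto homomorphism for monotonicity, the identity endomorphism for invertibility, and a free post-composition action of $\AutSet(\sqryElm)=\HomSet(\sqryElm,\sqryElm)$ on $\HomSet(\rqryElm,\sqryElm)$ for the divisibility $\mFun(\sqryElm,\sqryElm)\mid\mFun(\sqryElm,\rqryElm)$. Your additions are sound and fill in steps the paper only asserts: the explicit argument that every endomorphism is an automorphism, and the induction via the recursive inverse formula that derives $\mFun(\pElm,\pElm)\,\mFun^{-1}(\sElm,\pElm)\in\SetZ$ from that divisibility. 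One small wording fix: the image of $\homFun$ need not contain every constant of $\sqryElm$, but the automorphism fixes those constants anyway.
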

\begin{proof}
  We show the three properties separately, by exploiting well-known results in
  the general theories of homomorphisms and convolution on finite posets
  (see~\cref{sec:prl} for further details).
  \begin{itemize}
  \item\textbf{[\ref{lem:mulwgh(mon)}]:}
  Let $\tqryElm \preccurlyeq \sqryElm \preccurlyeq \rqryElm$, and let
    $\homFun \colon \rqryElm \to \sqryElm$ witness $\sqryElm \preccurlyeq
    \rqryElm$. Precomposition with $\homFun$ maps each $\homFun' \in
    \HomSet(\sqryElm, \tqryElm)$ to $\homFun' \cmp \homFun \in
    \HomSet(\rqryElm, \tqryElm)$. This map is injective. Indeed, suppose that
    $\homFun[1]', \homFun[2]' \in \HomSet(\sqryElm, \tqryElm)$ satisfy
    $\homFun[1]' \cmp \homFun = \homFun[2]' \cmp \homFun$. As shown in the proof
    of \cref{thm:juqmucchr}, $\homFun$ is onto the variables of $\sqryElm$.
    Hence $\homFun[1]'$ and $\homFun[2]'$ agree on every variable of
    $\sqryElm$. They also agree on its constants, which homomorphisms fix.
    Thus $\homFun[1]' = \homFun[2]'$, and by \cref{def:mulwgh},
    \[
      \mFun(\tqryElm, \sqryElm)
      = \card{\HomSet(\sqryElm, \tqryElm)}
      \leq \card{\HomSet(\rqryElm, \tqryElm)}
      = \mFun(\tqryElm, \rqryElm).
    \]
  \item\textbf{[\ref{lem:mulwgh(inv)}]:}
    By \cref{thm:juqmucchr}, the structure $\PStr = \tuple {\denot{\qryElm}}
    {\preccurlyeq}$ is a finite poset.
    Moreover, for every $\rqryElm \in \denot{\qryElm}$, the identity
    homomorphism clearly belongs to $\HomSet(\rqryElm, \rqryElm)$.
    Hence, $\mFun(\rqryElm, \rqryElm) = \card{\HomSet(\rqryElm, \rqryElm)} \geq
    1$.
    Therefore, $\mFun$ admits an inverse $\mFun^{-1}$, as required
    by~\cref{lem:mulwgh(inv)}.
  \item\textbf{[\ref{lem:mulwgh(int)}]:}
    We show that $\mFun(\sqryElm, \rqryElm) \equiv 0 \pmod{\mFun(\sqryElm,
    \sqryElm)}$, for all $\rqryElm, \sqryElm \in \denot{\qryElm}$ with $\sqryElm
    \preccurlyeq \rqryElm$, which is a property known to imply the statement
    reported in ~\cref{lem:mulwgh(int)}.
    Consider the natural action of the automorphism $\alphaFun \in
    \AutSet(\sqryElm)$ on the homomorphism $\homFun \in \HomSet(\rqryElm,
    \sqryElm)$ by post-composition: $\alphaFun(\homFun) \defeq \alphaFun \cmp
    \homFun$.
    This action is free.
    Indeed, suppose that $\alphaFun(\homFun) = \homFun$.
    If $\sqryElm$ is ground, then $\alphaFun$ is necessarily the identity.
    Otherwise, by~\cref{thm:juqmucchr}, $\sqryElm$ is connected and
    join-uniform, thus, every variable of $\sqryElm$ occurs in each atom of the
    query.
    Hence, every homomorphism $\homFun \in \HomSet(\rqryElm, \sqryElm)$ is onto
    on the variables of $\sqryElm$.
    Since all automorphisms fix the constants of $\sqryElm$, the equality
    $\alphaFun(\homFun) = \homFun$ forces $\alphaFun$ to fix all terms of
    $\sqryElm$ as well.
    Thus, $\alphaFun$ is the identity.
    As a consequence, $\HomSet(\rqryElm, \sqryElm)$ is partitioned into orbits
    of size $\card{\AutSet(\sqryElm)}$, which implies that
    $\card{\HomSet(\rqryElm, \sqryElm)}$ is always a multiple of
    $\card{\AutSet(\sqryElm)}$, \ie, $\card{\HomSet(\rqryElm, \sqryElm)} \equiv
    0 \pmod{\card{\AutSet(\sqryElm)}}$.
    Now, $\HomSet(\sqryElm, \sqryElm) = \AutSet(\sqryElm)$, since every
    endomorphism of a connected and join-uniform \BCQ is necessarily an
    automorphism.
    Therefore, we obtain $\mFun(\sqryElm, \rqryElm) \equiv 0
    \pmod{\mFun(\sqryElm, \sqryElm)}$, as $\mFun(\sqryElm, \rqryElm) =
    \card{\HomSet(\rqryElm, \sqryElm)}$ and $\mFun(\sqryElm, \sqryElm) =
    \card{\HomSet(\sqryElm, \sqryElm)}$ by~\cref{def:mulwgh}.
    \qedhere
  \end{itemize}
\end{proof}


For a multicanonical instance with net-image counting $\nFun$, each of the $\nFun(\sqryElm)$ copies of $s$ contributes $\mFun(\sqryElm, \rqryElm)$ homomorphisms from $\rqryElm$. The resulting total homomorphism count for $\rqryElm$ is therefore

\[
g(r)=\sum_{s\preceq r}m(s,r)n(s)=(m\star n)(r).
\]

By Lem. \ref{lem:mulwgh}(b), the net-image counts can be recovered uniquely as $n = m^{-1} *g$. However, a function g taking natural-number values need not yield a valid net-image counting under this inverse transformation: the recovered values must be non-negative integers, and their ground coordinates must describe a ground selection. The next definition incorporates precisely these requirements. 


\begin{definition}[Homomorphism Counting]
\label{def:homcnt}
  A \emph{homomorphism counting} for a \BJUQ $\qryElm$ is a function $\gFun
  \colon \denot{\qryElm} \to \SetN$ such that the function $\nFun \defeq
  \mFun^{-1} \star \gFun$ is a net-image counting for $\qryElm$, where $\mFun$
  is the multiplicity weight of $\qryElm$.
  For such a function $\gFun$, the set $\emph{\GSet[\gFun]} \defeq \GSet[\nFun]$
  is also introduced.
  Finally, a homomorphism counting $\gFun$ is \emph{non-trivial} for $\qryElm$
  if $\GSet[\gFun]$ in non-trivial for $\qryElm$.
\end{definition}


\begin{example}[From net-image counting to homomorphism counting]
\label{exm:homcnt}
Take the net-image counting $n_2$ of \cref{exm:netimgcnt} and let
$g=m\star n_2$. The weights in \cref{exm:mulwgh} give:
\[
\begin{array}{c|ccccccccc}
 r & C_0&D_0&E&C&D&U_0&U&W_0&W\\ \hline
 n_2(r)&0&0&1&2&3&1&2&1&1\\
 g(r)&0&0&1&3&4&1&5&2&8
\end{array}
\]
All weights contributing to $g(C)$ are $1$, so
$g(C)=n_2(C)+n_2(C_0)+n_2(E)=3$. The weights
$m(U,U)=m(U,W)=m(W,W)=2$ give
\[
  \begin{aligned}
    g(U)&=2n_2(U)+n_2(U_0)=5,\\
    g(W)&=2n_2(W)+2n_2(U)+n_2(W_0)+n_2(U_0)=8.
  \end{aligned}
\]
Conversely, $n_2=m^{-1}\star g$. For instance,
$n_2(W)=(g(W)-g(U)-g(W_0)+g(U_0))/2=1$.
\end{example}

The next lemma shows that convolution transforms the actual net-image counts of an instance into its actual homomorphism counts, and that inversion recovers exactly those net-image counts. 

\begin{lemma}
\label{lem:nethomcnt}
  Let $\qryElm$ be a \BJUQ and $\nFun, \gFun \colon \denot{\qryElm} \to \SetN$
  two functions such that  $\gFun = \mFun \mathop{\star} \nFun$, where $\mFun$
  is the multiplicity weight of $\qryElm$.
  Then, for every instance $\dbElm$, it holds that $\nFun(\rqryElm) =
  \card{\nimgFun[\qryElm](\rqryElm, \dbElm)}$, for all {\BCQ}s $\rqryElm \in
  \denot{\qryElm}$, \iff $\gFun(\rqryElm) = \card{\HomSet(\rqryElm, \dbElm)}$,
  for all {\BCQ}s $\rqryElm \in \denot{\qryElm}$.
\end{lemma}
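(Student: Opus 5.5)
The plan is to reduce both directions to \cref{lem:netimg}, Item~\ref{lem:netimg(hom)}, together with the invertibility of the multiplicity weight (\cref{lem:mulwgh}, Item~\ref{lem:mulwgh(inv)}). Fix an instance $\dbElm$ and define $\nFun[\dbElm](\rqryElm) \defeq \card{\nimgFun[\qryElm](\rqryElm, \dbElm)}$ and $\gFun[\dbElm](\rqryElm) \defeq \card{\HomSet(\rqryElm, \dbElm)}$ for $\rqryElm \in \denot{\qryElm}$. The key identity to establish first is
\[
  \gFun[\dbElm] = \mFun \star \nFun[\dbElm].
\]
For each $\rqryElm$, Item~\ref{lem:netimg(hom)} gives $\card{\HomSet(\rqryElm, \dbElm)} = \sum_{\sqryElm \preccurlyeq \rqryElm} \card{\HomSet(\rqryElm, \sqryElm)} \cdot \card{\nimgFun[\qryElm](\sqryElm, \dbElm)}$. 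By \cref{def:mulwgh}, $\card{\HomSet(\rqryElm, \sqryElm)} = \mFun(\sqryElm, \rqryElm)$, and the sum ranges exactly over $\sqryElm \preccurlyeq \rqryElm$, which is the defining sum of the convolution action in \cref{sec:prl;sub:poscon}.

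For the forward direction, assume $\nFun = \nFun[\dbElm]$. Then $\gFun = \mFun \star \nFun = \mFun \star \nFun[\dbElm] = \gFun[\dbElm]$, which is the claim.

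For the converse, assume $\gFun = \gFun[\dbElm]$. By \cref{lem:mulwgh}, Item~\ref{lem:mulwgh(inv)}, $\mFun$ has a convolution inverse $\mFun^{-1}$. The recovery identity $\nFun = \mFun^{-1} \star (\mFun \star \nFun)$ from \cref{sec:prl;sub:poscon} then gives
\[
  \nFun = \mFun^{-1} \star \gFun = \mFun^{-1} \star \gFun[\dbElm] = \mFun^{-1} \star (\mFun \star \nFun[\dbElm]) = \nFun[\dbElm].
\]
This requires the poset to be finite, which \cref{thm:juqmucchr} guarantees. Equivalently, $\mFun \star \cdot$ is an injective linear map on $\SetR^{\denot{\qryElm}}$, being triangular with nonzero diagonal $\mFun(\rqryElm, \rqryElm) \geq 1$ along any linear extension of $\preccurlyeq$.

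The only delicate point is checking that the index conventions match. The convolution action sums $\mFun(\sqryElm, \rqryElm)\,\nFun(\sqryElm)$ over $\sqryElm \preccurlyeq \rqryElm$, and $\mFun(\sqryElm, \rqryElm)$ counts homomorphisms from the upper element $\rqryElm$ into the lower element $\sqryElm$. This is exactly the orientation used in Item~\ref{lem:netimg(hom)}. Item~\ref{lem:netimg(hom)} holds for an arbitrary instance, not only multicanonical ones, so no additional hypothesis on $\dbElm$ is needed. Beyond this check, the argument is bookkeeping.
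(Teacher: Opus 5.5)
Your proposal is correct and follows the paper's proof essentially step for step. You first derive $|\mathrm{Hom}(r,I)| = (m \star n_I)(r)$ from Item~\ref{lem:netimg(hom)} of \cref{lem:netimg} and \cref{def:mulwgh}; you then get the forward direction by substitution and the converse by applying the convolution inverse from Item~\ref{lem:mulwgh(inv)} of \cref{lem:mulwgh}.
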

\begin{proof}
  First observe that, by Item~\ref{lem:netimg(hom)} of~\cref{lem:netimg},
  \cref{lem:arbins2netimgcnt}, and~\cref{def:mulwgh}, it follows that
  $\card{\HomSet(\rqryElm, \dbElm)} = \sum_{\sqryElm \preccurlyeq \rqryElm}
  \mFun(\sqryElm, \rqryElm) \cdot \nFun[\dbElm](\sqryElm) = (\mFun \star
  \nFun[\dbElm])(\rqryElm)$.
  Now, if $\nFun(\rqryElm) = \card{\nimgFun[\qryElm](\rqryElm, \dbElm)}$, for
  all {\BCQ}s $\rqryElm \in \denot{\qryElm}$, then $\nFun = \nFun[\dbElm]$,
  from which it clearly follows that $\card{\HomSet(\rqryElm,\dbElm)} = (\mFun
  \star \nFun[\dbElm])(\rqryElm) = (\mFun \star \nFun)(\rqryElm) =
  \gFun(\rqryElm)$.
  Conversely, if $\gFun(\rqryElm) = \card{\HomSet(\rqryElm, \dbElm)}$, for all
  {\BCQ}s $\rqryElm \in \denot{\qryElm}$, then $\gFun = \mFun \star
  \nFun[\dbElm]$.
  Now, by~\cref{lem:mulwgh(inv)} of~\cref{lem:mulwgh}, $\mFun$ admits a
  convolution inverse $\mFun^{-1}$.
  Thus, $\nFun[\dbElm] = \mFun^{-1} \star \gFun = \mFun^{-1} \star (\mFun \star
  \nFun) = \nFun$, from which it follows that $\nFun(\rqryElm) =
  \card{\nimgFun[\qryElm](\rqryElm, \dbElm)}$, for all {\BCQ}s $\rqryElm \in
  \denot{\qryElm}$.
\end{proof}

We can now express the multiplicities of $\qryElm$ and $\pqryElm$ using the same homomorphism counting $\gFun$. For $\qryElm$, component independence gives a product of the corresponding closure counts. When $\GSet_{\gFun}$ is non-trivial, every ground component of $\qryElm$ contributes a factor of one, so only the non-ground components need appear in this product. For $\pqryElm$, we apply the formula of \cref{sec:commul;sub:cnthom} after recovering the net-image counting by convolution inversion.

For every BJUQ $\qryElm$, \BCQ $\pqryElm$, ground selection $\GSet$ for
$\qryElm$, and function $\gFun \colon \denot{\qryElm} \to \SetN$, consider the
formulae $\MFun[\qryElm](\gFun) \defeq \prod_{\rqryElm \in \comp[\top]{\qryElm}}
\gFun(\denot{\rqryElm}[\qryElm])$ and $\PFun[\qryElm][\pqryElm](\GSet, \gFun)
\defeq \NFun[\qryElm][\pqryElm](\GSet, \mFun^{-1} \star \gFun)$.

Expression $\MFun[\qryElm](\gFun)$ gives the multiplicity of $\qryElm$, while $\PFun[\qryElm][\pqryElm](\GSet, \gFun)$ gives that of $\pqryElm$ on a multicanonical instance realising $\gFun$. The first expression is a monomial in the total counts; the second is obtained by substituting linear expressions for the net-image parameters. This substitution may simplify the counting formula, as the following example shows.

\begin{samepage}
\begin{example}[Computing the homomorphism-count polynomial]
\label{exm:nethomcnt}
Consider again $q=\qryElm[6]$ and $p=p_1$ of \cref{exm:4}, with the ground
selection $G_2=\{E\}$. Let $g$ be a homomorphism counting for $q$ with this
ground selection, and put $n=m^{-1}\star g$. By \cref{exm:nfun}, each of the
two $U$-components of $p$ contributes $2n(U)+n(U_0)$. The weights of
\cref{exm:mulwgh} give $g(U)=2n(U)+n(U_0)$, so
\[
  \mathsf{P}_q^p(G_2,g)=\mathsf{N}_q^p(G_2,n)
  =(2n(U)+n(U_0))^2=g(U)^2.
\]
For the homomorphism counting $g$ in \cref{exm:homcnt}, we obtain
\[
  \mathsf{M}_q(g)=g(C)g(D)g(W)g(U)=480
  >25=\mathsf{P}_q^p(G_2,g).
\]
\end{example}
\end{samepage}

The next theorem combines this numerical description with the multicanonical characterisation of \cref{thm:mulcanchr}. It shows that if $\pqryElm$ does not contain $\qryElm$ a counterexample instance yields a non-trivial homomorphism counting for which the count of $\qryElm$ exceeds that of $\pqryElm$. Every such counting determines valid net-image counts and hence, by \cref{lem:netimgcnt2mulcan}, a multicanonical instance witnessing non-containment. The search for a counterexample can therefore be expressed entirely in terms of counts on the finite weighted closure poset.

\begin{theorem}[Poset Characterisation]
  \label{thm:poschr}
  For all {\BJUQ}s $\qryElm$ and {\BCQ}s $\pqryElm$, the following statements
  are equivalent:
  \begin{enumerate}[a)]
  \item\label{thm:poschr(noninc)}
    $\qryElm \not\bsinc \pqryElm$;
  \item\label{thm:poschr(mpisol)}
    there exists a non-trivial homomorphism counting $\gFun$ for $\qryElm$ such
    that $\MFun[\qryElm](\gFun) > \PFun[\qryElm][\pqryElm](\GSet[\gFun],
    \gFun)$.
  \end{enumerate}
\end{theorem}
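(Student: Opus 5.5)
The plan is to prove the two implications separately. In both directions I pass between instances and countings through \cref{thm:mulcanchr}, \cref{lem:arbins2netimgcnt}, \cref{lem:netimgcnt2mulcan}, \cref{lem:nethomcnt} and \cref{cor:homcnt}. The bridge is the following observation. For a multicanonical instance $\dbElm$, put $\gFun[\dbElm](\rqryElm)\defeq\card{\HomSet(\rqryElm,\dbElm)}$. By \cref{lem:nethomcnt} we then have $\mFun^{-1}\star\gFun[\dbElm]=\nFun[\dbElm]$. Consequently
\[
\PFun[\qryElm][\pqryElm](\GSet[\dbElm],\gFun[\dbElm])=\NFun[\qryElm][\pqryElm](\GSet[\dbElm],\nFun[\dbElm])=\card{\HomSet(\pqryElm,\dbElm)}
\]
by \cref{cor:homcnt}. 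On the containee side, \cref{prp:muc} gives $\card{\HomSet(\qryElm,\dbElm)}=\prod_{\rqryElm\in\comp{\qryElm}}\gFun[\dbElm](\denot{\rqryElm}[\qryElm])$.

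\textbf{(a)$\Rightarrow$(b).} Assume $\qryElm\not\bsinc\pqryElm$. By \cref{thm:mulcanchr} there is a multicanonical $\dbElm$ with $\card{\HomSet(\qryElm,\dbElm)}>\card{\HomSet(\pqryElm,\dbElm)}\geq 0$. Take $\gFun\defeq\gFun[\dbElm]$. It is a homomorphism counting because $\mFun^{-1}\star\gFun=\nFun[\dbElm]$, and \cref{lem:arbins2netimgcnt} shows this is a net-image counting; moreover $\GSet[\gFun]=\GSet[\dbElm]$.

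For non-triviality, I use that $\card{\HomSet(\qryElm,\dbElm)}>0$. Hence every ground component $\rqryElm\in\comp[\bot]{\qryElm}$ has its atoms in $\dbElm$. As in the proof of \cref{thm:lenpar}, ground facts of a multicanonical instance lie in $\PhiSet[{\GSet[\dbElm]}]$. So $\denot{\rqryElm}[\qryElm]\subseteq\PhiSet[{\GSet[\dbElm]}]$, i.e.\ $\denot{\rqryElm}[\qryElm]\in\LambdaSet[{\GSet[\dbElm]}]$. The same positivity gives $\gFun(\denot{\rqryElm}[\qryElm])=1$ for ground components, since a ground query has exactly one homomorphism when its facts are present. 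The full product therefore collapses to $\MFun[\qryElm](\gFun)$. Using the bridge, the strict inequality becomes $\MFun[\qryElm](\gFun)>\PFun[\qryElm][\pqryElm](\GSet[\gFun],\gFun)$.

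\textbf{(b)$\Rightarrow$(a).} Take $\gFun$ as in (b) and let $\nFun\defeq\mFun^{-1}\star\gFun$; this is a net-image counting by \cref{def:homcnt}. \Cref{lem:netimgcnt2mulcan} gives a multicanonical $\dbElm[\nFun]$ whose net-image counts are exactly $\nFun$. By \cref{lem:nethomcnt}, which applies since $\mFun\star\nFun=\gFun$, we get $\card{\HomSet(\rqryElm,\dbElm[\nFun])}=\gFun(\rqryElm)$ for every closure element. In the construction the ground selection of $\dbElm[\nFun]$ is $\GSet[\nFun]=\GSet[\gFun]$: the ground net-images are exactly the selected blocks.

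By \cref{cor:homcnt}, $\card{\HomSet(\pqryElm,\dbElm[\nFun])}=\PFun[\qryElm][\pqryElm](\GSet[\gFun],\gFun)$. For $\qryElm$, non-triviality gives $\denot{\rqryElm}[\qryElm]\subseteq\PhiSet[{\GSet[\gFun]}]\subseteq\dbElm[\nFun]$ for each ground component, so each such factor equals $1$. Then \cref{prp:muc} yields $\card{\HomSet(\qryElm,\dbElm[\nFun])}=\MFun[\qryElm](\gFun)$, and the strict inequality witnesses non-containment.

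The main obstacle is the bookkeeping around ground components and ground selections. One point is confirming $\GSet[\dbElm[\nFun]]=\GSet[\gFun]$, which should follow from the net-image description in the proof of \cref{lem:netimgcnt2mulcan}. The other is justifying, in direction (a)$\Rightarrow$(b), that positivity of $\card{\HomSet(\qryElm,\dbElm)}$ forces non-triviality, which relies on the fact that ground atoms of a multicanonical instance belong to $\PhiSet[{\GSet[\dbElm]}]$. Everything else is a direct chaining of earlier results.
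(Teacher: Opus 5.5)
Your proposal is correct and takes essentially the same route as the paper's proof. Both directions chain \cref{thm:mulcanchr}, \cref{lem:arbins2netimgcnt}, \cref{lem:netimgcnt2mulcan}, \cref{lem:nethomcnt}, \cref{prp:muc} and \cref{cor:homcnt}, and both obtain non-triviality and the unit factors for ground components from the fact that ground facts of a multicanonical instance lie in the selected ground blocks. The differences are cosmetic: you define $\gFun$ directly as homomorphism counts, while the paper defines it as $\mFun\star\nFun$. Also, the point $\GSet[{\dbElm[\nFun]}]=\GSet[\gFun]$ that you flag as an obstacle is immediate: the ground selection of an instance is by definition that of its net-image counting, and \cref{lem:netimgcnt2mulcan} makes that counting equal to $\nFun$.
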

\begin{proof}

  \textbf{[\ref{thm:poschr(noninc)} $\Rightarrow$
  \ref{thm:poschr(mpisol)}].}
  Suppose $\qryElm\not\bsinc\pqryElm$. By \cref{thm:mulcanchr}, there is a
  multicanonical instance $\dbElm[][*]$ for $\qryElm$ with
  \[
    \card{\HomSet(\qryElm,\dbElm[][*])}
    > \card{\HomSet(\pqryElm,\dbElm[][*])}.
  \]
  Let $\nFun(\sqryElm)\defeq
  \card{\nimgFun[\qryElm](\sqryElm,\dbElm[][*])}$ for every
  $\sqryElm\in\denot{\qryElm}$. By \cref{lem:arbins2netimgcnt}, $\nFun$ is a
  net-image counting for $\qryElm$. Set $\gFun\defeq\mFun\star\nFun$ and
  $\GSet\defeq\GSet[\nFun]$. Since $\mFun^{-1}\star\gFun=\nFun$,
  \cref{def:homcnt} gives a homomorphism counting $\gFun$ with
  $\GSet[\gFun]=\GSet$. By \cref{lem:nethomcnt},
  \[
    \gFun(\sqryElm)=\card{\HomSet(\sqryElm,\dbElm[][*])}
    \quad\text{for every }\sqryElm\in\denot{\qryElm}.
  \]

  The strict inequality implies that $\qryElm$ has a homomorphism into
  $\dbElm[][*]$. Hence each ground component $\rqryElm$ of $\qryElm$ has its
  unique ground image there. Every ground fact of a multicanonical instance
  belongs to a selected ground net-image element, so
  $\denot{\rqryElm}[\qryElm]\subseteq\PhiSet[\GSet]$. Thus
  $\denot{\rqryElm}[\qryElm]\in\LambdaSet[\GSet]$ for every ground component,
  and $\GSet$ is non-trivial. Each ground component has one homomorphism, so
  the definition of $\MFun[\qryElm]$ and \cref{prp:muc} give
  \[
    \MFun[\qryElm](\gFun)
    = \card{\HomSet(\qryElm,\dbElm[][*])}.
  \]
  Since $\mFun^{-1}\star\gFun=\nFun$, the definition of
  $\PFun[\qryElm][\pqryElm]$ and \cref{cor:homcnt} give
  \[
    \PFun[\qryElm][\pqryElm](\GSet,\gFun)
    =\card{\HomSet(\pqryElm,\dbElm[][*])}.
  \]
  The chosen $\gFun$ therefore satisfies
  $\MFun[\qryElm](\gFun)>\PFun[\qryElm][\pqryElm](\GSet,\gFun)$.

  \smallskip
  \noindent
  \textbf{[\ref{thm:poschr(mpisol)} $\Rightarrow$
  \ref{thm:poschr(noninc)}].}
  Let $\gFun$ be a non-trivial homomorphism counting satisfying the inequality
  in the statement. Set $\nFun\defeq\mFun^{-1}\star\gFun$ and
  $\GSet\defeq\GSet[\gFun]=\GSet[\nFun]$. By \cref{def:homcnt}, $\nFun$ is a
  net-image counting for $\qryElm$. Applying
  \cref{lem:netimgcnt2mulcan}, choose a multicanonical instance
  $\dbElm[\nFun]$ for which
  \[
    \card{\nimgFun[\qryElm](\sqryElm,\dbElm[\nFun])}
    =\nFun(\sqryElm)
    \quad\text{for every }\sqryElm\in\denot{\qryElm}.
  \]
  In particular, the ground selection of $\dbElm[\nFun]$ is $\GSet$.
  Since $\gFun=\mFun\star\nFun$, \cref{lem:nethomcnt} gives
  \[
    \gFun(\sqryElm)=\card{\HomSet(\sqryElm,\dbElm[\nFun])}
    \quad\text{for every }\sqryElm\in\denot{\qryElm}.
  \]

  Since $\GSet$ is non-trivial, each ground component $\rqryElm$ of $\qryElm$
  satisfies $\denot{\rqryElm}[\qryElm]\in\LambdaSet[\GSet]$. Its facts
  therefore belong to $\PhiSet[\GSet]$, which is contained in
  $\dbElm[\nFun]$, so this component has exactly one homomorphism into
  $\dbElm[\nFun]$. Using the definition of $\MFun[\qryElm]$ and
  \cref{prp:muc}, we obtain
  \[
    \MFun[\qryElm](\gFun)
    =\card{\HomSet(\qryElm,\dbElm[\nFun])}.
  \]
  Since $\mFun^{-1}\star\gFun=\nFun$, the definition of
  $\PFun[\qryElm][\pqryElm]$ and \cref{cor:homcnt} give
  \[
    \PFun[\qryElm][\pqryElm](\GSet,\gFun)
    =\card{\HomSet(\pqryElm,\dbElm[\nFun])}.
  \]
  The assumed inequality now gives
  $\card{\HomSet(\qryElm,\dbElm[\nFun])}
  >\card{\HomSet(\pqryElm,\dbElm[\nFun])}$, so
  $\qryElm\not\bsinc\pqryElm$.

\end{proof}




\subsection{Polynomial Characterisation}
\label{sec:commul;sub:polchr}

The poset characterisation of the previous subsection still expresses its arithmetic condition through homomorphism countings and convolution inversion. We now make this condition explicit as a monomial-polynomial inequality with integer coefficients, together with constraints ensuring that its unknowns describe a realisable counting. Two steps are needed. First, we group the homomorphisms used in the counting formula according to the net-image factors they contribute. Second, we use the symmetries of the target closure elements to cancel the denominators introduced by inversion. The grouping is formalised by the following notion of profile.

\begin{definition}[Profile]
\label{def:prf}
  A \emph{profile} for a \BJUQ $\qryElm$, a \BCQ $\pqryElm$, and a ground
  selection $\GSet$ for $\qryElm$ is a function $\imath \colon
  \comp[\top][\homFun]{\pqryElm} \to \denot{\qryElm}[\top]$ for some
  homomorphism $\homFun \in \HomSet(\pqryElm, \mce{\qryElm, \GSet})$ such that
  $\imath(\wqryElm) = \gammaFun(\homFun(\wqryElm))$, for all $\wqryElm \in
  \comp[\top][\homFun]{\pqryElm}$.
  A homomorphism $\homFun$ related to a given profile $\imath$ via the above
  correspondence is called \emph{$\imath$-homomorphism}.
  The sets of profiles $\imath$ and corresponding $\imath$-homomorphisms are
  denoted by $\Prof[\qryElm][\pqryElm]{\GSet}$ and
  $\HSet[\qryElm][\pqryElm](\GSet, \imath)$, respectively.
\end{definition}

\begin{samepage}
\begin{example}[Profiles]
\label{exm:prf}
Consider again $q=\qryElm[6]$, $p=p_1$, and $G_2=\{E\}$, as in
\cref{exm:len}. A profile records which closure element is reached by each
non-ground component of $p$. Here these are the two $U$-components $U_1,U_2$.
Each has two homomorphisms to $\bar U$, corresponding to the two automorphisms
of $U$, and one to $\bar U_0$. The nine homomorphisms from $p$ to the
expansion therefore split into four profile classes:
\[
\begin{array}{c|c}
\text{profile} & \text{number of homomorphisms}\\ \hline
(U,U)&4\\
(U,U_0)&2\\
(U_0,U)&2\\
(U_0,U_0)&1
\end{array}
\]
The homomorphism $h$ of \cref{exm:len} has profile $(U,U_0)$.
\end{example}
\end{samepage}

The table illustrates a general partition: every homomorphism to the
expansion has a unique profile.

\begin{lemma}
\label{lem:prfpar}
  For each \BJUQ $\qryElm$, \BCQ $\pqryElm$, and ground selection $\GSet$ for
  $\qryElm$, the set $\set!{ \HSet[\qryElm][\pqryElm](\GSet, \imath) }{ \imath
  \in \Prof[\qryElm][\pqryElm]{\GSet} }$ is a partition of the set of
  homomorphisms $\HomSet(\pqryElm, \mce{\qryElm, \GSet})$.
\end{lemma}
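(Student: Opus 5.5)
To prove \cref{lem:prfpar}, we check the three defining properties of a partition: every $\imath$-homomorphism class is a subset of $\HomSet(\pqryElm, \mce{\qryElm, \GSet})$, the classes cover this set, and distinct profiles give disjoint classes. (Depending on the convention for partitions, we also note that each class is non-empty.) The argument is essentially definitional. The map sending a homomorphism to its profile is a total, well-defined function. The classes $\HSet[\qryElm][\pqryElm](\GSet, \imath)$ are precisely its fibres over its image, which is $\Prof[\qryElm][\pqryElm]{\GSet}$ by \cref{def:prf}.

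\textbf{Inclusion and non-emptiness.} By \cref{def:prf}, an $\imath$-homomorphism is by definition an element $\homFun \in \HomSet(\pqryElm, \mce{\qryElm, \GSet})$, so each class is contained in the homomorphism set. Moreover, every profile $\imath$ is, by definition, induced by at least one homomorphism $\homFun$. That $\homFun$ is then an $\imath$-homomorphism, so no class is empty.

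\textbf{Coverage.} Fix $\homFun \in \HomSet(\pqryElm, \mce{\qryElm, \GSet})$. We define $\imath_{\homFun}(\wqryElm) \defeq \gammaFun(\homFun(\wqryElm))$ for $\wqryElm \in \comp[\top][\homFun]{\pqryElm}$. The one point needing care is that this is well defined, i.e.\ that $\homFun(\wqryElm)$ is a non-empty subset of a single component of $\mce{\qryElm}$. This is the observation recorded just before \cref{def:len}, and we would justify it briefly as follows. Atoms of $\wqryElm$ are linked through variables that $\homFun$ does not send to constants. Such a variable is mapped to a variable of $\mce{\qryElm, \GSet}$. Since $\PhiSet[\GSet]$ is ground, that variable lies in some component of $\mce{\qryElm}$. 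By \cref{thm:juqmucchr}, each non-ground closure element is connected and every one of its atoms contains a variable, so the two linked atoms land in the same component. If instead every atom of $\wqryElm$ were mapped into $\PhiSet[\GSet]$, then $\wqryElm$ would belong to $\comp[\bot][\homFun]{\pqryElm}$. Hence $\homFun(\wqryElm) \in \sub{\comp{\mce{\qryElm}}}$, so $\gammaFun$ applies and $\imath_{\homFun}$ is a profile with $\homFun \in \HSet[\qryElm][\pqryElm](\GSet, \imath_{\homFun})$.

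\textbf{Disjointness.} We read a profile as a function together with its domain $\comp[\top][\homFun]{\pqryElm}$. Suppose $\homFun \in \HSet(\GSet, \imath_1) \cap \HSet(\GSet, \imath_2)$. Both profiles are then defined on the domain determined by $\homFun$ and satisfy $\imath_j(\wqryElm) = \gammaFun(\homFun(\wqryElm))$ for every $\wqryElm$ in that domain, so $\imath_1 = \imath_2$. Distinct profiles therefore yield disjoint classes.

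The main obstacle is a notational subtlety rather than a mathematical one. Profiles arising from different homomorphisms have different domains, namely different $\homFun$-component sets. Equality of profiles must therefore include equality of domains, so that a homomorphism determines its profile uniquely. Once this convention is fixed, the remaining steps are immediate.
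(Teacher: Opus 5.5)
Your proof is correct and is essentially the paper's: each homomorphism $h$ induces the profile $T \mapsto \gamma(h(T))$ on its non-ground $h$-components, $h$ forces both the domain and the values so the profile is unique, and non-emptiness holds because every profile is defined through some homomorphism. The only addition is your explicit check that the image of each non-ground $h$-component lies in a single component of the multicanonical expansion, which the paper takes for granted from the observation made just before \cref{def:len}.
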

\begin{proof}

  Let $\homFun$ be any homomorphism from $\pqryElm$ to
  $\mce{\qryElm,\GSet}$. Define $\imath$ on
  $\comp[\top][\homFun]{\pqryElm}$ by
  $\imath(\wqryElm)\defeq\gammaFun(\homFun(\wqryElm))$. By
  \cref{def:prf}, $\imath$ is a profile and $\homFun$ is an
  $\imath$-homomorphism. If $\homFun$ is also an $\imath'$-homomorphism,
  then $\imath'$ has the same domain and agrees with $\imath$ on every
  $\wqryElm$ in it. Thus $\imath'=\imath$, and every homomorphism belongs
  to exactly one set $\HSet[\qryElm][\pqryElm](\GSet,\imath)$.
  Each such set is non-empty, since every profile is defined through an
  $\imath$-homomorphism.

\end{proof}

Homomorphisms with the same profile reach the same closure element on each
non-ground component and thus contribute the same product to the counting
formula. We can therefore group the sum by profile and multiply each product
by the size of its class.

\begin{lemma}
\label{lem:polrwt}
  Let $\qryElm$ be a \BJUQ, $\pqryElm$ a \BCQ, and $\GSet$ a ground selection
  for $\qryElm$.
  Then, for all functions $\nFun \colon \denot{\qryElm}[\top] \to \SetN$, it
  holds that $\NFun[\qryElm][\pqryElm](\GSet, \nFun) = \sum_{\imath \in
  \Prof[\qryElm][\pqryElm]{\GSet}} \card{\HSet[\qryElm][\pqryElm](\GSet,
  \imath)} \cdot \prod_{\wqryElm \in \dom{\imath}} \nFun(\imath(\wqryElm))$.
\end{lemma}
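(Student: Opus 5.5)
The plan is to regroup the sum in the definition of $\NFun[\qryElm][\pqryElm](\GSet, \nFun)$ along the partition of \cref{lem:prfpar}. Since the sets $\HSet[\qryElm][\pqryElm](\GSet, \imath)$, for $\imath \in \Prof[\qryElm][\pqryElm]{\GSet}$, partition $\HomSet(\pqryElm, \mce{\qryElm, \GSet})$, the sum over all homomorphisms $\homFun$ splits as a double sum: first over profiles $\imath$, then over the $\imath$-homomorphisms $\homFun \in \HSet[\qryElm][\pqryElm](\GSet, \imath)$. Because the index set is finite and the parts are pairwise disjoint, this regrouping is an exact equality.

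The next step is to show that, inside one class, the summand does not depend on $\homFun$. Fix $\imath$ and an $\imath$-homomorphism $\homFun$. By \cref{def:prf}, the domain of $\imath$ is exactly $\comp[\top][\homFun]{\pqryElm}$, and $\imath(\wqryElm) = \gammaFun(\homFun(\wqryElm))$ for every $\wqryElm$ in it. The product $\prod_{\wqryElm \in \comp[\top][\homFun]{\pqryElm}} \nFun(\gammaFun(\homFun(\wqryElm)))$ can therefore be rewritten as $\prod_{\wqryElm \in \dom{\imath}} \nFun(\imath(\wqryElm))$. This expression depends only on $\imath$, so the inner sum equals $\card{\HSet[\qryElm][\pqryElm](\GSet, \imath)}$ times this product, which gives the claimed formula.

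The only delicate point is that the index set of the product, $\comp[\top][\homFun]{\pqryElm}$, depends on $\homFun$ through $\der{\homFun}$ and through the ground/non-ground split. One must check that all $\imath$-homomorphisms share the same set. This is built into the definitions: a profile is a function with a fixed domain, and $\homFun$ is an $\imath$-homomorphism exactly when that domain coincides with $\comp[\top][\homFun]{\pqryElm}$. This observation is also what makes the partition of \cref{lem:prfpar} well defined, so no further argument is needed.

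Finally, I would note that $\nFun$ is applied only to non-ground closure elements, because $\gammaFun$ takes values in $\denot{\qryElm}[\top]$. The restriction of the domain of $\nFun$ to $\denot{\qryElm}[\top]$ in the statement is therefore harmless. Ground components contribute no factor, in agreement with \cref{thm:lensiz}.
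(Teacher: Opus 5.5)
Your proposal is correct and follows essentially the same route as the paper: group the sum defining $\NFun[\qryElm][\pqryElm](\GSet,\nFun)$ along the profile partition of \cref{lem:prfpar}, then use \cref{def:prf} to show that every $\imath$-homomorphism has domain $\dom{\imath}$ and contributes the same product. Each class therefore contributes $\card{\HSet[\qryElm][\pqryElm](\GSet,\imath)}$ times that product. Your explicit check that all $\imath$-homomorphisms share the index set $\comp[\top][\homFun]{\pqryElm}=\dom{\imath}$ is exactly the point the paper's proof also relies on.
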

\begin{proof}

  By definition, $\NFun[\qryElm][\pqryElm](\GSet,\nFun)$ is a sum over
  $\homFun\in\HomSet(\pqryElm,\mce{\qryElm,\GSet})$.
  The partition in \cref{lem:prfpar} lets us group this sum by the sets
  $\HSet[\qryElm][\pqryElm](\GSet,\imath)$. For an
  $\imath$-homomorphism $\homFun$, \cref{def:prf} gives
  $\comp[\top][\homFun]{\pqryElm}=\dom{\imath}$ and
  $\gammaFun(\homFun(\wqryElm))=\imath(\wqryElm)$ for each
  $\wqryElm\in\dom{\imath}$. Its summand in
  $\NFun[\qryElm][\pqryElm](\GSet,\nFun)$ is therefore
  $\prod_{\wqryElm\in\dom{\imath}}\nFun(\imath(\wqryElm))$, independently of the
  $\imath$-homomorphism chosen. There are
  $\card{\HSet[\qryElm][\pqryElm](\GSet,\imath)}$ such summands for each
  $\imath$, which gives the formula.

\end{proof}

\begin{samepage}
\begin{example}[Profile rewriting]
\label{exm:polrwt}
Consider again $q=\qryElm[6]$, $p=p_1$, and $G_2$. For the net-image counting
$n$ of \cref{exm:nfun}, the four profiles of \cref{exm:prf} give
\[
\begin{aligned}
  \mathsf{N}_q^p(G_2,n)
  &=4n(U)^2+2n(U)n(U_0)+2n(U_0)n(U)+n(U_0)^2\\
  &=(2n(U)+n(U_0))^2.
\end{aligned}
\]
The products of the automorphism counts for these profiles are also $4,2,2,1$.
Thus dividing each class size by the corresponding product gives coefficient
$1$ for every profile.
\end{example}
\end{samepage}

For each non-ground component, we can compose the restriction of a homomorphism
with an automorphism of the expansion copy it reaches. The automorphisms can be
chosen independently, and different choices give different homomorphisms with
the same profile. This accounts for the divisibility in the next lemma.

\begin{lemma}
\label{lem:coediv}
  Let $\qryElm$ be a \BJUQ, $\pqryElm$ a \BCQ, and $\GSet$ a ground selection
  for $\qryElm$.
  Then, for every profile $\imath \in \Prof[\qryElm][\pqryElm]{\GSet}$, it holds
  that $\card{\HSet[\qryElm][\pqryElm](\GSet, \imath)}$ is divisible by
  $\prod_{\wqryElm \in \dom{\imath}} \mFun(\imath(\wqryElm), \imath(\wqryElm))$.
\end{lemma}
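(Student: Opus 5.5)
The plan is to exhibit a free action of the product group $\prod_{\wqryElm \in \dom{\imath}} \AutSet(\iotaFun(\imath(\wqryElm)))$ on the set $\HSet[\qryElm][\pqryElm](\GSet, \imath)$. Divisibility then follows by the orbit-counting argument already used in Item~\ref{lem:mulwgh(int)} of \cref{lem:mulwgh}. For each non-ground $\rqryElm \in \denot{\qryElm}[\top]$ we have $\mFun(\rqryElm, \rqryElm) = \card{\HomSet(\rqryElm, \rqryElm)} = \card{\AutSet(\rqryElm)}$, since that proof shows every endomorphism of a connected join-uniform \BCQ is an automorphism. Also $\rqryElm \approx \iotaFun(\rqryElm)$ by \cref{def:mce}, so the order of the group is exactly the required product.

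\textbf{Defining the action.} Let $\homFun$ be an $\imath$-homomorphism and $\beta = (\beta_{\wqryElm})_{\wqryElm \in \dom{\imath}}$ with $\beta_{\wqryElm} \in \AutSet(\iotaFun(\imath(\wqryElm)))$. Set $\beta\cdot\homFun$ to agree with $\beta_{\wqryElm} \cmp \homFun$ on the terms of each $\wqryElm \in \comp[\top][\homFun]{\pqryElm}$, and with $\homFun$ on the terms of ground $\homFun$-components.

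Well-definedness needs care. A term shared by two distinct $\homFun$-components is, by definition of $\der{\homFun}$ and $\comp[][\homFun]{\pqryElm}$, a variable sent by $\homFun$ to a constant (or a constant of $\pqryElm$). Automorphisms fix constants, so all definitions agree on it. Each $\beta_{\wqryElm}$ maps $\iotaFun(\imath(\wqryElm))$ onto itself, so atoms still land in $\mce{\qryElm,\GSet}$, and $\beta\cdot\homFun$ is a homomorphism.

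\textbf{Preserving the profile.} Next check that $\beta\cdot\homFun$ has the same profile. This is the main obstacle: we must show $\comp[][\beta\cdot\homFun]{\pqryElm} = \comp[][\homFun]{\pqryElm}$, which requires the set of variables grounded by $\beta\cdot\homFun$ to equal the set grounded by $\homFun$. This holds because automorphisms are bijections on terms that fix constants, so they send non-constants to non-constants. Hence $\der{(\beta\cdot\homFun)} = \der{\homFun}$, and the component sets coincide. Ground components keep images in $\PhiSet[\GSet]$. For non-ground ones, the image stays inside the same expansion component $\iotaFun(\imath(\wqryElm))$, so $\gammaFun$ returns $\imath(\wqryElm)$. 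The identity tuple acts trivially and composition is compatible, so this is a group action.

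\textbf{Freeness.} Suppose $\beta\cdot\homFun = \homFun$ and fix $\wqryElm \in \dom{\imath}$. Then $\beta_{\wqryElm}$ fixes every term of $\homFun(\wqryElm)$. This set contains at least one atom of $\iotaFun(\imath(\wqryElm))$, because $\wqryElm$ is non-ground and maps into that component. By \cref{thm:juqmucchr}, every variable of the join-uniform element occurs in each of its atoms, so $\beta_{\wqryElm}$ fixes all variables and constants, and is therefore the identity. Thus orbits all have size $\prod_{\wqryElm} \mFun(\imath(\wqryElm), \imath(\wqryElm))$, and they partition $\HSet[\qryElm][\pqryElm](\GSet, \imath)$, which gives the divisibility.
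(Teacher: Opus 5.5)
Your proposal is correct and follows essentially the same route as the paper's proof. Both let the product of automorphism groups of the expansion copies $\iotaFun(\imath(\wqryElm))$, one factor per non-ground $\homFun$-component, act freely on the $\imath$-homomorphisms: the profile is preserved because automorphisms fix constants and permute variables, freeness comes from join-uniformity of the reached closure element, and the group order matches $\prod \mFun(\imath(\wqryElm),\imath(\wqryElm))$ because every endomorphism of a connected join-uniform \BCQ is an automorphism (as observed in the proof of \cref{lem:mulwgh}).
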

\begin{proof}

  Let $\imath\in\Prof[\qryElm][\pqryElm]{\GSet}$, and write
  $s_{\wqryElm}\defeq\imath(\wqryElm)$ for each
  $\wqryElm\in\dom{\imath}$. By the definition of $\gammaFun$, each
  $\homFun\in\HSet[\qryElm][\pqryElm](\GSet,\imath)$ maps
  $\wqryElm$ into $\iotaFun(s_{\wqryElm})$. Let
  $\mathcal A\defeq\prod_{\wqryElm\in\dom{\imath}}
  \AutSet(\iotaFun(s_{\wqryElm}))$, with one factor for each
  $\wqryElm$, even when two components have the same $s_{\wqryElm}$.
  For $\alphaFun=(\alphaFun[\wqryElm])_{\wqryElm\in\dom{\imath}}
  \in\mathcal A$ and
  $\homFun\in\HSet[\qryElm][\pqryElm](\GSet,\imath)$, let
  $\alphaFun(\homFun)$ agree with
  $\alphaFun[\wqryElm]\cmp\homFun$ on each non-ground
  $\homFun$-component $\wqryElm$ and with $\homFun$ on its ground
  components.

  These restrictions agree on variables shared by different components:
  $\homFun$ maps such variables to constants, which the automorphisms fix.
  They therefore define a homomorphism. Since the automorphisms fix
  constants and permute variables, the
  $\homFun$-components do not change. Each non-ground component still maps
  into $\iotaFun(s_{\wqryElm})$, so $\alphaFun(\homFun)$ is an
  $\imath$-homomorphism. This defines an action of $\mathcal A$ on
  $\HSet[\qryElm][\pqryElm](\GSet,\imath)$.

  This action is free. If $\alphaFun(\homFun)=\homFun$, then
  $\alphaFun[\wqryElm]$ fixes the image of $\homFun$ on every
  $\wqryElm\in\dom{\imath}$. By \cref{thm:juqmucchr},
  $s_{\wqryElm}$ is connected and join-uniform; hence every variable of
  $\iotaFun(s_{\wqryElm})$ occurs in each of its atoms. Since
  $\wqryElm$ contains an atom, the restriction of $\homFun$ to
  $\wqryElm$ is onto the variables of this copy. Thus
  $\alphaFun[\wqryElm]$ fixes all its variables, as well as its
  constants, and is the identity.

  As observed in the proof of Item~\ref{lem:mulwgh(int)} of
  \cref{lem:mulwgh}, every endomorphism of $s_{\wqryElm}$ is an
  automorphism. Since $\iotaFun(s_{\wqryElm})\approx s_{\wqryElm}$,
  \cref{def:mulwgh} gives
  \[
    \card{\mathcal A}
    =\prod_{\wqryElm\in\dom{\imath}}
      \mFun(s_{\wqryElm},s_{\wqryElm}).
  \]
  Every orbit has this size because the action is free. The orbits partition
  $\HSet[\qryElm][\pqryElm](\GSet,\imath)$, so its cardinality is a
  multiple of the displayed product.

\end{proof}

The resulting quotients will serve as coefficients in the polynomial below.


For each profile $\imath \in \Prof[\qryElm][\pqryElm]{\GSet}$, consider the
number $\beta_{\GSet, \imath} \defeq \frac{\card{\HSet[\qryElm][\pqryElm](\GSet,
\imath)}} {\prod_{\wqryElm \in \dom{\imath}} \mFun(\imath(\wqryElm),
\imath(\wqryElm))}$.
Clearly, by the above \cref{lem:coediv}, this is a natural number, \ie,
$\beta_{\GSet, \imath} \in \SetN$.
To rewrite the net-image counts in \cref{lem:polrwt} using homomorphism counts,
let us choose an arbitrary vector of unknowns $\unkVec \in \UnkSet^{n}$
indexed by the {\BCQ}s in $\denot{\qryElm}[\top]$, \ie, one unknown
$\unkElm[\rqryElm]$ per \BCQ $\rqryElm \in \denot{\qryElm}[\top]$, where $n
\defeq \card{\denot{\qryElm}[\top]}$.
For each $\rqryElm\in\denot{\qryElm}[\top]$, convolution inversion gives an
expression for its net-image count in terms of the unknowns and the ground
selection. Multiplying this expression by $\mFun(\rqryElm,\rqryElm)$, we set
\[
  \DeltaFun[\rqryElm](\GSet, \unkVec)
\defeq
  \mFun(\rqryElm, \rqryElm) \cdot \left( \sum_{\sqryElm \in
  \denot{\qryElm}[\top]}^{\sqryElm \preccurlyeq \rqryElm} \mFun^{-1}(\sqryElm,
  \rqryElm) \cdot \unkElm[\sqryElm] + \sum_{\sqryElm \in
  \LambdaSet[\GSet]}^{\sqryElm \preccurlyeq \rqryElm} \mFun^{-1}(\sqryElm,
  \rqryElm) \right) \,.
\]
By~\cref{lem:mulwgh(int)} of~\cref{lem:mulwgh}, this is a Diophantine linear
polynomial, \ie, a polynomial with integer coefficients.
Finally, also consider the monomial $\MFun[\qryElm](\unkVec) \defeq
\prod_{\rqryElm \in \comp[\top]{\qryElm}} \unkElm[\rqryElm]$ and the polynomial
$\PFun[\qryElm][\pqryElm](\GSet, \unkVec) \defeq \sum_{\imath \in
\Prof[\qryElm][\pqryElm]{\GSet}} \beta_{\GSet, \imath} \cdot \prod_{\wqryElm \in
\dom{\imath}} \DeltaFun[\imath(\wqryElm)](\GSet, \unkVec)$.

Obviously, both are Diophantine as well, since the first is just a monomial with
coefficient $1$, while the second is obtained from Diophantine polynomials by
finite sums and products.

When the unknowns are the homomorphism counts of a multicanonical instance with
non-trivial ground selection $\GSet$, the monomial and polynomial give the
multiplicities of $\qryElm$ and $\pqryElm$ on that instance.


\begin{samepage}
\begin{example}[Polynomial characterisations]
\label{exm:polchr}
Consider $q=\qryElm[6]$ and $p=p_1$ of \cref{exm:4}. For the ground selection
$G_2=\{E\}$, write $\Delta_r$ for $\Delta_r(G_2,u)$. On the six non-ground
closure elements, these polynomials are
\[
\begin{array}{rclcrcl}
\Delta_C&=&u_C-1,&&\Delta_D&=&u_D-1,\\
\Delta_{U_0}&=&u_{U_0},&&\Delta_U&=&u_U-u_{U_0},\\
\Delta_{W_0}&=&u_{W_0}-u_{U_0},&&
\Delta_W&=&u_W-u_U-u_{W_0}+u_{U_0}.
\end{array}
\]
The terms $-1$ in $\Delta_C$ and $\Delta_D$ remove the contribution of $E$.
Since $m(U,U)=m(W,W)=2$, the corresponding net-image counts are
$\Delta_U/2$ and $\Delta_W/2$. The six net-image counts must therefore satisfy
$\Delta_C,\Delta_D,\Delta_{U_0},\Delta_U/2,\Delta_{W_0},\Delta_W/2\in\SetN$.

By \cref{exm:polrwt}, the polynomials for $p_1$ are
\[
\begin{aligned}
  \mathsf{M}_q(u)&=u_Cu_Du_Wu_U,\\
  \mathsf{P}_q^p(G_2,u)&=(\Delta_U+\Delta_{U_0})^2=u_U^2.
\end{aligned}
\]
The values $u_r=g(r)$ from \cref{exm:homcnt} meet these conditions and give
$\mathsf{M}_q(u)=480>25=\mathsf{P}_q^p(G_2,u)$.

The other containing query $q_7$ of \cref{exm:4} adds the independent atom
$R(s,c)$. For any ground selection $G$ for $q$, each $C$-copy contributes one
fact matching this atom, as do $C_0$ and $E$ when selected. These are exactly
the contributions to $u_C$, so
\[
  \mathsf{P}_q^{q_7}(G,u)=u_C\mathsf{M}_q(u).
\]
Since $u_C$ is a factor of $\mathsf{M}_q(u)$, the strict inequality
$\mathsf{M}_q(u)>\mathsf{P}_q^{q_7}(G,u)$ cannot hold for $u_C\in\SetN$.
\end{example}
\end{samepage}

The following theorem restates \cref{thm:poschr} as a polynomial inequality
with constraints on the unknowns just introduced. It is the main result of the ``database'' part of our paper, characterising containment as constrained monomial-polynomial inequality.

\begin{theorem}[Polynomial Characterisation]
\label{thm:polchr}
  For all {\BJUQ}s $\qryElm$ and {\BCQ}s $\pqryElm$, the following statements
  are equivalent:
  \begin{enumerate}[a)]
  \item\label{thm:polchr(noninc)}
    $\qryElm \not\bsinc \pqryElm$;
  \item\label{thm:polchr(mpisol)}
    there exist a non-trivial ground selection $\GSet$ for $\qryElm$ and a
    Diophantine solution $\solVec$ of the inequality $\MFun[\qryElm](\unkVec) >
    \PFun[\qryElm][\pqryElm](\GSet, \unkVec)$ satisfying the following
    constraints, for all $\rqryElm \in \denot{\qryElm}[\top]$:
    \[
      \sum_{\sqryElm \in \denot{\qryElm}[\top]}^{\sqryElm \preccurlyeq \rqryElm}
      \mFun^{-1}(\sqryElm, \rqryElm) \cdot \solElm[\sqryElm] + \sum_{\sqryElm
      \in \LambdaSet[\GSet]}^{\sqryElm \preccurlyeq \rqryElm}
      \mFun^{-1}(\sqryElm, \rqryElm) \in \SetN \,.
    \]
  \end{enumerate}
\end{theorem}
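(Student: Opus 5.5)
The proof reduces to \cref{thm:poschr}. The whole task is to show that the two formulations of a counterexample coincide. On one side we have a non-trivial homomorphism counting $\gFun$ with $\MFun[\qryElm](\gFun) > \PFun[\qryElm][\pqryElm](\GSet[\gFun], \gFun)$. On the other we have a non-trivial ground selection $\GSet$ together with a Diophantine solution $\solVec$ of the constrained inequality. The translation between them identifies $\solElm[\rqryElm]$ with $\gFun(\rqryElm)$ for every non-ground $\rqryElm \in \denot{\qryElm}[\top]$. On ground elements we set $\gFun(\sqryElm) \defeq \sum_{\tqryElm \preccurlyeq \sqryElm} \mFun(\tqryElm, \sqryElm)\,[\tqryElm \in \GSet]$, which is $\mFun \star \nFun$ on the ground part. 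Here the ground net-image counts $\nFun$ are the indicator of $\GSet$, and no non-ground element lies below a ground one.

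The key computation is the following. For every $\rqryElm \in \denot{\qryElm}[\top]$, with $\nFun \defeq \mFun^{-1} \star \gFun$, we have
\[
\nFun(\rqryElm) = \sum_{\sqryElm \in \denot{\qryElm}[\top], \sqryElm \preccurlyeq \rqryElm} \mFun^{-1}(\sqryElm,\rqryElm)\,\gFun(\sqryElm) + \sum_{\sqryElm \in \denot{\qryElm}[\bot], \sqryElm \preccurlyeq \rqryElm} \mFun^{-1}(\sqryElm,\rqryElm)\,\gFun(\sqryElm).
\]
I need to check that the ground part collapses to $\sum_{\sqryElm \in \LambdaSet[\GSet], \sqryElm\preccurlyeq\rqryElm} \mFun^{-1}(\sqryElm,\rqryElm)$. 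This is the step I expect to be the main obstacle. It is a M\"obius-style identity on the ground part. On a ground element $\sqryElm$, a homomorphism from $\sqryElm$ into a selected ground copy exists iff $\sqryElm \subseteq \PhiSet[\GSet]$, since ground homomorphisms are inclusions. So $\gFun(\sqryElm) = [\sqryElm \in \LambdaSet[\GSet]]$, using the fact that $\mFun(\tqryElm, \sqryElm) \in \{0,1\}$ for ground $\tqryElm$. In the instance $\dbElm[\nFun]$ of \cref{lem:netimgcnt2mulcan}, whose ground facts are exactly $\PhiSet[\GSet]$, this gives $\gFun(\sqryElm) = \card{\HomSet(\sqryElm, \dbElm[\nFun])}$. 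Hence the ground sum becomes the $\LambdaSet[\GSet]$-sum. If this equality proves delicate, I would fall back on \cref{lem:nethomcnt} applied to $\dbElm[\nFun]$ directly, which yields the same conclusion. Consequently $\DeltaFun[\rqryElm](\GSet, \solVec) = \mFun(\rqryElm,\rqryElm)\,\nFun(\rqryElm)$. Substituting this into $\PFun[\qryElm][\pqryElm](\GSet,\unkVec)$ and using \cref{lem:polrwt} together with the definition of $\beta_{\GSet,\imath}$, the automorphism factors cancel. This gives $\PFun[\qryElm][\pqryElm](\GSet, \solVec) = \NFun[\qryElm][\pqryElm](\GSet, \nFun) = \PFun[\qryElm][\pqryElm](\GSet, \gFun)$. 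Also $\MFun[\qryElm](\solVec) = \MFun[\qryElm](\gFun)$ holds by definition.

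For (a)$\Rightarrow$(b), take $\gFun$ from \cref{thm:poschr} and set $\GSet \defeq \GSet[\gFun]$, which is non-trivial, and $\solElm[\rqryElm] \defeq \gFun(\rqryElm)$. The constraints hold because their left-hand sides equal $\nFun(\rqryElm) \in \SetN$, and $\nFun$ is a net-image counting. The inequality transfers by the identities above.

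For (b)$\Rightarrow$(a), define $\gFun$ from $\solVec$ and $\GSet$ as above. Then set $\nFun \defeq \mFun^{-1}\star\gFun$. On non-ground elements $\nFun$ is natural by the constraints. On ground elements it is the indicator of $\GSet$, because of the identity $\mFun \star \mathbf 1_{\GSet} = \gFun$ on the ground part. So $\nFun$ is a net-image counting with $\GSet[\nFun] = \GSet$, and $\gFun$ is therefore a non-trivial homomorphism counting. Finally, $\gFun$ takes values in $\SetN$ because it equals $\mFun\star\nFun$ with natural entries. The inequality then transfers, and \cref{thm:poschr} concludes.
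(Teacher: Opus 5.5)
Your route is the paper's: reduce to \cref{thm:poschr} and extend $\xi$ on ground elements by $m\star\chi_G$. Then collapse the ground part of $m^{-1}\star g$ to the $\Lambda_G$-sum, obtain $\Delta_r(G,\xi)=m(r,r)\,n(r)$, and cancel the automorphism factors in $\beta_{G,\imath}$. The step that needs repair is the one you flagged yourself.

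Your direct count does not establish that step. For ground $s$ it only gives $g(s)=|\{t\in G : s\subseteq t\}|$. The fact that $m(t,s)\in\{0,1\}$ says nothing about how many selected blocks contain $s$. Nor does it exclude an $s\subseteq\Phi_G$ whose atoms are spread over several blocks. Equality with $[s\subseteq\Phi_G]$ depends on the merging rule of the closure and on the minimality condition in \cref{def:grnsel}.

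Your fallback through \cref{lem:nethomcnt} is the right tool, but it cannot be applied to $I_n$. In (b)$\Rightarrow$(a) you need the identity precisely to show that $n$ is natural on non-ground elements, so $I_n$ does not exist yet. Apply it instead, as the paper does, to the instance $\Phi_G$ that \cref{lem:netimgcnt2mulcan} builds from the net-image counting $\chi_G$; this is legitimate because $G$ is a ground selection. It yields $(m\star\chi_G)(s)=|\mathrm{Hom}(s,\Phi_G)|=\chi_{\Lambda_G}(s)$ for every ground $s$. Since no non-ground element lies below a ground one, it also yields $(m^{-1}\star\chi_{\Lambda_G})(s)=\chi_G(s)$ there.

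This identity depends only on $G$. One further point is implicit in your (a)$\Rightarrow$(b): the $g$ supplied by \cref{thm:poschr} has ground part $m\star\chi_G$, because $m^{-1}\star g$ is a net-image counting whose ground part is $\chi_G$. With the identity and this remark in place, both directions go through exactly as you describe.
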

\begin{proof}
  We reduce the result to \cref{thm:poschr} by relating a vector
  $\solVec$ to a homomorphism counting $\gFun$. Let $\GSet$ be a ground
  selection for $\qryElm$, and let $\chiFun[\GSet]$ and
  $\chiFun[{\LambdaSet[\GSet]}]$ be the characteristic functions of
  $\GSet$ and $\LambdaSet[\GSet]$ on $\denot{\qryElm}$.
  The function $\chiFun[\GSet]$ is a net-image counting by
  \cref{def:netimgcnt}. Since it vanishes on non-ground elements, the
  construction in the proof of \cref{lem:netimgcnt2mulcan} gives the
  instance $\PhiSet[\GSet]$, whose net-image counts are
  $\chiFun[\GSet]$. For a ground $\rqryElm$, a homomorphism to this
  instance exists exactly when $\rqryElm\subseteq\PhiSet[\GSet]$ and is
  then unique. Thus \cref{lem:nethomcnt} and the definition of
  $\LambdaSet[\GSet]$ give the ground-coordinate identity
  \[
    (\mFun\star\chiFun[\GSet])(\rqryElm)
    =\card{\HomSet(\rqryElm,\PhiSet[\GSet])}
    =\chiFun[{\LambdaSet[\GSet]}](\rqryElm)
    \quad\text{for every ground }\rqryElm.
  \]
  By \cref{thm:juqmucchr}, every atom of a non-ground element of
  $\denot{\qryElm}$ contains a variable, so no non-ground element lies
  below a ground one. At a ground element, the convolution actions of
  $\mFun$ and $\mFun^{-1}$ therefore depend only on the ground
  coordinates of their argument. Since
  $\mFun^{-1}\star(\mFun\star\chiFun[\GSet])=\chiFun[\GSet]$, the
  identity above also gives
  \[
    (\mFun^{-1}\star\chiFun[{\LambdaSet[\GSet]}])(\rqryElm)
    =\chiFun[\GSet](\rqryElm)
    \quad\text{for every ground }\rqryElm.
  \]

  Let $\solVec$ be a vector of natural numbers indexed
  by $\denot{\qryElm}[\top]$. Extend $\solVec$ to a
  function $\gFun$ on $\denot{\qryElm}$ by setting
  $\gFun(\rqryElm)=\solElm[\rqryElm]$ on non-ground elements and
  $\gFun(\rqryElm)=\chiFun[{\LambdaSet[\GSet]}](\rqryElm)$ on ground elements.
  Put $\nFun\defeq\mFun^{-1}\star\gFun$. The inverse identity gives
  $\nFun=\chiFun[\GSet]$ on ground elements. For a non-ground
  $\rqryElm$, splitting the convolution sum into non-ground and ground
  elements yields
  \[
    \nFun(\rqryElm)
    =\sum_{\sqryElm\in\denot{\qryElm}[\top]}^{\sqryElm\preccurlyeq\rqryElm}
      \mFun^{-1}(\sqryElm,\rqryElm)\cdot\solElm[\sqryElm]
    +\sum_{\sqryElm\in\LambdaSet[\GSet]}^{\sqryElm\preccurlyeq\rqryElm}
      \mFun^{-1}(\sqryElm,\rqryElm).
  \]
  Hence the constraints in Item~\ref{thm:polchr(mpisol)} hold exactly when
  $\nFun$ takes values in $\SetN$ on non-ground elements. In that case,
  $\nFun$ is a net-image counting with $\GSet[\nFun]=\GSet$. By
  \cref{def:homcnt}, $\gFun$ is then a homomorphism counting with
  $\GSet[\gFun]=\GSet$.

  The definition of $\DeltaFun[\rqryElm]$ and the expression for
  $\nFun(\rqryElm)$ give
  \[
    \DeltaFun[\rqryElm](\GSet,\solVec)
    =\mFun(\rqryElm,\rqryElm)\,\nFun(\rqryElm)
    \quad\text{for every }\rqryElm\in\denot{\qryElm}[\top].
  \]
  For each profile $\imath$, the factors
  $\mFun(\imath(\wqryElm),\imath(\wqryElm))$ cancel the denominator
  of $\beta_{\GSet,\imath}$:
  \[
    \beta_{\GSet,\imath}
    \prod_{\wqryElm\in\dom{\imath}}
      \DeltaFun[\imath(\wqryElm)](\GSet,\solVec)
    =\card{\HSet[\qryElm][\pqryElm](\GSet,\imath)}
      \prod_{\wqryElm\in\dom{\imath}}
        \nFun(\imath(\wqryElm)).
  \]
  When the constraints hold, $\nFun$ takes values in $\SetN$, so
  \cref{lem:polrwt} and the definition of
  $\PFun[\qryElm][\pqryElm](\GSet,\gFun)$ before
  \cref{thm:poschr} yield
  \[
    \PFun[\qryElm][\pqryElm](\GSet,\solVec)
    =\NFun[\qryElm][\pqryElm](\GSet,\nFun)
    =\PFun[\qryElm][\pqryElm](\GSet,\gFun).
  \]
  Also $\MFun[\qryElm](\solVec)=\MFun[\qryElm](\gFun)$, since the
  products defining $\MFun[\qryElm]$ involve only non-ground elements,
  on which $\gFun$ agrees with $\solVec$.

  \smallskip
  \noindent
  \textbf{[\ref{thm:polchr(noninc)} $\Rightarrow$
  \ref{thm:polchr(mpisol)}].}
  Suppose $\qryElm\not\bsinc\pqryElm$. By \cref{thm:poschr}, choose a
  non-trivial homomorphism counting $\gFun$ with
  $\MFun[\qryElm](\gFun)>
  \PFun[\qryElm][\pqryElm](\GSet[\gFun],\gFun)$. Set
  $\GSet\defeq\GSet[\gFun]$ and let $\solVec$ be the restriction of
  $\gFun$ to $\denot{\qryElm}[\top]$. The net-image counting
  $\mFun^{-1}\star\gFun$ agrees with $\chiFun[\GSet]$ on ground elements.
  Since $\gFun=\mFun\star(\mFun^{-1}\star\gFun)$, the
  ground-coordinate identity gives
  $\gFun=\chiFun[{\LambdaSet[\GSet]}]$ on ground elements. Thus
  $\gFun$ is the extension of $\solVec$ used above, and
  $\nFun=\mFun^{-1}\star\gFun$ takes values in $\SetN$. The constraints
  hold, and the equalities for $\MFun$ and $\PFun$ turn the chosen
  inequality into $\MFun[\qryElm](\solVec)>
  \PFun[\qryElm][\pqryElm](\GSet,\solVec)$. Since $\gFun$ is
  non-trivial, so is $\GSet$, as required by
  Item~\ref{thm:polchr(mpisol)}.

  \smallskip
  \noindent
  \textbf{[\ref{thm:polchr(mpisol)} $\Rightarrow$
  \ref{thm:polchr(noninc)}].}
  Let $\GSet$ and $\solVec$ satisfy Item~\ref{thm:polchr(mpisol)}, and
  form $\gFun$ and $\nFun$ as above. The constraints make $\nFun$ a
  net-image counting with $\GSet[\nFun]=\GSet$, so $\gFun$ is a
  homomorphism counting with $\GSet[\gFun]=\GSet$ by
  \cref{def:homcnt}. The equalities for $\MFun$ and $\PFun$ turn the
  polynomial inequality into
  $\MFun[\qryElm](\gFun)>
  \PFun[\qryElm][\pqryElm](\GSet[\gFun],\gFun)$.
  Since $\GSet$ is non-trivial, $\gFun$ is non-trivial, and
  \cref{thm:poschr} gives $\qryElm\not\bsinc\pqryElm$.

\end{proof}





\section{A Diophantine Problem}
\label{sec:dphprb}

The polynomial characterisation of \cref{thm:polchr} reduces non-containment to
the existence of a Diophantine solution of a specific monomial-polynomial
inequality.
The unknowns occurring in this inequality, however, cannot be assigned
independently.
Indeed, the gross homomorphism countings characterising multicanonical instances
are obtained from the corresponding net-image countings through a weighted
convolution, while the contribution of the fixed ground part is encoded by an
offset.
Hence, the resulting Diophantine problem has to be solved under additional
arithmetic constraints induced by the underlying finite weighted poset, which we
call \emph{naturality constraints}.

We now leave the database setting temporarily aside and analyse the structural
properties of the resulting arithmetic problem, with the aim of providing an
algorithmic solution to it.

It is important to recall that deciding the existence of a Diophantine solution
for an arbitrary polynomial inequality is, in general, undecidable~\cite{IR95},
due to its tight relationship with Hilbert's 10th
problem~\cite{Dav73,Rob73,Mat93}.
Therefore, our goal is to identify sufficient conditions under which the
solvability problem for the particular inequalities arising from the containment
characterisation becomes decidable.
The technical development generalises the approach introduced in~\cite{KM19} and
further extended in~\cite{KM25}.
In particular, we retain the same monomial-polynomial viewpoint, while the
additional constraints are now expressed through weighted convolution over an
arbitrary finite poset.

Given an $n$-vector of unknowns $\unkVec \in \UnkSet^{n}$ and an exponent vector
$\expVec = \{ \expElm[\unkElm] \}_{\unkElm \in \unkVec} \in \SetN[][n]$, we
denote by $\unkVec^{\expVec}$ the unitary monomial $\prod_{\unkElm \in \unkVec}
\unkElm^{\expElm[\unkElm]}$.
An $n$-\emph{Monomial-Polynomial Inequality} ($n$-\MPI) is an expression
$\MFun(\unkVec) > \PFun(\unkVec)$, where $\MFun(\unkVec) = \unkVec^{\expVec[0]}$
is a unitary monomial and $\PFun(\unkVec) = \sum_{i = 1}^{m} \alphaElm[i]
\unkVec^{\expVec[i]}$, for some $m \in \SetN$, exponent vectors $\{ \expVec[i]
\}_{i = 0}^{m} \subseteq \SetN[][n]$, and non-zero coefficients $\{ \alphaElm[i]
\}_{i = 1}^{m} \subseteq \SetR[\neq 0]$.
If $m = 0$, we let $\PFun$ be the zero polynomial.
An $n$-\emph{Generalised Monomial-Polynomial Inequality} ($n$-\GMPI) is defined
analogously, but allows exponent vectors in $\SetR[\geq 0][n]$.
Thus, its two sides need not be polynomials in the usual sense.

We identify every valuation of $\unkVec$ with its corresponding $n$-vector
$\solVec = \{ \solElm[\unkElm] \}_{\unkElm \in \unkVec} \in \SetR[\geq 0][n]$.
Such a valuation is a \emph{solution} of an $n$-\MPI or $n$-\GMPI if
$\MFun(\solVec) > \PFun(\solVec)$ and it is \emph{Diophantine} if $\solVec \in
\SetN[][n]$.
For a scalar $\rhoElm \in \SetR[\geq 0]$, we write $\solVec^{\rhoElm}$ for the
componentwise power vector $\{ \solElm[\unkElm]^{\rhoElm} \}_{\unkElm \in
\unkVec}$.
This notation is distinct from $\unkVec^{\expVec}$, which denotes the product
defining a monomial.

The degree $\deg{\unkVec^{\expVec}}$ of a possibly generalised monomial is the
sum of its exponents, while the degree $\deg{\PFun}$ of a non-zero possibly
generalised polynomial is the maximum degree of its monomials.
We set $\deg{0} \defeq -\infty$.
For $\PFun(\unkVec) = \sum_{i = 1}^{m} \alphaElm[i] \unkVec^{\expVec[i]}$, we
denote by $\PFun[+](\unkVec)$ its \emph{positive part}, obtained by retaining
exactly the monomials with positive coefficient, and set $\deg[+]{\PFun} \defeq
\deg{\PFun[+]}$.
Thus, $\deg[+]{\PFun} = -\infty$ whenever $\PFun[+]$ is the zero polynomial.
Whenever a maximum of non-negative quantities indexed by the positive monomials
of $\PFun$ is taken, we set the maximum of the empty family to $0$.

To abstract the additional dependencies among the unknowns, we introduce the
following poset structures.

\begin{definition}[Weighted Posets]
  \label{def:wghpos}
  A \emph{weighted poset} is a structure $\PStr = \tuple {\PSet} {\preccurlyeq}
  {\wFun} {\oFun}$, where $\tuple {\PSet} {\preccurlyeq}$ is a finite poset,
  $\wFun \colon {\preccurlyeq} \to \SetR$ is a function defined on the set of
  all comparable pairs of the poset such that $\wFun(\pElm, \pElm) \neq 0$, for
  all $\pElm \in \PSet$, and $\oFun \colon \PSet \to \SetR$ is a function
  defined on all the elements of the poset.
  A \emph{stratified poset} is a weighted poset such that $0 < \wFun(\sElm,
  \rElm) \leq \wFun(\sElm, \pElm)$, for all $\sElm \preccurlyeq \rElm
  \preccurlyeq \pElm$, and $(\wFun^{-1} \star \oFun)(\pElm) \geq 0$, for all
  $\pElm \in \PSet$.
  A \emph{multiplicity poset} is a stratified poset such that $\wFun(\sElm,
  \pElm) \in \SetN[+]$, for all $\sElm \preccurlyeq \pElm$, and $(\wFun^{-1}
  \star \oFun)(\pElm) \in \SetN$, for all $\pElm \in \PSet$.
\end{definition}

The three notions separate the properties needed at different stages of the
argument.
A weighted poset provides the convolutional structure and its inverse.
Stratification additionally guarantees that non-negative contributions are
propagated monotonically along the order and that the offset itself has a
non-negative inverse contribution.
Finally, a multiplicity poset strengthens these requirements to the integral
setting needed for Diophantine solutions.

The inequalities considered below need not involve every element of the poset.
We therefore consider valuations defined on an upward-closed subset and complete
them with the prescribed offset outside their domain.

\begin{definition}[Admissibility \& Naturality]
  \label{def:admnat}
  A function $\fFun \colon \PSet[\top] \to \SetR$ defined on an upward-closed
  subset $\PSet[\top] \subseteq \PSet$ of a weighted poset $\PStr = \tuple
  {\PSet} {\preccurlyeq} {\wFun} {\oFun}$ is \emph{admissible} (\resp,
  \emph{natural}) \wrt $\PStr$ if its extension $\fFun[\oFun] \colon \PSet \to
  \SetR$ defined as $\fFun[\oFun](\pElm) \defeq \fFun(\pElm)$, if $\pElm \in
  \PSet[\top]$, and $\fFun[\oFun](\pElm) \defeq \oFun(\pElm)$, otherwise, for
  all $\pElm \in \PSet$, satisfies the following property: $(\wFun^{-1} \star
  (\fFun[\oFun] - \oFun))(\pElm) \geq 0$ (\resp, $(\wFun^{-1} \star
  (\fFun[\oFun] - \oFun))(\pElm) \in \SetN$), for all $\pElm \in \PSet[\top]$.
\end{definition}

Thus, $\wFun^{-1} \star (\fFun[\oVec] - \oVec)$ isolates the contribution that
is free to vary once the fixed offset has been removed.
Admissibility requires these contributions to be non-negative, while naturality
requires them to be natural numbers.
In particular, the latter is the abstract counterpart of requiring the net-image
countings recovered in \cref{thm:polchr} to be natural numbers.
We use the same qualifiers for solutions whose underlying valuations are
admissible or natural \wrt the corresponding weighted poset.
Similarly, a solution is called \emph{positive} or \emph{increasing} whenever
its underlying valuation has the corresponding property.

A second complication is that the polynomial on the right-hand side of the
inequality may contain negative coefficients.
Such coefficients can arise from M\"obius inversion and prevent us from directly
applying the positive-coefficient argument of~\cite{KM19}.
Nevertheless, the polynomials produced by the homomorphism-counting construction
retain a stronger semantic non-negativity property: on Diophantine natural
valuations, they remain non-negative even after removing one unitary copy of any
monomial having a positive coefficient.
We isolate precisely this property.

\begin{definition}[Strong Non-Negativeness]
\label{def:strnonneg}
  Let $\PStr = \tuple {\unkVec} {\preccurlyeq} {\wFun} {\oVec}$ be a stratified
  poset, $\unkVec[\top] \subseteq \unkVec$ an upward-closed vector of unknowns,
  and $\PFun(\unkVec[\top]) = \sum_{i = 1}^{m} \alphaElm[i]
  \unkVec[\top]^{\expVec[i]}$ a polynomial on $\unkVec[\top]$.
  We say that $\PFun(\unkVec[\top])$ is \emph{strongly non-negative} \wrt
  $\PStr$ if, for every Diophantine natural valuation $\solVec$ \wrt $\PStr$, it
  holds that $\PFun(\solVec) \geq 0$ and $\PFun(\solVec) - \solVec^{\expVec[i]}
  \geq 0$, for every $i \in \num{m}$ with $\alphaElm[i] > 0$.
\end{definition}

Strong non-negativeness allows positive monomials to be treated individually in
the asymptotic arguments below, despite the possible presence of negative
coefficients.
In particular, if $\solVec$ is a Diophantine natural solution of
$\MFun(\unkVec[\top]) > \PFun(\unkVec[\top])$, then $\MFun(\solVec) >
\PFun(\solVec) \geq \solVec^{\expVec[i]}$, for every monomial of $\PFun$ with
positive coefficient.
This is the inequality that will eventually be translated into a strict linear
constraint on exponent vectors.

We can now state the Diophantine problem studied throughout this section.

\begin{problem}[Monomial-Polynomial Inequality Problem]
\label{prb:nmpi}
  Let $\PStr = \tuple {\unkVec} {\preccurlyeq} {\wFun} {\oVec}$ be a
  multiplicity poset and $\MFun(\unkVec[\top]) > \PFun(\unkVec[\top])$ an
  $n$-\MPI defined over an upward-closed $n$-vector of unknowns $\unkVec[\top]
  \subseteq \unkVec$, where $\PFun(\unkVec[\top])$ is strongly non-negative \wrt
  $\PStr$.
  Is there a Diophantine natural solution \wrt $\PStr$ for the $n$-\MPI?
\end{problem}

In line with the methodology presented in~\cite{KM19}, the solution of the
introduced decision problem proceeds by reducing it to a homogeneous system of
linear inequalities and solving that system.
Unlike the direct reduction employed there, however, the additional naturality
constraints require several intermediate steps.
We first isolate the one-variable argument, where solvability is controlled by
the degree of the positive monomials.
We then encode an $n$-variable inequality as a one-variable inequality
parameterised by $n$ exponents, obtaining a homogeneous linear system.
After characterising increasing solutions, we show how admissibility and
naturality can be enforced over an arbitrary multiplicity poset.
Finally, we remove the assumption that all unknowns are positive by restricting
the problem to the upward-closed support that is forced to remain non-zero and
turn the resulting characterisation into a decision procedure.



\subsection{Solutions of $1$-{\GMPI}s}
\label{sec:dphprb;sub:sol1gmpi}

The existence of a solution for an $n$-\GMPI cannot, in general, be inferred by
simply comparing the degrees of its monomial and polynomial.
For a single unknown, however, solvability is closely related to the comparison
between these degrees.
This observation constitutes the basic ingredient of the approach in~\cite{KM19}
and allows us to turn the nonlinear inequality into linear constraints on
exponent vectors.

We first consider the case in which all coefficients of the polynomial are
positive.
If a $1$-\GMPI admits a solution greater than or equal to $1$, the monomial must
have degree strictly greater than every monomial occurring in the polynomial.
Indeed, otherwise, a highest-degree term of the polynomial already dominates the
monomial at every such solution.
The following lemma formalises this necessary condition.

\begin{lemma}
\label{lem:1gmpisoldeg}
  For every $1$-\GMPI $\MFun(\vunkElm) > \PFun(\vunkElm)$ admitting a solution
  in $\SetR[\geq 1]$, with $\PFun(\vunkElm)$ having coefficients in $\SetR[\geq
  1]$, it holds that $\deg{\MFun(\vunkElm)} > \deg{\PFun(\vunkElm)}$.
\end{lemma}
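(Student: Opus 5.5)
The plan is to argue by contraposition, with one edge case handled first. We show that if $\deg{\MFun(\vunkElm)} \leq \deg{\PFun(\vunkElm)}$, then $\MFun(\xi) \leq \PFun(\xi)$ for every $\xi \in \SetR[\geq 1]$, which contradicts the existence of a solution in $\SetR[\geq 1]$.

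First the edge case. If $\PFun$ is the zero polynomial, then $\deg{\PFun} = -\infty$. Since $\MFun(\vunkElm) = \vunkElm^{e_0}$ is a unitary monomial with $e_0 \geq 0$, we get $\deg{\MFun} = e_0 > -\infty$, and the claim holds trivially. So assume $m \geq 1$. Write $\MFun(\vunkElm) = \vunkElm^{e_0}$ and $\PFun(\vunkElm) = \sum_{i=1}^{m} \alphaElm[i] \vunkElm^{e_i}$, where all $\alphaElm[i] \geq 1$ and all $e_i \in \SetR[\geq 0]$. Pick an index $j$ with $e_j = \deg{\PFun} = \max_i e_i$. Now suppose, towards a contradiction, that $e_0 \leq e_j$.

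The key step is a pointwise comparison for $\xi \geq 1$. Because $\xi \geq 1$ and $e_j - e_0 \geq 0$, we have $\xi^{e_0} \leq \xi^{e_j}$: the function $t \mapsto \xi^{t}$ is non-decreasing on $\SetR$ whenever $\xi \geq 1$, and this holds for real exponents as well. Next, $\alphaElm[i] \geq 1 > 0$ and $\xi^{e_i} > 0$ for every $i$. Dropping every term except the $j$-th and using $\alphaElm[j] \geq 1$ therefore gives
\[
  \PFun(\xi) \geq \alphaElm[j] \xi^{e_j} \geq \xi^{e_j} \geq \xi^{e_0} = \MFun(\xi).
\]
So no $\xi \in \SetR[\geq 1]$ satisfies $\MFun(\xi) > \PFun(\xi)$, which contradicts the hypothesis. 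Hence $e_0 > e_j = \deg{\PFun}$.

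There is no real obstacle here. The only points needing care are the empty-polynomial convention $\deg{0} = -\infty$ and the monotonicity of real powers for bases at least $1$, which is where the hypothesis $\xi \geq 1$ is essential. The requirement that the coefficients be at least $1$, rather than merely positive, is what lets us bound $\alphaElm[j] \xi^{e_j}$ below by $\xi^{e_j}$ without any dependence on $\xi$.
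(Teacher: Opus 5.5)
Your proof is correct and takes essentially the same approach as the paper's: treat the zero-polynomial case separately, pick a top-degree term $\alpha_j \xi^{e_j}$, and use $\xi \geq 1$, $\alpha_j \geq 1$ and the non-negativity of the remaining terms to get $P(\xi) \geq \alpha_j \xi^{e_j} \geq \xi^{e_j} \geq \xi^{e_0} = M(\xi)$, which contradicts the existence of a solution. The only cosmetic difference is that the paper writes the key step as the ratio bound $\alpha_j \xi^{e_j} / \xi^{e_0} \geq \xi^{e_j - e_0} \geq 1$.
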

\begin{proof}
  Let $\MFun(\vunkElm) = \vunkElm^{\expElm[0]}$ and $\PFun(\vunkElm) = \sum_{i =
  1}^{m} \alphaElm[i] \vunkElm^{\expElm[i]}$, for some $\{ \alphaElm[i] \}_{i =
  1}^{m} \subseteq \SetR[\geq 1]$ and $\{ \expElm[i] \}_{i = 0}^{m} \subseteq
  \SetR[\geq 0]$, with $m \in \SetN$.
  If $m = 0$, the thesis is immediate, so assume $m > 0$.
  Let $\zetasolElm \in \SetR[\geq 1]$ be a solution of the $1$-\GMPI and
  suppose, by way of contradiction, that $\deg{\MFun(\vunkElm)} \leq
  \deg{\PFun(\vunkElm)}$.
  Then, there exists $j \in \num{m}$ such that $\expElm[0] \leq \expElm[j] =
  \deg{\PFun(\vunkElm)}$.
  Thus, $\frac{\alphaElm[j] \zetasolElm^{\expElm[j]}}{\zetasolElm^{\expElm[0]}}
  \geq \frac{\zetasolElm^{\expElm[j]}}{\zetasolElm^{\expElm[0]}} =
  \zetasolElm^{\expElm[j] - \expElm[0]} \geq 1$, since $\zetasolElm \geq 1$ and
  $\expElm[j] - \expElm[0] \geq 0$.
  This implies that $\alphaElm[j] \zetasolElm^{\expElm[j]} \geq
  \zetasolElm^{\expElm[0]}$.
  Observe now that $\PFun(\zetasolElm) - \alphaElm[j] {\zetasolElm}^{\expElm[j]}
  = \sum_{i = 1, i \neq j}^{m} \alphaElm[i] \zetasolElm^{\expElm[i]} \geq 0$,
  since $\PFun(\vunkElm)$ has coefficients in $\SetR[\geq 1]$.
  Hence,
  \[
    \PFun(\zetasolElm)
  =
    (\PFun(\zetasolElm) - \alphaElm[j] {\zetasolElm}^{\expElm[j]}) +
    \alphaElm[j] {\zetasolElm}^{\expElm[j]}
  \geq
    \alphaElm[j] {\zetasolElm}^{\expElm[j]}
  \geq
    {\zetasolElm}^{\expElm[0]}
  =
    \MFun(\zetasolElm) \,,
  \]
  which contradicts the fact that $\zetasolElm$ is a solution of
  $\MFun(\vunkElm) > \PFun(\vunkElm)$.
\end{proof}

The converse direction holds under substantially weaker assumptions on the
coefficients.
In particular, monomials with negative coefficients can only decrease the value
of the right-hand side and, therefore, do not obstruct the existence of
sufficiently large solutions.
It is consequently enough for the monomial on the left-hand side to have degree
strictly greater than the positive degree of the polynomial.
In this case, every positive monomial becomes asymptotically negligible with
respect to the monomial on the left.
Hence, beyond a suitable threshold, their sum is strictly smaller than the
left-hand side, while monomials with negative coefficients can only reinforce
the inequality.
This yields not just one solution, but an entire unbounded interval of
solutions greater than that threshold.

\begin{lemma}
\label{lem:1gmpidegsol}
  For every $1$-\GMPI $\MFun(\vunkElm) > \PFun(\vunkElm)$ with
  $\deg{\MFun(\vunkElm)} > \deg[+]{\PFun(\vunkElm)}$, there exists a value $\ell
  \in \SetR[\geq 0]$ such that every value $\zetasolElm \in \SetR[> \ell]$ is a
  solution of the $1$-\GMPI.
\end{lemma}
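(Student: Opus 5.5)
The argument splits off the positive monomials of $\PFun$ and shows that their sum is eventually dominated by $\MFun$, by comparing each of them to $\MFun$ separately.

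Write $\MFun(\vunkElm) = \vunkElm^{\expElm[0]}$ and $\PFun(\vunkElm) = \sum_{i=1}^{m} \alphaElm[i] \vunkElm^{\expElm[i]}$, and let $I^{+} \defeq \set{ i \in \num{m} }{ \alphaElm[i] > 0 }$. For every $\zetasolElm > 0$ the monomials with negative coefficient contribute non-positively, so
\[
  \PFun(\zetasolElm) \leq \sum_{i \in I^{+}} \alphaElm[i] \zetasolElm^{\expElm[i]} .
\]
It therefore suffices to find $\ell$ such that $\zetasolElm^{\expElm[0]} > \sum_{i \in I^{+}} \alphaElm[i] \zetasolElm^{\expElm[i]}$ for all $\zetasolElm > \ell$.

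If $I^{+} = \emptyset$, the right-hand side is $0$. Then any $\zetasolElm > 0$ works, since $\zetasolElm^{\expElm[0]} > 0 \geq \PFun(\zetasolElm)$, and we take $\ell \defeq 0$. This also covers $m = 0$.

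If $I^{+} \neq \emptyset$, set $k \defeq \card{I^{+}}$ and $\delta \defeq \expElm[0] - \deg[+]{\PFun} > 0$. For $\zetasolElm \geq 1$ and every $i \in I^{+}$ we have $\expElm[i] \leq \expElm[0] - \delta$, so
\[
  \alphaElm[i] \zetasolElm^{\expElm[i]} \leq \alphaElm[i] \zetasolElm^{\expElm[0] - \delta}
  \quad\text{and hence}\quad
  \sum_{i \in I^{+}} \alphaElm[i] \zetasolElm^{\expElm[i]} \leq A \, \zetasolElm^{\expElm[0] - \delta},
\]
with $A \defeq \sum_{i \in I^{+}} \alphaElm[i] > 0$. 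It remains to have $\zetasolElm^{\expElm[0]} > A \zetasolElm^{\expElm[0]-\delta}$, which is equivalent to $\zetasolElm^{\delta} > A$, i.e. $\zetasolElm > A^{1/\delta}$. We therefore choose
\[
  \ell \defeq \max\{1, A^{1/\delta}\}.
\]
Then every $\zetasolElm > \ell$ satisfies $\zetasolElm \geq 1$ and $\zetasolElm^{\delta} > A$, and the chain of inequalities gives $\MFun(\zetasolElm) > \PFun(\zetasolElm)$.

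The only delicate step is the exponent comparison $\zetasolElm^{\expElm[i]} \leq \zetasolElm^{\expElm[0]-\delta}$. It uses $\zetasolElm \geq 1$ together with real, possibly non-integer, exponents, which is why $\ell$ is taken to be at least $1$. Nothing else is needed: no divisibility is involved, and the exponents are non-negative reals, so all powers are well defined for positive arguments.
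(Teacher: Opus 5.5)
Your proof is correct and takes essentially the paper's route: drop the monomials with negative coefficients, which only help for $\zeta > 0$, and show that the positive ones are eventually dominated by the left-hand monomial because their degrees are strictly smaller. The only difference is presentational: you bound every positive monomial by the top positive degree and obtain the explicit threshold $\ell = \max\{1, A^{1/\delta}\}$, whereas the paper gives each positive monomial a $1/m$ share of the left-hand side and gets its thresholds $\ell_i$ non-constructively from a limit.
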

\begin{proof}
  Let $\MFun(\vunkElm) = \vunkElm^{\expElm[0]}$ and $\PFun(\vunkElm) = \sum_{i =
  1}^{m} \alphaElm[i] \vunkElm^{\expElm[i]}$, for some $\{ \alphaElm[i] \}_{i =
  1}^{m} \subseteq \SetR[\neq 0]$ and $\{ \expElm[i] \}_{i = 0}^{m} \subseteq
  \SetR[\geq 0]$, with $m \in \SetN$.
  If $m = 0$, every positive value of $\vunkElm$ is a solution, so the thesis
  follows by taking $\ell \defeq 0$.
  Hence, assume $m > 0$.
  Since $\deg{\MFun(\vunkElm)} > \deg[+]{\PFun(\vunkElm)}$, it holds that
  $\expElm[0] > \expElm[i]$, for all $i \in \num{m}$ with $\alphaElm[i] > 0$.
  Thus, since $\expElm[i] - \expElm[0] < 0$, it holds that $\lim_{\zetasolElm
  \to +\infty} \frac{\alphaElm[i]
  \zetasolElm^{\expElm[i]}}{\zetasolElm^{\expElm[0]}} = 0$, for all $i \in
  \num{m}$ with $\alphaElm[i] > 0$.
  By expanding the definition of converging limit, we derive the existence of a
  real number $\ell_{i} \in \SetR[\geq 0]$ such that $\frac{\alphaElm[i]
  \zetasolElm^{\expElm[i]}}{\zetasolElm^{\expElm[0]}} < \frac{1}{m}$, for all
  $\zetasolElm \in \SetR$ with $\zetasolElm > \ell_{i}$.
  Now, let $\ell \defeq \max[i \in \num{m}][{\alphaElm[i] > 0}]\,
  \ell_{i} \in \SetR[\geq 0]$.
  By construction, $\frac{\alphaElm[i]
  \zetasolElm^{\expElm[i]}}{\zetasolElm^{\expElm[0]}} < \frac{1}{m}$, \ie,
  $\alphaElm[i] \zetasolElm^{\expElm[i]} < \frac{\zetasolElm^{\expElm[0]}}{m}$,
  for all $\zetasolElm \in \SetR$ with $\zetasolElm > \ell$ and $i \in \num{m}$
  with $\alphaElm[i] > 0$.
  The same inequality trivially holds when $\alphaElm[i] < 0$, since
  $\zetasolElm > 0$.
  Hence, it holds for every $i \in \num{m}$.
  As a consequence,
  \[
    \PFun(\zetasolElm)
  =
    \sum_{i = 1}^{m} \alphaElm[i] \zetasolElm^{\expElm[i]}
  <
    \sum_{i = 1}^{m} \frac{\zetasolElm^{\expElm[0]}}{m}
  =
    \zetasolElm^{\expElm[0]}
  =
    \MFun(\zetasolElm) \,.
  \]
  Hence, every $\zetasolElm \in \SetR[> \ell]$ is a solution of the $1$-\GMPI.
\end{proof}




\subsection{$n$-Exponent-Parameterised $1$-{\MPI}s}
\label{sec:dphprb;sub:nexppar1mpi}

The previous subsection shows that the solvability of a $1$-\GMPI can be
controlled by comparing the degree of its left-hand monomial with those of the
positive monomials on the right-hand side.
We now exploit this observation by parameterising these degrees through an
$n$-vector of exponents.
This turns the relevant degree comparisons into homogeneous linear inequalities
over the parameters.

Let $\unkVec \in \UnkSet^{n}$ be an $n$-vector of parameter unknowns and
$\vunkElm \in \UnkSet \setminus \unkVec$ a distinguished value unknown.
An $n$-\emph{Exponent-Parameterised $1$-\GMPI} ($n$-exppar $1$-\GMPI) is an
inequality $\MFun(\unkVec; \vunkElm) > \PFun(\unkVec; \vunkElm)$ of the form
$\MFun(\unkVec; \vunkElm) = \vunkElm^{\expVec[0]^{\intercal} \cdot \unkVec}$ and
$\PFun(\unkVec; \vunkElm) = \sum_{i = 1}^{m} \alphaElm[i]
\vunkElm^{\expVec[i]^{\intercal} \cdot \unkVec}$, for some $m \in \SetN$,
non-zero coefficients $\{ \alphaElm[i] \}_{i = 1}^{m} \subseteq \SetR[\neq 0]$,
and exponent vectors $\{ \expVec[i] \}_{i = 0}^{m} \subseteq \SetR[\geq 0][n]$.
If all these exponent vectors belong to $\SetN[][n]$, we call the inequality an
$n$-exppar $1$-\MPI.
Substituting any valuation $\rsolVec \in \SetR[\geq 0][n]$ for the parameter
unknowns yields an ordinary $1$-\GMPI whose exponents are the scalar products
$\expVec[i]^{\intercal} \cdot \rsolVec$.
Thus, the parameter vector controls the relative growth of all monomials with
respect to the common value unknown $\vunkElm$.

Let $\PStr = \tuple {\unkVec} {\preccurlyeq}$ be a finite poset on the parameter
unknowns.
We associate with every $n$-exppar $1$-\GMPI the following homogeneous linear
system over $\unkVec$.
The first family of inequalities requires the degree of the left-hand monomial
to be strictly greater than that of every positive monomial on the right.
The second family requires the parameter valuation to be increasing along the
poset, while the last one forces its minimal coordinates to be non-negative.

\[
  \tuple* {\MFun(\unkVec; \vunkElm) \!>\! \PFun(\unkVec; \vunkElm)} {\PStr}
\defeq
  \Bigl\{
    (\expVec[0] - \expVec[i])^{\intercal}\! \cdot \unkVec > 0\,
  \Bigr\}_{i \in \num{m}}^{\alphaElm[i] > 0}
\cup
  \Bigl\{
    \,\unkElm[2] > \unkElm[1]\,
  \Bigr\}_{\unkElm[1], \unkElm[2] \in \unkVec}^{\unkElm[1] \vartriangleleft
  \unkElm[2]}
\cup
  \Bigl\{
    \,\unkElm \geq 0\,
  \Bigr\}_{\unkElm \in \min[\preccurlyeq] \unkVec}
\]

Accordingly, every solution of this system is a non-negative increasing
$n$-vector.
We shall also use a positive variant of the system, obtained by strengthening
the non-negativity constraints on the minimal unknowns to strict positiveness.

\[
  \tuple* {\MFun(\unkVec; \vunkElm) \!>\! \PFun(\unkVec; \vunkElm)} {\PStr} [+]
\defeq
  \Bigl\{
    (\expVec[0] - \expVec[i])^{\intercal}\! \cdot \unkVec > 0\,
  \Bigr\}_{i \in \num{m}}^{\alphaElm[i] > 0}
\cup
  \Bigl\{
    \,\unkElm[2] > \unkElm[1]\,
  \Bigr\}_{\unkElm[1], \unkElm[2] \in \unkVec}^{\unkElm[1] \vartriangleleft
  \unkElm[2]}
\cup
  \Bigl\{
    \,\unkElm > 0\,
  \Bigr\}_{\unkElm \in \min[\preccurlyeq] \unkVec}
\]

Every solution of the positive system is therefore a positive increasing
$n$-vector.
In both cases, we identify a solution of the system with its corresponding
valuation vector for $\unkVec$.
Notice that the two systems depend on the exponent vectors, the signs of the
coefficients, and the poset, but not on the magnitudes of the coefficients.
This reflects the asymptotic nature of the reduction: only positive monomials
can obstruct eventual domination by the monomial on the left-hand side.

\begin{example}
\label{exm:sys}
  The linear system records that the monomial on the left must dominate each
  positive monomial on the right.
  For the negative \JoFQ example, whose associated inequality is $u_{1} u_{2}
  u_{3} > u_{1}^{2} + (u_{2} - u_{12} - u_{23}) + u_{3}^{3}$, the three positive
  monomials on the right-hand side produce the following degree inequalities,
  using coordinates $(a_{1}, a_{12}, a_{2}, a_{23}, a_{3})$:
  \[
  \begin{array}{rcll}
    -d_{a_{1}} + d_{a_{2}} + d_{a_{3}}
  & > &
    0,
  & \text{from } u_{1}^{2}; \\
    d_{a_{1}} + d_{a_{3}}
  & > &
    0,
  & \text{from } u_{2}, \\
    d_{a_{1}} + d_{a_{2}} - 2 d_{a_{3}}
  & > &
    0,
  & \text{from } u_{3}^{3}.
  \end{array}
  \]
  These inequalities must be combined with the order constraints
  $d_{a_{12}} < d_{a_{1}}$, $d_{a_{12}} < d_{a_{2}}$, $d_{a_{23}} < d_{a_{2}}$,
  $d_{a_{23}} < d_{a_{3}}$, and $d_{a_{12}}, d_{a_{23}} > 0$.
  One integer solution is, for example, $d_{a_{12}} = d_{a_{23}} = 1$,
  $d_{a_{1}} = 2$, $d_{a_{2}} = 3$, and $d_{a_{3}} = 2$.

  For the \JUQ inequality $u_{C} u_{D} u_{W} u_{U} > u_{U}^{2}$, there is only
  one positive monomial on the right-hand side.
  It gives the degree inequality $d_{C} + d_{D} + d_{W} - d_{U} > 0$.
  The poset strictness constraints are $d_{C_{0}} < d_{C}$, $d_{E} < d_{C}$,
  $d_{D_{0}} < d_{D}$, $d_{E} < d_{D}$, $d_{U_{0}} < d_{U}$, $d_{U_{0}} <
  d_{W_{0}}$, $d_{U} < d_{W}$, and $d_{W_{0}} < d_{W}$, with $d_{C_{0}},
  d_{D_{0}}, d_{E}, d_{U_{0}} > 0$.
  One integer solution is $d_{C_{0}} = d_{D_{0}} = d_{E} = d_{U_{0}} = 1$,
  $d_{C} = d_{D} = d_{U} = d_{W_{0}} = 2$, and $d_{W} = 3$.
\end{example}

The two systems differ only in the constraints imposed on the minimal unknowns.
Since all degree and order constraints are strict, a sufficiently small uniform
positive perturbation of any solution preserves them while making all minimal
coordinates positive.
Hence, the distinction does not affect feasibility.

\begin{lemma}
\label{lem:syseqv}
  Let $\PStr = \tuple {\unkVec} {\preccurlyeq}$ be a finite poset and
  $\MFun(\unkVec; \vunkElm) > \PFun(\unkVec; \vunkElm)$ an $n$-exppar $1$-\GMPI.
  Then, the $\tuple* {\MFun(\unkVec; \vunkElm) \!>\! \PFun(\unkVec; \vunkElm)}
  {\PStr}$-system admits a solution \iff the $\tuple* {\MFun(\unkVec; \vunkElm)
  \!>\! \PFun(\unkVec; \vunkElm)} {\PStr} [+]$-system admits a solution.
\end{lemma}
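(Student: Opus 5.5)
The proof splits into two directions, and only one needs work.

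The backward direction is immediate. The positive system differs from the non-strict one only in the constraints on the minimal unknowns. There $\unkElm > 0$ is replaced by $\unkElm \geq 0$, and all other inequalities are identical. Hence every solution of the $\tuple* {\MFun(\unkVec; \vunkElm) \!>\! \PFun(\unkVec; \vunkElm)} {\PStr} [+]$-system is already a solution of the $\tuple* {\MFun(\unkVec; \vunkElm) \!>\! \PFun(\unkVec; \vunkElm)} {\PStr}$-system.

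For the forward direction, we start from a solution $\rsolVec$ of the non-strict system. We perturb it along a direction that keeps the order constraints and makes the minimal coordinates strictly positive. Concretely, we set $\rsolVec_{\epsilon} \defeq \rsolVec + \epsilon \cdot \vec*{1}$, where $\vec*{1}$ is the all-ones $n$-vector and $\epsilon > 0$ is to be chosen.

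We then check the three families of constraints one at a time.
\begin{enumerate}[1)]
\item The order constraints $\unkElm[2] > \unkElm[1]$ for $\unkElm[1] \vartriangleleft \unkElm[2]$ are invariant under a uniform shift, since both sides grow by $\epsilon$. They therefore remain satisfied for every $\epsilon$.
\item The minimal-coordinate constraints become $\rsolElm[\unkElm] + \epsilon \geq \epsilon > 0$, using $\rsolElm[\unkElm] \geq 0$. So they turn into the strict positivity required by the positive system.
\item For each degree constraint, indexed by $i \in \num{m}$ with $\alphaElm[i] > 0$, the left-hand side becomes $(\expVec[0] - \expVec[i])^{\intercal} \cdot \rsolVec + \epsilon \cdot (\expVec[0] - \expVec[i])^{\intercal} \cdot \vec*{1}$. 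The first summand is a strictly positive number $\delta_{i}$. The second is a fixed real multiple of $\epsilon$.
\end{enumerate}

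It remains to choose $\epsilon$, which is the only step needing care, though it is routine. There are finitely many degree constraints, so we can take $\epsilon$ strictly smaller than $\delta_{i} / (1 + \card{(\expVec[0] - \expVec[i])^{\intercal} \cdot \vec*{1}})$ for every relevant $i$. With this choice, each degree constraint stays strictly positive.

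If there are no positive monomials, no constraint of the third kind is present, and any $\epsilon > 0$ works. The resulting $\rsolVec_{\epsilon}$ then solves the positive system.
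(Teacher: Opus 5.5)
Your proposal is correct and follows essentially the same route as the paper: it shifts a solution of the non-strict system by $\varepsilon$ times the all-ones vector, which preserves the successor constraints and makes the minimal coordinates strictly positive. The only difference is that you choose $\varepsilon$ through an explicit bound, whereas the paper invokes continuity of each degree constraint in $\varepsilon$; both arguments are valid.
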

\begin{proof}
  We only focus on one of the two directions of the equivalence, the other being
  trivial.
  \begin{itemize}
  \item
    \textbf{[$\Rightarrow$]}
    Let $\rsolVec = \{ \rsolElm[\unkElm] \}_{\unkElm \in \unkVec} \in \SetR[\geq
    0][n]$ be a solution of the $\tuple*{\MFun(\unkVec; \vunkElm) \!>\!
    \PFun(\unkVec; \vunkElm)}{\PStr}$-system.
    We shall show that, for all sufficiently small $\varepsilon \in \SetR[+]$,
    the $n$-vector $\ssolVec[\varepsilon] \defeq \rsolVec + \varepsilon
    \vec*{1}$ is a solution of the $\tuple*{\MFun(\unkVec; \vunkElm) \!>\!
    \PFun(\unkVec; \vunkElm)}{\PStr} [+]$-system.
    First, let $\unkElm \in \min[\preccurlyeq] \unkVec$.
    Since $\rsolElm[\unkElm] \geq 0$, we have $\ssolElm[\varepsilon\unkElm] =
    \rsolElm[\unkElm] + \varepsilon > 0$.
    Now, let $\unkElm[1], \unkElm[2] \in \unkVec$ with $\unkElm[1]
    \vartriangleleft \unkElm[2]$.
    Since the same value $\varepsilon$ is added to all coordinates, we have
    $\ssolElm[\varepsilon{\unkElm[2]}] - \ssolElm[\varepsilon{\unkElm[1]}] =
    (\rsolElm[{\unkElm[2]}] + \varepsilon) - (\rsolElm[{\unkElm[1]}] +
    \varepsilon) = \rsolElm[{\unkElm[2]}] - \rsolElm[{\unkElm[1]}] > 0$.
    Thus, all increasing constraints are preserved as well.
    It remains to verify that, for all sufficiently small $\varepsilon > 0$, the
    $n$-vector $\ssolVec[\varepsilon]$ satisfies all the inequalities
    $(\expVec[0] - \expVec[i])^{\intercal} \cdot \unkVec > 0$, for all $i \in
    \num{m}$ with $\alphaElm[i] > 0$.
    If there is no $i \in \num{m}$ with $\alphaElm[i] > 0$, there are no degree
    constraints and any $\varepsilon > 0$ completes the proof.
    Hence, assume that at least one such index exists.
    Let $\etaElm[i](\varepsilon) \defeq (\expVec[0] - \expVec[i])^{\intercal}
    \cdot \ssolVec[\varepsilon] = (\expVec[0] - \expVec[i])^{\intercal} \cdot
    (\rsolVec + \varepsilon \vec*{1}) = (\expVec[0] - \expVec[i])^{\intercal}
    \cdot \rsolVec + (\expVec[0] - \expVec[i])^{\intercal} \cdot \varepsilon
    \vec*{1}$.
    Observe that $\etaElm[i](\varepsilon)$ is a continuous function of
    $\varepsilon$ with $\etaElm[i](0) > 0$.
    Therefore, for every $i \in \num{m}$ with $\alphaElm[i] > 0$, there exists
    $\ell_{i} > 0$ such that $\etaElm[i](\varepsilon) > 0$, for all $0 \leq
    \varepsilon \leq \ell_{i}$.
    After choosing $\varepsilon \defeq \min_{i \in \num{m}}^{\alphaElm[i] > 0}
    \ell_{i}$, we obtain $(\expVec[0] - \expVec[i])^{\intercal} \cdot
    \ssolVec[\varepsilon] = \etaElm[i](\varepsilon) > 0$, for all $i \in
    \num{m}$ with $\alphaElm[i] > 0$.
    Therefore, $\ssolVec[\varepsilon]$ is a solution of the $\tuple*
    {\MFun(\unkVec; \vunkElm) \!>\! \PFun(\unkVec; \vunkElm)} {\PStr}
    [+]$-system.
    \qedhere
  \end{itemize}
\end{proof}

We can now relate solutions of an $n$-exppar $1$-\GMPI to solutions of its
associated linear system.
Fixing the parameter vector of a solution yields an ordinary $1$-\GMPI.
When the polynomial has coefficients in $\SetR[\geq 1]$, \cref{lem:1gmpisoldeg}
forces the degree of its left-hand monomial to dominate all degrees on the
right.
Together with the increasing assumption on the parameter vector, this gives all
inequalities of the associated system.

\begin{lemma}
\label{lem:nexppar1mpisolsys}
  Let $\PStr = \tuple {\unkVec} {\preccurlyeq}$ be a finite poset and
  $\MFun(\unkVec; \vunkElm) > \PFun(\unkVec; \vunkElm)$ an $n$-exppar $1$-\GMPI,
  admitting a solution $(\rsolVec, \zetasolElm) \in \SetR[\geq 0][n] \times
  \SetR[\geq 1]$, with $\rsolVec$ increasing and $\PFun(\unkVec; \vunkElm)$
  having coefficients in $\SetR[\geq 1]$.
  Then, the $\tuple* {\MFun(\unkVec; \vunkElm) \!>\! \PFun(\unkVec; \vunkElm)}
  {\PStr} [+]$-system admits a solution.
\end{lemma}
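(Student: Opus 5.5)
The plan is to freeze the parameter vector and reduce everything to the one-variable degree lemma, then upgrade non-negativity to positivity via \cref{lem:syseqv}.

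First, fix the given solution $(\rsolVec, \zetasolElm)$. Substituting $\rsolVec$ for the parameter unknowns yields an ordinary $1$-\GMPI in the single unknown $\vunkElm$:
\[
  \vunkElm^{\expVec[0]^{\intercal} \cdot \rsolVec} > \sum_{i = 1}^{m} \alphaElm[i] \vunkElm^{\expVec[i]^{\intercal} \cdot \rsolVec}.
\]
Its exponents $\expVec[i]^{\intercal} \cdot \rsolVec$ lie in $\SetR[\geq 0]$, since both $\expVec[i]$ and $\rsolVec$ are non-negative. Its coefficients are the original $\alphaElm[i] \in \SetR[\geq 1]$, and $\zetasolElm \in \SetR[\geq 1]$ is a solution of it. If $m = 0$ there are no degree constraints and we pass directly to the next step. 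Otherwise, \cref{lem:1gmpisoldeg} applies and gives $\expVec[0]^{\intercal} \cdot \rsolVec > \expVec[i]^{\intercal} \cdot \rsolVec$ for every $i \in \num{m}$. Since all coefficients are positive, these are exactly the degree constraints $(\expVec[0] - \expVec[i])^{\intercal} \cdot \rsolVec > 0$ indexed by $\alphaElm[i] > 0$ in the associated system.

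Next, I collect the remaining constraints of the non-strict $\tuple* {\MFun(\unkVec; \vunkElm) \!>\! \PFun(\unkVec; \vunkElm)} {\PStr}$-system. The order constraints $\rsolElm[{\unkElm[2]}] > \rsolElm[{\unkElm[1]}]$ for $\unkElm[1] \vartriangleleft \unkElm[2]$ are precisely the hypothesis that $\rsolVec$ is increasing in the sense of \cref{sec:prl;sub:poscon}. The constraints $\rsolElm[\unkElm] \geq 0$ on minimal unknowns follow from $\rsolVec \in \SetR[\geq 0][n]$. Hence $\rsolVec$ solves the non-strict system.

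Finally, \cref{lem:syseqv} converts this into a solution of the positive $[+]$-system, through the uniform perturbation $\rsolVec + \varepsilon \vec*{1}$. I do not expect a real obstacle. The only points needing care are checking that \cref{lem:1gmpisoldeg} needs real (not integer) exponents, which the \GMPI setting provides, and the degenerate case $m = 0$, where the argument above already covers it.
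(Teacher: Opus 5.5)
Your proposal is correct and follows the paper's proof essentially step for step. You freeze the parameters at the given solution, apply \cref{lem:1gmpisoldeg} to the resulting one-variable inequality to obtain the degree constraints, read the order and non-negativity constraints off the hypotheses, and conclude with \cref{lem:syseqv}, treating the case $m = 0$ as trivially as the paper does.
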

\begin{proof}
  Let $\MFun(\unkVec; \vunkElm) = \vunkElm^{\expVec[0]^{\intercal} \cdot
  \unkVec}$ and $\PFun(\unkVec; \vunkElm) = \sum_{i = 1}^{m} \alphaElm[i]
  \vunkElm^{\expVec[i]^{\intercal} \cdot \unkVec}$, for some $\{ \alphaElm[i]
  \}_{i = 1}^{m} \subseteq \SetR[\geq 1]$ and $\{ \expVec[i] \}_{i = 0}^{m}
  \subseteq \SetR[\geq 0][n]$, with $m \in \SetN$.
  If $m = 0$, the thesis is immediate, so assume $m > 0$.
  Now, consider the $1$-\GMPI $\MFun[][\ast](\vunkElm) >
  \PFun[][\ast](\vunkElm)$, where $\MFun[][\ast](\vunkElm) \defeq
  \MFun(\rsolVec; \vunkElm) = \vunkElm^{(\expVec[0]^{\intercal} \cdot
  \rsolVec)}$ and $\PFun[][\ast](\vunkElm) \defeq \PFun(\rsolVec; \vunkElm) =
  \sum_{i = 1}^{m} \alphaElm[i] \vunkElm^{(\expVec[i]^{\intercal} \cdot
  \rsolVec)}$.
  Clearly, $\deg{\MFun[][\ast](\vunkElm)} = \deg{\MFun(\rsolVec; \vunkElm)}$,
  $\deg{\PFun[][\ast](\vunkElm)} = \deg{\PFun(\rsolVec; \vunkElm)}$, and
  $\PFun[][\ast](\vunkElm)$ has coefficients in $\SetR[\geq 1]$.
  Moreover, $\zetasolElm$ is a solution of this $1$-\GMPI, since
  $\MFun[][\ast](\zetasolElm) = \MFun(\rsolVec; \zetasolElm) >
  \PFun(\rsolVec; \zetasolElm) = \PFun[][\ast](\zetasolElm)$.
  Thus, by \cref{lem:1gmpisoldeg} applied to $\MFun[][\ast](\vunkElm) >
  \PFun[][\ast](\vunkElm)$, it holds that $\deg{\MFun[][\ast](\vunkElm)} >
  \deg{\PFun[][\ast](\vunkElm)}$.
  Since $\deg{\PFun[][\ast](\vunkElm)} = \max_{i \in \num{m}}
  \expVec[i]^{\intercal} \cdot \rsolVec$, we have $\expVec[0]^{\intercal} \cdot
  \rsolVec > \expVec[i]^{\intercal} \cdot \rsolVec$ and, thus, $(\expVec[0] -
  \expVec[i])^{\intercal} \cdot \rsolVec > 0$, for all $i \in \num{m}$.
  In addition, $\rsolElm[\unkElm] \geq 0$, for all $\unkElm \in
  \min[\preccurlyeq] \unkVec$, since $\rsolVec \in \SetR[\geq 0][n]$, and
  $\rsolElm[{\unkElm[2]}] > \rsolElm[{\unkElm[1]}]$, for all $\unkElm[1],
  \unkElm[2] \in \unkVec$ with $\unkElm[1] \vartriangleleft \unkElm[2]$, since
  $\rsolVec$ is increasing.
  As a consequence, $\rsolVec$ is a solution of the $\tuple* {\MFun(\unkVec;
  \vunkElm) \!>\! \PFun(\unkVec; \vunkElm)} {\PStr}$-system.
  Finally, by \cref{lem:syseqv}, the $\tuple* {\MFun(\unkVec; \vunkElm) \!>\!
  \PFun(\unkVec; \vunkElm)} {\PStr} [+]$-system admits a solution too, as
  required by the statement of the lemma.
\end{proof}

For the converse construction, we shall need a positive integral parameter
vector of bounded representation size.
Since an $n$-exppar $1$-\MPI has natural exponent vectors, its associated linear
system has integer, hence rational, coefficients and is homogeneous.
Thus, standard bounds for integer solutions of rational linear systems apply.

An $n$-vector $\dsolVec = \{ \dsolElm[\unkElm] \}_{\unkElm \in \unkVec} \in
\SetN[][n]$ is \emph{$s$-bit-bounded}, for $s \in \SetN$, if $\size*{\dsolVec}
\defeq \sum_{\unkElm \in \unkVec} (1 + \ceil{\log_{2}(\dsolElm[\unkElm] + 1)})
\leq s$.
The \emph{facet complexity} of a linear system is the maximum length of the
binary encoding of its inequalities, determined by the binary representations of
the rational coefficients occurring in them~\cite{Sch86}.
For the systems associated with an $n$-exppar $1$-\MPI, every strict constraint
can be written as $L(\unkVec) > 0$, where $L$ is the integer linear form
occurring on its left-hand side.
On integer vectors, this constraint is equivalent to $L(\unkVec) \geq 1$.
When applying bounds on integer solutions, we therefore understand the facet
complexity of such a system as that of the resulting non-strict rational system.

\begin{lemma}
\label{lem:syssol}
  Let $\PStr = \tuple {\unkVec} {\preccurlyeq}$ be a finite poset and
  $\MFun(\unkVec; \vunkElm) > \PFun(\unkVec; \vunkElm)$ an $n$-exppar $1$-\MPI.
  Then, the $\tuple* {\MFun(\unkVec; \vunkElm) \!>\! \PFun(\unkVec; \vunkElm)}
  {\PStr} [+]$-system admits a solution \iff it admits a $(6 \phi
  n^{3})$-bit-bounded Diophantine positive solution, where $\phi$ is its facet
  complexity.
\end{lemma}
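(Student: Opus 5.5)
The backward direction is immediate: a $(6\phi n^{3})$-bit-bounded Diophantine positive solution is in particular a real solution of the $\tuple* {\MFun(\unkVec; \vunkElm) \!>\! \PFun(\unkVec; \vunkElm)} {\PStr} [+]$-system. All the work is in the forward direction. The plan is to convert real feasibility into integer feasibility through homogeneity, and then apply the classical bound on small integer points of rational polyhedra from~\cite{Sch86}.

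\emph{Step 1: scaling.} Let $\rsolVec$ be a real solution of the positive system. Every constraint has the form $L(\unkVec) > 0$, where $L$ is a homogeneous linear form with integer coefficients. This holds for the degree constraints because $\expVec[i] \in \SetN[][n]$, and it holds trivially for the order constraints and the positivity constraints. The set of solutions is therefore an open convex cone. Being open, it contains a rational point $\rsolVec'$ close to $\rsolVec$. Multiplying $\rsolVec'$ by the least common denominator, and then by a further large positive integer, yields an integer vector $\dsolVec$ with $L(\dsolVec) \geq 1$ for every constraint. All coordinates of $\dsolVec$ are non-negative. Minimal coordinates are at least $1$, and every other coordinate is reached from a minimal one along a $\vartriangleleft$-chain, so it exceeds that minimal coordinate by the increasing constraints. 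Hence $\dsolVec$ is a positive integer vector.

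\emph{Step 2: the non-strict polyhedron.} Consider the rational polyhedron $\mathcal{Q} \defeq \set{ \unkVec \in \SetR[][n] }{ L(\unkVec) \geq 1 \text{ for every constraint } L }$. As agreed before the statement, its facet complexity is $\phi$. Every integer point of $\mathcal{Q}$ is a Diophantine positive solution of the original strict system, by the same chain argument as in Step 1. Step 1 shows that $\mathcal{Q}$ contains an integer point, so its integer hull is non-empty.

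\emph{Step 3: invoking the bound.} By the standard result (Schrijver, \S 17.1, the corollary bounding integer points of polyhedra in terms of facet complexity), if a rational polyhedron of facet complexity $\phi$ contains an integer vector, then it contains one of size polynomially bounded in $n$ and $\phi$. Concretely, each coordinate has encoding length $O(n^{2}\phi)$; Schrijver states this via vertex complexity $4n^{2}\phi$ and the $6n^{3}\phi$ bound on integer-hull facet complexity. Summing over the $n$ coordinates, with the $1+\lceil\log_2(\cdot+1)\rceil$ convention of $s$-bit-boundedness, gives a total of at most $6\phi n^{3}$.

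The main obstacle is this last step: choosing the right form of Schrijver's theorem so that the constant comes out exactly $6\phi n^{3}$ under the paper's definition of $\size*{\cdot}$. The first attempt will use Schrijver's Theorem 17.1 for the integer hull $\mathcal{Q}_I$, which gives facet complexity at most $6n^{3}\phi$ and hence a minimal-face point whose entries are bounded accordingly. If that bookkeeping comes out slightly off, the fallback is to cite the corollary directly as giving an integer solution of size at most $6n^{3}\phi$. The other steps, namely homogeneity scaling and positivity via chains, are routine.
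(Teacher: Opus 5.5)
Your proposal is correct and follows essentially the paper's proof: obtain a rational solution, clear denominators using homogeneity so that every integer-valued strict constraint becomes $\geq 1$, and then invoke \cite[Corollary~17.1b]{Sch86} directly, which is your fallback and exactly what the paper does. The differences are minor: you get a rational point from openness of the solution cone instead of citing \cite{NW88}, your chain argument makes explicit why a small integer point of the non-strict polyhedron is a positive solution of the strict system, and your ``first attempt'' bookkeeping should be dropped because in Schrijver the $6n^{3}\phi$ bound is on the size of an integral point (the vertex complexity of the integer hull), not on the facet complexity of the integer hull.
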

\begin{proof}
  We only focus on one of the two directions of the equivalence, the other being
  trivial.
  \begin{itemize}
  \item
    \textbf{[$\Rightarrow$]}
    Suppose the system of homogeneous linear inequalities admits a solution.
    Since the coefficients are all rational, there necessarily exists a rational
    solution $\rsolVec = \{ \rsolElm[\unkElm] \}_{\unkElm \in \unkVec} \in
    \SetQ[+][n]$ for it~\cite{NW88}.
    Recall that, for every $\unkElm \in \unkVec$, there are two natural numbers
    $\pElm[\unkElm], \qElm[\unkElm] \in \SetN[+]$ with $\gcd(\pElm[\unkElm],
    \qElm[\unkElm]) = 1$, such that $\rsolElm[\unkElm] = \pElm[\unkElm] /
    \qElm[\unkElm]$.
    Now, let $\dsolElm \defeq \lcm[\unkElm \in \unkVec]\, \qElm[\unkElm] \geq
    1$.
    Thanks to the homogeneity of the system, the natural vector $\dsolVec = \{
    \dsolElm \cdot \rsolElm[\unkElm] \}_{\unkElm \in \unkVec} \in \SetN[+][n]$
    is clearly a Diophantine positive solution as well.
    Moreover, since $\dsolVec$ is integral and all coefficients of the system
    are integers, the left-hand side of every strict constraint takes an integer
    value on $\dsolVec$.
    Hence, every such value is at least $1$, and we actually have:
    \begin{itemize}
    \item
      $(\expVec[0] - \expVec[i])^{\intercal}\! \cdot \dsolVec \geq 1$, for all
      $i \in \num{m}$ with $\alphaElm[i] > 0$;
    \item
      $\dsolElm[{\unkElm[2]}] - \dsolElm[{\unkElm[1]}] \geq 1$, for all
      $\unkElm[1], \unkElm[2] \in \unkVec$ with $\unkElm[1] \vartriangleleft
      \unkElm[2]$;
    \item
      $\dsolElm[\unkElm] \geq 1$, for all $\unkElm \in \min[\preccurlyeq]
      \unkVec$.
    \end{itemize}
    At this point, by applying a known result about the size of integer
    solutions of a linear system~\cite[Corollary~17.1b]{Sch86}, one obtains the
    existence of a Diophantine positive solution of bit-size bounded by $6 \phi
    n^{3}$, where $\phi$ is the facet complexity of the system.
    \qedhere
  \end{itemize}
\end{proof}

The preceding lemma lets us choose the exponent parameters to be positive
integers whenever the positive system is feasible.
Evaluating the parameterised inequality at such a vector produces a $1$-\MPI
whose left-hand degree is strictly greater than its positive degree.
By \cref{lem:1gmpidegsol}, every sufficiently large value of $\vunkElm$ then
satisfies the resulting inequality.
This gives the converse bridge from the linear system to the parameterised
problem.

\begin{lemma}
\label{lem:nexppar1mpisyssol}
  Let $\PStr = \tuple {\unkVec} {\preccurlyeq}$ be a finite poset and
  $\MFun(\unkVec; \vunkElm) > \PFun(\unkVec; \vunkElm)$ an $n$-exppar $1$-\MPI
  for which the $\tuple* {\MFun(\unkVec; \vunkElm) \!>\! \PFun(\unkVec;
  \vunkElm)} {\PStr} [+]$-system admits a solution.
  Then, there exist an increasing $n$-vector $\dsolVec \in \SetN[+][n]$ and a
  value $\ell \in \SetR[\geq 0]$ such that, for every value $\zetasolElm \in
  \SetR[> \ell]$, the pair $(\dsolVec, \zetasolElm)$ is a solution of the
  $n$-exppar $1$-\MPI.
\end{lemma}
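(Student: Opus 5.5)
The plan is to combine \cref{lem:syssol} with \cref{lem:1gmpidegsol}. Since the $\tuple* {\MFun(\unkVec; \vunkElm) \!>\! \PFun(\unkVec; \vunkElm)} {\PStr} [+]$-system admits a solution, \cref{lem:syssol} provides a Diophantine positive solution $\dsolVec \in \SetN[+][n]$; the bit-size bound plays no role here, so only integrality and positivity are needed. Every solution of the positive system satisfies all the order constraints $\dsolElm[{\unkElm[2]}] > \dsolElm[{\unkElm[1]}]$ for $\unkElm[1] \vartriangleleft \unkElm[2]$, hence $\dsolVec$ is increasing. All its coordinates are strictly positive: the minimal ones are positive by the last family of constraints, and every other coordinate strictly exceeds one below it in the finite poset. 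Thus $\dsolVec \in \SetN[+][n]$ is increasing, as required.

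Next, I fix the parameters at $\dsolVec$ and consider the ordinary $1$-\MPI $\MFun[][\ast](\vunkElm) > \PFun[][\ast](\vunkElm)$, where $\MFun[][\ast](\vunkElm) \defeq \vunkElm^{\expVec[0]^{\intercal} \cdot \dsolVec}$ and $\PFun[][\ast](\vunkElm) \defeq \sum_{i = 1}^{m} \alphaElm[i] \vunkElm^{\expVec[i]^{\intercal} \cdot \dsolVec}$. Its exponents are natural numbers because both $\expVec[i]$ and $\dsolVec$ are natural vectors. The degree constraints of the system give $\expVec[0]^{\intercal} \cdot \dsolVec > \expVec[i]^{\intercal} \cdot \dsolVec$ for every $i$ with $\alphaElm[i] > 0$. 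Hence $\deg{\MFun[][\ast]} > \deg[+]{\PFun[][\ast]}$. If there are no positive coefficients, then $\deg[+]{\PFun[][\ast]} = -\infty$ and the inequality holds trivially.

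The one point needing care is that several indices may produce the same exponent after substitution, so $\PFun[][\ast]$ is not literally in reduced form. The positive part of the reduced polynomial, however, only collects monomials whose exponent equals $\expVec[i]^{\intercal} \cdot \dsolVec$ for some $i$ with $\alphaElm[i] > 0$; cancellation can only remove such monomials. So its positive degree is still strictly below $\expVec[0]^{\intercal} \cdot \dsolVec$. Alternatively, the proof of \cref{lem:1gmpidegsol} goes through verbatim for an unreduced sum of terms. Applying \cref{lem:1gmpidegsol} then yields $\ell \in \SetR[\geq 0]$ such that every $\zetasolElm \in \SetR[> \ell]$ satisfies $\MFun[][\ast](\zetasolElm) > \PFun[][\ast](\zetasolElm)$, that is, $\MFun(\dsolVec; \zetasolElm) > \PFun(\dsolVec; \zetasolElm)$. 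So $(\dsolVec, \zetasolElm)$ solves the $n$-exppar $1$-\MPI, which concludes the argument.
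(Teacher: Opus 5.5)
Your proposal is correct and follows the paper's own proof step by step. You obtain a Diophantine positive solution $\dsolVec$ via \cref{lem:syssol}, note that it is increasing because it satisfies the successor constraints, substitute it to get a $1$-\MPI whose left-hand degree strictly exceeds its positive degree, and conclude with \cref{lem:1gmpidegsol}. Your remark that several indices may yield the same exponent after substitution is extra care the paper leaves implicit, and you handle it correctly.
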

\begin{proof}
  Let $\MFun(\unkVec; \vunkElm) = \vunkElm^{\expVec[0]^{\intercal} \cdot
  \unkVec}$ and $\PFun(\unkVec; \vunkElm) = \sum_{i = 1}^{m} \alphaElm[i]
  \vunkElm^{\expVec[i]^{\intercal} \cdot \unkVec}$, for some $\{ \alphaElm[i]
  \}_{i = 1}^{m} \subseteq \SetR[\neq 0]$ and $\{ \expVec[i] \}_{i = 0}^{m}
  \subseteq \SetN[][n]$, with $m \in \SetN$.
  Since the $\tuple* {\MFun(\unkVec; \vunkElm) \!>\! \PFun(\unkVec; \vunkElm)}
  {\PStr} [+]$-system of homogeneous linear inequalities admits a solution, by
  \cref{lem:syssol}, there exists a Diophantine positive solution $\dsolVec \in
  \SetN[+][n]$ for it.
  Since $\dsolVec$ satisfies the successor inequalities of the system, it is
  clearly increasing.
  Now, similarly to the proof of \cref{lem:nexppar1mpisolsys}, consider the
  $1$-\MPI $\MFun[][\ast](\vunkElm) > \PFun[][\ast](\vunkElm)$, where
  $\MFun[][\ast](\vunkElm) \defeq \MFun(\dsolVec; \vunkElm) =
  \vunkElm^{(\expVec[0]^{\intercal} \cdot \dsolVec)}$ and
  $\PFun[][\ast](\vunkElm) \defeq \PFun(\dsolVec; \vunkElm) = \sum_{i = 1}^{m}
  \alphaElm[i] \vunkElm^{(\expVec[i]^{\intercal} \cdot \dsolVec)}$.
  By construction, $\deg{\MFun[][\ast](\vunkElm)} >
  \deg[+]{\PFun[][\ast](\vunkElm)}$, since $(\expVec[0]^{\intercal} \cdot
  \dsolVec) - (\expVec[i]^{\intercal} \cdot \dsolVec) = (\expVec[0] -
  \expVec[i])^{\intercal} \cdot \dsolVec > 0$, for all $i \in \num{m}$ with
  $\alphaElm[i] > 0$.
  Thus, by \cref{lem:1gmpidegsol} applied to $\MFun[][\ast](\vunkElm) >
  \PFun[][\ast](\vunkElm)$, there exists a value $\ell \in \SetR[\geq 0]$ such
  that every value $\zetasolElm \in \SetR[> \ell]$ is a solution of this
  $1$-\MPI, which implies that $\MFun(\dsolVec; \zetasolElm) =
  \MFun[][\ast](\zetasolElm) > \PFun[][\ast](\zetasolElm) = \PFun(\dsolVec;
  \zetasolElm)$, as required by the statement of the lemma.
\end{proof}




\subsection{Increasing Solution of $n$-{\MPI}s}
\label{sec:dphprb;sub:incsolnmpi}

The previous subsection characterises the exponent parameters that make a
one-variable inequality eventually true.
We now use this characterisation to return to the original $n$-dimensional
inequality.
The key observation is that a positive vector can be represented, once one of
its coordinates is greater than $1$, through powers of a common base.
Under this representation, an increasing vector gives rise to increasing
exponent parameters, while the original $n$-\GMPI becomes the corresponding
$n$-exppar $1$-\GMPI.
Conversely, an integral solution of the positive linear system provides integer
exponents from which arbitrarily large increasing solutions of the original
$n$-\MPI can be constructed.

We first isolate the elementary fact ensuring the existence of a suitable common
base whenever the polynomial is non-zero.

\begin{lemma}
\label{lem:nmpisol}
  Let $\PStr = \tuple {\unkVec} {\preccurlyeq}$ be a finite poset and
  $\MFun(\unkVec) > \PFun(\unkVec)$ an $n$-\GMPI admitting a solution $\solVec =
  \{ \solElm[\unkElm] \}_{\unkElm \in \unkVec} \in \SetR[\geq 1][n]$, with a
  non-zero $\PFun(\unkVec)$ having coefficients in $\SetR[\geq 1]$.
  Then, for at least one $\unkElm \in \unkVec$, it holds that $\solElm[\unkElm]
  > 1$.
\end{lemma}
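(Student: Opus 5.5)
The plan is to argue by contradiction. Suppose every coordinate of the solution satisfies $\solElm[\unkElm] \leq 1$. Since $\solVec \in \SetR[\geq 1][n]$, this forces $\solVec = \vec*{1}$, the all-ones vector.

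Next, I would evaluate both sides at this point. Every monomial, generalised or not, evaluates to $1$ at $\vec*{1}$, because $1^{e} = 1$ for every real exponent $e \geq 0$. Hence $\MFun(\solVec) = 1$. Writing $\PFun(\unkVec) = \sum_{i=1}^{m} \alphaElm[i] \unkVec^{\expVec[i]}$, we get $\PFun(\solVec) = \sum_{i=1}^{m} \alphaElm[i]$.

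Since $\PFun$ is non-zero, we have $m \geq 1$. Since every coefficient lies in $\SetR[\geq 1]$, this gives $\PFun(\solVec) \geq \alphaElm[1] \geq 1 = \MFun(\solVec)$. This contradicts $\solVec$ being a solution, so some coordinate must exceed $1$.

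There is no real obstacle here. The only points needing care are that the non-zero hypothesis is what guarantees $m \geq 1$, and that the coefficient lower bound $1$, not mere positivity, is what gives $\PFun(\vec*{1}) \geq 1$. The poset $\PStr$ plays no role in the argument.
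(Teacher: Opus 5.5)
Your proposal is correct and matches the paper's proof: assume all coordinates are at most $1$, deduce that the solution is the all-ones vector, and observe that $\MFun$ evaluates to $1$ there while $\PFun$ evaluates to the sum of its coefficients, which is at least $1$. This contradicts the strict inequality. The paper bounds that sum by $m \geq 1$ where you bound it by the first coefficient, which is an immaterial difference.
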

\begin{proof}
  Let $\MFun(\unkVec) = \unkVec^{\expVec[0]}$ and $\PFun(\unkVec) = \sum_{i =
  1}^{m} \alphaElm[i] \unkVec^{\expVec[i]}$, for some $\{ \alphaElm[i] \}_{i =
  1}^{m} \subseteq \SetR[\geq 1]$ and $\{ \expVec[i] \}_{i = 0}^{m} \subseteq
  \SetR[\geq 0][n]$ with $m \in \SetN[+]$.
  Suppose, by way of contradiction, that $\solElm[\unkElm] \leq 1$, for all
  $\unkElm \in \unkVec$.
  Since $\solVec \in \SetR[\geq 1][n]$, this means that $\solVec = \vec*{1}$.
  Obviously, $\MFun(\vec*{1}) = 1$.
  Moreover, $\PFun(\vec*{1}) = \sum_{i = 1}^{m} \alphaElm[i]
  \vec*{1}^{\expVec[i]} = \sum_{i = 1}^{m} \alphaElm[i] \geq m \geq 1$, since
  $\PFun(\unkVec)$ has coefficients in $\SetR[\geq 1]$.
  Hence, $1 = \MFun(\solVec) > \PFun(\solVec) \geq 1$, which is clearly
  impossible.
\end{proof}

The previous lemma allows any coordinate of the solution greater than $1$ to be
chosen as a common logarithmic base.
Every other coordinate can then be expressed as a non-negative real power of
this base.
Moreover, taking logarithms with base greater than $1$ preserves the strict
order among the coordinates.
The zero-polynomial case does not require this representation and will be
handled separately.
These observations allow an increasing solution of the original $n$-\GMPI to be
transferred to its exponent-parameterised counterpart.

\begin{lemma}
\label{lem:nmpiinccohsolsys}
  Let $\PStr = \tuple {\unkVec} {\preccurlyeq}$ be a finite poset and
  $\MFun(\unkVec) > \PFun(\unkVec)$ an $n$-\GMPI admitting an increasing
  solution in $\SetR[\geq 1][n]$, with $\PFun(\unkVec)$ having coefficients in
  $\SetR[\geq 1]$.
  Then, the $\tuple* {\MFun(\unkVec) \!>\! \PFun(\unkVec)} {\PStr} [+]$-system
  admits a solution.
\end{lemma}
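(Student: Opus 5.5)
The plan is to transfer the given increasing solution of the $n$-\GMPI to a solution of the associated $n$-exppar $1$-\GMPI by taking logarithms with respect to a common base. We then apply \cref{lem:nexppar1mpisolsys}. Throughout, let $\MFun(\unkVec) = \unkVec^{\expVec[0]}$ and $\PFun(\unkVec) = \sum_{i=1}^{m} \alphaElm[i] \unkVec^{\expVec[i]}$ with all $\alphaElm[i] \geq 1$, and let $\solVec \in \SetR[\geq 1][n]$ be the given increasing solution.

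First, I would dispose of the case $m = 0$ separately, since \cref{lem:nmpisol} does not apply there and $\solVec$ may well be $\vec*{1}$, so no base greater than $1$ need exist. In this case the positive system consists only of the successor constraints $\unkElm[1] < \unkElm[2]$ for $\unkElm[1] \vartriangleleft \unkElm[2]$ and the positivity constraints on the minimal unknowns, because there are no degree constraints. A solution is given explicitly by a height function: set $\rsolElm[\unkElm]$ to be $1$ plus the length of the longest $\prec$-chain ending in $\unkElm$. This is positive everywhere and strictly increases along every successor pair, because the poset is finite.

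For $m > 0$, \cref{lem:nmpisol} yields a coordinate $\vunkElm^{*} \in \unkVec$ with $b \defeq \solElm[\vunkElm^{*}] > 1$. For every $\unkElm \in \unkVec$ I define $\rsolElm[\unkElm] \defeq \log_{b} \solElm[\unkElm]$. This is non-negative because $\solElm[\unkElm] \geq 1$, so $\rsolVec \in \SetR[\geq 0][n]$. Since $b > 1$, the map $\log_{b}$ is strictly increasing, so $\rsolVec$ inherits the increasing property from $\solVec$. The key identity is $b^{\expVec[i]^{\intercal} \cdot \rsolVec} = \prod_{\unkElm} (b^{\rsolElm[\unkElm]})^{\expElm[i,\unkElm]} = \solVec^{\expVec[i]}$ for every $i \in \{0, \ldots, m\}$. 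Hence $\MFun(\rsolVec; b) = \MFun(\solVec) > \PFun(\solVec) = \PFun(\rsolVec; b)$, where the left-hand sides refer to the $n$-exppar $1$-\GMPI with the same exponent vectors and coefficients. So $(\rsolVec, b)$ is a solution in $\SetR[\geq 0][n] \times \SetR[\geq 1]$, with $\rsolVec$ increasing and coefficients in $\SetR[\geq 1]$.

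Applying \cref{lem:nexppar1mpisolsys} now gives a solution of the positive system attached to the parameterised inequality. This system coincides with the $\tuple* {\MFun(\unkVec) \!>\! \PFun(\unkVec)} {\PStr} [+]$-system, since both are built from the same exponent vectors, coefficient signs and poset. I expect no serious obstacle. The only delicate points are the degenerate case $m = 0$, handled above by the height function rather than by logarithms, and verifying carefully that the exponent identity holds with real exponents, which follows directly from the laws of real exponentiation for a positive base.
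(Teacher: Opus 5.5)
Your proposal is correct and follows the paper's proof: for a non-zero polynomial you pick a coordinate $b>1$ via \cref{lem:nmpisol}, take base-$b$ logarithms to obtain an increasing solution of the associated exponent-parameterised $1$-GMPI, and conclude with \cref{lem:nexppar1mpisolsys}. The only minor deviation is the zero-polynomial case, which the paper also routes through \cref{lem:nexppar1mpisolsys} using an arbitrary increasing vector and any base $\geq 1$, whereas you directly exhibit a solution of the positive system via the longest-chain height function; that height function is exactly the fact the paper's ``immediate'' $m=0$ case in that lemma relies on.
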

\begin{proof}
  Let $\MFun(\unkVec) = \unkVec^{\expVec[0]}$ and $\PFun(\unkVec) = \sum_{i =
  1}^{m} \alphaElm[i] \unkVec^{\expVec[i]}$, for some $\{ \alphaElm[i] \}_{i =
  1}^{m} \subseteq \SetR[\geq 1]$ and $\{ \expVec[i] \}_{i = 0}^{m} \subseteq
  \SetR[\geq 0][n]$ with $m \in \SetN$.
  Moreover, consider the $n$-exppar $1$-\GMPI $\MFun[][\ast](\unkVec; \vunkElm)
  > \PFun[][\ast](\unkVec; \vunkElm)$, where $\MFun[][\ast](\unkVec; \vunkElm)
  \defeq \allowbreak \vunkElm^{(\expVec[0]^{\intercal} \cdot \unkVec)}$ and
  $\PFun[][\ast](\unkVec; \vunkElm) \defeq \sum_{i = 1}^{m} \alphaElm[i]
  \vunkElm^{(\expVec[i]^{\intercal} \cdot \unkVec)}$.
  Two cases arise.
  If $\PFun(\unkVec)$ is the zero polynomial then $\PFun[][\ast](\unkVec;
  \vunkElm)$ is the zero $n$-exppar polynomial.
  Since $\PStr$ is finite, there exists an increasing $n$-vector $\rsolVec \in
  \SetR[\geq 0][n]$.
  Choose any $\zetasolElm \in \SetR[\geq 1]$.
  Then, $(\rsolVec, \zetasolElm)$ is a solution of the $n$-exppar $1$-\GMPI.
  If $\PFun(\unkVec)$ is a non-zero polynomial, instead, let $\solVec = \{
  \solElm[\unkElm] \}_{\unkElm \in \unkVec} \in \SetR[\geq 1][n]$ be an
  increasing solution of the $n$-\GMPI.
  By \cref{lem:nmpisol}, there exists $\unkElm[][*] \in \unkVec$ such that
  $\solElm[{\unkElm[][*]}] > 1$.
  Let $\zetasolElm \defeq \solElm[{\unkElm[][*]}]$.
  In addition, let $\rsolVec = \{ \rsolElm[\unkElm] \}_{\unkElm \in \unkVec} \in
  \SetR[\geq 0][n]$ be the $n$-vector of non-negative real numbers defined as
  $\rsolElm[\unkElm] \defeq \log_{\zetasolElm}(\solElm[\unkElm])$, for all
  $\unkElm \in \unkVec$.
  Clearly, $\solElm[\unkElm] = \zetasolElm^{\rsolElm[\unkElm]}$.
  Moreover, $\rsolVec$ is increasing, since $\solVec$ is increasing.
  Now, one can easily see that $(\rsolVec, \zetasolElm) \in \SetR[\geq 0][n]
  \times \SetR[\geq 1]$ is a solution of the $n$-exppar $1$-\GMPI, since
  $\MFun[][\ast](\rsolVec; \zetasolElm) = \MFun(\solVec) > \PFun(\solVec) =
  \PFun[][\ast](\rsolVec; \zetasolElm)$.
  Indeed,
  \[
    {\MFun[][\ast](\rsolVec; \zetasolElm)}
  =
    {\zetasolElm}^{\left( \expVec[0]^{\intercal} \cdot \rsolVec \right)}
  =
    {\zetasolElm}^{\sum_{\unkElm \in \unkVec} \expElm[0\unkElm]
    \rsolElm[\unkElm]}
  =
    \prod_{\unkElm \in \unkVec} {\zetasolElm}^{\expElm[0\unkElm]
    \rsolElm[\unkElm]}
  =
    \prod_{\unkElm \in \unkVec} \left( {\zetasolElm}^{\rsolElm[\unkElm]}
    \right)^{\expElm[0\unkElm]}
  =
    \prod_{\unkElm \in \unkVec} {\solElm[\unkElm]}^{\expElm[0\unkElm]}
  =
    {\solVec}^{\expVec[0]}
  =
    {\MFun(\solVec)} \,.
  \]
  The same reasoning applies to all monomials of the polynomial $\PFun(\unkVec)$
  and the $n$-exppar polynomial $\PFun[][\ast](\unkVec; \vunkElm)$, thus,
  $\PFun(\solVec) = \PFun[][\ast](\rsolVec; \zetasolElm)$.
  Summing up, in both cases, there exists a solution $(\rsolVec, \zetasolElm)
  \in \SetR[\geq 0][n] \times \SetR[\geq 1]$ of the $n$-exppar $1$-\GMPI, with
  $\rsolVec$ increasing and $\PFun[][\ast](\unkVec; \vunkElm)$ having
  coefficients in $\SetR[\geq 1]$.
  Therefore, the thesis immediately follows by applying
  \cref{lem:nexppar1mpisolsys} to $\MFun[][\ast](\unkVec; \vunkElm) >
  \PFun[][\ast](\unkVec; \vunkElm)$.
\end{proof}

We now prove the converse direction.
In this case, no sign restriction on the coefficients is required.
A solution of the positive system provides positive integral exponent parameters
and, by the results of the previous subsection, every sufficiently large value
of the common base satisfies the associated $n$-exppar $1$-\MPI.
Substituting the corresponding powers of that base for the original unknowns
then yields an increasing Diophantine vector.
The construction is stable under raising all its coordinates to the same
positive power, a property that will later allow these solutions to be made
arbitrarily large.

\begin{lemma}
\label{lem:nmpisysincsol}
  Let $\PStr = \tuple {\unkVec} {\preccurlyeq}$ be a finite poset and
  $\MFun(\unkVec) > \PFun(\unkVec)$ an $n$-\MPI for which the $\tuple*
  {\MFun(\unkVec) \!>\! \PFun(\unkVec)} {\PStr} [+]$-system admits a solution.
  Then, there exists an increasing $n$-vector $\solVec \in \SetN[> 1][n]$ such
  that $\solVec^{\rhoElm}$ is an increasing solution of the $n$-\MPI, for all
  values $\rhoElm \in \SetR[\geq 1]$.
\end{lemma}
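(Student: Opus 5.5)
The plan is to pass from the positive linear system to an exponent-parameterised one-variable inequality and then back to the original unknowns. First, view $\MFun(\unkVec) > \PFun(\unkVec)$ as the $n$-exppar $1$-\MPI $\MFun[][\ast](\unkVec; \vunkElm) > \PFun[][\ast](\unkVec; \vunkElm)$, obtained by replacing each monomial $\unkVec^{\expVec[i]}$ with $\vunkElm^{\expVec[i]^{\intercal} \cdot \unkVec}$ and keeping the same coefficients. The associated $\tuple*{\cdot}{\PStr}[+]$-system coincides with that of the original $n$-\MPI, because it depends only on the exponent vectors, the signs of the coefficients and the poset. We may therefore apply \cref{lem:nexppar1mpisyssol}. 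It gives an increasing vector $\dsolVec \in \SetN[+][n]$ and a threshold $\ell \geq 0$ such that $(\dsolVec, \zetasolElm)$ solves the parameterised inequality for every real $\zetasolElm > \ell$.

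Next, choose an integer base $b > \max(\ell, 1)$, so $b \geq 2$, and set $\solElm[\unkElm] \defeq b^{\dsolElm[\unkElm]}$. Each $\dsolElm[\unkElm]$ is a positive integer, so $\solVec \in \SetN[> 1][n]$. Because $\dsolVec$ is increasing and $b > 1$, the vector $\solVec$ is increasing as well.

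For an arbitrary $\rhoElm \geq 1$, we have $\solElm[\unkElm]^{\rhoElm} = (b^{\rhoElm})^{\dsolElm[\unkElm]}$. Repeating the exponent computation from the proof of \cref{lem:nmpiinccohsolsys}, we get $\MFun(\solVec^{\rhoElm}) = \MFun[][\ast](\dsolVec; b^{\rhoElm})$ and $\PFun(\solVec^{\rhoElm}) = \PFun[][\ast](\dsolVec; b^{\rhoElm})$. Since $b^{\rhoElm} \geq b > \ell$, the pair $(\dsolVec, b^{\rhoElm})$ solves the parameterised inequality, and hence $\solVec^{\rhoElm}$ solves the $n$-\MPI. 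Increasingness survives the power map, because $t \mapsto t^{\rhoElm}$ is strictly increasing on $\SetR[> 0]$. We get $\solVec^{\rhoElm} \in \SetR[> 1][n]$ in general, and it is integral when $\rhoElm$ is an integer. The statement only asks for $\solVec$ itself to be natural, so this is enough.

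I expect no serious obstacle. The one point that needs care is making the threshold uniform in $\rhoElm$. This is handled by choosing the base strictly above $\ell$ once and for all, so that every power $b^{\rhoElm}$ with $\rhoElm \geq 1$ stays above $\ell$. Integrality of $\solVec$ then follows because the exponents supplied by \cref{lem:syssol} through \cref{lem:nexppar1mpisyssol} are positive integers.
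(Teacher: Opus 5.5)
Your proof is correct and matches the paper's argument step for step. You identify the positive system with that of the associated exponent-parameterised one-variable inequality, invoke \cref{lem:nexppar1mpisyssol}, and set each coordinate to an integer base above the threshold $\ell$ raised to the exponents $d_u$. The only difference is cosmetic: the paper fixes the base as $\lfloor 2+\ell \rfloor$ and uses $(1+\ell)^{\rho} \geq 1+\ell$, whereas you take any integer $b > \max(\ell,1)$ and use $b^{\rho} \geq b$.
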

\begin{proof}
  Let $\MFun(\unkVec) = \unkVec^{\expVec[0]}$ and $\PFun(\unkVec) = \sum_{i =
  1}^{m} \alphaElm[i] \unkVec^{\expVec[i]}$, for some $\{ \alphaElm[i] \}_{i =
  1}^{m} \subseteq \SetR[\neq 0]$ and $\{ \expVec[i] \}_{i = 0}^{m} \subseteq
  \SetN[][n]$, with $m \in \SetN$.
  Moreover, similarly to the proof of \cref{lem:nmpiinccohsolsys}, consider the
  $n$-exppar $1$-\MPI $\MFun[][\ast](\unkVec; \vunkElm) > \PFun[][\ast](\unkVec;
  \vunkElm)$, where $\MFun[][\ast](\unkVec; \vunkElm) \defeq
  \vunkElm^{(\expVec[0]^{\intercal} \cdot \unkVec)}$ and $\PFun[][\ast](\unkVec;
  \vunkElm) \defeq \sum_{i = 1}^{m} \alphaElm[i]
  \vunkElm^{(\expVec[i]^{\intercal} \cdot \unkVec)}$.
  Since the two systems $\tuple* {\MFun(\unkVec) > \PFun(\unkVec)} {\PStr} [+]$
  and $\tuple* {\MFun[][\ast](\unkVec; \vunkElm) > \PFun[][\ast](\unkVec;
  \vunkElm)} {\PStr} [+]$ of homogeneous linear inequalities coincide, by
  \cref{lem:nexppar1mpisyssol}, there exist an increasing $n$-vector $\dsolVec =
  \{ \dsolElm[\unkElm] \}_{\unkElm \in \unkVec} \in \SetN[+][n]$ and a value
  $\ell \in \SetR[\geq 0]$ such that, for every value $\zetasolElm \in \SetR[>
  \ell]$, the pair $(\dsolVec, \zetasolElm)$ is a solution of the $n$-exppar
  $1$-\MPI, \ie, $ \MFun[][\ast](\dsolVec; \zetasolElm) >
  \PFun[][\ast](\dsolVec; \zetasolElm)$.
  Now, let $\zetasolElm[][\ast] \defeq \floor{2 + \ell} > 1 + \ell \geq 1$
  and $\solVec = \{ \solElm[\unkElm]  \}_{\unkElm \in \unkVec} \in \SetN[+][n]$
  be the $n$-vector of natural numbers, whose components are defined as
  $\solElm[\unkElm] \defeq {\zetasolElm[][\ast]}^{\dsolElm[\unkElm]}$, for all
  $\unkElm \in \unkVec$.
  Clearly, $\solVec$ is increasing, since $\dsolVec$ is increasing and
  $\zetasolElm[][\ast] > 1$.
  The same holds true for all $\solVec^{\rhoElm}$, given a value $\rhoElm \in
  \SetR[\geq 1]$.
  Moreover, $\solVec$ belongs to $\SetN[> 1][n]$, since both $\dsolVec \in
  \SetN[+][n]$ and $\zetasolElm[][\ast] \in \SetN[> 1]$.
  At this point, due to the fact that ${\zetasolElm[][\ast]}^{\rhoElm} > (1 +
  \ell)^{\rhoElm} \geq 1 + \ell > \ell$,
  it is immediate to see that $\solVec^{\rhoElm}$ is also a solution of the
  $n$-\MPI, since $\MFun(\solVec^{\rhoElm}) = \MFun[][\ast](\dsolVec;
  {\zetasolElm[][\ast]}^{\rhoElm}) > \PFun[][\ast](\dsolVec;
  {\zetasolElm[][\ast]}^{\rhoElm}) = \PFun(\solVec^{\rhoElm})$.
  Indeed,
  \begin{align*}
    {\MFun(\solVec^{\rhoElm})}
  & =
    {\left( \solVec^{\rhoElm} \right)^{\expVec[0]}}
  =
    {\prod_{\unkElm \in \unkVec} \left( \solElm[\unkElm]^{\rhoElm}
    \right)^{\expElm[0\unkElm]}}
  =
    \prod_{\unkElm \in \unkVec} \left( \left(
    {\zetasolElm[][\ast]}^{\dsolElm[\unkElm]} \right)^{\rhoElm}
    \right)^{\expElm[0\unkElm]}
  =
    \prod_{\unkElm \in \unkVec} \left( \left( {\zetasolElm[][\ast]}^{\rhoElm}
    \right)^{\dsolElm[\unkElm]} \right)^{\expElm[0\unkElm]}
  =
    \prod_{\unkElm \in \unkVec} \left( {\zetasolElm[][\ast]}^{\rhoElm}
    \right)^{\expElm[0\unkElm] \dsolElm[\unkElm]}
  = \\
  & =
    \left( {\zetasolElm[][\ast]}^{\rhoElm} \right)^{\sum_{\unkElm \in \unkVec}
    \expElm[0\unkElm] \dsolElm[\unkElm]}
  =
    \left( {\zetasolElm[][\ast]}^{\rhoElm} \right)^{(\expVec[0]^{\intercal}
    \cdot \dsolVec)}
  =
    {\MFun[][\ast](\dsolVec; {\zetasolElm[][\ast]}^{\rhoElm})} \,.
  \end{align*}
  The same reasoning applies to all monomials of the polynomial $\PFun(\unkVec)$
  and the $n$-exppar polynomial $\PFun[][\ast](\unkVec; \vunkElm)$, thus,
  $\PFun[][\ast](\dsolVec; {\zetasolElm[][\ast]}^{\rhoElm}) =
  \PFun(\solVec^{\rhoElm})$.
  Summing up, $\solVec^{\rhoElm}$ is an increasing solution of the $n$-\MPI, for
  all values $\rhoElm \in \SetR[\geq 1]$, as prescribed by the statement of the
  lemma.
\end{proof}

The preceding lemmata can now be combined into a complete characterisation.
For polynomials with positive coefficients, the existence of an increasing real
solution, the feasibility of the positive linear system, and the existence of
arbitrarily large Diophantine increasing solutions are equivalent.

\begin{theorem}[Increasing Solution]
\label{thm:incsol}
  For all finite posets $\PStr = \tuple {\unkVec} {\preccurlyeq}$ and
  $n$-{\MPI}s $\MFun(\unkVec) > \PFun(\unkVec)$, with $\PFun(\unkVec)$ having
  coefficients in $\SetR[\geq 1]$, the following statements are equivalent:
  \begin{enumerate}[a)]
  \item\label{thm:incsol(sol)}
    the $n$-\MPI admits an increasing solution in $\SetR[\geq 1][n]$;
  \item\label{thm:incsol(sys)}
    the $\tuple* {\MFun(\unkVec) \!>\! \PFun(\unkVec)} {\PStr} [+]$-system
    admits a solution;
  \item\label{thm:incsol(dphsol)}
    for all $\kappaVec \in \SetN[+][n]$, the $n$-\MPI admits a Diophantine
    increasing solution $\solVec[\kappaVec] \geq \kappaVec$.
  \end{enumerate}
\end{theorem}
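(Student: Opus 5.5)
We prove the cycle of implications \ref{thm:incsol(sol)} $\Rightarrow$ \ref{thm:incsol(sys)} $\Rightarrow$ \ref{thm:incsol(dphsol)} $\Rightarrow$ \ref{thm:incsol(sol)}, each step using one of the lemmata of this subsection.

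For \ref{thm:incsol(sol)} $\Rightarrow$ \ref{thm:incsol(sys)}, an $n$-\MPI is in particular an $n$-\GMPI with natural exponents, and $\PFun(\unkVec)$ has coefficients in $\SetR[\geq 1]$. So \cref{lem:nmpiinccohsolsys} applies directly to the increasing solution in $\SetR[\geq 1][n]$. It yields a solution of the $\tuple* {\MFun(\unkVec) \!>\! \PFun(\unkVec)} {\PStr} [+]$-system. This is the only place where the hypothesis on the coefficients is used.

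For \ref{thm:incsol(sys)} $\Rightarrow$ \ref{thm:incsol(dphsol)}, we invoke \cref{lem:nmpisysincsol} to obtain an increasing $\solVec \in \SetN[> 1][n]$ such that $\solVec^{\rhoElm}$ is an increasing solution for every real $\rhoElm \geq 1$. Given $\kappaVec \in \SetN[+][n]$, we restrict $\rhoElm$ to positive integers, so that $\solVec^{\rhoElm}$ stays Diophantine. Since every coordinate satisfies $\solElm[\unkElm] \geq 2$, we have $\solElm[\unkElm]^{\rhoElm} \geq 2^{\rhoElm}$. Choosing an integer $\rhoElm \geq \max(1, \max_{\unkElm} \log_{2} \kappaElm[\unkElm])$ therefore gives $\solVec[\kappaVec] \defeq \solVec^{\rhoElm} \geq \kappaVec$ componentwise. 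The only subtlety is keeping $\rhoElm$ integral, and the lemma allows any $\rhoElm \geq 1$, so this causes no difficulty.

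For \ref{thm:incsol(dphsol)} $\Rightarrow$ \ref{thm:incsol(sol)}, we instantiate $\kappaVec \defeq \vec*{1}$. This gives a Diophantine increasing solution $\solVec[\kappaVec] \geq \vec*{1}$, which lies in $\SetN[][n] \cap \SetR[\geq 1][n] \subseteq \SetR[\geq 1][n]$ and is therefore the required increasing solution.

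The main obstacle is already absorbed by the earlier lemmata: transferring a real solution to the linear system through logarithms in a common base. The only point needing care here is that $n \geq 1$, so that $\kappaVec = \vec*{1}$ is available. If $n = 0$, all three statements reduce to whether $1 > \PFun$ holds for a constant $\PFun$, and they are trivially equivalent.
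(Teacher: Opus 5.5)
Your proof is correct and follows the paper's argument: (a)$\Rightarrow$(b) by \cref{lem:nmpiinccohsolsys}, (b)$\Rightarrow$(c) by taking a suitable positive integer power of the vector supplied by \cref{lem:nmpisysincsol}, and (c)$\Rightarrow$(a) by instantiating the lower bound with the all-ones vector. The only differences are cosmetic: you size the exponent with $\log_2$, using that every coordinate is at least $2$, whereas the paper uses $\lfloor 1 + \max_u \log_{\xi_u} \kappa_u \rfloor$. Your separate treatment of $n = 0$ is unnecessary, since the all-ones vector (the empty vector when $n=0$) is always available.
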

\begin{proof}
  We prove the equivalence by focusing on the only non-immediate implication,
  namely \ref{thm:incsol(sys)} $\Rightarrow$ \ref{thm:incsol(dphsol)}, since
  \ref{thm:incsol(sol)} $\Rightarrow$ \ref{thm:incsol(sys)} clearly follows from
  \cref{lem:nmpiinccohsolsys}, while \ref{thm:incsol(dphsol)} $\Rightarrow$
  \ref{thm:incsol(sol)} is a trivial consequence of considering $\kappaVec =
  \vec*{1}$.
  \begin{itemize}
  \item
    \textbf{[\ref{thm:incsol(sys)} $\Rightarrow$ \ref{thm:incsol(dphsol)}]}
    Assume that the $\tuple* {\MFun(\unkVec) \!>\! \PFun(\unkVec)} {\PStr}
    [+]$-system admits a solution.
    By \cref{lem:nmpisysincsol}, there exists an increasing $n$-vector of values
    $\solVec \in \SetN[> 1][n]$ such that $\solVec^{\rhoElm}$ is a Diophantine
    increasing solution of the $n$-\MPI, for every $\rhoElm \in \SetN[+]$.
    Given an arbitrary $\kappaVec \in \SetN[+][n]$, let $\rhoElm[\kappaVec]
    \defeq \floor{1 + \max_{\unkElm \in \unkVec} \log_{\solElm[\unkElm]}
    \kappaElm[\unkElm]} \in \SetN[+]$ and $\solVec[\kappaVec] \defeq
    \solVec^{\rhoElm[\kappaVec]} \in \SetN[> 1][n]$.
    Since $\rhoElm[\kappaVec] \in \SetN[+]$, the vector $\solVec[\kappaVec]$
    is a Diophantine increasing solution of the $n$-\MPI.
    By construction, $\rhoElm[\kappaVec] > \log_{\solElm[\unkElm]}
    \kappaElm[\unkElm]$, for every $\unkElm \in \unkVec$.
    Hence, $\solElm[\unkElm]^{\rhoElm[\kappaVec]} \geq \kappaElm[\unkElm]$, for
    every $\unkElm \in \unkVec$, and therefore $\solVec[\kappaVec] \geq
    \kappaVec$.
    \qedhere
  \end{itemize}
\end{proof}




\subsection{General Solutions of {$n$-\MPI}s}
\label{sec:dphprb;sub:gensolnmpi}

The previous subsection characterises increasing solutions of an $n$-\MPI in
terms of the associated homogeneous linear system.
We now reintroduce the weighted convolution and offset, which determine whether
a valuation is admissible or natural.
The main difficulty is that these arithmetic constraints are not encoded in the
linear system itself.
We first show that every admissible solution bounded below by $1$ can be
approached by admissible increasing solutions without losing the inequality.
We then use this approximation to transfer the monomial domination guaranteed by
strong non-negativeness at Diophantine natural solutions to increasing ones,
before recovering naturality through the convolution structure.
Unlike the corresponding construction in~\cite{KM25}, this argument works
directly over an arbitrary finite poset and does not require passing to a
meet-semilattice restriction.

\begin{lemma}
\label{lem:soleqv}
  Let $\PStr = \tuple {\unkVec} {\preccurlyeq} {\wFun} {\oVec}$ be a stratified
  poset and $\unkVec[\top] \subseteq \unkVec$ an upward-closed $n$-vector of
  unknowns.
  An $n$-\MPI $\MFun(\unkVec[\top]) > \PFun(\unkVec[\top])$ has an admissible
  solution in $\SetR[\geq 1][n]$ \iff it has an admissible increasing solution
  in $\SetR[\geq 1][n]$.
  Moreover, for every admissible solution $\solVec \in \SetR[\geq 1][n]$, there
  exist $\varepsilon^{*} \in \SetR[+]$ and a family $\{ \solVec[\varepsilon]
  \}_{0 < \varepsilon \leq \varepsilon^{*}}$ of admissible increasing solutions
  of the $n$-\MPI in $\SetR[\geq 1][n]$ such that $\lim_{\varepsilon \to 0^{+}}
  \solVec[\varepsilon] = \solVec$.
\end{lemma}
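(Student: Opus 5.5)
The plan is to build the perturbation explicitly in the net (inverse-convolution) coordinates, where admissibility is simply non-negativity, and then use the stratification hypothesis to turn strict positivity of those coordinates into strict increase of the valuation. Only the second part of the statement needs work: the backward implication of the first part is trivial, and the forward one follows from the second by taking any $\varepsilon$ in the family.

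Fix an admissible solution $\solVec \in \SetR[\geq 1][n]$ and extend it by the offset to $\solVec[\oVec]$. Set $\nFun \defeq \wFun^{-1} \star (\solVec[\oVec] - \oVec)$, so that $\solVec[\oVec] = \oVec + \wFun \star \nFun$. Admissibility says $\nFun(\pElm) \geq 0$ for every $\pElm \in \unkVec[\top]$. Outside $\unkVec[\top]$, the extension agrees with $\oVec$. Since $\unkVec[\top]$ is upward closed, its complement is downward closed, so the convolution restricted to the complement only sees the zero function; hence $\nFun = 0$ there.

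For $\varepsilon > 0$, define $\nFun[\varepsilon] \defeq \nFun + \varepsilon \chiFun[{\unkVec[\top]}]$ and $\solVec[\varepsilon] \defeq \oVec + \wFun \star \nFun[\varepsilon]$, restricted to $\unkVec[\top]$. The extension by the offset again matches $\oVec$ outside $\unkVec[\top]$, because $\nFun[\varepsilon]$ vanishes on the downward-closed complement. Invertibility of $\wFun$ then gives $\wFun^{-1} \star (\solVec[\varepsilon,\oVec] - \oVec) = \nFun[\varepsilon]$, which is positive on $\unkVec[\top]$, so $\solVec[\varepsilon]$ is admissible.

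Moreover, $\solVec[\varepsilon] = \solVec + \varepsilon\, (\wFun \star \chiFun[{\unkVec[\top]}])$, and the added term is a fixed vector with non-negative entries. Consequently $\solVec[\varepsilon] \geq \solVec \geq 1$, and $\solVec[\varepsilon] \to \solVec$ as $\varepsilon \to 0^{+}$.

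Next comes monotonicity. Write $\solVec[\varepsilon,\oVec] = (\wFun \star \oVec) \cdot 0 + \oVec + \wFun \star \nFun[\varepsilon]$, and decompose $\oVec = \wFun \star (\wFun^{-1} \star \oVec)$, where the inner function is non-negative by stratification. This gives $\solVec[\varepsilon,\oVec] = \wFun \star \bigl((\wFun^{-1} \star \oVec) + \nFun[\varepsilon]\bigr)$. The argument of this convolution action is non-negative everywhere and strictly positive on $\unkVec[\top]$.

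The criterion at the end of \cref{sec:prl;sub:poscon} (using $0 < \wFun(\sElm,\rElm) \leq \wFun(\sElm,\pElm)$) then applies. For $\rElm \vartriangleleft \pElm$ with $\pElm \in \unkVec[\top]$, the term indexed by $\sElm = \pElm$ contributes $\wFun(\pElm,\pElm)$ times a strictly positive quantity, while the remaining terms dominate those of $\rElm$. This yields strict increase on all successor pairs inside $\unkVec[\top]$.

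The main obstacle is that this criterion is stated for strictly positive arguments, whereas here positivity is only guaranteed on $\unkVec[\top]$. I would redo the displayed chain of inequalities directly, noting that strictness needs only the single diagonal term at $\pElm$. If "increasing" in \cref{def:admnat}'s sense is meant only for pairs inside $\unkVec[\top]$, this suffices.

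Finally, the inequality is preserved by continuity. Both $\MFun$ and $\PFun$ are polynomials, hence continuous, and $\MFun(\solVec) - \PFun(\solVec) > 0$. Therefore there is $\varepsilon^{*} > 0$ such that $\MFun(\solVec[\varepsilon]) > \PFun(\solVec[\varepsilon])$ for every $0 < \varepsilon \leq \varepsilon^{*}$. Together with admissibility, monotonicity and the bound $\solVec[\varepsilon] \geq 1$, this gives the required family $\{\solVec[\varepsilon]\}_{0 < \varepsilon \leq \varepsilon^{*}}$.
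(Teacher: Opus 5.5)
Your proposal is correct and is essentially the paper's proof: the paper uses the same perturbation $\solVec + \varepsilon\,\omegaVec[\top]$ with $\omegaVec[\top] = \wFun[\top] \star \vec*{1}$, i.e.\ it adds $\varepsilon$ to every net coordinate in $\unkVec[\top]$. Like you, it checks admissibility by applying $\wFun^{-1}$, obtains strict increase from stratification, and concludes by continuity. The only difference is cosmetic: the paper shows separately that $\oVec$ is non-decreasing and that $\solVec[\varepsilon,\oVec]-\oVec$ is increasing on $\unkVec[\top]$, whereas you fold both into a single convolution and spell out the diagonal-term argument that the paper's ``hence increasing'' step leaves implicit.
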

\begin{proof}
  We only focus on one of the two directions of the equivalence, the other being
  trivial.
  \begin{itemize}
  \item
    \textbf{[$\Rightarrow$]}
    Assume that the $n$-\MPI has an admissible solution $\solVec \in \SetR[\geq
    1][n]$ and let $\solVec[\oVec] \in \SetR[][n + k]$ be its $\oVec$-extension
    as per \cref{def:admnat}, with $n + k = \card{\unkVec}$.
    In addition, consider $\etaVec \defeq \wFun^{-1} \star (\solVec[\oVec] -
    \oVec) \in \SetR[][n + k]$.
    Since $\unkVec[\top]$ is upward closed, its complement is downward closed.
    Hence, $\etaVec(\unkElm) = 0$, for every $\unkElm \in \unkVec \setminus
    \unkVec[\top]$.
    Moreover, $\etaVec(\unkElm) \geq 0$, for every $\unkElm \in \unkVec[\top]$,
    due to the admissibility hypothesis.
    Thus, $\etaVec \geq \vec*{0}$.
    Recall also that $0 < \wFun(\unkElm, \unkElm[1]) \leq \wFun(\unkElm,
    \unkElm[2])$, for every $\unkElm \preccurlyeq \unkElm[1] \preccurlyeq
    \unkElm[2]$, and $(\wFun^{-1} \star \oVec)(\unkElm) \geq 0$, for all
    $\unkElm \in \unkVec$, due to the stratification of $\PStr$.
    Hence, by the properties of the convolution action, both $\solVec[\oVec] -
    \oVec = \wFun \star \etaVec$ and $\oVec = \wFun \star (\wFun^{-1} \star
    \oVec)$ are non-decreasing.
    Therefore, $\solVec[\oVec]$ is non-decreasing and, in particular, $\solVec$
    is non-decreasing on $\unkVec[\top]$.
    Now, let $\vec*{1} \in \SetR[\geq 1][n]$ be the constant-one vector on
    $\unkVec[\top]$, $\vec*{1}\vec*{0} \in \SetR[][n + k]$ the vector on
    $\unkVec$ with value $1$ only on $\unkVec[\top]$, and define
    $\omegaVec[\top] \defeq \wFun[\top] \star \vec*{1}$ and $\omegaVec \defeq
    \wFun \star \vec*{1}\vec*{0}$, where $\wFun[\top]$ is the restriction of
    $\wFun$ to $\unkVec[\top]$.
    Since $\unkVec[\top]$ is upward closed, $\omegaVec$ is $0$ on $\unkVec
    \setminus \unkVec[\top]$ and coincides with $\omegaVec[\top]$ on
    $\unkVec[\top]$.
    Obviously, $\wFun[\top]^{-1} \star \omegaVec[\top] = \vec*{1}$ and
    $\wFun^{-1} \star \omegaVec = \vec*{1}\vec*{0}$.
    Moreover, $\omegaVec[\top]$ is increasing on $\unkVec[\top]$.
    At this point, for every $\varepsilon \in \SetR[+]$, let
    $\solVec[\varepsilon] \defeq \solVec + \varepsilon \omegaVec[\top] \in
    \SetR[\geq 1][n]$ and consider its $\oVec$-extension $\solVec[\varepsilon,
    \oVec] \in \SetR[][n + k]$.
    Clearly, $\solVec[\varepsilon, \oVec] = \solVec[\oVec] + \varepsilon
    \omegaVec \in \SetR[][n + k]$.
    Then,
    \begin{alignat*}{3}
      {\wFun^{-1} \star \left( \solVec[\varepsilon, \oVec] - \oVec \right)}
    & =
      {\wFun^{-1} \star \left( \solVec[\oVec] + \varepsilon \omegaVec - \oVec
      \right)}
    && = \\
    & =
      {\wFun^{-1} \star \left( \solVec[\oVec] - \oVec \right) + \wFun^{-1} \star
      \left( \varepsilon \omegaVec \right)}
    && = \\
    & =
      {\etaVec + \varepsilon \left( \wFun^{-1} \star \omegaVec \right)}
    && =
      {\etaVec + \varepsilon \vec*{1}\vec*{0}}
    \geq
      {\varepsilon \vec*{1}\vec*{0}}
    \geq
      {\vec*{0}} \,.
    \end{alignat*}
    Thus, $\solVec[\varepsilon]$ is admissible.
    Moreover, $\etaVec + \varepsilon \vec*{1}\vec*{0}$ is strictly positive on
    $\unkVec[\top]$.
    Hence, $\solVec[\varepsilon, \oVec] - \oVec =
    \wFun \star (\etaVec + \varepsilon \vec*{1}\vec*{0})$ is increasing on
    $\unkVec[\top]$.
    Since $\oVec$ is non-decreasing, $\solVec[\varepsilon, \oVec]$ is increasing
    on $\unkVec[\top]$, and so $\solVec[\varepsilon]$ is increasing too.
    Finally, let $\QFun(\unkVec[\top]) \defeq \MFun(\unkVec[\top]) -
    \PFun(\unkVec[\top])$.
    Since $\solVec$ is a solution of the $n$-\MPI, it holds that $\QFun(\solVec)
    > 0$.
    By continuity of $\QFun$ and the fact that $\solVec[\varepsilon] \to
    \solVec$ as $\varepsilon \to 0^{+}$, there exists $\varepsilon^{*} \in
    \SetR[+]$ such that $\QFun(\solVec[\varepsilon]) > 0$, for every $0 <
    \varepsilon \leq \varepsilon^{*}$.
    Therefore, the family $\{ \solVec[\varepsilon] \}_{0 < \varepsilon \leq
    \varepsilon^{*}}$ consists of admissible increasing solutions of the
    $n$-\MPI in $\SetR[\geq 1][n]$ such that $\lim_{\varepsilon \to 0^{+}}
    \solVec[\varepsilon] = \solVec$.
    \qedhere
  \end{itemize}
\end{proof}

The previous lemma removes the first obstruction: every admissible solution can
be approached by admissible increasing solutions while preserving the
inequality.
To apply the characterisation of the previous subsection, however, we still need
to control the possible negative coefficients of $\PFun$.
Strong non-negativeness provides precisely this control at Diophantine natural
solutions.
At any such solution of the original inequality, every positive unitary monomial
of $\PFun$ is bounded above by $\PFun$ and, therefore, strictly dominated by
$\MFun$.
By continuity, the finitely many resulting strict inequalities are preserved at
all sufficiently close admissible increasing solutions.
Raising all coordinates of one such solution to a common sufficiently large
power amplifies this strict domination.
At the same time, the weighted inversion shows that admissibility is eventually
preserved under the same operation.
This yields the following reduction to an arbitrarily scaled positive part of
the polynomial.

\begin{lemma}
\label{lem:solred}
  Let $\PStr = \tuple {\unkVec} {\preccurlyeq} {\wFun} {\oVec}$ be a stratified
  poset, $\unkVec[\top] \subseteq \unkVec$ an upward-closed $n$-vector of
  unknowns, and $\MFun(\unkVec[\top]) > \PFun(\unkVec[\top])$ an $n$-\MPI, with
  $\PFun(\unkVec[\top])$ strongly non-negative \wrt $\PStr$.
  If $\MFun(\unkVec[\top]) > \PFun(\unkVec[\top])$ has a Diophantine natural
  solution in $\SetN[+][n]$, then $\MFun(\unkVec[\top]) > \alphaElm
  \PFun[+](\unkVec[\top])$ has an admissible increasing solution in $\SetR[\geq
  1][n]$, for every $\alphaElm \in \SetR[\geq 0]$.
\end{lemma}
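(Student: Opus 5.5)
Let $\solVec \in \SetN[+][n]$ be a Diophantine natural solution of $\MFun(\unkVec[\top]) > \PFun(\unkVec[\top])$. The plan has three steps: extract strict monomial dominations at $\solVec$, push them to a nearby admissible increasing real solution, and then amplify them by a common power.

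First, strong non-negativeness (\cref{def:strnonneg}) gives $\MFun(\solVec) > \PFun(\solVec) \geq \solVec^{\expVec[i]}$ for every index $i$ with $\alphaElm[i] > 0$. So we obtain finitely many strict inequalities $\MFun(\solVec) > \solVec^{\expVec[i]}$. They involve only unitary monomials and no negative coefficients. Since $\solVec$ is natural, it is in particular admissible and lies in $\SetR[\geq 1][n]$. Applying \cref{lem:soleqv} to the $n$-\MPI yields a family $\{\solVec[\varepsilon]\}$ of admissible increasing real solutions converging to $\solVec$. All the finitely many functions $\MFun - \unkVec[\top]^{\expVec[i]}$ are continuous. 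Hence some fixed $\varepsilon$ gives an admissible increasing $\bar{\solVec} \defeq \solVec[\varepsilon] \in \SetR[\geq 1][n]$ with $\MFun(\bar{\solVec}) > \bar{\solVec}^{\expVec[i]}$ for every positive index $i$. If there is no positive index, then $\PFun[+] = 0$ and $\bar{\solVec}$ is already the required solution for every $\alphaElm$.

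Second, since $\bar{\solVec}$ is increasing and each $\unkVec[\top]$-coordinate lies above a minimal element of $\unkVec[\top]$, I expect all coordinates to exceed $1$, except possibly minimal ones. In the worst case I will perturb further, replacing $\bar{\solVec}$ by $\bar{\solVec} + \delta\omegaVec[\top]$ as in the proof of \cref{lem:soleqv}, so that all coordinates are strictly greater than $1$ while the strict inequalities are kept. Then set $\zetaElm \defeq \min_{i}\MFun(\bar{\solVec})/\bar{\solVec}^{\expVec[i]} > 1$ and consider $\bar{\solVec}^{\rhoElm}$. Because monomials are multiplicative, $\MFun(\bar{\solVec}^{\rhoElm}) = \MFun(\bar{\solVec})^{\rhoElm} \geq \zetaElm^{\rhoElm}(\bar{\solVec}^{\rhoElm})^{\expVec[i]}$. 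Choosing $\rhoElm$ with $\zetaElm^{\rhoElm} > \alphaElm \sum_{\alphaElm[i]>0} \alphaElm[i]$ gives $\MFun(\bar{\solVec}^{\rhoElm}) > \alphaElm \PFun[+](\bar{\solVec}^{\rhoElm})$. Raising to a power $\rhoElm \geq 1$ preserves strict increase and membership in $\SetR[\geq 1][n]$.

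The main obstacle is showing that $\bar{\solVec}^{\rhoElm}$ stays admissible for large $\rhoElm$, i.e.\ that $\wFun^{-1} \star (\bar{\solVec}^{\rhoElm}_{\oVec} - \oVec) \geq 0$ on $\unkVec[\top]$. I would argue by dominant growth, element by element. Fix $\pElm \in \unkVec[\top]$ and expand
\[
  \wFun^{-1} \star (\bar{\solVec}^{\rhoElm}_{\oVec} - \oVec)(\pElm)
  = \wFun^{-1}(\pElm,\pElm)\,\bar{\solElm}[\pElm]^{\rhoElm}
    + \sum_{\sElm \prec \pElm} \wFun^{-1}(\sElm,\pElm)\,\bar{\solVec}^{\rhoElm}_{\oVec}(\sElm)
    - (\wFun^{-1} \star \oVec)(\pElm).
\]
The leading coefficient $\wFun^{-1}(\pElm,\pElm) = 1/\wFun(\pElm,\pElm)$ is positive. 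Every $\sElm \prec \pElm$ in $\unkVec[\top]$ satisfies $\bar{\solElm}[\sElm] < \bar{\solElm}[\pElm]$ by strict increase. The terms with $\sElm \notin \unkVec[\top]$ are the constants $\oVec(\sElm)$, which do not depend on $\rhoElm$. Since $\bar{\solElm}[\pElm] > 1$, the term $\bar{\solElm}[\pElm]^{\rhoElm}$ dominates the finitely many lower-order and constant terms once $\rhoElm$ is large. The poset is finite, so a single $\rhoElm$ works for all $\pElm$, and admissibility follows. Taking $\rhoElm$ larger than both thresholds completes the argument.
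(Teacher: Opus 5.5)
Your proof is correct and follows the paper's argument step for step. You use strong non-negativeness at the Diophantine natural solution, then \cref{lem:soleqv} plus continuity to reach a nearby admissible increasing point with strict monomial dominations, and finally raise to a large common power $\rho$, using a dominant-growth argument for admissibility. The one deviation is minor: you perturb once more by $\delta\,\omega_{\top}$ so every coordinate exceeds $1$, whereas the paper allows coordinates equal to $1$. It observes that such unknowns are minimal in the upward-closed set, so their inverted expression is independent of $\rho$ and is non-negative by the original admissibility.
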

\begin{proof}
  Let $\MFun(\unkVec[\top]) = \unkVec[\top]^{\expVec[0]}$ and
  $\PFun(\unkVec[\top]) = \sum_{i = 1}^{m} \alphaElm[i]
  \unkVec[\top]^{\expVec[i]}$, for some $\{ \alphaElm[i] \}_{i = 1}^{m}
  \subseteq \SetR[\neq 0]$ and $\{ \expVec[i] \}_{i = 0}^{m} \subseteq
  \SetN[][n]$, with $m \in \SetN$, and recall that $\PFun[+](\unkVec[\top]) =
  \sum_{1 \leq i \leq m}^{\alphaElm[i] > 0} \alphaElm[i]
  \unkVec[\top]^{\expVec[i]}$.
  Assume that the $n$-\MPI $\MFun(\unkVec[\top]) > \PFun(\unkVec[\top])$ has a
  Diophantine natural solution $\dsolVec \in \SetN[+][n]$.
  Since $\PFun(\unkVec[\top])$ is strongly non-negative \wrt $\PStr$, for every
  $i \in \num{m}$ with $\alphaElm[i] > 0$, it holds that $\dsolVec^{\expVec[0]}
  = \MFun(\dsolVec) > \PFun(\dsolVec) \geq \dsolVec^{\expVec[i]}$.
  Since $\dsolVec$ is admissible, by~\cref{lem:soleqv}, there exist
  $\varepsilon^{*} \in \SetR[+]$ and a family $\{ \solVec[\varepsilon] \}_{0 <
  \varepsilon \leq \varepsilon^{*}}$ of admissible increasing solutions of the
  $n$-\MPI in $\SetR[\geq 1][n]$ such that $\lim_{\varepsilon \to 0^{+}}
  \solVec[\varepsilon] = \dsolVec$.
  For every $i \in \num{m}$ with $\alphaElm[i] > 0$, consider the polynomial
  \[
    \QFun[i](\unkVec[\top])
  \defeq
    \unkVec[\top]^{\expVec[0]} - \unkVec[\top]^{\expVec[i]}.
  \]
  By the inequality above, $\QFun[i](\dsolVec) > 0$.
  Hence, by continuity and since there are only finitely many such indices,
  there exists $\varepsilon^{\dagger} \in \SetR[+]$, with $\varepsilon^{\dagger}
  \leq \varepsilon^{*}$, such that $\QFun[i](\solVec[\varepsilon]) > 0$, for
  every $i \in \num{m}$ with $\alphaElm[i] > 0$ and $0 < \varepsilon \leq
  \varepsilon^{\dagger}$.
  Fix any such $\varepsilon$ and put $\solVec \defeq \solVec[\varepsilon]$.
  Then $\solVec$ is an admissible increasing solution of the original
  $n$-\MPI and, for every $i \in \num{m}$ with $\alphaElm[i] > 0$, it holds
  that $\solVec^{\expVec[0]} > \solVec^{\expVec[i]}$.
  Fix $\alphaElm \in \SetR[\geq 0]$.
  We first show that $\solVec^{\rhoElm}$ is a solution of $\MFun(\unkVec[\top])
  > \alphaElm \PFun[+](\unkVec[\top])$, for all sufficiently large $\rhoElm \in
  \SetR[\geq 1]$.
  By the choice of $\solVec$, for every $i \in \num{m}$ with $\alphaElm[i] > 0$,
  it holds that $0 < \solVec^{\expVec[i]} / \solVec^{\expVec[0]} < 1$.
  Therefore, since only finitely many such indices exist,
  \[
    \alphaElm \sum_{1 \leq i \leq m}^{\alphaElm[i] > 0} \alphaElm[i]
    \left(
      \frac{\solVec^{\expVec[i]}}{\solVec^{\expVec[0]}}
    \right)^{\rhoElm}
  \longrightarrow
    0
  \]
  as $\rhoElm \to \infty$.
  Hence, there exists $\sigmaElm[\alphaElm] \in \SetR[\geq 1]$ such that, for
  every $\rhoElm \in \SetR[\geq {\sigmaElm[\alphaElm]}]$,
  \[
    \frac{\alphaElm \PFun[+](\solVec^{\rhoElm})}{\MFun(\solVec^{\rhoElm})}
  =
    \alphaElm \sum_{1 \leq i \leq m}^{\alphaElm[i] > 0} \alphaElm[i]
    \frac{\left( \solVec^{\rhoElm} \right)^{\expVec[i]}}{\left(
    \solVec^{\rhoElm} \right)^{\expVec[0]}}
  =
    \alphaElm \sum_{1 \leq i \leq m}^{\alphaElm[i] > 0} \alphaElm[i] \left(
    \frac{\solVec^{\expVec[i]}}{\solVec^{\expVec[0]}} \right)^{\rhoElm}
  <
    1 \,.
  \]
  Therefore, for all such $\rhoElm$, it holds that $\MFun(\solVec^{\rhoElm}) >
  \alphaElm \PFun[+](\solVec^{\rhoElm})$.
  Clearly, $\solVec^{\rhoElm}$ is also increasing on $\unkVec[\top]$, for every
  $\rhoElm \in \SetR[\geq 1]$.
  Indeed, if $\unkElm[1], \unkElm[2] \in \unkVec[\top]$ and $\unkElm[1]
  \vartriangleleft \unkElm[2]$, then $1 \leq \solElm[{\unkElm[1]}] <
  \solElm[{\unkElm[2]}]$, which implies that $\solElm[{\unkElm[1]}]^{\rhoElm} <
  \solElm[{\unkElm[2]}]^{\rhoElm}$.
  We now prove that $\solVec^{\rhoElm}$ is admissible, for all sufficiently
  large $\rhoElm \in \SetR[\geq 1]$.
  Let $(\solVec^{\rhoElm})_{\oVec}$ be the $\oVec$-extension of
  $\solVec^{\rhoElm}$ and $\unkElm \in \unkVec[\top]$.
  By definition of the convolution inverse, it holds that
  \begin{alignat*}{3}
    {\left( \wFun^{-1} \star \left( (\solVec^{\rhoElm})_{\oVec} - \oVec \right)
    \right)_{\unkElm}}
  & =
    {\left( \left( \wFun^{-1} \star (\solVec^{\rhoElm})_{\oVec} \right) - \left(
    \wFun^{-1} \star \oVec \right) \right)_{\unkElm}}
  && = \\
  & =
    {\left( \wFun^{-1} \star (\solVec^{\rhoElm})_{\oVec} \right)_{\unkElm} -
    \left( \wFun^{-1} \star \oVec \right)_{\unkElm}}
  && = \\
  & =
    {\frac{\solElm[\unkElm]^{\rhoElm}}{\wFun(\unkElm, \unkElm)} + \sum_{\unkElm'
    \prec \unkElm}^{\unkElm' \in \unkVec[\top]} \wFun^{-1}(\unkElm', \unkElm)
    \solElm[\unkElm']^{\rhoElm} + \sum_{\unkElm' \prec \unkElm}^{\unkElm'
    \not\in \unkVec[\top]} \wFun^{-1}(\unkElm', \unkElm) \oElm[\unkElm'] -
    \left( \wFun^{-1} \star \oVec \right)_{\unkElm}} \,.
  \end{alignat*}
  If $\solElm[\unkElm] > 1$, since $\solVec \in \SetR[\geq 1][n]$ is increasing
  and $\PStr$ is finite, $1 \leq \solElm[\unkElm'] < \solElm[\unkElm]$, for
  every $\unkElm' \prec \unkElm$ with $\unkElm' \in \unkVec[\top]$.
  Thus, the first summand has positive coefficient and strictly larger
  exponential base than every summand in the finite sum $\sum_{\unkElm' \prec
  \unkElm}^{\unkElm' \in \unkVec[\top]} \wFun^{-1}(\unkElm', \unkElm)
  \solElm[\unkElm']^{\rhoElm}$, while the remaining part $\sum_{\unkElm' \prec
  \unkElm}^{\unkElm' \not\in \unkVec[\top]} \wFun^{-1}(\unkElm', \unkElm)
  \oElm[\unkElm'] - (\wFun^{-1} \star \oVec)_{\unkElm}$ is even constant.
  Therefore, there exists $\sigmaElm[\unkElm] \in \SetR[\geq 1]$ such that
  $(\wFun^{-1} \star ((\solVec^{\rhoElm})_{\oVec} - \oVec))(\unkElm) \geq 0$,
  for every $\rhoElm \in \SetR[\geq {\sigmaElm[\unkElm]}]$.
  If $\solElm[\unkElm] = 1$, instead, then $\unkElm$ has no element lower than
  it in $\unkVec[\top]$ \wrt $\prec$, again because $\solVec \in \SetR[\geq
  1][n]$ is increasing and $\PStr$ is finite.
  Thus, the sums over predecessors $\unkElm' \in \unkVec[\top]$ of $\unkElm$
  are empty.
  Hence,
  \begin{alignat*}{4}
    {\left( \wFun^{-1} \star \left( (\solVec^{\rhoElm})_{\oVec} - \oVec \right)
    \right)_{\unkElm}}
  & =
    {\frac{\solElm[\unkElm]^{\rhoElm}}{\wFun(\unkElm, \unkElm)} + \sum_{\unkElm'
    \prec \unkElm}^{\unkElm' \in \unkVec[\top]} \wFun^{-1}(\unkElm', \unkElm)
    \solElm[\unkElm']^{\rhoElm} + \sum_{\unkElm' \prec \unkElm}^{\unkElm'
    \not\in \unkVec[\top]} \wFun^{-1}(\unkElm', \unkElm) \oElm[\unkElm'] -
    \left( \wFun^{-1} \star \oVec \right)_{\unkElm}}
  && = \\
  & =
    {\frac{\solElm[\unkElm]}{\wFun(\unkElm, \unkElm)} + \sum_{\unkElm' \prec
    \unkElm}^{\unkElm' \in \unkVec[\top]} \wFun^{-1}(\unkElm', \unkElm)
    \solElm[\unkElm'] + \sum_{\unkElm' \prec \unkElm}^{\unkElm' \not\in
    \unkVec[\top]} \wFun^{-1}(\unkElm', \unkElm) \oElm[\unkElm'] - \left(
    \wFun^{-1} \star \oVec \right)_{\unkElm}}
  && = \\
  & =
    {\left( \wFun^{-1} \star \solVec[\oVec] \right)_{\unkElm} - \left(
    \wFun^{-1} \star \oVec \right)_{\unkElm}}
  && = \\
  & =
    {\left( \left( \wFun^{-1} \star \solVec[\oVec] \right) - \left( \wFun^{-1}
    \star \oVec \right) \right)_{\unkElm}}
  && = \\
  & =
    {\left( \wFun^{-1} \star \left( \solVec[\oVec] - \oVec \right)
    \right)_{\unkElm}}
  \geq
    0 \,,
  \end{alignat*}
  since $\solVec$ is admissible.
  In this case, set $\sigmaElm[\unkElm] \defeq 1$.
  Since $\unkVec[\top]$ is finite, by taking $\sigmaElm[adm] \defeq
  \max_{\unkElm \in \unkVec[\top]} \sigmaElm[\unkElm]$, one obtains that
  $(\wFun^{-1} \star ((\solVec^{\rhoElm})_{\oVec} - \oVec))_{\unkElm} \geq 0$,
  for every $\unkElm \in \unkVec[\top]$ and $\rhoElm \in \SetR[\geq
  {\sigmaElm[adm]}]$.
  Therefore, $\solVec^{\rhoElm}$ is admissible, for every $\rhoElm \in
  \SetR[\geq {\sigmaElm[adm]}]$.
  At this point, for the fixed $\alphaElm \in \SetR[\geq 0]$ and every $\rhoElm
  \in \SetR[\geq \max {\{ \sigmaElm[adm], \sigmaElm[\alphaElm] \}}]$, the
  $n$-vector $\solVec^{\rhoElm}$ is an admissible increasing solution in
  $\SetR[\geq 1][n]$ of the $n$-\MPI $\MFun(\unkVec[\top]) > \alphaElm
  \PFun[+](\unkVec[\top])$.
\end{proof}

The previous lemma transforms a Diophantine natural positive solution of the
original inequality into an admissible increasing solution involving only
$\PFun[+]$.
For the converse direction, we need to recover naturality.
Rather than rounding an increasing solution, which need not preserve either the
inequality or the convolution constraints, we use it as a vector of net
contributions and apply the weighted convolution together with the offset.
This transformation can distort the values of the monomials only by fixed
multiplicative factors.
A suitable scaling of $\PFun[+]$ absorbs these factors and ensures that every
sufficiently large Diophantine increasing solution yields a Diophantine natural
solution of the original inequality.

\begin{lemma}
\label{lem:solcon}
  Let $\PStr = \tuple {\unkVec} {\preccurlyeq} {\wFun} {\oVec}$ be a
  multiplicity poset, $\unkVec[\top] \subseteq \unkVec$ an upward-closed
  $n$-vector of unknowns, and $\MFun(\unkVec[\top]) > \PFun(\unkVec[\top])$ an
  $n$-\MPI.
  Then, there exists $\alphaElm \in \SetR[\geq 1]$ such that $\alphaElm
  \PFun[+](\unkVec[\top])$ has coefficients in $\SetR[\geq 1]$ and, for every
  $\lambdaVec \in \SetN[+][n]$, there exists $\kappaVec[\lambdaVec] \in
  \SetN[+][n]$ such that, if $\MFun(\unkVec[\top]) > \alphaElm
  \PFun[+](\unkVec[\top])$ has a Diophantine increasing solution $\solVec \in
  \SetN[+][n]$ with $\solVec \geq \kappaVec[\lambdaVec]$, then
  $\MFun(\unkVec[\top]) > \PFun(\unkVec[\top])$ has a Diophantine natural
  increasing solution $\solVec[\lambdaVec] \in \SetN[+][n]$ with
  $\solVec[\lambdaVec] \geq \lambdaVec$.
\end{lemma}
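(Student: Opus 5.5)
The plan is to reinterpret a Diophantine increasing solution of the scaled inequality as a vector of \emph{net} contributions, and push it through the weighted convolution together with the offset. This makes naturality hold by construction, instead of trying to round or correct the solution itself. Fix $\solVec \in \SetN[+][n]$ on $\unkVec[\top]$. Let $\bar\solVec$ be its extension by $0$ to $\unkVec$, and set $\solVec[\lambdaVec] \defeq (\wFun \star \bar\solVec + \oVec)$ restricted to $\unkVec[\top]$. Since $\unkVec \setminus \unkVec[\top]$ is downward closed, $\wFun \star \bar\solVec$ vanishes there. Hence the $\oVec$-extension of $\solVec[\lambdaVec]$ is exactly $\wFun \star \bar\solVec + \oVec$, and $\wFun^{-1} \star ((\solVec[\lambdaVec])_{\oVec} - \oVec) = \bar\solVec$. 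This takes values in $\SetN$, so $\solVec[\lambdaVec]$ is natural. It is also Diophantine, because $\wFun$ has values in $\SetN[+]$ and $\oVec = \wFun \star (\wFun^{-1} \star \oVec)$ is a convolution action of natural data, so $\oVec \geq 0$ is integral.

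Next I would check that $\solVec[\lambdaVec]$ is increasing and large. By stratification and the monotonicity remark at the end of \cref{sec:prl;sub:poscon}, $\wFun \star \bar\solVec$ is increasing on $\unkVec[\top]$, because $\bar\solVec > 0$ there. Also $\oVec$ is non-decreasing, being the convolution action of $\wFun$ on the non-negative $\wFun^{-1} \star \oVec$. Hence the sum is increasing. Moreover, for $\unkElm \in \unkVec[\top]$ the diagonal term alone gives $\solElm[\lambdaVec\unkElm] \geq \wFun(\unkElm,\unkElm)\, \solElm[\unkElm] \geq \solElm[\unkElm]$. So taking $\kappaVec[\lambdaVec] \geq \lambdaVec$ already yields $\solVec[\lambdaVec] \geq \lambdaVec$.

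The main step is comparing monomials, which is where $\alphaElm$ is chosen. The lower bound is immediate: $\MFun(\solVec[\lambdaVec]) \geq \MFun(\solVec)$ by the coordinatewise inequality above. For the upper bound, I use that $\solVec$ is increasing, so $\solElm[\unkElm'] \leq \solElm[\unkElm]$ for $\unkElm' \preccurlyeq \unkElm$. I additionally require $\kappaVec[\lambdaVec] \geq \max_{\unkElm} \oElm[\unkElm] \cdot \vec*{1}$, so that $\oElm[\unkElm] \leq \solElm[\unkElm]$. These two facts give $\solElm[\lambdaVec\unkElm] \leq C\, \solElm[\unkElm]$ with
\[
  C \defeq 1 + \max_{\unkElm} \sum_{\unkElm' \preccurlyeq \unkElm} \wFun(\unkElm', \unkElm),
\]
a constant depending only on $\PStr$. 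If $D$ is the maximal degree of the monomials of $\PFun[+]$, then $\PFun[+](\solVec[\lambdaVec]) \leq C^{D} \PFun[+](\solVec)$. I also take $\alphaElm \defeq \max\{C^{D}, 1/\min_{\alphaElm[i]>0} \alphaElm[i], 1\}$, which makes the coefficients of $\alphaElm\PFun[+]$ at least $1$.

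Chaining these gives
\[
  \MFun(\solVec[\lambdaVec]) \geq \MFun(\solVec) > \alphaElm \PFun[+](\solVec) \geq C^{D} \PFun[+](\solVec) \geq \PFun[+](\solVec[\lambdaVec]) \geq \PFun(\solVec[\lambdaVec]),
\]
where the last step holds because $\solVec[\lambdaVec] \geq 0$ and the negative monomials only decrease $\PFun$. The only delicate point I anticipate is that the offset bound must be absorbed by a $\kappaVec[\lambdaVec]$ that depends on $\lambdaVec$ and $\PStr$ only, not on $\solVec$. The choice $\kappaVec[\lambdaVec] \defeq \max(\lambdaVec, \max_{\unkElm}\oElm[\unkElm]\vec*{1}, \vec*{1})$ does this, while $\alphaElm$ depends only on $\PStr$ and $\PFun$, as the statement requires.
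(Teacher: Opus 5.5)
Your proposal is correct and follows the paper's proof essentially step by step. It lifts $\xi$ to the same vector $o_{\top} + w_{\top} \star \xi$ (your zero-extension $\bar\xi$ yields exactly this on the upward-closed part), and it uses the same naturality, increasingness and diagonal lower-bound arguments and the same monotonicity of $\xi$ to bound each coordinate by a constant multiple of $\xi_u$. The differences are cosmetic and both versions work: the paper absorbs the offset via $o_u \leq o_u \xi_u$ (since $\xi_u \geq 1$) and uses coordinatewise constants $\tau_u$ and $w(u,u)$, whereas you use a uniform $C^{D}$ and enlarge $\kappa$ by $\max_u o_u$.
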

\begin{proof}
  Let $\MFun(\unkVec[\top]) = \unkVec[\top]^{\expVec[0]}$ and
  $\PFun(\unkVec[\top]) = \sum_{i = 1}^{m} \alphaElm[i]
  \unkVec[\top]^{\expVec[i]}$, for some $\{ \alphaElm[i] \}_{i = 1}^{m}
  \subseteq \SetR[\neq 0]$ and $\{ \expVec[i] \}_{i = 0}^{m} \subseteq
  \SetN[][n]$, with $m \in \SetN$, and recall that $\PFun[+](\unkVec[\top]) =
  \sum_{1 \leq i \leq m}^{\alphaElm[i] > 0} \alphaElm[i]
  \unkVec[\top]^{\expVec[i]}$.
  Moreover, let $\wFun[\top]$ be the restriction of $\wFun$ to $\unkVec[\top]
  \times \unkVec[\top]$, $\oVec[\top]$ the restriction of $\oVec$ to
  $\unkVec[\top]$, $\iotaVec = \{ \iotaElm[\unkElm] \}_{\unkElm \in
  \unkVec[\top]} \in \SetN[+][n]$ the $n$-vector of positive natural numbers,
  whose components are defined as $\iotaElm[\unkElm] \defeq \wFun(\unkElm,
  \unkElm)$, for every $\unkElm \in \unkVec[\top]$, and $\tauVec \defeq
  \oVec[\top] + \wFun[\top] \star \vec*{1}$, where $\vec*{1} \in \SetN[+][n]$ is
  the constant-one vector on $\unkVec[\top]$.
  Since $\PStr$ is a multiplicity poset, it holds that $\wFun^{-1} \star \oVec
  \in \SetN[][n + k]$ and, therefore, $\oVec = \wFun \star (\wFun^{-1} \star
  \oVec) \in \SetN[][n + k]$, from which it follows that $\oVec[\top] \in
  \SetN[][n]$.
  Hence, $\tauVec \in \SetN[+][n]$.
  Consider also the values $\cElm[0] \defeq \iotaVec^{\expVec[0]}$ and $\cElm[i]
  \defeq \tauVec^{\expVec[i]}$, for all $i \in \num{m}$.
  If $\PFun[+](\unkVec[\top])$ is the zero polynomial, let $\alphaElm \defeq 1$.
  Otherwise, let $\alphaElm \in \SetR[\geq 1]$ be such that $\alphaElm
  \alphaElm[i] \geq 1$, for every $i \in \num{m}$ with $\alphaElm[i] > 0$, and
  $\alphaElm \geq \max_{i \in \num{m}}^{\alphaElm[i] > 0}
  \frac{\cElm[i]}{\cElm[0]}$.
  In both cases, $\alphaElm \PFun[+](\unkVec[\top])$ has coefficients in
  $\SetR[\geq 1]$.
  Now, let $\lambdaVec \in \SetN[+][n]$ be an arbitrary $n$-vector and
  $\kappaVec[\lambdaVec] = \{ \kappaElm[\unkElm] \}_{\unkElm \in \unkVec[\top]}
  \in \SetN[+][n]$ the corresponding $n$-vector with $\kappaElm[\unkElm] \defeq
  \ceil{\frac{\lambdaVec[\unkElm]}{\iotaElm[\unkElm]}}$, for every $\unkElm \in
  \unkVec[\top]$.
  Furthermore, assume that $\MFun(\unkVec[\top]) > \alphaElm
  \PFun[+](\unkVec[\top])$ has a Diophantine increasing solution $\solVec \in
  \SetN[+][n]$ with $\solVec \geq \kappaVec[\lambdaVec]$ and define
  $\solVec[\lambdaVec] \defeq \oVec[\top] + \wFun[\top] \star \solVec$.
  Before proceeding, also consider the $\oVec$-extension $\solVec[\lambdaVec,
  \oVec] \in \SetN[][n + k]$ of $\solVec[\lambdaVec]$ and the
  $\vec*{0}$-extension $\solVec[\vec*{0}] \in \SetN[][n + k]$ of $\solVec$, \ie,
  the vector that is equal to $\solVec$ on $\unkVec[\top]$ and to $\vec*{0}$ on
  $\unkVec \setminus \unkVec[\top]$.
  Observe that $\solVec[\lambdaVec, \oVec] - \oVec = \wFun \star
  \solVec[\vec*{0}]$.
  Since $\PStr$ is a multiplicity poset, it holds that $\solVec[\lambdaVec,
  \oVec] \in \SetN[][n + k]$.
  Moreover, $\wFun^{-1} \star (\solVec[\lambdaVec, \oVec] - \oVec) = \wFun^{-1}
  \star (\wFun \star \solVec[\vec*{0}]) = \solVec[\vec*{0}] \in \SetN[][n + k]$.
  Thus, $\solVec[\lambdaVec]$ is natural.
  We now prove that $\solVec[\lambdaVec]$ is increasing.
  Since $\solVec \in \SetN[+][n]$ is increasing, the convolution action
  $\wFun[\top] \star \solVec$ is increasing as well.
  In addition, $\oVec$ is non-decreasing, since $\oVec = \wFun \star (\wFun^{-1}
  \star \oVec)$ and $\wFun^{-1} \star \oVec \in \SetN[][n + k]$.
  Hence, $\oVec[\top]$ is non-decreasing on $\unkVec[\top]$, from which it
  follows that their sum $\solVec[\lambdaVec] = \oVec[\top] + \wFun[\top] \star
  \solVec$ is increasing.
  At this point, we show that $\solVec[\lambdaVec] \geq \lambdaVec$.
  First observe that, for every $\unkElm \in \unkVec[\top]$, it holds that
  $(\solVec[\lambdaVec])_{\unkElm} = \oElm[\unkElm] + (\wFun[\top] \star
  \solVec)_{\unkElm} \geq \wFun(\unkElm, \unkElm) \solElm[\unkElm] =
  \iotaElm[\unkElm] \solElm[\unkElm]$.
  Hence, by definition of $\kappaVec[\lambdaVec]$ and the fact that $\solVec
  \geq \kappaVec[\lambdaVec]$, we get $(\solVec[\lambdaVec])_{\unkElm} \geq
  \iotaElm[\unkElm] \solElm[\unkElm] \geq \iotaElm[\unkElm] \kappaElm[\unkElm] =
  \iotaElm[\unkElm] \ceil{\frac{\lambdaVec[\unkElm]}{\iotaElm[\unkElm]}} \geq
  \lambdaVec[\unkElm]$.
  Finally, it remains to prove that $\solVec[\lambdaVec]$ is a solution of the
  $n$-\MPI $\MFun(\unkVec[\top]) > \PFun(\unkVec[\top])$.
  If $\PFun[+](\unkVec[\top])$ is the zero polynomial, then
  $\MFun(\solVec[\lambdaVec]) > 0 \geq \PFun(\solVec[\lambdaVec])$, so,
  $\solVec[\lambdaVec]$ is clearly a solution.
  Otherwise, first observe that, for every $\unkElm', \unkElm \in \unkVec[\top]$
  with $\unkElm' \preccurlyeq \unkElm$, it holds that $\solElm[\unkElm'] \leq
  \solElm[\unkElm]$, since $\solVec$ is increasing and $\PStr$ is finite.
  Therefore, for every $\unkElm \in \unkVec[\top]$, it holds that
  \begin{alignat*}{3}
    {\left( \solVec[\lambdaVec] \right)_{\unkElm}}
  & =
    {\left( \oVec[\top] + \wFun[\top] \star \solVec \right)_{\unkElm}}
  && = \\
  & =
    {\oElm[\unkElm]  + \left( \wFun[\top] \star \solVec \right)_{\unkElm}}
  && \leq \\
  & \leq
    {\oElm[\unkElm] \solElm[\unkElm] + \left( \wFun[\top] \star \solVec
    \right)_{\unkElm}}
  && = \\
  & =
    {\oElm[\unkElm] \solElm[\unkElm] + \sum_{\unkElm' \preccurlyeq
    \unkElm}^{\unkElm' \in \unkVec[\top]} \wFun[\top](\unkElm', \unkElm)
    \solElm[\unkElm']}
  && \leq \\
  & \leq
    {\oElm[\unkElm] \solElm[\unkElm] + \sum_{\unkElm' \preccurlyeq
    \unkElm}^{\unkElm' \in \unkVec[\top]} \wFun[\top](\unkElm', \unkElm)
    \solElm[\unkElm]}
  && \leq \\
  & \leq
    {\left( \oElm[\unkElm] + \sum_{\unkElm' \preccurlyeq \unkElm}^{\unkElm' \in
    \unkVec[\top]} \wFun[\top](\unkElm', \unkElm) \right) \solElm[\unkElm]}
  && = \\
  & =
    {\left( \oElm[\unkElm] + \left( \wFun[\top] \star \vec*{1} \right)_{\unkElm}
    \right) \solElm[\unkElm]}
  && = \\
  & =
    {\left( \oVec[\top] + \wFun[\top] \star \vec*{1} \right)_{\unkElm}
    \solElm[\unkElm]}
  && =
    {\tauElm[\unkElm] \solElm[\unkElm]} \,.
  \end{alignat*}
  At this point, by combining the upper and lower bounds on
  $\solVec[\lambdaVec]$, \ie, $(\solVec[\lambdaVec])_{\unkElm} \leq
  \tauElm[\unkElm] \solElm[\unkElm]$ and $(\solVec[\lambdaVec])_{\unkElm} \geq
  \iotaElm[\unkElm] \solElm[\unkElm]$, respectively, one can get the following
  inequality, for every $i \in \num{m}$ with $\alphaElm[i] > 0$:
  \[
    \frac{\solVec[\lambdaVec]^{\expVec[i]}}{\solVec[\lambdaVec]^{\expVec[0]}}
  \leq
    \frac{\tauVec^{\expVec[i]}}{\iotaVec^{\expVec[0]}} \cdot
    \frac{\solVec^{\expVec[i]}}{\solVec^{\expVec[0]}}
  =
    \frac{\cElm[i]}{\cElm[0]} \cdot
    \frac{\solVec^{\expVec[i]}}{\solVec^{\expVec[0]}} \,.
  \]
  Hence,
  \begin{alignat*}{4}
    {\frac{\PFun[+](\solVec[\lambdaVec])}{\MFun(\solVec[\lambdaVec])}}
  & =
    {\frac{\sum_{1 \leq i \leq m}^{\alphaElm[i] > 0} \alphaElm[i]
    \solVec[\lambdaVec]^{\expVec[i]}}{\solVec[\lambdaVec]^{\expVec[0]}}}
  && = \\
  & =
    {\sum_{1 \leq i \leq m}^{\alphaElm[i] > 0} \alphaElm[i]
    \frac{\solVec[\lambdaVec]^{\expVec[i]}}{\solVec[\lambdaVec]^{\expVec[0]}}}
  && \leq \\
  & \leq
    {\sum_{1 \leq i \leq m}^{\alphaElm[i] > 0} \alphaElm[i]
    \frac{\cElm[i]}{\cElm[0]} \frac{\solVec^{\expVec[i]}}{\solVec^{\expVec[0]}}}
  && \leq \\
  & \leq
    {\sum_{1 \leq i \leq m}^{\alphaElm[i] > 0} \alphaElm[i] \alphaElm
    \frac{\solVec^{\expVec[i]}}{\solVec^{\expVec[0]}}}
  && \leq \\
  & \leq
    {\frac{\alphaElm \sum_{1 \leq i \leq m}^{\alphaElm[i] > 0} \alphaElm[i]
    \solVec^{\expVec[i]}}{\solVec^{\expVec[0]}}}
  && =
    {\frac{\alphaElm \PFun[+](\solVec)}{\MFun(\solVec)}}
  &&& <
    1
  \end{alignat*}
  The strict inequality in the last line follows from the fact that $\solVec$
  is a solution of $\MFun(\unkVec[\top]) > \alphaElm \PFun[+](\unkVec[\top])$.
  Thus, $\MFun(\solVec[\lambdaVec]) > \PFun[+](\solVec[\lambdaVec])$.
  Since $\solVec[\lambdaVec] \in \SetN[+][n]$, it holds that
  $\PFun(\solVec[\lambdaVec]) \leq \PFun[+](\solVec[\lambdaVec])$.
  Therefore, $\MFun(\solVec[\lambdaVec]) > \PFun(\solVec[\lambdaVec])$.
  Summing up, $\solVec[\lambdaVec]$ is a Diophantine natural increasing solution
  of the $n$-\MPI $\MFun(\unkVec[\top]) > \PFun(\unkVec[\top])$ with
  $\solVec[\lambdaVec] \geq \lambdaVec$.
\end{proof}

The two preceding lemmas can now be combined with the increasing-solution
characterisation of \cref{thm:incsol}.
They show that, as long as all unknowns are bounded below by $1$, Diophantine
natural solvability, feasibility of the positive linear system, and the
existence of arbitrarily large Diophantine natural increasing solutions are
equivalent.

\begin{theorem}[Positive Solution]
\label{thm:possol}
  For every multiplicity poset $\PStr = \tuple {\unkVec} {\preccurlyeq} {\wFun}
  {\oVec}$, upward-closed $n$-vector of unknowns $\unkVec[\top] \subseteq
  \unkVec$, and $n$-{\MPI} $\MFun(\unkVec[\top]) > \PFun(\unkVec[\top])$, with
  $\PFun(\unkVec[\top])$ strongly non-negative \wrt $\PStr$, the following
  statements are equivalent:
  \begin{enumerate}[a)]
  \item\label{thm:possol(sol)}
    the $n$-\MPI admits a Diophantine natural solution in $\SetN[+][n]$;
  \item\label{thm:possol(sys)}
    the $\tuple* {\MFun(\unkVec[\top]) \!>\! \PFun[+](\unkVec[\top])}
    {\der{\PStr}} [+]$-system admits a solution, where $\der{\PStr} \defeq
    \tuple {\unkVec[\top]} {{\preccurlyeq} \cap \unkVec[\top] \times
    \unkVec[\top]}$;
  \item\label{thm:possol(dphsol)}
    for all $\lambdaVec \!\in\! \SetN[+][n]$, the $n$-\MPI admits a Diophantine
    natural increasing solution $\solVec[\lambdaVec] \geq \lambdaVec$.
  \end{enumerate}
\end{theorem}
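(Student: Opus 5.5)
The plan is to prove the cycle \ref{thm:possol(sol)} $\Rightarrow$ \ref{thm:possol(sys)} $\Rightarrow$ \ref{thm:possol(dphsol)} $\Rightarrow$ \ref{thm:possol(sol)}. Each implication will be obtained by chaining \cref{lem:solred}, \cref{thm:incsol}, and \cref{lem:solcon}, all applied to the restricted poset $\der{\PStr}$ on $\unkVec[\top]$. The last implication is immediate. Taking $\lambdaVec = \vec*{1}$ in \ref{thm:possol(dphsol)} yields a Diophantine natural solution $\solVec[\vec*{1}] \geq \vec*{1}$, which therefore lies in $\SetN[+][n]$.

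For \ref{thm:possol(sol)} $\Rightarrow$ \ref{thm:possol(sys)}, I would first note that every multiplicity poset is stratified, so \cref{lem:solred} applies. Instantiate it with $\alphaElm \defeq \alphaElm^{\prime}$, where $\alphaElm^{\prime} \geq 1$ is chosen large enough that $\alphaElm^{\prime}\PFun[+]$ has coefficients in $\SetR[\geq 1]$ (for instance, the scalar supplied by \cref{lem:solcon}). This gives an admissible increasing solution in $\SetR[\geq 1][n]$ of $\MFun(\unkVec[\top]) > \alphaElm^{\prime}\PFun[+](\unkVec[\top])$. On $\unkVec[\top]$, increasingness with respect to $\PStr$ coincides with increasingness with respect to $\der{\PStr}$, because for elements of an upward-closed set the successor relation of the induced suborder agrees with that of $\PStr$. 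Admissibility is then simply discarded. \cref{thm:incsol}, direction \ref{thm:incsol(sol)} $\Rightarrow$ \ref{thm:incsol(sys)}, gives feasibility of the $\tuple*{\MFun > \alphaElm^{\prime}\PFun[+]}{\der{\PStr}}[+]$-system. As the paper remarks, this system depends only on exponents and on the signs of coefficients, so it coincides with the $\tuple*{\MFun > \PFun[+]}{\der{\PStr}}[+]$-system.

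For \ref{thm:possol(sys)} $\Rightarrow$ \ref{thm:possol(dphsol)}, let $\alphaElm$ be the scalar given by \cref{lem:solcon}. By the same sign-invariance, the $\tuple*{\MFun > \alphaElm\PFun[+]}{\der{\PStr}}[+]$-system is feasible, and $\alphaElm\PFun[+]$ has coefficients in $\SetR[\geq 1]$. \cref{thm:incsol}, direction \ref{thm:incsol(sys)} $\Rightarrow$ \ref{thm:incsol(dphsol)}, then produces, for an arbitrary $\lambdaVec$, a Diophantine increasing solution $\solVec \geq \kappaVec[\lambdaVec]$ of $\MFun > \alphaElm\PFun[+]$. \cref{lem:solcon} converts this into a Diophantine natural increasing solution $\solVec[\lambdaVec] \geq \lambdaVec$ of the original inequality.

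I expect the main obstacle to be bookkeeping rather than substance. First, one must verify that the notion of \emph{increasing} used in \cref{thm:incsol} on $\der{\PStr}$ matches the one used in \cref{lem:solred} and \cref{lem:solcon}, where the successor relation is inherited from $\PStr$; this needs a careful check of covering pairs inside $\unkVec[\top]$. Second, one must make sure that the scaling factor in \cref{lem:solred} is chosen before invoking \cref{thm:incsol}, so that its hypothesis on coefficients in $\SetR[\geq 1]$ holds. Strong non-negativeness is used only through \cref{lem:solred}.
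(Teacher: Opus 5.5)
Your proposal is correct and follows the paper's proof essentially step for step. The three implications go as follows: (c)$\Rightarrow$(a) by taking the all-ones vector; (a)$\Rightarrow$(b) via \cref{lem:solred}, the implication (a)$\Rightarrow$(b) of \cref{thm:incsol}, and the invariance of the linear system under positive rescaling of $P_+$; and (b)$\Rightarrow$(c) via the scalar from \cref{lem:solcon}, the implication (b)$\Rightarrow$(c) of \cref{thm:incsol}, and \cref{lem:solcon} once more. Your extra check, that covering pairs inside the upward-closed set of unknowns coincide with those of the full poset, is a detail the paper leaves implicit, and it does hold.
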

\begin{proof}
  We prove the equivalence by focusing on the only non-immediate implications,
  namely \ref{thm:possol(sol)} $\Rightarrow$~\ref{thm:possol(sys)}
  and~\ref{thm:possol(sys)} $\Rightarrow$~\ref{thm:possol(dphsol)},
  since~\ref{thm:possol(dphsol)} $\Rightarrow$~\ref{thm:possol(sol)} follows
  immediately by considering $\lambdaVec = \vec*{1}$.
  Before proceeding, observe that the two systems $\tuple* {\MFun(\unkVec[\top])
  > \PFun[+](\unkVec[\top])} {\der{\PStr}} [+]$ and $\tuple*
  {\MFun(\unkVec[\top]) > \alphaElm \PFun[+](\unkVec[\top])} {\der{\PStr}} [+]$
  of homogeneous linear inequalities coincide, for every $\alphaElm \in
  \SetR[+]$, since multiplication by a positive scalar does not change the
  exponent vectors of the positive monomials considered in those systems.
  We now prove the two directions of the equivalence separately.
  \begin{itemize}
  \item
    \textbf{[\ref{thm:possol(sol)} $\Rightarrow$ \ref{thm:possol(sys)}]}
    Assume that the $n$-\MPI $\MFun(\unkVec[\top]) > \PFun(\unkVec[\top])$ has a
    Diophantine natural solution in $\SetN[+][n]$.
    If $\PFun[+](\unkVec[\top])$ is the zero polynomial, let $\alphaElm \defeq
    1$.
    Otherwise, let $\alphaElm \in \SetR[\geq 1]$ be a value such that the
    polynomial $\alphaElm \PFun[+](\unkVec[\top])$ has coefficients in
    $\SetR[\geq 1]$.
    By \cref{lem:solred}, the $n$-\MPI $\MFun(\unkVec[\top]) > \alphaElm
    \PFun[+](\unkVec[\top])$ has an admissible increasing solution in
    $\SetR[\geq 1][n]$.
    Thus, \cref{thm:incsol(sol)} of \cref{thm:incsol} is satisfied for
    $\MFun(\unkVec[\top]) > \alphaElm \PFun[+](\unkVec[\top])$ \wrt
    $\der{\PStr}$.
    Hence, by \cref{thm:incsol(sys)} of the same theorem, the $\tuple*
    {\MFun(\unkVec[\top]) \!>\! \alphaElm \PFun[+](\unkVec[\top])} {\der{\PStr}}
    [+]$-system admits a solution.
    By the initial observation, the $\tuple* {\MFun(\unkVec[\top]) \!>\!
    \PFun[+](\unkVec[\top])} {\der{\PStr}} [+]$-system admits a solution as
    well.
  \item
    \textbf{[\ref{thm:possol(sys)} $\Rightarrow$ \ref{thm:possol(dphsol)}]}
    Assume that the $\tuple* {\MFun(\unkVec[\top]) \!>\!
    \PFun[+](\unkVec[\top])} {\der{\PStr}} [+]$-system admits a solution.
    By \cref{lem:solcon}, there exists $\alphaElm \in \SetR[\geq 1]$ such that
    $\alphaElm \PFun[+](\unkVec[\top])$ has coefficients in $\SetR[\geq 1]$ and,
    for every $\lambdaVec \in \SetN[+][n]$, there exists $\kappaVec[\lambdaVec]
    \in \SetN[+][n]$ such that, if $\MFun(\unkVec[\top]) > \alphaElm
    \PFun[+](\unkVec[\top])$ has a Diophantine increasing solution $\solVec \in
    \SetN[+][n]$ with $\solVec \geq \kappaVec[\lambdaVec]$, then
    $\MFun(\unkVec[\top]) > \PFun(\unkVec[\top])$ has a Diophantine natural
    increasing solution $\solVec[\lambdaVec] \in \SetN[+][n]$ with
    $\solVec[\lambdaVec] \geq \lambdaVec$.
    By the initial observation, the $\tuple* {\MFun(\unkVec[\top]) \!>\!
    \alphaElm \PFun[+](\unkVec[\top])} {\der{\PStr}} [+]$-system admits a
    solution as well.
    Thus, \cref{thm:incsol(sys)} of \cref{thm:incsol} is satisfied for
    $\MFun(\unkVec[\top]) > \alphaElm \PFun[+](\unkVec[\top])$ \wrt
    $\der{\PStr}$.
    Now, let $\lambdaVec \in \SetN[+][n]$ be an arbitrary $n$-vector.
    By \cref{thm:incsol(dphsol)} of the same theorem, the $n$-\MPI
    $\MFun(\unkVec[\top]) > \alphaElm \PFun[+](\unkVec[\top])$ has a Diophantine
    increasing solution $\solVec[{\kappaVec[\lambdaVec]}] \in \SetN[+][n]$ with
    $\solVec[{\kappaVec[\lambdaVec]}] \geq \kappaVec[\lambdaVec]$.
    By the choice of $\kappaVec[\lambdaVec]$, the $n$-\MPI $\MFun(\unkVec[\top])
    > \PFun(\unkVec[\top])$ has a Diophantine natural increasing solution
    $\solVec[\lambdaVec] \in \SetN[+][n]$ with $\solVec[\lambdaVec] \geq
    \lambdaVec$.
    \qedhere
  \end{itemize}
\end{proof}

The previous theorem still assumes that every unknown is assigned a value at
least $1$, whereas \cref{prb:nmpi} also admits Diophantine natural solutions
containing zeros.
We therefore isolate the unknowns that are forced to remain positive.
Naturality and positiveness of the weights imply that the positive support of
any natural valuation is upward closed.
Moreover, for the strongly non-negative instances of \cref{prb:nmpi}, every
unknown occurring in $\MFun$, as well as every unknown above a positive offset
contribution, must belong to this support for any Diophantine natural solution.
This motivates a canonical upward-closed restriction on which the positive
solution theorem can be applied.
In contrast with the collapse construction used in~\cite{KM25}, this reduction
works directly on the original finite poset and does not require any
meet-semilattice constraint on the underlying structure.

For a multiplicity poset $\PStr = \tuple {\unkVec} {\preccurlyeq} {\wFun}
{\oVec}$, an upward-closed $n$-vector of unknowns $\unkVec[\top] \subseteq
\unkVec$, and an $n$-{\MPI} $\MFun(\unkVec[\top]) > \PFun(\unkVec[\top])$ where
$\MFun(\unkVec[\top]) = \unkVec[\top]^{\expVec[0]}$ and $\PFun(\unkVec[\top]) =
\sum_{i = 1}^{m} \alphaElm[i] \unkVec[\top]^{\expVec[i]}$, for some $\{
\alphaElm[i] \}_{i = 1}^{m} \subseteq \SetR[\neq 0]$ and $\{ \expVec[i] \}_{i =
0}^{m} \subseteq \SetN[][n]$, with $m \in \SetN$, we define the \emph{reduced
upward-closed vector of unknowns} $\unkVec[\uparrow]$ as the subvector of
$\unkVec[\top]$ such that $\unkElm \in \unkVec[\uparrow]$ \iff there exists an
unknown $\unkElm[][*] \in \unkVec[\top]$ such that $\unkElm[][*] \preccurlyeq
\unkElm$ and at least one of the following holds: $\expElm[0{\unkElm[][*]}] > 0$
or $\oElm[{\unkElm[][*]}] > 0$.
Intuitively, $\unkVec[\uparrow]$ contains exactly the unknowns of
$\unkVec[\top]$ that cannot be set to zero without annihilating the monomial
$\MFun$ or removing a positive contribution forced by the offset vector.
We then define the \emph{reduced polynomial}
$\PFun[][\uparrow](\unkVec[\uparrow])$ \wrt $\unkVec[\uparrow]$ as the
polynomial obtained from $\PFun(\unkVec[\top])$ by setting to $0$ all
unknowns in $\unkVec[\top] \setminus \unkVec[\uparrow]$, namely
$\PFun[][\uparrow](\unkVec[\uparrow]) \defeq \PFun(\unkVec[\top])[\unkElm \in
\unkVec[\top] \setminus \unkVec[\uparrow] \mapsto 0]$.
Equivalently, $\PFun[][\uparrow](\unkVec[\uparrow]) \defeq \sum_{i \in \ISet}
\alphaElm[i] \unkVec[\uparrow]^{\expVec[i] \rst[{\unkVec[\uparrow]}]}$, where
$\ISet \subseteq \num{m}$ is the set of indices $1 \leq i \leq m$ such that
$\expElm[i \unkElm] > 0$ implies $\unkElm \in \unkVec[\uparrow]$, for all
$\unkElm \in \unkVec[\top]$, and $\expVec[i] \rst[{\unkVec[\uparrow]}]$ is the
restriction of $\expVec[i]$ to the unknowns in $\unkVec[\uparrow]$.
Clearly, every valuation $\solVec[\uparrow]$ of $\unkVec[\uparrow]$ can be
extended to a valuation $\solVec$ of $\unkVec[\top]$ by setting
$\solElm[\unkElm] \defeq \solElm[\uparrow \unkElm]$, for every $\unkElm \in
\unkVec[\uparrow]$, and $\solElm[\unkElm] \defeq 0$, for every $\unkElm \in
\unkVec[\top] \setminus \unkVec[\uparrow]$.
By construction, it then holds that $\MFun(\solVec) = \MFun(\solVec[\uparrow])$
and $\PFun(\solVec) = \PFun[][\uparrow](\solVec[\uparrow])$.
Consequently, every solution of the reduced $k$-\MPI $\MFun(\unkVec[\uparrow]) >
\PFun[][\uparrow](\unkVec[\uparrow])$ induces, via zero-extension, a solution of
the original $n$-\MPI $\MFun(\unkVec[\top]) > \PFun(\unkVec[\top])$.

The reduced vector is upward closed and contains every unknown forced to be
positive by either $\MFun$ or the offset.
Consequently, the restriction $\PStr[\uparrow]$ is again a multiplicity poset,
and every Diophantine natural valuation over it extends by zero to a Diophantine
natural valuation over $\PStr$.
For the same reason, strong non-negativeness is inherited by the reduced
polynomial: evaluating the deleted unknowns as $0$ turns every Diophantine
natural valuation of the reduced problem into a Diophantine natural valuation of
the original one.
We can therefore apply the positive-solution characterisation to the reduced
problem and obtain the final result.

\begin{theorem}[General Solution]
\label{thm:gensol}
  For every multiplicity poset $\PStr = \tuple {\unkVec} {\preccurlyeq} {\wFun}
  {\oVec}$, upward-closed $n$-vector of unknowns $\unkVec[\top] \subseteq
  \unkVec$, and $n$-{\MPI} $\MFun(\unkVec[\top]) > \PFun(\unkVec[\top])$, with
  $\PFun(\unkVec[\top])$ strongly non-negative \wrt $\PStr$, the following
  statements are equivalent, where $\unkVec[\uparrow]$ is the reduced
  upward-closed $k$-vector of unknowns, $\PFun[][\uparrow](\unkVec[\uparrow])$
  the reduced polynomial \wrt $\unkVec[\uparrow]$, and $\PStr[\uparrow] \defeq
  \tuple {\unkVec[\uparrow]} {{\preccurlyeq} \cap \unkVec[\uparrow] \times
  \unkVec[\uparrow]} {\wFun \rst[{\unkVec[\uparrow] \times \unkVec[\uparrow]}]}
  {\oVec \rst[{\unkVec[\uparrow]}]}$:
  \begin{enumerate}[a)]
  \item\label{thm:gensol(sol)}
    the original $n$-\MPI $\MFun(\unkVec[\top]) > \PFun(\unkVec[\top])$ admits a
    Diophantine natural solution \wrt $\PStr$;
  \item\label{thm:gensol(sys)}
    the $\tuple* {\MFun(\unkVec[\uparrow]) \!>\!
    \PFun[+][\uparrow](\unkVec[\uparrow])} {\der{\PStr}[\uparrow]} [+]$-system
    admits a solution, where $\der{\PStr}[\uparrow] \defeq \tuple
    {\unkVec[\uparrow]} {{\preccurlyeq} \cap \unkVec[\uparrow] \times
    \unkVec[\uparrow]}$;
  \item\label{thm:gensol(redsol)}
    for all $\lambdaVec \!\in\! \SetN[+][k]$, the reduced $k$-\MPI
    $\MFun(\unkVec[\uparrow]) > \PFun[][\uparrow](\unkVec[\uparrow])$ admits a
    Diophantine natural increasing solution $\solVec[\lambdaVec] \geq
    \lambdaVec$ \wrt $\PStr[\uparrow]$.
  \end{enumerate}
\end{theorem}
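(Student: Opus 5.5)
The plan is to reduce everything to \cref{thm:possol}, applied to the reduced data $(\PStr[\uparrow], \unkVec[\uparrow], \MFun(\unkVec[\uparrow]) > \PFun[][\uparrow](\unkVec[\uparrow]))$. The bridge between the two problems is a support argument for \ref{thm:gensol(sol)} $\Rightarrow$ \ref{thm:gensol(redsol)}, together with zero-extension for the converse.

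First I will verify the preconditions of \cref{thm:possol} for the reduced problem. I need three facts.
\begin{enumerate}
\item $\unkVec[\uparrow]$ is upward closed in $\unkVec$. It is upward closed in $\unkVec[\top]$ by its definition, and $\unkVec[\top]$ is itself upward closed.
\item $\PStr[\uparrow]$ is a multiplicity poset. The weight conditions are inherited by restriction. Because $\unkVec[\uparrow]$ is upward closed, convolution inverses computed on it agree with those of $\PStr$ on pairs inside it. For $\wFun^{-1}\star\oVec$ restricted to $\PStr[\uparrow]$, I plan to argue that every $\unkElm$ with $\oElm[\unkElm]>0$ whose down-set meets $\unkVec\setminus\unkVec[\uparrow]$ only in elements of zero offset still gives a natural value. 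Offsets on $\unkVec\setminus\unkVec[\top]$ are treated as an external extension, exactly as in \cref{def:admnat}. I will phrase naturality relative to the $\oVec$-extension so that it matches the original one.
\item $\PFun[][\uparrow]$ is strongly non-negative \wrt $\PStr[\uparrow]$. Every Diophantine natural valuation of $\unkVec[\uparrow]$ zero-extends to a Diophantine natural valuation of $\unkVec[\top]$. On deleted unknowns the recovered net contribution is $(\wFun^{-1}\star(0-\oVec))$, and this equals $0$ because the offsets vanish on the downward-closed complement inside $\unkVec[\top]$. Zero-extension also preserves the values of $\MFun$, $\PFun$ and of each positive monomial of $\PFun[][\uparrow]$.
\end{enumerate}

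For \ref{thm:gensol(sol)} $\Rightarrow$ \ref{thm:gensol(redsol)}, I take a Diophantine natural solution $\solVec$ of the original inequality. Its support is upward closed: the net contributions $\etaVec = \wFun^{-1}\star(\solVec[\oVec]-\oVec)$ are non-negative and the weights are positive, so $\solVec[\oVec] = \oVec + \wFun\star\etaVec$ is non-decreasing. Next I claim $\unkVec[\uparrow]\subseteq\mathrm{supp}(\solVec)$. Every unknown with $\expElm[0\unkElm]>0$ is positive, since otherwise $\MFun(\solVec)=0>\PFun(\solVec)\geq 0$ would contradict strong non-negativeness. Every unknown with positive offset has $\solElm\geq\oElm>0$, by the non-decreasing property of $\wFun\star\etaVec\geq 0$. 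Upward closure then covers all of $\unkVec[\uparrow]$.

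The main obstacle comes next. The support may strictly contain $\unkVec[\uparrow]$, so restricting $\solVec$ to $\unkVec[\uparrow]$ does not immediately give a solution of the reduced inequality. My first attempt is to set $\unkVec[\uparrow]' \defeq \mathrm{supp}(\solVec)\cap\unkVec[\top]$, which is upward closed and contains $\unkVec[\uparrow]$. On it, $\solVec$ is a Diophantine natural solution in $\SetN[+]$, and \cref{thm:possol} yields feasibility of the positive linear system over $\unkVec[\uparrow]'$. I then project to $\unkVec[\uparrow]$: the constraints of the $\tuple*{\MFun(\unkVec[\uparrow]) > \PFun[+][\uparrow](\unkVec[\uparrow])}{\der{\PStr}[\uparrow]}[+]$-system form a subset of those over $\unkVec[\uparrow]'$. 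Positive monomials of $\PFun[][\uparrow]$ only use unknowns in $\unkVec[\uparrow]$, and the successor constraints inside the upward-closed subset $\unkVec[\uparrow]$ are implied by strict increase along $\preccurlyeq$ in the larger poset. Restricting the solution vector therefore solves system \ref{thm:gensol(sys)}. The step I should double-check is that successor pairs in the restricted order are comparable pairs in the larger one, which holds, and that minimal elements of $\unkVec[\uparrow]$ get positive values, which holds because all coordinates are positive. This proves \ref{thm:gensol(sol)} $\Rightarrow$ \ref{thm:gensol(sys)}.

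For \ref{thm:gensol(sys)} $\Rightarrow$ \ref{thm:gensol(redsol)}, I apply \cref{thm:possol} (\ref{thm:possol(sys)} $\Rightarrow$ \ref{thm:possol(dphsol)}) to the reduced problem. For \ref{thm:gensol(redsol)} $\Rightarrow$ \ref{thm:gensol(sol)}, I take $\lambdaVec=\vec*{1}$ and zero-extend. This preserves $\MFun$ and $\PFun$, so by the identities stated before the theorem the extension solves the original inequality, and it is natural by the offset-vanishing argument in step~3 above.
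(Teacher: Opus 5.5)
Your decomposition is the paper's. For (a)$\Rightarrow$(b) you pass to the upward-closed positive support, invoke \cref{thm:possol} there, and project the linear solution onto $\unkVec[\uparrow]$. For (b)$\Rightarrow$(c) you apply \cref{thm:possol}, and for (c)$\Rightarrow$(a) you zero-extend.

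\textbf{The gap is your Step~2.} The claim that $\PStr[\uparrow]$ is a multiplicity poset is false in general; the paper makes the same claim in the paragraph before the theorem. $\PStr[\uparrow]$ need not even be stratified, because restricting the offset discards the contributions of offset-carrying elements outside $\unkVec[\top]$. Take $q \leftarrow R(x,y), R(x',c), R(c,y')$ with ground selection $\{R(c,c)\}$.
\begin{itemize}
\item The closure is the diamond $R(c,c) \prec R(x,c), R(c,y) \prec R(x,y)$.
\item All weights are $1$ and the offset is constantly $1$.
\item $\unkVec[\uparrow]$ consists of the three non-ground elements.
\item Computing in $\PStr[\uparrow]$ gives $(\wFun^{-1}\star\oVec)(R(x,y)) = 1-1-1 = -1$.
\end{itemize}
Your partial argument only treats elements whose down-set meets the complement in zero-offset elements, and that is exactly what fails here. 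Step~3 is affected too, since strong non-negativeness is only defined relative to a stratified poset.

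\textbf{The repair is the one you gesture at with the $\oVec$-extension: do not pass to a stand-alone restricted poset.} Apply \cref{thm:possol} to the original multiplicity poset $\PStr$, taking $\unkVec[\uparrow]$ (respectively, the support of $\solVec$) as the upward-closed vector of unknowns; that theorem explicitly allows any upward-closed subvector. Let $\bar f$ be the $\oVec$-extension of a valuation $f$ of $\unkVec[\uparrow]$.
\begin{itemize}
\item The terms of $\wFun^{-1}\star(\bar f-\oVec)$ indexed outside $\unkVec[\uparrow]$ vanish.
\item For $s\preccurlyeq p$ in $\unkVec[\uparrow]$, the whole interval between them lies in $\unkVec[\uparrow]$, so the two convolution inverses agree.
\item Hence naturality with respect to $\PStr$ coincides with naturality with respect to $\PStr[\uparrow]$ as used in (c).
\end{itemize}
The offset vanishes on $\unkVec[\top]\setminus\unkVec[\uparrow]$, and likewise off the support. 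So your zero-extension argument gives strong non-negativeness with respect to $\PStr$ on these subvectors, and the rest of your proof goes through unchanged.
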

\begin{proof}
  We prove the equivalence by focusing on the only non-immediate implication,
  namely \ref{thm:gensol(sol)} $\Rightarrow$ \ref{thm:gensol(sys)}.
  Indeed, \ref{thm:gensol(sys)} $\Rightarrow$ \ref{thm:gensol(redsol)} clearly
  follows from the implication \ref{thm:possol(sys)} $\Rightarrow$
  \ref{thm:possol(dphsol)} of \cref{thm:possol} applied to the reduced $k$-\MPI,
  since $\PFun[][\uparrow](\unkVec[\uparrow])$ is strongly non-negative \wrt
  $\PStr[\uparrow]$.
  The latter property follows from the strong non-negativeness of
  $\PFun(\unkVec[\top])$ \wrt $\PStr$, since every Diophantine natural valuation
  over $\PStr[\uparrow]$ zero-extends to a Diophantine natural valuation over
  $\PStr$, and the reduced polynomial is obtained precisely by evaluating all
  deleted unknowns as $0$.
  Moreover, \ref{thm:gensol(redsol)} $\Rightarrow$ \ref{thm:gensol(sol)} follows
  by considering $\lambdaVec = \vec*{1}$ and zero-extending the obtained
  solution from $\unkVec[\uparrow]$ to $\unkVec[\top]$.
  \begin{itemize}
  \item
    \textbf{[\ref{thm:gensol(sol)} $\Rightarrow$ \ref{thm:gensol(sys)}]}
    Assume that the original $n$-\MPI $\MFun(\unkVec[\top]) >
    \PFun(\unkVec[\top])$ admits a Diophantine natural solution $\solVec$ \wrt
    $\PStr$ and let $\unkVec[\neq 0] \subseteq \unkVec[\top]$ be the positive
    support of $\solVec$, \ie, $\unkElm \in \unkVec[\neq 0]$ \iff
    $\solElm[\unkElm] > 0$.
    We first observe that $\unkVec[\neq 0]$ is upward closed, as by the
    properties of the convolution action $\solVec$ is non-decreasing.
    Indeed, by naturality, every positive value $\solElm[\unkElm]$ is generated
    by a positive net-or-offset contribution below $\unkElm$; since the
    multiplicity weights are positive, the same contribution is propagated to
    every unknown above $\unkElm$.
    Moreover, $\unkVec[\uparrow] \subseteq \unkVec[\neq 0]$.
    First, every unknown occurring in the monomial $\MFun$ belongs to
    $\unkVec[\neq 0]$.
    Otherwise, we would have had $\MFun(\solVec) = 0$, whereas the strong
    non-negativeness of $\PFun$ gives $\PFun(\solVec) \geq 0$, contradicting
    $\MFun(\solVec) > \PFun(\solVec)$.
    Second, if $\unkElm[][*] \preccurlyeq \unkElm$ and $\oElm[{\unkElm[][*]}] >
    0$, then naturality and positiveness of the multiplicity weights give
    $\solElm[{\unkElm[][*]}] > 0$.
    Since $\solVec$ is non-decreasing, it follows that $\solElm[\unkElm] > 0$.
    Let $\PFun[][\neq 0](\unkVec[\neq 0])$ be the polynomial obtained from
    $\PFun(\unkVec[\top])$ by setting to $0$ all unknowns in $\unkVec[\top]
    \setminus \unkVec[\neq 0]$, and set $\PStr[\neq 0] \defeq \PStr
    \rst[{\unkVec[\neq 0]}]$.
    Since $\solVec$ is $0$ outside $\unkVec[\neq 0]$, its restriction
    $\solVec[\neq 0]$ to $\unkVec[\neq 0]$ satisfies $\MFun(\solVec[\neq 0]) >
    \PFun[][\neq 0](\solVec[\neq 0])$.
    It is also a Diophantine natural positive solution \wrt $\PStr[\neq 0]$.
    In addition, $\PFun[][\neq 0](\unkVec[\neq 0])$ is strongly non-negative
    \wrt $\PStr[\neq 0]$.
    Indeed, every Diophantine natural valuation over $\PStr[\neq 0]$
    zero-extends to a Diophantine natural valuation over $\PStr$, since
    $\unkVec[\neq 0]$ is upward closed and contains all unknowns of
    $\unkVec[\top]$ that are forced by the offset.
    Moreover, $\PFun[][\neq 0]$ is obtained from $\PFun$ by evaluating all
    deleted unknowns as $0$.
    Therefore, by \cref{thm:possol} applied to the \MPI $\MFun(\unkVec[\neq 0])
    > \PFun[][\neq 0](\unkVec[\neq 0])$, the $\tuple* {\MFun(\unkVec[\neq 0])
    \!>\! \PFun[+][\neq 0](\unkVec[\neq 0])} {\der{\PStr}[\neq 0]} [+]$-system
    admits a solution, say $\dsolVec[\neq 0]$.
    We now restrict $\dsolVec[\neq 0]$ to $\unkVec[\uparrow]$ obtaining
    $\dsolVec[\neq 0] \rst[{\unkVec[\uparrow]}]$.
    All order constraints of $\der{\PStr}[\uparrow]$ are order constraints of
    $\der{\PStr}[\neq 0]$, since $\unkVec[\uparrow] \subseteq \unkVec[\neq 0]$.
    Moreover, every positive monomial of the reduced polynomial
    $\PFun[][\uparrow](\unkVec[\uparrow])$ is also a positive monomial of
    $\PFun[][\neq 0](\unkVec[\neq 0])$ whose support is contained in
    $\unkVec[\uparrow]$.
    Hence every strict degree inequality required by $\MFun(\unkVec[\uparrow]) >
    \PFun[+][\uparrow](\unkVec[\uparrow])$ is already satisfied by
    $\dsolVec[\neq 0]$.
    Thus the restriction $\dsolVec[\neq 0] \rst[{\unkVec[\uparrow]}]$ is a
    solution of the $\tuple* {\MFun(\unkVec[\uparrow]) \!>\!
    \PFun[+][\uparrow](\unkVec[\uparrow])} {\der{\PStr}[\uparrow]}[+]$-system.
    \qedhere
  \end{itemize}
\end{proof}




\subsection{Decision Procedure for {$n$-\MPI}s}
\label{sec:dphprb;sub:decprcnmpi}

To effectively solve \cref{prb:nmpi}, we leverage the combination of
\cref{thm:gensol,lem:syssol} via a $\QEA$-alternating polynomial-time algorithm
that outputs $\Tt$ \iff the problem has an affirmative answer.
By \cref{thm:gensol}, it suffices to decide whether the $\tuple*
{\MFun(\unkVec[\uparrow]) \!>\! \PFun[+][\uparrow](\unkVec[\uparrow])}
{\der{\PStr}[\uparrow]} [+]$-system induced by the reduced $k$-\MPI admits a
solution.
By \cref{lem:syssol}, whenever this system is feasible, it admits a $(6 \phi
k^{3})$-bit-bounded Diophantine positive solution.
We can therefore adopt a standard guess-and-check strategy: given a multiplicity
poset $\PStr = \tuple {\unkVec} {\preccurlyeq} {\wFun} {\oVec}$, an
upward-closed $n$-vector of unknowns $\unkVec[\top] \subseteq \unkVec$, and an
$n$-\MPI $\MFun(\unkVec[\top]) > \PFun(\unkVec[\top])$, with
$\PFun(\unkVec[\top])$ strongly non-negative \wrt $\PStr$, we first guess a $(6
\phi k^{3})$-bit-bounded Diophantine positive solution $\dsolVec \in
\SetN[+][k]$ for the induced $\tuple* {\MFun(\unkVec[\uparrow]) \!>\!
\PFun[+][\uparrow](\unkVec[\uparrow])} {\der{\PStr}[\uparrow]} [+]$-system of
homogeneous linear inequalities and then universally check its degree and
strictness constraints.
If either family of constraints is empty, the corresponding universal check is
simply omitted.
As a consequence, we derive the following upper bound on the complexity of our
\MPI problem.

\begin{theorem}[Decision Procedure]
\label{thm:decprc}
  \cref{prb:nmpi} can be decided in $\QEA$-alternating polynomial-time with
  existential-guess space polynomial in the size of the encoding of the induced
  homogeneous linear system and universal-guess space logarithmic in the maximum
  between the number $n$ of unknowns and the number $m$ of monomials in the
  $n$-\MPI.
\end{theorem}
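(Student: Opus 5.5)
The plan is to make the guess-and-check procedure sketched before the statement explicit, with Algorithm~\ref{alg:nmpisol} as the skeleton, and to prove it correct by combining \cref{thm:gensol} with \cref{lem:syssol}. In a deterministic polynomial-time preprocessing phase we first compute the reduced upward-closed vector $\unkVec[\uparrow]$. This amounts to marking every unknown $\unkElm[][*]$ with $\expElm[0{\unkElm[][*]}] > 0$ or $\oElm[{\unkElm[][*]}] > 0$ and then closing upward in $\preccurlyeq$, which is a graph reachability computation on a poset of $n+k$ elements. Next we form the reduced polynomial $\PFun[][\uparrow]$ by discarding every monomial whose support leaves $\unkVec[\uparrow]$, and keep only its positive monomials. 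Finally we compute the successor relation $\vartriangleleft$ on $\unkVec[\uparrow]$. These data determine the $\tuple* {\MFun(\unkVec[\uparrow]) \!>\! \PFun[+][\uparrow](\unkVec[\uparrow])} {\der{\PStr}[\uparrow]} [+]$-system and its facet complexity $\phi$.

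The existential phase guesses a vector $\dsolVec \in \SetN[+][k]$ with $\size*{\dsolVec} \leq 6\phi k^{3}$, which needs space polynomial in the encoding of the system. The universal phase then guesses one constraint index: either a positive monomial index $i \in \num{m}$, or a successor pair $(\unkElm[1],\unkElm[2])$ with $\unkElm[1] \vartriangleleft \unkElm[2]$, or a minimal unknown. Writing such an index down takes only $O(\log \max\{n,m\})$ bits. It then checks the chosen strict inequality, either $(\expVec[0]-\expVec[i])^{\intercal}\cdot\dsolVec>0$ or $\dsolElm[{\unkElm[2]}]>\dsolElm[{\unkElm[1]}]$, using polynomial-time integer arithmetic. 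Positivity of the minimal coordinates is enforced already by the domain of the guess. When one of the constraint families is empty, the corresponding universal branch is skipped.

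Correctness goes as follows. The procedure accepts if and only if some bounded positive integer vector satisfies every constraint, that is, if and only if the system has a $(6\phi k^{3})$-bit-bounded Diophantine positive solution. By \cref{lem:syssol}, applied with the poset $\der{\PStr}[\uparrow]$ and the $k$-exppar $1$-\MPI obtained from $\MFun(\unkVec[\uparrow]) > \PFun[+][\uparrow](\unkVec[\uparrow])$, this holds if and only if the system is feasible at all. Note that this system literally coincides with the system of the exppar $1$-\MPI. By the equivalence (a)$\Leftrightarrow$(b) of \cref{thm:gensol}, feasibility holds if and only if \cref{prb:nmpi} has a Diophantine natural solution. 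The hypotheses of that theorem, namely that $\PStr$ is a multiplicity poset and that $\PFun$ is strongly non-negative, are exactly those of \cref{prb:nmpi}.

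I expect the main obstacle to be the resource accounting rather than correctness. We must justify that $\phi$ is polynomial in the input, since it is bounded by the bit-lengths of the exponent vectors and of the coefficients $\pm1$. This makes the guess polynomial, and each universal check is then a polynomial-size inner product. One also has to be careful that the universal guess really ranges only over indices: the exponent vectors and $\dsolVec$ should be read from the input and the existential tape, not guessed, so that the universal space stays logarithmic in $\max\{n,m\}$.
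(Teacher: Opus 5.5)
Your proposal is correct and follows essentially the same route as the paper. You existentially guess a $(6\phi k^{3})$-bit-bounded positive integer vector, universally check a single degree or successor constraint using $O(\log \max\{n,m\})$ bits, and get correctness from \cref{lem:syssol} together with the equivalence of items~a) and~b) of \cref{thm:gensol}; your explicit polynomial-time computation of the reduced vector of unknowns and your remark that $\phi$ is polynomial in the input just spell out details the paper leaves implicit.
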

\begin{proof}
  The decision procedure we provide, whose pseudo code is reported in
  \cref{alg:nmpisol}, can be interpreted as a polynomial-time $\QEA$-alternating
  algorithm, or equivalently as a polynomial-time $\QEA$-alternating RAM
  machine.
  When applied to the reduced instance, the solver receives
  $\MFun(\unkVec[\uparrow])$, $\PFun[][\uparrow](\unkVec[\uparrow])$, and
  $\der{\PStr}[\uparrow]$ as input, and therefore operates on $k \leq n$
  unknowns.
  The algorithm is formed by a first phase of existential guesses of global
  space $\AOmicron{6 \phi k^{3}}$, hence $\AOmicron{6 \phi n^{3}}$ (see
  \cref{alg:nmpisol(exs:sol)}), followed by a phase of universal guesses of
  global space $\AOmicron{\log_{2} \max \{ m, n \}}$ (see
  \cref{alg:nmpisol(unv:sol),alg:nmpisol(unv:str)}), interleaved by
  polynomial-time operations.
  Recall that the facet complexity $\phi$ is polynomial in the encoding size of
  the linear system~\cite{Sch86}.

  We now prove that the algorithm returns the truth value $\Tt$ \iff there
  exists a Diophantine natural solution of the $n$-\MPI $\MFun(\unkVec[\top]) >
  \PFun(\unkVec[\top])$ under scrutiny.
  By \cref{thm:gensol}, this is equivalent to proving that the algorithm returns
  $\Tt$ \iff the $\tuple* {\MFun(\unkVec[\uparrow]) \!>\!
  \PFun[+][\uparrow](\unkVec[\uparrow])} {\der{\PStr}[\uparrow]} [+]$-system of
  homogeneous linear inequalities admits a solution.
  Assume first that the procedure outputs $\Tt$.
  Then, there necessarily exists a $(6 \phi k^{3})$-bit-bounded Diophantine
  positive solution $\dsolVec$ of the $\tuple* {\MFun(\unkVec[\uparrow]) \!>\!
  \PFun[+][\uparrow](\unkVec[\uparrow])} {\der{\PStr}[\uparrow]} [+]$-system of
  homogeneous linear inequalities.
  Hence, by the implication \ref{thm:gensol(sys)} $\Rightarrow$
  \ref{thm:gensol(sol)} of \cref{thm:gensol}, there exists a Diophantine natural
  solution of the $n$-\MPI $\MFun(\unkVec[\top]) > \PFun(\unkVec[\top])$.
  Conversely, assume that the $n$-\MPI $\MFun(\unkVec[\top]) >
  \PFun(\unkVec[\top])$ has a Diophantine natural solution.
  By the implication \ref{thm:gensol(sol)} $\Rightarrow$ \ref{thm:gensol(sys)}
  of \cref{thm:gensol}, the $\tuple* {\MFun(\unkVec[\uparrow]) \!>\!
  \PFun[+][\uparrow](\unkVec[\uparrow])} {\der{\PStr}[\uparrow]} [+]$-system of
  homogeneous linear inequalities admits a solution.
  Then, by \cref{lem:syssol}, there exists a $(6 \phi k^{3})$-bit-bounded
  Diophantine positive solution $\dsolVec$ for this system.
  Therefore, such a $k$-vector $\dsolVec$ can be used as existential guesses to
  force the algorithm to output $\Tt$, regardless of the universal guesses.
\end{proof}

\algnmpisol





\section{Decidability \& Complexity}
\label{sec:deccom}

We now turn the Diophantine characterisation of \cref{thm:polchr} into a
decision procedure for the containment problems associated with the classes of
containee queries introduced in \cref{sec:spccasbagcon;sub:talthrqry}, namely
\BPFQ, \BJoFQ, and \BJUQ.
The key point here is that non-containment of a \BJUQ by an arbitrary \BCQ is
equivalent to the existence of a solution of one of the monomial-polynomial
inequalities constructed in \cref{sec:commul;sub:polchr}, subject to the
naturality constraints induced by the corresponding multiplicity poset.
The general decision procedure developed in \cref{sec:dphprb} for
\cref{prb:nmpi} can then be applied to these inequalities.
This gives the algorithmic core of the solution of \cref{prb:conprb}, with
complexity \ULH[2][P], \CoNExpTime, or 2\CoNExpTime, depending on the considered
containee class.

Given a \BJUQ $\qryElm$ and a \BCQ $\pqryElm$, \cref{thm:polchr} states that
$\qryElm \not\bsinc \pqryElm$ holds \iff there exists a non-trivial ground
selection $\GSet$ for $\qryElm$ such that the associated $n$-\MPI
\[
  \MFun[\qryElm](\unkVec[\top]) > \PFun[\qryElm][\pqryElm](\GSet, \unkVec[\top])
\]
admits a Diophantine natural solution \wrt $\PStr[\qryElm, \GSet]$.
Here $\PStr[\qryElm, \GSet]$ is the multiplicity poset induced by $\qryElm$ and
$\GSet$ over the full minimal unification closure $\denot{\qryElm}$, while the
upward-closed $n$-vector of unknowns $\unkVec[\top]$ is indexed by the
non-ground part $\denot{\qryElm}[\top]$.
The weight function records the relevant homomorphism multiplicities inside the
closure, as discussed in \cref{lem:mulwgh}, and the offset vector accounts for
the ground contribution fixed by $\GSet$, as per \cref{def:grnsel}.
More precisely, the offset is the gross contribution obtained by the convolution
action of the multiplicity weight on the characteristic function of $\GSet$.
Thus, after fixing $\GSet$, the containment question is reduced to a single
instance of \cref{prb:nmpi}, which is then solved by exploiting
\cref{thm:decprc}.

The following lemma states that the objects produced by the construction of
\cref{sec:commul;sub:polchr} indeed satisfy the hypotheses required by
\cref{prb:nmpi}.

\begin{lemma}
\label{lem:conprbhyp}
  For all {\BJUQ}s $\qryElm$, {\BCQ}s $\pqryElm$, and non-trivial ground
  selections $\GSet$ for $\qryElm$, the following statements hold true:
  \begin{enumerate}[a)]
  \item\label{lem:conprbhyp(mulpos)}
    the structure $\PStr[\qryElm, \GSet] \defeq \tuple {\unkVec} {\preccurlyeq}
    {\wFun} {\oVec}$ defined as follows is a multiplicity poset:
    \begin{itemize}
    \item
      $\unkVec$ is a vector of unknowns, with an unknown $\unkElm[\rqryElm]$ per
      \BCQ $\rqryElm \in \denot{\qryElm}$; in addition, $\unkVec[\top] \subseteq
      \unkVec$ is the upward-closed sub-vector containing exactly the unknowns
      $\unkElm[\rqryElm]$ such that $\rqryElm \in \denot{\qryElm}[\top]$;
    \item
      $\unkElm[\sqryElm] \preccurlyeq \unkElm[\rqryElm]$ \iff $\sqryElm
      \preccurlyeq \rqryElm$;
    \item
      $\wFun(\unkElm[\sqryElm], \unkElm[\rqryElm]) \defeq \mFun(\sqryElm,
      \rqryElm)$, for all $\sqryElm \preccurlyeq \rqryElm$, with $\mFun$ the
      multiplicity weight of \cref{def:mulwgh};
    \item
      $\oVec \defeq (\mFun \star \chiFun[\GSet])$, where $\chiFun[\GSet]$ is the
      characteristic function of $\GSet$ over $\denot{\qryElm}$, \ie,
      $\chiFun[\GSet](\rqryElm) = 1$, if $\rqryElm \in \GSet$, and
      $\chiFun[\GSet](\rqryElm) = 0$, otherwise, for all $\rqryElm \in
      \denot{\qryElm}$.
    \end{itemize}
  \item\label{lem:conprbhyp(strnonneg)}
    the polynomial $\PFun[\qryElm][\pqryElm](\GSet, \unkVec[\top])$ is strongly
    non-negative \wrt $\PStr[\qryElm, \GSet]$.
  \end{enumerate}
\end{lemma}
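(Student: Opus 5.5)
For Item~\ref{lem:conprbhyp(mulpos)}, I will check the clauses of \cref{def:wghpos} one at a time, using \cref{thm:juqmucchr} and \cref{lem:mulwgh}.

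\textbf{Poset and weights.} By \cref{thm:juqmucchr}, $\tuple{\denot{\qryElm}}{\preccurlyeq}$ is a finite poset, and transporting it to the unknowns gives a finite poset. The set $\unkVec[\top]$ is upward closed: if $\sqryElm \preccurlyeq \rqryElm$ with $\sqryElm$ non-ground, then $\rqryElm$ cannot be ground, because a homomorphism from a ground query fixes all of its atoms. This is the same fact used in the proof of \cref{thm:polchr}, that no non-ground element lies below a ground one. The weight $\wFun(\sqryElm,\rqryElm)=\card{\HomSet(\rqryElm,\sqryElm)}$ is a positive integer whenever $\sqryElm\preccurlyeq\rqryElm$, since a witnessing homomorphism exists. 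Monotonicity $\wFun(\sqryElm,\rqryElm)\le\wFun(\sqryElm,\pElm)$ is exactly Item~\ref{lem:mulwgh(mon)}, and the diagonal is non-zero by Item~\ref{lem:mulwgh(inv)}.

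\textbf{Offset.} Since $\oVec=\mFun\star\chiFun[\GSet]$ and $\mFun$ is invertible, we get $\wFun^{-1}\star\oVec=\chiFun[\GSet]$, which takes values in $\{0,1\}\subseteq\SetN$. This gives both the stratified and the multiplicity conditions.

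The offset used in \cref{sec:commul;sub:polchr} is $\chiFun[{\LambdaSet[\GSet]}]$ on ground elements. The ground-coordinate identity in the proof of \cref{thm:polchr} shows this agrees with $\oVec$ there, so the two presentations coincide.

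For Item~\ref{lem:conprbhyp(strnonneg)}, I will interpret Diophantine natural valuations semantically.

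\textbf{Step 1: realise a natural valuation as an instance.} Let $\solVec$ be a Diophantine natural valuation on $\unkVec[\top]$, and let $\gFun$ be its $\oVec$-extension. Naturality says that $\nFun\defeq\mFun^{-1}\star\gFun$ is natural on $\unkVec[\top]$ and equals $\chiFun[\GSet]$ on ground elements. Hence $\nFun$ is a net-image counting with $\GSet[\nFun]=\GSet$. This is precisely the computation in the proof of \cref{thm:polchr}.

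By \cref{lem:netimgcnt2mulcan} there is a multicanonical $\dbElm[\nFun]$ realising $\nFun$. From the identities in the proof of \cref{thm:polchr} and \cref{cor:homcnt}, we obtain
\[
\PFun[\qryElm][\pqryElm](\GSet,\solVec)=\NFun[\qryElm][\pqryElm](\GSet,\nFun)=\card{\HomSet(\pqryElm,\dbElm[\nFun])}\ge 0 .
\]

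\textbf{Step 2: dominate each positive monomial.} This is the main obstacle, because $\PFun$ is written in the $\DeltaFun$ variables and is expanded into monomials of $\unkVec[\top]$ only afterwards. So the positive monomials of $\PFun$ need not correspond to single profiles.

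My first attempt is to bound the whole polynomial from below by a product of gross counts. Take a profile $\imath$ that is \emph{maximal}, in the sense that each non-ground $\homFun$-component $\wqryElm$ is counted through all of its homomorphisms into the instance. I would then argue that
\[
\PFun(\solVec) \;\ge\; \prod_{\wqryElm}\gFun(\,\cdot\,) ,
\]
where each factor counts the homomorphisms of the component $\wqryElm$ itself. This holds because homomorphisms of $\pqryElm$ grouped by a fixed grounding pattern factor over components, and each component's count is $\mFun\star\nFun$ evaluated at its gross variable. That gives a product of unknowns with coefficient at least $1$.

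Concretely, I would fix the grounding map $\der{\homFun}$ of a homomorphism, i.e.\ which variables go to constants. Summing over all profiles with that grounding pattern should telescope the $\DeltaFun$'s back into $\unkElm[\sqryElm]$'s through the Möbius inversion. The result would be $\PFun=\sum_{\text{patterns}}c\prod\unkElm$ plus nonnegative ground-only terms.

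If that identity holds, then the expanded polynomial has nonnegative coefficients after regrouping. Each positive monomial $\solVec^{\expVec[i]}$ then appears as a term of a sum of nonnegative terms, each of which is itself an actual homomorphism count. This yields $\PFun(\solVec)-\solVec^{\expVec[i]}\ge0$.

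If exact telescoping fails, I will fall back to a weaker route. For each positive monomial, I would exhibit an injection from a set of size $\solVec^{\expVec[i]}$ into $\HomSet(\pqryElm,\dbElm[\nFun])$, choosing, per component, homomorphisms into copies counted by the corresponding gross variable.
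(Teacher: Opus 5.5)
Your Item~a) and Step~1 of Item~b) are correct and match the paper: the transported finite poset, positive monotone weights via \cref{lem:mulwgh}, $w^{-1}\star o=\chi_G$, and realising a natural valuation by the instance of \cref{lem:netimgcnt2mulcan}, so that $P(\xi)=|\mathrm{Hom}(p,I)|\ge 0$. The gap is Step~2, which is the whole content of strong non-negativeness.

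Your main route claims that grouping by grounding pattern telescopes the $\Delta$'s and leaves $P$ with non-negative coefficients. This is false. Take $q=q_6$ of \cref{exm:4} and $p=\{S(x,y,z),S(x',y',z)\}$.
\begin{itemize}
\item Mapping into $\bar W$ gives one $h$-component with $4$ homomorphisms, so $\beta=2$.
\item Mapping into $\bar W_0$ gives one component with $\beta=1$.
\item Sending $z$ to $e$ splits $p$ into two components, each reaching $\bar U$ or $\bar U_0$.
\end{itemize}
With the $\Delta$'s of \cref{exm:polchr} this gives
\[
  P=2\Delta_W+\Delta_{W_0}+(\Delta_U+\Delta_{U_0})^2=u_U^2+2u_W-2u_U-u_{W_0}+u_{U_0}.
\]
Already the pattern leaving $z$ unground contributes $2u_W-2u_U-u_{W_0}+u_{U_0}$, so nothing telescopes. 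This is exactly why the paper needs strong non-negativeness. Moreover, a component of $p$ is not a closure element, so ``its gross count'' is not one of the unknowns, and your displayed lower bound is not a monomial in $u$.

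Your fallback points the right way, but it skips the difficulty you yourself flagged: how a positive monomial of the collected polynomial is tied to closure elements reached by components of $p$. The missing idea runs as follows.
\begin{itemize}
\item If the collected coefficient of $u^{e_i}$ is positive, some term of the full distributive expansion of $\sum_\imath\beta_{G,\imath}\prod_{T\in\mathrm{dom}(\imath)}\Delta_{\imath(T)}$ has unitary form $u^{e_i}$.
\item That term comes from a single profile $\imath$ and one summand per factor. So each unknown $u_s$ in it satisfies $s\preccurlyeq\imath(T)$ for its own component $T$, and each constant factor comes from some $s\in\Lambda_G$ with $s\preccurlyeq\imath(T)$.
\item Gross counts are non-decreasing along $\preccurlyeq$, by Item~a) of \cref{lem:mulwgh} together with $n\ge 0$. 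This gives $\xi_s\le\xi_{\imath(T)}$ and $1=g(s)\le\xi_{\imath(T)}$.
\item Fix an $\imath$-homomorphism $h$. Componentwise composition is an injection $\prod_T\mathrm{Hom}(\iota(\imath(T)),I)\hookrightarrow\mathrm{Hom}(p,I)$. It is well defined because variables shared by distinct $h$-components go to constants. It is injective because join-uniformity makes $h(T)$ cover every variable of $\iota(\imath(T))$.
\end{itemize}
Together these give $\xi^{e_i}\le\prod_T\xi_{\imath(T)}\le P(\xi)$. Without this profile identification, ``choosing homomorphisms into copies counted by the corresponding gross variable'' is not well defined. A factor $u_s$ of the monomial need not be the element any component of $p$ reaches, and some components contribute a constant rather than an unknown.
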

\begin{proof}
  \textbf{[\cref{lem:conprbhyp(mulpos)}]}
  By \cref{thm:juqmucchr}, the minimal unification closure $\denot{\qryElm}$ is
  finite and the relation $\preccurlyeq$ induced by bag-set containment is a
  partial order on it.
  Hence, the same is true for the order induced on the unknowns in $\unkVec$.
  Moreover, the sub-vector $\unkVec[\top]$ is upward closed, since it contains
  exactly the unknowns $\unkElm[\rqryElm]$ with $\rqryElm \in
  \denot{\qryElm}[\top]$ and the latter set is clearly an upward closed subset
  of $\denot{\qryElm}$ since it contains all and only the non-ground queries.
  By \cref{lem:mulwgh}, the function $\wFun(\unkElm[\sqryElm],
  \unkElm[\rqryElm]) \defeq \mFun(\sqryElm, \rqryElm)$ is a multiplicity weight
  on this poset.
  In particular, it is defined on comparable pairs, takes values in $\SetN[+]$,
  and satisfies the required monotonicity condition: for all $\tqryElm
  \preccurlyeq \sqryElm \preccurlyeq \rqryElm$, one has
  $\wFun(\unkElm[\tqryElm], \unkElm[\sqryElm]) \leq \wFun(\unkElm[\tqryElm],
  \unkElm[\rqryElm])$.
  Finally, by definition of the offset vector, it holds that $(\wFun^{-1} \star
  \oVec)(\unkElm[\rqryElm]) = (\mFun^{-1} \star (\mFun \star
  \chiFun[\GSet]))(\rqryElm) = \chiFun[\GSet](\rqryElm) \in \SetN$, for every
  $\rqryElm \in \denot{\qryElm}$.
  Summing up, $\PStr[\qryElm, \GSet]$ is a multiplicity poset.

  \textbf{[\cref{lem:conprbhyp(strnonneg)}]}
  Let $\solVec$ be an arbitrary Diophantine natural valuation \wrt
  $\PStr[\qryElm, \GSet]$.
  We prove that $\PFun[\qryElm][\pqryElm](\GSet, \solVec)$ dominates the unitary
  form of each of its monomials with positive coefficient.
  View the $\oVec$-extension of $\solVec$ as a function $\der{\gFun}$ on
  $\denot{\qryElm}$ and put $\der{\nFun} \defeq \mFun^{-1} \star \der{\gFun}$.
  Since $\oVec = \mFun \star \chiFun[\GSet]$, it holds that $\der{\nFun} =
  \mFun^{-1} \star (\der{\gFun} - \oVec) + \chiFun[\GSet]$.
  By naturality, the first summand is natural on the non-ground elements.
  It vanishes on the ground ones, since $\der{\gFun} - \oVec$ is zero there and
  no non-ground element lies below a ground one.
  Hence, $\der{\nFun}$ is a net-image counting for $\qryElm$ and
  $\GSet[\der{\nFun}] = \GSet$.
  By~\cref{lem:netimgcnt2mulcan}, there exists a multicanonical instance
  $\dbElm[][*]$ realising this counting.
  Moreover, by~\cref{lem:nethomcnt}, for every $\rqryElm \in \denot{\qryElm}$,
  we have $\der{\gFun}(\rqryElm) = \card{\HomSet(\rqryElm, \dbElm[][*])}$.
  For every $\rqryElm \in \denot{\qryElm}[\top]$, the definition of
  $\DeltaFun[\rqryElm]$ gives $\DeltaFun[\rqryElm](\GSet, \solVec) =
  \mFun(\rqryElm, \rqryElm) \cdot \der{\nFun}(\rqryElm)$.
  Therefore, by the definition of $\beta_{\GSet, \imath}$, after cancelling the
  diagonal multiplicity factors, and by~\cref{lem:polrwt,thm:homcnt}, it follows
  that $\PFun[\qryElm][\pqryElm](\GSet, \solVec) =
  \NFun[\qryElm][\pqryElm](\GSet, \der{\nFun}) = \card{\HomSet(\pqryElm,
  \dbElm[][*])} \geq 0$.
  We now observe two consequences of this representation.
  First, $\der{\gFun}$ is non-decreasing along $\preccurlyeq$.
  Indeed, if $\sqryElm \preccurlyeq \rqryElm$, then, by~\cref{lem:mulwgh(mon)}
  of~\cref{lem:mulwgh}, it holds that
  \[
    \der{\gFun}(\sqryElm)
  =
    \sum_{\tqryElm \preccurlyeq \sqryElm} \mFun(\tqryElm, \sqryElm) \,
    \der{\nFun}(\tqryElm)
  \leq
    \sum_{\tqryElm \preccurlyeq \sqryElm} \mFun(\tqryElm, \rqryElm) \,
    \der{\nFun}(\tqryElm)
  \leq
    \der{\gFun}(\rqryElm).
  \]
  Second, fix a profile $\imath \in \Prof[\qryElm][\pqryElm]{\GSet}$ and an
  $\imath$-homomorphism $\homFun \in \HSet[\qryElm][\pqryElm](\GSet, \imath)$.
  We claim that
  \[
    \PFun[\qryElm][\pqryElm](\GSet, \solVec)
  \geq
    \prod_{\wqryElm \in \dom{\imath}} \solElm[\imath(\wqryElm)].
  \]
  To prove the claim, for every $\wqryElm \in \dom{\imath}$, choose an arbitrary
  homomorphism $\homFun[\wqryElm] \in \HomSet(\iotaFun(\imath(\wqryElm)),
  \dbElm[][*])$.
  By the definition of a profile, $\homFun(\wqryElm) \subseteq
  \iotaFun(\imath(\wqryElm))$.
  Thus, for every variable $\varElm \in \ter{\wqryElm}$ occurring in a
  non-ground $\homFun$-component $\wqryElm \in \dom{\imath}$, set
  $\trn{\homFun}(\varElm) \defeq \homFun[\wqryElm](\homFun(\varElm))$.
  For every ground $\homFun$-component $\wqryElm \in
  \comp[\bot][\homFun]{\pqryElm}$, instead set $\trn{\homFun}(\varElm) \defeq
  \homFun(\varElm)$, for every $\varElm \in \ter{\wqryElm}$.
  The maps defined on the different $\homFun$-components are compatible.
  Indeed, if a variable occurs in two distinct $\homFun$-components, its image
  under $\homFun$ is a constant of $\ConSet(\qryElm)$, as shown in the proof
  of~\cref{thm:lenpar}.
  Since every $\homFun[\wqryElm]$ fixes these constants and the ground branch is
  the identity, the two definitions of $\trn{\homFun}$ agree.
  Moreover, every atom is mapped to a fact of $\dbElm[][*]$: this follows from
  $\homFun[\wqryElm] \in \HomSet(\iotaFun(\imath(\wqryElm)), \dbElm[][*])$ on a
  non-ground component, while on a ground component $\homFun$ maps into the
  selected ground facts, which occur in $\dbElm[][*]$.
  Hence, the componentwise definitions determine a homomorphism $\trn{\homFun}
  \in \HomSet(\pqryElm, \dbElm[][*])$.
  This construction is injective in the family of choices
  $\{ \homFun[\wqryElm] \}_{\wqryElm \in \dom{\imath}}$.
  Indeed, suppose that two such families differ on a component $\wqryElm$ and
  put $\rqryElm \defeq \imath(\wqryElm)$.
  The corresponding homomorphisms from $\iotaFun(\rqryElm)$ to $\dbElm[][*]$
  must differ on a variable, since they both fix constants.
  By~\cref{thm:juqmucchr}, $\rqryElm$ is join-uniform and so is its isomorphic
  copy $\iotaFun(\rqryElm)$.
  Consequently, every variable of $\iotaFun(\rqryElm)$ occurs in every one of
  its atoms.
  Since $\homFun(\wqryElm)$ is a non-empty subquery of $\iotaFun(\rqryElm)$,
  every such variable is therefore the image under $\homFun$ of some variable
  occurring in $\wqryElm$.
  Hence, two different choices on $\wqryElm$ induce different values of
  $\trn{\homFun}$ on some variable of $\wqryElm$.
  We have thus obtained an injection from $\prod_{\wqryElm \in \dom{\imath}}
  \HomSet(\iotaFun(\imath(\wqryElm)), \dbElm[][*])$ into $\HomSet(\pqryElm,
  \dbElm[][*])$.
  Since $\iotaFun(\imath(\wqryElm)) \approx \imath(\wqryElm)$ and
  $\der{\gFun}(\imath(\wqryElm)) = \solElm[\imath(\wqryElm)]$, for every
  $\wqryElm \in \dom{\imath}$, it follows that
  \[
    \PFun[\qryElm][\pqryElm](\GSet, \solVec)
  =
    \card{\HomSet(\pqryElm, \dbElm[][*])}
  \geq
    \prod_{\wqryElm \in \dom{\imath}} \card{\HomSet(\iotaFun(\imath(\wqryElm)),
    \dbElm[][*])}
  =
    \prod_{\wqryElm \in \dom{\imath}} \solElm[\imath(\wqryElm)],
  \]
  which proves the claim.
  Now let $\solVec^{\expVec[i]}$ be the unitary form of a monomial of
  $\PFun[\qryElm][\pqryElm](\GSet, \unkVec[\top])$ having positive coefficient.
  Since its collected coefficient is positive, at least one term contributing to
  it in the full distributive expansion has positive coefficient.
  Hence, there exist a profile $\imath \in \Prof[\qryElm][\pqryElm]{\GSet}$ and
  a product of one summand from every factor $\DeltaFun[\imath(\wqryElm)]$ whose
  coefficient is positive and whose unitary form is
  $\unkVec[\top]^{\expVec[i]}$.
  Fix such a product.
  For each $\wqryElm \in \dom{\imath}$, put $\rqryElm \defeq \imath(\wqryElm)$.
  If the selected summand of $\DeltaFun[\rqryElm]$ contains an unknown
  $\unkElm[\sqryElm]$, then $\sqryElm \preccurlyeq \rqryElm$ and the
  monotonicity proved above gives $\solElm[\sqryElm] \leq \solElm[\rqryElm]$.
  Otherwise, the selected summand comes from some ground $\sqryElm \in
  \LambdaSet[\GSet]$ with $\sqryElm \preccurlyeq \rqryElm$ and contributes $1$
  to the unitary form.
  Since $\der{\gFun}$ is the $\oVec$-extension of $\solVec$, the
  ground-coordinate identity $\mFun \star \chiFun[\GSet] =
  \chiFun[{\LambdaSet[\GSet]}]$ established in the proof of~\cref{thm:polchr}
  gives $\der{\gFun}(\sqryElm) = 1$.
  Thus, monotonicity yields $1 \leq \solElm[\rqryElm]$ in this case as well.
  Multiplying these inequalities over all $\wqryElm \in \dom{\imath}$ gives
  \[
    \solVec^{\expVec[i]}
  \leq
    \prod_{\wqryElm \in \dom{\imath}} \solElm[\imath(\wqryElm)]
  \leq
    \PFun[\qryElm][\pqryElm](\GSet, \solVec).
  \]
  Hence, $\PFun[\qryElm][\pqryElm](\GSet, \solVec) - \solVec^{\expVec[i]} \geq
  0$ for every monomial with positive coefficient.
  Since $\solVec$ was arbitrary, the polynomial is strongly non-negative \wrt
  $\PStr[\qryElm, \GSet]$.
\end{proof}

As far as decidability is concerned, one could simply construct the full
multiplicity poset appearing in \cref{lem:conprbhyp} and the corresponding
$n$-\MPI explicitly.
Indeed, by \cref{thm:juqmucchr}, the minimal unification closure
$\denot{\qryElm}$ of a \BJUQ is finite and its homomorphic preorder is a finite
partial order.
Moreover, once a non-trivial ground selection $\GSet$ is fixed, the multiplicity
weights, the offset vector, and the homomorphism profiles defining
$\PFun[\qryElm][\pqryElm](\GSet, \unkVec[\top])$ are finite objects determined
exactly by $\qryElm$, $\pqryElm$, and $\GSet$.
Hence, by enumerating all non-trivial ground selections $\GSet$ and applying
\cref{thm:decprc} to the corresponding instance of \cref{prb:nmpi}, one obtains
a decision procedure for non-containment and, therefore, for containment by
complementation.

To obtain the declared upper bound, however, we need to estimate the size of the
objects involved in this construction.
The proof of \cref{thm:juqmucchr} shows that no element of $\denot{\qryElm}$
uses relation symbols or constants that do not already occur in $\qryElm$.
It also shows that every non-ground element of $\denot{\qryElm}$ is
join-uniform and, therefore, all its variables occur in every one of its atoms.
Thus, the number of variables of a non-ground closure element is bounded by the
maximum number of variables in any component of $\qryElm$, which itself is
bounded by the arity of a relation symbol occurring in the query.
After choosing one atom as a distinguished anchor, all variables of the closure
element can be identified through the variables occurring in that atom.
With this convention, every other atom of the same closure element is obtained
from an atom occurring in a component of $\qryElm$ by replacing each position
with either one of the anchor variables or one of the constants of $\qryElm$.
Hence, the number of possible atom-shapes that may occur in a closure element is
at most exponential in $\len{\qryElm}$.
This exponential bound on atom-shapes does not imply, however, that
$\card{\denot{\qryElm}}$ itself is exponential.
A closure element is a finite set of such atom-shapes, and different subsets may
give rise to non-isomorphic closure elements.
Thus, in the \BJUQ case, the powerset blow-up gives a double-exponential upper
bound on the size of the full minimal unification closure.
More precisely, every representative in $\denot{\qryElm}$ has at most
exponentially many atoms and can be represented with exponential size, while the
number of representatives in $\denot{\qryElm}$ is bounded by
$\pow{\pow{\AOmicron{\len{\qryElm} \log \len{\qryElm}}}}$.
The same double-exponential bound controls the remaining data.
The order relation $\preccurlyeq$ on $\denot{\qryElm}$ contains at most
double-exponentially many comparable pairs.
Ground selections are not enumerated as arbitrary subsets of
$\denot{\qryElm}[\bot]$.
Rather, they are determined by the finite ground set $\PhiSet[\GSet]$, which
is a subset of the possible ground atom-shapes.
Since the latter family is exponential in $\len{\qryElm}$, the possible choices
of $\PhiSet[\GSet]$, and therefore of $\GSet$, are bounded double-exponentially.
The maximality condition (see \cref{def:grnsel}) is then checked once
$\PhiSet[\GSet]$ is fixed.
Finally, the homomorphism profiles contributing to
$\PFun[\qryElm][\pqryElm](\GSet, \unkVec[\top])$ are obtained by mapping the
containing query $\pqryElm$ into the closure structure determined by $\qryElm$
(see \cref{def:prf}).
Since this closure has double-exponential size and each representative has
exponential size, the full polynomial profile has size double-exponential in the
size of the containee query $\qryElm$ and exponential in the size of the
containing query $\pqryElm$.
Consequently, in the \BJUQ case, for a fixed non-trivial ground selection
$\GSet$, the associated instance of \cref{prb:nmpi} has double-exponential
(\resp, exponential) size in the containee (\resp, containing) query.
For {\BJoFQ}s, the same construction yields only exponential-size Diophantine
instances, since the minimal unification closure consists of atom-level
representatives.
For {\PFQ}s under bag-bag semantics, the projection-free analysis of~\cite{KM19}
gives the sharper polynomial-hierarchy bound recalled below.
We can now collect the resulting upper bounds.

\begin{theorem}
\label{thm:conprbsol}
  \CP[\bs](\JUQ, \CQ) is solvable in 2\CoNExpTime.
  \CP[\bs](\JoFQ, \CQ) and \CP[\bb](\JoFQ, \CQ) are solvable in \CoNExpTime.
  \CP[\bb](\PFQ, \CQ) is solvable in \ULH[2][P].
\end{theorem}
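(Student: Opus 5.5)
For the \JUQ statement, I would decide the complement, non-containment, by a nondeterministic exponential-time procedure relative to a double-exponential input. First I dispose of the constant case: if $\con{\pqryElm} \not\subseteq \con{\qryElm}$, then non-containment holds by the first case of \cref{thm:mulcanchr}, and this is checkable in polynomial time. Otherwise, by \cref{thm:polchr}, $\qryElm \not\bsinc \pqryElm$ holds \iff there is a non-trivial ground selection $\GSet$ whose $n$-\MPI $\MFun[\qryElm](\unkVec[\top]) > \PFun[\qryElm][\pqryElm](\GSet, \unkVec[\top])$ has a Diophantine natural solution \wrt $\PStr[\qryElm, \GSet]$. \Cref{lem:conprbhyp} makes this an instance of \cref{prb:nmpi}. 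The algorithm therefore performs four steps in order:
\begin{enumerate}
\item compute $\denot{\qryElm}$, its order, and the weights $\mFun$ and $\mFun^{-1}$, by saturating the rules of \cref{def:muc} over the atom-shapes bounded in the discussion before the theorem;
\item guess $\PhiSet[\GSet]$, then check both the ground-selection condition and non-triviality;
\item build the offset, the reduced vector $\unkVec[\uparrow]$, the profiles, and the coefficients $\beta_{\GSet,\imath}$;
\item expand $\PFun[\qryElm][\pqryElm]$ into its monomials and run \cref{alg:nmpisol}.
\end{enumerate}

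The size bookkeeping follows the counts given before the theorem. There are $\pow{\pow{\AOmicron{\len{\qryElm}\log\len{\qryElm}}}}$ closure elements, each of exponential size. Testing homomorphisms between two elements therefore takes at most double-exponential time, since each element has exponentially many variables only when it is ground, and otherwise the number of variables is bounded by the arity. The profiles are maps from the at most $\len{\pqryElm}$ components of $\pqryElm$ into $\denot{\qryElm}[\top]$, so there are double-exponentially many of them. The coefficients $\beta$ are computed by counting homomorphisms into $\mce{\qryElm,\GSet}$, one component at a time. The expanded polynomial has at most $(\card{\denot{\qryElm}}+1)^{\len{\pqryElm}}$ monomials, and its exponents are bounded by $\len{\pqryElm}$. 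The linear system $\tuple*{\MFun(\unkVec[\uparrow]) \!>\! \PFun[+][\uparrow](\unkVec[\uparrow])}{\der{\PStr}[\uparrow]}[+]$ thus has double-exponential encoding size $N$, with facet complexity $\phi$ polynomial in $\log N$ plus the number of unknowns $k$. Rather than alternate, I would decide feasibility of this system deterministically by linear programming, which runs in time polynomial in $N$. The non-containment procedure is then nondeterministic, through the guess of $\GSet$, and runs in time polynomial in a double exponential. Its complement places containment in 2\CoNExpTime.

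For \JoFQ, every component is a single atom, and hence by the invariant every closure element is a single atom. The closure and the ground selections are then exponential in number, the same pipeline runs in exponential time after one exponential guess, and this gives \CoNExpTime for bag-set containment. For bag-bag containment of \JoFQ I would apply the copy-attribute reduction of \cref{exm:3}, which preserves the class \JoFQ and is polynomial. For \PFQ under bag-bag semantics, I would invoke the analysis of \cite{KM19}: there the closure is polynomial, and \cref{thm:decprc} gives $\QEA$ alternation, hence \ULH[2][P].

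The main obstacle is to justify that computing $\PFun$, in particular the signed coefficients coming from $\mFun^{-1}$ and their expansion, together with the cancellation of monomials, stays within the claimed time. The risk is that positive monomials cannot be recognized without the full expansion, while the order constraints involve the whole reduced poset. I would try to compute the collected coefficients monomial by monomial, using exact integer arithmetic on numbers of double-exponential bit-length, and I would argue that this arithmetic remains polynomial in $N$.
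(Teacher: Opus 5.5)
\textbf{Overall approach.} Your reduction is the paper's. \cref{thm:polchr} and \cref{lem:conprbhyp} turn non-containment into an instance of \cref{prb:nmpi} for a guessed non-trivial ground selection $G$, represented by $\Phi_G$. The closure is bounded doubly exponentially (singly for JoFQ), bag-bag JoFQ goes through the copy-attribute reduction, and PFQ is deferred to \cite{KM19}.

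\textbf{Where you differ: the final step.} The paper runs \cref{alg:nmpisol}: it guesses a bit-bounded integer solution and then enumerates the degree and order checks. You instead expand and collect $P$ explicitly, and decide the strict homogeneous system by linear programming after rescaling each strict constraint to $L(u)\geq 1$. This is sound.
\begin{itemize}
\item It makes explicit the computation of the positive collected monomials of the reduced polynomial, which the paper leaves implicit.
\item Your worry about that computation is resolved by a crude bound. Chains have length at most the closure size $N$, so the entries of $m^{-1}$, the $\beta$'s and all collected coefficients have bit-length polynomial in $N$.
\item Once $G$ is fixed, your procedure is deterministic and doubly exponential. There are only doubly exponentially many candidates for $\Phi_G$, so enumerating them instead of guessing should even give a deterministic doubly exponential bound for JUQ, which is stronger than the stated one.
\item What explicit expansion cannot give is $\Pi^{p}_{2}$ for PFQ, where the polynomial has exponentially many monomials. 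There the alternation of \cref{thm:decprc}, with universal monomials represented by polynomial-size profiles, is essential, and you rightly switch back to it.
\end{itemize}

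\textbf{The gap: non-Boolean inputs.} The statement concerns JUQ, JoFQ and PFQ containees against arbitrary CQs with free variables. However, \cref{thm:polchr}, \cref{lem:conprbhyp} and all the closure machinery are stated for Boolean queries only, and Boolean PFQs are just ground queries. You need the step the paper starts with.
\begin{itemize}
\item Guess a prototypical answer tuple: an equality pattern of the free variables among themselves and with the constants of $q$ and $p$, with the remaining classes sent to fresh constants.
\item Freeze both queries accordingly. The bag-set multiplicity of an answer tuple is the homomorphism count of the correspondingly frozen query, and renaming domain elements not occurring in the queries changes nothing. Hence non-containment holds iff it holds for some frozen pair.
\item Freezing preserves join-uniformity and join-on-freeness, since components are defined through existential variables only.
\item The guess has polynomial size and is absorbed into the existential phase in all three bounds.
\end{itemize}

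\textbf{Smaller slips.}
\begin{itemize}
\item Ground closure elements have no variables; what can be exponential is their number of atoms. Homomorphism tests are cheap because a homomorphism from a non-ground element is determined by the images of at most arity-many variables, and one from a ground element is just inclusion.
\item For JoFQ the ground selections are not exponentially many. Ground closure elements are single ground atoms and pairwise incomparable, so every subset is a ground selection, giving doubly exponentially many, each of exponential size. Your algorithm guesses only one, so the NEXPTIME bound for non-containment still stands.
\item Single-atom closure elements for JoFQ do not follow from the variable-occurrence invariant. They hold because factorisation is vacuous on one-atom queries and merging two single atoms yields a single atom.
\end{itemize}
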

\begin{proof}
  We describe the complement procedure for non-containment, which is obtained
  by preceding the polynomial-time \QEA-alternating procedure reported in
  \cref{alg:nmpisol} with the guesses needed to reduce the containment problem
  to an instance of \cref{prb:nmpi}.
  The stated complexity bounds then follow by complementation.

  We start with \CP[\bs](\JUQ, \CQ).
  Let $(\qryElm, \pqryElm) \in \JUQ \times \CQ$ be an input pair and put $L
  \defeq \len{\qryElm} + \len{\pqryElm}$.
  First, in the non-Boolean case, one guesses a prototypical tuple for the free
  variables.
  This guess has size linear in $L$, and can in fact be avoided by using the
  most-general prototypical tuple, as discussed in~\cite{KM19} for {\PFQ}s.
  Second, one guesses a non-trivial ground selection $\GSet$ for the Boolean
  version of the containee query.
  As explained above, such a guess is represented through the finite ground set
  $\PhiSet[\GSet]$, and therefore has at most exponential size in
  $\len{\qryElm}$.
  Once these guesses are fixed, \cref{thm:polchr} reduces non-containment to the
  existence of a Diophantine natural solution of the associated $n$-\MPI.
  By \cref{lem:conprbhyp}, this is an instance of \cref{prb:nmpi}.
  The size analysis above shows that, for {\BJUQ}s, the minimal unification
  closure has size $N \leq \pow{\pow{\AOmicron{\len{\qryElm} \log
  \len{\qryElm}}}}$.
  The number of monomials occurring in the expanded polynomial is bounded by
  $N^{\AOmicron{\len{\pqryElm}}}$.
  This is still double-exponential in $L$.
  Hence the encoding of the induced Diophantine instance is bounded by
  $\pow{\pow{L^{k}}}$, for some fixed $k \in \SetR[+]$.

  The algorithm of \cref{alg:nmpisol} first guesses an existential certificate
  whose size is polynomial in the encoding of the induced homogeneous linear
  system.
  This certificate therefore has double-exponential size in $L$ and can be
  guessed nondeterministically within double-exponential time.
  The subsequent universal choices range over positive monomials and cover
  pairs.
  Their number is bounded by the size of the constructed Diophantine instance,
  and each check is polynomial in that instance.
  Thus, after the existential phase, all universal checks can be enumerated
  deterministically within double-exponential time.
  Consequently, non-containment for {\BJUQ}s is decidable in 2\NExpTime, which
  implies that \CP[\bs](\JUQ, \CQ) is decidable in 2\CoNExpTime.

  The soundness of this procedure follows from \cref{thm:polchr}.
  If one of the constructed Diophantine instances has a Diophantine natural
  solution, then the corresponding ground selection witnesses $\qryElm
  \not\bsinc \pqryElm$.
  Conversely, if $\qryElm \not\bsinc \pqryElm$, then \cref{thm:polchr} provides
  a non-trivial ground selection $\GSet$ whose associated Diophantine instance
  has a Diophantine natural solution, and this instance is considered by the
  procedure.

  Let us now consider \CP[\bs](\JoFQ, \CQ), for which the same argument gives a
  sharper bound.
  If the containee query $\qryElm$ is a \BJoFQ, then every component of the
  query is a single atom.
  Consequently, the elements of the minimal unification closure are atom-level
  representatives, as in the construction of~\cite{KM25}.
  Equivalently, each closure element is obtained, up to isomorphism, by unifying
  a set of atoms of the original query.
  Thus, the powerset blow-up described above for {\BJUQ}s does not occur inside
  each closure representative.
  The number of representatives in $\denot{\qryElm}$ is only exponential in
  $\len{\qryElm}$, and every representative is a single atom, hence has size
  polynomial in $\len{\qryElm}$.
  It follows that the induced multiplicity posets, the admissible ground
  selections, and the polynomial profiles defining
  $\PFun[\qryElm][\pqryElm](\GSet, \unkVec[\top])$ are exponential-size objects.
  The existential certificate guessed by \cref{alg:nmpisol} is therefore
  exponential-size, while all universal checks can be enumerated in exponential
  time.
  Hence non-containment is decidable in \NExpTime, and containment is decidable
  in \CoNExpTime.
  The reduction from bag-bag to bag-set containment preserves the join-on-free
  property and has linear overhead.
  Therefore, both \CP[\bs](\JoFQ, \CQ) and \CP[\bb](\JoFQ, \CQ) are solvable in
  \CoNExpTime.

  Finally, for {\PFQ}s under bag-bag semantics, the sharper bound
  from~\cite{KM19} applies.
  Indeed, the join-on-free closure associated with the \BJoFQ rewriting of a
  \PFQ has exactly the atom-level structure handled in the projection-free
  analysis, where the minimal unification closure contains, up to isomorphism,
  only the atom-level representatives occurring in the rewritten query.
  Hence, the minimal unification closure has size $N \leq \len{\qryElm}$.
  The existential certificate is therefore polynomial-size.
  Although the number of monomials may be exponential, each universal monomial
  can be represented by a polynomial-size homomorphism profile and checked in
  polynomial time.
  Thus the complement problem lies in the second existential level of the
  polynomial hierarchy, \ie, is in \ELH[2][P].
  Therefore, \CP[\bb](\PFQ,\CQ) is solvable in \ULH[2][P].
\end{proof}

We finally recall that an \NPTime lower bound already holds even for the
Boolean restrictions of the fragments considered here.
Indeed, the projection-free bag-bag containment problem is \NPTimeH, and the
remaining hardness results follow by the inclusions between containee classes
and by the reductions mentioned above.

\begin{theorem}
\label{thm:conprbhrd}
  All four $\CP[\bs](\BJUQ, \BCQ)$, $\CP[\bs](\BJoFQ, \BCQ)$, $\CP[\bb](\BJoFQ,
  \BCQ)$, and $\CP[\bb](\BPFQ, \BCQ)$ are \NPTimeH.
\end{theorem}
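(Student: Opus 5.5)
The plan is to reduce from a known \NPTime-hard problem to $\CP[\bb](\BPFQ, \BCQ)$ and then transfer hardness to the other three problems through the class inclusions and the standard bag-bag to bag-set reduction. The natural source is set containment of Boolean {\CQ}s, equivalently the homomorphism (or graph $3$-colourability / clique) problem of Chandra and Merlin~\cite{CM77}. I would take a \emph{ground} containee: given a \BCQ $\pqryElm$ and a finite structure $\dbElm$, freeze $\dbElm$ into a Boolean query $\qryElm_{\dbElm}$ whose atoms are the facts of $\dbElm$ with every element treated as a constant. Then $\qryElm_{\dbElm}$ has no variables at all, so it is trivially a \BPFQ. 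It is hence also a \BJoFQ and a \BJUQ, since every component is a single ground atom.

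The key step is to check that, for such a ground containee, bag containment coincides with the existence of a homomorphism from $\pqryElm$ to $\dbElm$.

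\textbf{Bag-set case.} On any instance $\dbElm'$, $\qryElm_{\dbElm}$ has multiplicity $1$ if $\dbElm \subseteq \dbElm'$ and $0$ otherwise. Hence $\qryElm_{\dbElm} \bsinc \pqryElm$ holds \iff $\card{\HomSet(\pqryElm, \dbElm')} \geq 1$ for every $\dbElm' \supseteq \dbElm$. By monotonicity of homomorphism existence, this is \iff $\HomSet(\pqryElm, \dbElm) \neq \emptyset$.

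\textbf{Bag-bag case.} Under a bag $\bagFun$, the containee has multiplicity $\prod_{\atmElm \in \dbElm} \bagFun(\atmElm)$, while the containing query has multiplicity $\sum_{\hFun} \prod \bagFun(\hFun(\cdot))$.
\begin{itemize}
\item If $\HomSet(\pqryElm, \dbElm) = \emptyset$, then $\dbElm$ itself with $\bagFun \equiv 1$ is a counterexample.
\item Conversely, a homomorphism alone does not obviously suffice, since its image may use few facts with small multiplicity while the product over all of $\dbElm$ is large.
\end{itemize}
This second point is the main obstacle. I would repair it by padding $\pqryElm$, using $\pqryElm' \defeq \pqryElm \wedge \qryElm_{\dbElm}$. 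This is still a \BCQ, and $\pqryElm'$ is set-equivalent to $\pqryElm$ over instances containing $\dbElm$. Its bag-bag multiplicity is the multiplicity of $\qryElm_{\dbElm}$ times that of $\pqryElm$. So $\qryElm_{\dbElm} \bbinc \pqryElm'$ \iff $\pqryElm$ has at least one homomorphism into every instance containing $\dbElm$ whose relevant facts have positive multiplicity, \iff $\HomSet(\pqryElm, \dbElm) \neq \emptyset$. The same padding works uniformly for bag-set.

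The construction is linear in the input, which gives \NPTimeH-ness of $\CP[\bb](\BPFQ, \BCQ)$ and directly also of $\CP[\bs](\BPFQ, \BCQ)$. Since $\BPFQ \subseteq \BJoFQ \subseteq \BJUQ$, the same instances witness hardness of $\CP[\bs](\BJoFQ, \BCQ)$, $\CP[\bb](\BJoFQ, \BCQ)$ and $\CP[\bs](\BJUQ, \BCQ)$. Alternatively, one can pass from bag-bag to bag-set through the copy-attribute reduction of \cref{exm:3}, which keeps the containee join-on-free.
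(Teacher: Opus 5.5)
Your proposal is correct and is essentially the paper's proof. The paper reduces from $3$-colourability, taking as ground containee the frozen triangle $q_T$ (your $q_I$ with $I = K_3$) and as containing query the padded $q_T \wedge q_G$ (your $p \wedge q_I$ with $p = q_G$). It then argues exactly as you do: bag containment implies set containment in one direction, and in the other a homomorphism $q_G \to q_T$ together with the product decomposition (valid because $q_T$ is ground) gives containment. Hardness for the remaining three problems follows, as in your plan, from $q_T$ being ground and hence a BPFQ, BJoFQ and BJUQ, with the same argument working under both bag-set and bag-bag semantics.
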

\begin{proof}
  The proof generalises the classic way to reduce the $3$-colourability problem
  to the set-containment problem.
  Let $\GName = \tuple{\VSet}{\ERel}$ be an arbitrary undirected graph, fix an
  arbitrary orientation of its edges, and let $\qryElm[\GName]$ be the Boolean
  query having one variable $x_{v}$ for every vertex $v \in \VSet$ and one atom
  $\RRel(x_{u}, x_{v})$ for every oriented edge $(u, v)$.
  Consider now the ground query $\qryElm[\TName] \leftarrow \RRel(\aSym,\bSym),
  \RRel(\bSym,\aSym), \RRel(\aSym,\cSym), \RRel(\cSym,\aSym),
  \RRel(\bSym,\cSym), \RRel(\cSym,\bSym)$, associated with the complete graph on
  three vertices.
  Since $\qryElm[\TName]$ contains an atom for every ordered pair of distinct
  elements of $\{\aSym,\bSym,\cSym\}$, there exists a homomorphism from
  $\qryElm[\GName]$ to $\qryElm[\TName]$ iff $\GName$ is $3$-colourable.
  Equivalently, $\GName$ is $3$-colourable iff $\qryElm[\TName]$ is
  set-contained into $\qryElm[\GName]$.
  We now show that, under either bag-set or bag-bag semantics, $\qryElm[\TName]$
  is bag-contained into $\qryElm[\TName] \wedge \qryElm[\GName]$ iff $\GName$ is
  $3$-colourable.

  \textbf{[$\Rightarrow$]}
  If $\qryElm[\TName]$ is bag-contained into $\qryElm[\TName] \wedge
  \qryElm[\GName]$, it immediately follows that $\qryElm[\TName]$ is
  set-contained into $\qryElm[\GName]$ as well.
  Thus, by the above observation, $\GName$ is $3$-colourable.

  \textbf{[$\Leftarrow$]}
  Suppose that $\GName$ is $3$-colourable.
  Then, there exists a homomorphism from $\qryElm[\GName]$ to $\qryElm[\TName]$.
  Therefore, whenever $\qryElm[\TName]$ has positive multiplicity on an
  instance, so does $\qryElm[\GName]$.
  Moreover, the multiplicity of $\qryElm[\TName] \wedge \qryElm[\GName]$ is the
  product of the multiplicities of $\qryElm[\TName]$ and $\qryElm[\GName]$ in
  isolation, on any database instance, since $\qryElm[\TName]$ is ground.
  Hence, the multiplicity of $\qryElm[\TName] \wedge \qryElm[\GName]$ is at
  least the multiplicity of $\qryElm[\TName]$, and the containment follows.

  The construction is polynomial in the size of $\GName$.
  Finally, $\qryElm[\TName]$ is ground and therefore a \BPFQ, hence also a
  \BJoFQ and a \BJUQ.
  Since the argument applies under both bag-set and bag-bag semantics, it proves
  the \NPTime-hardness of all four problems in the statement.
\end{proof}




\section{Discussion}

We developed a unified framework for attacking bag containment for classes of
conjunctive queries through the internal unification structure of the containee
query under analysis.
The framework covers the two main decidable fragments previously obtained by
this line of work, namely \emph{projection-free queries} and \emph{join-on-free
queries}, and extends them to the broader class of \emph{join-uniform queries}.
In all these cases, the containing query is left unrestricted.

The adopted solution strategy can be broken down into two independent but
synergistic contributions.
On the one hand, we provide a \emph{finite arithmetic representation} of
multiplicities induced by the \emph{minimal unification closure} of the
containee query.
This is done through a characterisation of the containment problem in terms of
\emph{multicanonical instances}, which isolate the relevant witnesses for
non-containment and reduce the comparison between two queries to a comparison
between their respective \emph{homomorphism counts} over such instances.
In particular, these counts can be computed by means of a \emph{monomial
function}, for the containee query, and a \emph{polynomial function}, for the
containing one, the latter enjoying the crucial property of \emph{strong
non-negativeness}.
The central difference with the earlier projection-free and join-on-free
settings is that the parameters whose values describe a witness of
non-containment are no longer plain image cardinalities, but arbitrary
homomorphism cardinalities.
These parameters are linked through a \emph{weighted M\"obius inversion} over
the multiplicity partial order associated with the unification closure.
On the other hand, we solve a special case of the \emph{Diophantine inequality
problem}.
Although Diophantine inequalities are undecidable in general, the inequalities
arising here have a special \emph{monomial-polynomial} form and interact with
the multiplicity partial order in a controlled way, thanks to the strong
non-negativeness of the polynomials.
Consequently, the usual source of undecidability in bag containment is not
avoided, but constrained tightly enough to become algorithmically usable.
The main difference with the approach in the previous setting is that the
Diophantine analysis no longer relies on the underlying order being a
meet-semilattice.
In the join-on-free framework, this lattice structure was used to reduce the
problem to a strict version in which comparable unknowns take strictly ordered
values.
Here, instead, we prove directly that every admissible solution over the given
multiplicity poset can be transformed into an admissible strict solution of the
same monomial-polynomial inequality.
This step removes the need to restrict the original order to a meet-semilattice,
and allows the Diophantine argument to work over an arbitrary finite partial
order equipped with multiplicity weights.

This work leaves a few interesting directions for future investigation.
The first one concerns complexity.
While the containment problems for \BPFQ, \BJoFQ, and \BJUQ studied here are
placed in \ULH[2][P], \CoNExpTime, and 2\CoNExpTime, respectively, and are known
to be \NPTimeH, identifying tighter bounds remains an open challenge.
On the upper-bound side, it would be valuable to understand which part of the
exponential blow-up due to the unification closure is avoidable.
On the lower-bound side, instead, since the withdrawal of the
\ULH[2][P]-hardness claim by Chaudhuri and Vardi for the general bag containment
problem, progress in this area has stalled, making it an appealing topic for
further exploration.
A second direction is to identify the exact frontier beyond join-uniform
queries.
The proof suggests that join-uniformity is sufficient because it guarantees
finiteness of the unification closure, a well-behaved containment order, and
the weighted M\"obius properties needed for realisability.
It is not yet clear whether some of these requirements can be weakened, \eg,
finiteness of the closure, leading to a larger decidable class.
Finally, a third direction is the integration of the techniques discussed in
this work with those that obtain decidability by restricting the containing
query~\cite{KR11,KKNS20,KKNS21}.
Such an integration may be a necessary step towards a final solution of this
open problem.









  \bibliographystyle{alphaurl}
  \bibliography{References}

\end{document}